\documentclass[11pt]{article}

\newif\ifcomments
\commentstrue

\usepackage{fullpage}
\usepackage[utf8]{inputenc}

\usepackage{graphicx}
\usepackage{amsmath,amsthm,amssymb}
\usepackage{algorithm}
\usepackage{algpseudocode}
\usepackage{multirow}
\usepackage{makecell}

\usepackage{thm-restate,color,xspace}
\usepackage{comment}
\usepackage{thmtools}
\usepackage{xcolor}
\usepackage{nameref}
\usepackage{array}

\definecolor{ForestGreen}{rgb}{0.1333,0.5451,0.1333}
\definecolor{DarkRed}{rgb}{0.65,0,0}
\definecolor{Red}{rgb}{1,0,0}
\usepackage[linktocpage=true,
pagebackref=true,colorlinks,
linkcolor=DarkRed,citecolor=ForestGreen,
bookmarks,bookmarksopen,bookmarksnumbered]
{hyperref}
\usepackage{cleveref}

\declaretheorem[numberwithin=section]{theorem}
\declaretheorem[numberlike=theorem]{lemma}
\declaretheorem[numberlike=theorem,name=Lemma]{lem}

\declaretheorem[numberlike=theorem]{proposition}

\declaretheorem[numberlike=theorem]{corollary}

\declaretheorem[numberlike=theorem]{conjecture}

\declaretheorem[numberlike=theorem]{claim}

\declaretheorem[numberlike=theorem,style=definition]{definition}
\declaretheorem[numberlike=theorem,style=definition]{remark}

\global\long\def\polylog{\mathrm{polylog}}
\global\long\def\polyloglog{\mathrm{polyloglog}}
\global\long\def\CMG{\mathrm{CMG}}
\global\long\def\glo{\mathrm{glo}}
\global\long\def\ET{\mathrm{ET}}
\global\long\def\Tag{\mathrm{Tag}}
\global\long\def\Id{\mathrm{Id}}

\global\long\def\Ans{\mathrm{Ans}}
\global\long\def\d{\boldsymbol{\mathrm{d}}}
\global\long\def\littleo{\tilde{\tilde{o}}}
\global\long\def\OMv{{\sf OMv}}
\global\long\def\OuMv{{\sf OuMv}}
\global\long\def\Boruvka{Bor$\mathring{\text{u}}$vka}
\global\long\def\iin{{\rm in}}
\global\long\def\out{{\rm out}}

\global\long\def\full{{\rm full}}
\global\long\def\LCA{{\rm LCA}}
\global\long\def\ssum{{\rm sum}}

\newcommand{\Decomp}{\textsc{Decomp}}

\newcommand{\threeDRangeSum}{\textsc{3DRangeSum}}

\newcommand{\FindNeighbors}{\textsc{FindNeighbor}}
\newcommand{\BatchConnCheck}{\textsc{BatchConnCheck}}

\newcommand{\Ehat}{\hat{E}}
\newcommand{\Fhat}{\hat{F}}
\newcommand{\Otil}{\tilde{O}}
\newcommand{\Ohat}{\hat{O}}
\newcommand{\Omegahat}{\hat{\Omega}}
\global\long\def\poly{\mathrm{poly}}
\global\long\def\mbar{\bar{m}}

\global\long\def\A{\mathcal{A}}
\global\long\def\I{\mathcal{I}}

\global\long\def\cS{\mathcal{S}}
\global\long\def\Gaug{\hat{G}}
\global\long\def\Table{\mathsf{Table}}
\global\long\def\fault{\mathrm{fault}}

\title{Near-Optimal Deterministic Vertex-Failure Connectivity Oracles}
\author{Yaowei Long\\ University of Michigan \and Thatchaphol Saranurak\\ University of Michigan}
\date{}

\begin{document}

\maketitle
\pagenumbering{gobble}
\begin{abstract}
We revisit the \emph{vertex-failure connectivity oracle }problem \cite{duan2010connectivity,borradaile2012connectivity,DuanP20}.
This is one of the most basic graph data structure problems under
\emph{vertex updates}, yet its complexity is still not well-understood.
We essentially settle the complexity of this problem by showing a
new data structure whose space, preprocessing time, update time, and
query time are simultaneously optimal up to sub-polynomial factors
assuming popular conjectures. Moreover, the data structure is deterministic. 

More precisely, for any integer $d_{\star}$, the data structure preprocesses
a graph $G$ with $n$ vertices and $m$ edges in $\Ohat(md_{\star})$
time and uses $\Otil(\min\{m,nd_{\star}\})$ space. Then, given the
vertex set $D$ to be deleted where $|D|=d\le d_{\star}$, it takes
$\Ohat(d^{2})$ updates time. Finally, given any vertex pair $(u,v)$,
it checks if $u$ and $v$ are connected in $G\setminus D$ in $O(d)$
time. This improves the previously best deterministic algorithm by
Duan and Pettie \cite{DuanP20} in both space and update time by a factor of $d$.
It also significantly speeds up the $\Omega(\min\{mn,n^{\omega}\})$
preprocessing time of all known (even randomized) algorithms with update
time at most $\Otil(d^{5})$.

\end{abstract}

\newpage
\setcounter{tocdepth}{2}
\tableofcontents
\newpage

\pagenumbering{arabic}

\section{Introduction}

We revisit the \emph{vertex-failure connectivity oracle }problem \cite{duan2010connectivity,borradaile2012connectivity,DuanP20}.
This is one of the most basic graph data structure problems under
\emph{vertex updates}, yet its complexity is still not well-understood.
In this paper, we essentially settle the complexity of this problem
by showing a new data structure whose space, preprocessing time, update
time, and query time are simultaneously optimal up to sub-polynomial
factors assuming popular conjectures. Moreover, the data structure
is deterministic. 

More precisely, in the vertex-failure connectivity oracle problem,
there are three phases. First, we preprocess an undirected graph $G=(V,E)$
with $n$ vertices and $m$ edges and an integer $d_{\star}$. Second,
given a vertex set $D\subset V$ of size $|D|=d\le d_{\star}$, we
update the data structure. Finally, given a vertex pair $(u,v)$,
we return whether $u$ and $v$ are connected in the updated graph
$G\setminus D$. The \emph{preprocessing},\emph{ update, }and \emph{query
time }are the time used by the data structured in the first, second,
and third phases respectively. Put it another way, this problem is
just the well-known decremental connectivity problem when all vertex
deletions in given in one batch, instead of an online sequence of
updates.

To see the context, we also mention its sister problem, the \emph{edge}-failure
connectivity oracle problem, where we are given $d$ edges to be deleted
instead of vertices. Since 1997, Pastrascu and Thorup \cite{patrascu2007planning}
already showed a near-optimal deterministic edge-failure oracle, except
that it has large preprocessing time. This is later improved to a
near-optimal oracle with $\Otil(m)$ space\footnote{The algorithm uses $\Otil(n)$ space outside storing the graph. In \cite{DuanP20}, they claimed $\Otil(n)$ total space by assuming that $D \subseteq E$ and each edge in $D$ is specified by its two endpoints.   
Note that, in general, $\Omega(m)$ space is unavoidable as the oracles can be used to reconstruct any connected graph: we can query if $(u,v)\in E$, by deleting $\binom{V}{2} \setminus (u,v)$ and querying if $u$ and $v$ are connected.}, $\Otil(m)$ preprocessing, $\Otil(d)$ update, and $O(\log \log n)$
query time using the linear sketching techniques (see Threoem 7.9 of \cite{DuanP20} and also \cite{kapron2013dynamic,gibb2015dynamic}).\footnote{Throughout the paper, we use $\Otil(\cdot)$ to hide a $\polylog(n)$
factor, use $\Ohat(\cdot)$ to hide a $n^{o(1)}$ factor and use $\bar{O}(\cdot)$ to hide a $\polyloglog(n)$ factor.}

The state of the art for vertex-failure connectivity oracles is much
worse. The high-level reason why vertex failures are more challenging
is because, while $d$ edge deletions cause at most $d$ new connected
components, just a single vertex deletion may create a lot of new
connect components. For very small $d_{\star}\le2$, classical graph
decompositions such that block trees and SQRT trees yield optimal
vertex-failure oracles. When $d_{\star}=3$, the data structure by
Kanevsky et al. \cite{kanevsky1991line} implies a near-optimal oracle
with $O(n)$ space, $\Otil(m)$ preprocessing, $O(1)$ update and
query time. It was until 2010 when Duan and Pettie \cite{duan2010connectivity}
gave the first solution for general $d_{\star}$ by showing a deterministic
data structure with $\Otil(md_{\star}^{1-\frac{2}{c}}n^{\frac{1}{c}-\frac{1}{c\log(2d_{\star})}})$
space and preprocessing time, $\Otil(d^{2c+4})$ update and $O(d)$
query time. Later in 2017, they \cite{DuanP20} showed
two improved algorithms: (1) a deterministic data structure with $\Otil(md_{\star})$
space, $\Otil(mn)$ preprocessing, $\Otil(d^{3})$ update, and $O(d)$
query time, and (2) a randomized Monte Carlo data structure with $\Otil(m)$
space, $\Otil(mn)$ preprocessing, $\Otil(d^{2})$ update, and $O(d)$
query time. More recently, oracles whose update and query time depend
only on $d_{\star}$ or $d$ were shown \cite{van2019sensitive,pilipczuk2021algorithms},
but the bounds are significantly slower than the ones in \cite{DuanP20}
when $\polylog(n)$ factors are ignored.

Given the big discrepancy between the edge-failure and vertex-failure
cases in all time bounds, a natural question is whether we can match
them. Unfortunately, the conditional lower bounds \cite{HKNS15}
show that query time must be $\Omegahat(d)$ as long as the algorithm
has $\poly(n)$ preprocessing and $\poly(dn^{o(1)})$ update time.
This separates the vertex case from the edge case. The next question
is then 
\begin{center}
\emph{What is the right complexity for vertex-failure connectivity oracles}? 
\par\end{center}

Before our paper, the answer of this question was not well-understood.
For example, three questions are posted by Duan and Pettie \cite{DuanP20}
where they wrote \emph{``the following open problems are quite challenging''.
}The first question is about improving preprocessing time. Second,
can we have both $\Otil(d)$ update and query time? Third, is there
a deterministic data structure with $\Otil(m)$ space? We resolve
all these questions in this paper. In fact, we essentially settle
the complexity of all parameters of vertex-failure connectivity oracles
up to sub-polynomial factors assuming popular conjectures.\emph{}%

{} Our near-optimal deterministic oracle is stated below.
\begin{theorem}
\label{thm:main}There exists a deterministic vertex failure connectivity
oracle that uses $O(m\log^{*}n)$ space, $\Ohat(m)+\Otil(md_{\star})$
preprocessing time, $\Ohat(d^{2})$ update time, and $O(d)$ query
time. 
\end{theorem}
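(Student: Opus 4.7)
The plan is to follow the now-standard framework for vertex-failure connectivity oracles developed by Duan--Pettie: maintain a spanning forest $F$ of $G$ together with an auxiliary structure on the non-tree edges that, after the deletion of a vertex set $D$ of size $d\le d_{\star}$, efficiently reports ``replacement'' edges joining the $O(d)$ surviving pieces of $F\setminus D$, and then run a \Boruvka-style merging process to recover the connected components of $G\setminus D$. The contribution is to make every ingredient simultaneously deterministic, with $O(m\log^{*}n)$ space, $\Ohat(md_{\star})$ preprocessing, $\Ohat(d^{2})$ update, and $O(d)$ query.

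In preprocessing I would compute $F$ and its DFS labelling in $\Ohat(m)$ time and apply the \RepeatedDecomp primitive to produce a hierarchical decomposition of $F$ at $O(\log n)$ scales, with the property that any $d\le d_{\star}$ deletions split $F$ into $O(d)$ surviving pieces, each expressible as a union of $O(\log n)$ decomposition clusters. Every non-tree edge of $G$ is then indexed in per-scale \Pred, \Succ, and \threeDRangeSum structures keyed by the DFS intervals of its endpoints, at a total cost of $\Otil(md_{\star})$. To drive the space from $\Otil(md_{\star})$ down to $O(m\log^{*}n)$, I would iterate the construction $\log^{*}n$ times with word-level atomic-heap/fusion-tree primitives so that the $O(\log n)$-factor duplication at each scale is absorbed.

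Given $D$, the update first enumerates the $O(d)$ surviving pieces of $F\setminus D$ using $O(d)$ LCA and predecessor queries on the DFS order, then runs $O(\log d)$ \Boruvka rounds on the pieces. In each round, for every current super-component $C$ I invoke \FindNeighbors to retrieve a non-tree edge of $G\setminus F$ with one endpoint in $C$, the other outside $C$, and neither in $D$; this call uses the precomputed per-scale range structures, processing the $O(\log n)$ decomposition clusters of $C$ against the $O(d)$ forbidden DFS-intervals induced by $D$, and completes in $\Ohat(d)$. With $O(d)$ calls per round the total update cost is $\Ohat(d^{2})$. Merges are recorded in a union-find, and a subsequent query for $(u,v)$ locates the pieces of $u$ and $v$ by $O(d)$ predecessor queries and performs two \textsc{Find} operations, for $O(d)$ query time total.

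The main obstacle is engineering \FindNeighbors deterministically at this cost. The $\Otil(d^{3})$ deterministic update of \cite{DuanP20} stems from iterating a one-dimensional predecessor $\Omega(d)$ times per call, while their $\Otil(d^{2})$ randomized update relies on the Kapron--King--Mountjoy linear sketches of \cite{kapron2013dynamic}, which we must avoid. My replacement is a genuinely multi-dimensional avoidance-aware range structure: the $O(d)$ forbidden DFS-intervals induced by $D$ are preprocessed once at the start of the update into a compact description, and each per-cluster 3D range-emptiness query against this description returns one surviving edge in $n^{o(1)}$ time, so a single \FindNeighbors call costs $\Ohat(d)$ rather than $\Omega(d^{2})$. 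The space reduction to $O(m\log^{*}n)$ then follows by packing this structure through the $\log^{*}n$-fold recursion above, and derandomization of the required word-level labels is achieved by explicit hashing computed inside the $\Otil(md_{\star})$ preprocessing budget. Overall correctness is immediate from the correctness of \Boruvka together with the invariant that \FindNeighbors returns some surviving inter-piece edge whenever one exists.
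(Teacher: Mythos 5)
Your framework breaks at its very first step: a spanning forest $F$ of a general graph does \emph{not} split into $O(d)$ pieces when $d$ vertices are deleted. A single vertex of degree $\Delta_F$ in $F$ leaves $\Delta_F$ pieces, and $\Delta_F$ can be $n-1$. This is precisely the obstruction that separates the vertex-failure problem from the edge-failure problem, and no amount of decomposing $F$ ``at $O(\log n)$ scales'' repairs it, because the pieces are determined by the degrees in $F$ itself. The paper's entire first half exists to manufacture a substitute for $F$: a laminar hierarchy of components together with \emph{low-degree} ($\Delta=n^{o(1)}$) Steiner trees over nested terminal sets, built via a new almost-linear-time vertex-expander decomposition (each decomposition round removes a separator, the separator becomes the terminal set of the next level, and well-linked terminal sets admit low-degree Steiner trees via a cut-matching-game embedding). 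Only with the degree bound $\Delta$ does the number of surviving tree pieces become $O(pd\Delta)=\Ohat(d)$. Your proposal invokes \RepeatedDecomp{} by name but supplies no mechanism by which deletions produce few pieces, so the claimed $O(d)$-piece invariant is unsupported and, for a single spanning forest, false.

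The second, independent gap is that you only ever search for \emph{non-tree edges of $G$} joining two surviving pieces. In the hierarchical setting, two pieces can be connected in $G\setminus D$ only via a path passing through an entire unaffected component of the hierarchy (a ``hyperedge'' on its neighborhood $A_\gamma$), and no single edge of $G$ certifies this. Handling these hyperedges deterministically within $O(m\log^* n)$ space and $\Ohat(d^2)$ update time is the paper's main technical contribution: it replaces each hyperedge by a structured set of artificial edges $\hat E_\gamma = A_\gamma\times B_\gamma + \sum_{S}S\times f(S)$ built on a segmentation hierarchy of $A_\gamma$, so that the total number of distinct artificial edges is $\Otil(m)$, only $O(d^2)$ ``repairing'' edges are needed after the failure, and the product/monotone structure lets each \Boruvka{} batched-adjacency test reduce to $O(1)$ range-counting queries per affected component. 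Your ``avoidance-aware range structure'' with $n^{o(1)}$-time per-cluster emptiness queries is asserted rather than constructed, and even if it existed it would address only direct edges, not the component-mediated connectivity. Likewise, the $O(m\log^* n)$ space in the paper comes from the hierarchy having $\beta(n)=\log^* n$ levels (each expander-decomposition round shrinks the terminal set by $n^{1/\beta(n)}$), not from fusion-tree packing of an $\Otil(md_\star)$-sized structure, which cannot compress a structure of that size below $\Omega(md_\star)$ words in general.
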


All the $m$ factors can be replaced by $\mbar=\min\{m,n(d_{\star}+1)\}$
by using the standard sparsification algorithm by Nagamochi and Ibaraki
\cite{nagamochi1992linear} to first preprocess the graph in $O(m)$ time. 
(This also works for all known oracles discussed above.)
So the bounds can be improved to $O(\mbar \log^* n)$ space\footnote{The $\log^* n$ factor in our space is arbitrary. We can choose any growing function, which only affects the $n^{o(1)}$ factor in the update time.}
and $O(m)+\Ohat(\mbar)+\Otil(\mbar d_{\star})$ preprocessing time.
Also, all the $n^{o(1)}$ factors in \Cref{thm:main} come solely from
the overhead factor of the current deterministic vertex expander decomposition.
It is believable that there exists a vertex expander decomposition
algorithm $\polylog(n)$ overhead factor, which would automatically
improve all the $n^{o(1)}$ factors to $\polylog(n)$.\footnote{But this would increase our space to $O(m\log^{2}n)$.} See \Cref{thm:MainDetailed} for a detailed version of \Cref{thm:main}.

Comparing with known results (see \Cref{tab:compare}), \Cref{thm:main}
improves the previous best deterministic algorithm by \cite{DuanP20}
by a factor of $d$ in both space and update time. It also significantly speeds up the $\Omega(\min\{mn,n^{\omega}\})$
preprocessing time of all known (randomized) algorithms with update time at most $\Otil(d^{5})$. %
Lastly, our space complexity also beats the state-of-the-art of $O(m\log^6 n)$ in the randomized oracle by 
\cite{DuanP20}.

Next, we state conditional lower bounds that is tight with \Cref{thm:main}. 
\begin{theorem}
\label{thm:main lower}Let $\A$ be any vertex-failure connectivity
oracle with $t_{p}$ preprocessing time, $t_{u}$ update time, and
$t_{q}$ query time bound. Assuming popular conjectures, we have the
following:
\begin{enumerate}
\item $\A$ must take $\Omega(\min\{m,nd_{\star}\})$ space. (See \cite{DuanP20}) 
\item If $t_{p}=\poly(n)$, then $t_{u}+t_{q}=\Omegahat(d^{2})$.
(See \Cref{thm:UpdateLowerBound})
\item If $t_{p}=\poly(n)$ and $t_{u}=\poly(dn^{o(1)})$, then $t_{q}=\Omegahat(d)$.
(Corollary 3.12 of \cite{HKNS15})
\item If $t_{u},t_{q}=\poly(dn^{o(1)})$ and $\A$ is a combinatorial algorithm\footnote{``Combinatorial'' algorithms \cite{AW14} are algorithms
that do not use fast matrix multiplication.}, then $t_{p}=\Omegahat(md)$. (See \Cref{thm:ConstructionLowerBound})
\end{enumerate}
\end{theorem}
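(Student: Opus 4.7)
Parts (1) and (3) are cited directly from \cite{DuanP20} and \cite{HKNS15}; my plan focuses on parts (2) and (4).

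For part (2), the plan is to reduce from the \OuMv{} conjecture, which asserts that after polynomial preprocessing of an $n\times n$ Boolean matrix $M$, answering $n$ online pairs $(u_i,v_i)\in\{0,1\}^n\times\{0,1\}^n$ returning $u_i^{\top}Mv_i$ requires $n^{3-o(1)}$ total query time. Given $M$, I will build a graph $H$ on $\{s,t\}\cup\{r_1,\ldots,r_n\}\cup\{c_1,\ldots,c_n\}$ with edges $\{s,r_i\}$ for all $i$, $\{c_j,t\}$ for all $j$, and $\{r_i,c_j\}$ whenever $M_{ij}=1$, so $|V(H)|=O(n)$ and $|E(H)|=O(n^2)$; then I preprocess $H$ with $\A$ in $t_p=\poly(n)$ time. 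For each query $(u,v)$, the deletion set $D=\{r_i:u_i=0\}\cup\{c_j:v_j=0\}$ has $|D|\le 2n$, and the only possible $s$--$t$ paths in $H\setminus D$ are of the form $s\to r_i\to c_j\to t$ with $u_i=v_j=M_{ij}=1$, so $s,t$ are connected iff $u^{\top}Mv=1$. Since each \OuMv{} query costs one update plus one connectivity query to $\A$, the conjecture gives $n(t_u+t_q)\ge n^{3-o(1)}$, and with $d=\Theta(n)$ this yields $t_u+t_q=\Omegahat(d^2)$.

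For part (4), the plan is to reduce from the combinatorial BMM hypothesis, which states that $k\times k$ Boolean matrix multiplication requires $k^{3-o(1)}$ combinatorial time. Given matrices $A,B$, I will build the standard tripartite graph on parts $U,V,W$ each of size $k$, with an edge $u_iv_\ell$ iff $A_{i\ell}=1$ and an edge $v_\ell w_j$ iff $B_{\ell j}=1$. The resulting graph has $n=\Theta(k)$, $m=\Theta(k^2)$, and for each $(i,j)\in[k]\times[k]$ the deletion $D_{ij}=(U\setminus\{u_i\})\cup(W\setminus\{w_j\})$ of size $2(k-1)$ makes $u_i$ and $w_j$ connected in $G\setminus D_{ij}$ iff some $v_\ell$ survives with $A_{i\ell}=B_{\ell j}=1$, i.e.\ iff $(AB)_{ij}=1$. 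With $d=\Theta(k)$, we have $md=\Theta(k^3)$, matching the combinatorial BMM lower bound. Recovering $AB$ through $\A$ costs $t_p+k^2(t_u+t_q)$, which must be $\ge k^{3-o(1)}$, so provided the query-phase contribution can be bounded below $md$ we obtain $t_p=\Omegahat(md)$.

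The main obstacle for part (4) is precisely this accounting: $t_u+t_q$ is only assumed to be $\poly(dn^{o(1)})$, so a naive reduction lets the $k^2$ queries alone exceed $k^{3-o(1)}$ and leaves no constraint on $t_p$. To circumvent this, I will parametrize the reduction carefully so that the query-phase cost is strictly dominated by $md$. Two natural options are: (a) embedding many independent BMM instances of smaller side length $k'$ inside a common host graph of $n\gg k'$ vertices, so that the $n^{o(1)}$ slack in $t_u$ relative to $d^{O(1)}$ becomes useful while $md$ still scales like the total BMM work; or (b) designing a richer gadget in which a single update--query pair certifies $\Omega(d)$ entries of $AB$ at once, cutting the total number of queries by a factor of $d$. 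Calibrating this trade-off between graph size, failure budget, and the strength of the combinatorial BMM hypothesis is the crux of the formal argument.
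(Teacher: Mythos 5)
Your reduction for part (2) is correct and is exactly the paper's proof of \Cref{thm:UpdateLowerBound}: the same two-apex graph $\{s,t\}\cup\{r_i\}\cup\{c_j\}$ with $D$ encoding the zero coordinates of $(u,v)$, and the same $n(t_u+t_q)\ge n^{3-o(1)}$ accounting from \OuMv. Parts (1) and (3) are correctly deferred to the cited sources.

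For part (4), however, there is a genuine gap, and you have correctly located it yourself: with a square tripartite instance ($d=\Theta(k)$, $k^2$ queries, each query allowed to cost $\poly(d)$), the query phase alone can exceed $k^{3-o(1)}$ and no constraint on $t_p$ survives. Neither of your proposed fixes closes this gap. Option (a) — packing many independent $k'\times k'$ instances into a large host graph — does not help: with $N$ instances you still need $N(k')^2$ queries to certify $N(k')^3$ total BMM work, so the per-query budget remains $k'=d$, which any fixed $\poly(d)$ update/query time already exceeds. Option (b) is not realizable, since a single connectivity query returns one bit and cannot certify $\Omega(d)$ output entries. The paper's resolution (\Cref{thm:ConstructionLowerBound}) is to decouple $d$ from $n$ via \emph{rectangular} products: fix a small constant $\delta\in(0,1]$, set $\hat d=\Theta(\hat n^{\delta})$, and use one oracle per product of a $\hat d\times\hat n$ matrix with an $\hat n\times\hat d$ matrix (tripartite graph with outer layers of size $\hat d$ and middle layer of size $\hat n$, deleting all but one vertex in each outer layer per query). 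This instance needs only $\hat d^2=d^2$ queries while its combinatorial cost is $\Omega(\hat n\hat d^{2})=\Omega(md)$, so the per-query budget is $\hat n=d^{1/\delta}$; for any fixed $t_u,t_q=\poly(dn^{o(1)})$ one chooses $\delta$ small enough that $\poly(d)=\littleo(d^{1/\delta})$, forcing $t_p=\Omegahat(nd^2)=\Omegahat(md)$. Square $\hat n\times\hat n$ BMM is then tiled into $(\hat n/\hat d)^2$ such rectangular products to contradict \Cref{conj:BMM}. Without this rectangular/parameter-decoupling step your argument does not yield part (4).
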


The above result basically says that (1) we cannot improve the space
bound, (2) either update or query time must be at least $\Omegahat(d^{2})$,
(3) query time must be at least $\Omegahat(d)$, and (4) the preprocessing
time must be at least $\Omegahat(md)$. This means that all parameters
from \Cref{thm:main} are simultaneously tight up to sub-polynomial
factors. Note, however, that the preprocessing time is tight only
for combinatorial algorithms.

\begin{table}
\footnotesize{

\begin{tabular}{|>{\raggedright}p{0.18\textwidth}|c|c|c|c|c|}
\hline 
 & {Det./Rand.} & Space & Preprocessing & Update & Query\tabularnewline
\hline 
\hline 
Block trees, \\
SQRT trees, and \cite{kanevsky1991line} \\
only when $d_{\star}\le3$ & Det. & $O(n)$ & $\Otil(m)$ & $O(1)$ & $O(1)$\tabularnewline
\hline 
Duan \& Pettie \cite{duan2010connectivity} for $c\ge1$ & Det. & $\Otil(\mbar d_{\star}^{1-\frac{2}{c}}n^{\frac{1}{c}-\frac{1}{c\log(2d_{\star})}})$ & linear in space & $\Otil(d^{2c+4})$ & $O(d)$\tabularnewline
\hline 
\multirow{2}{0.18\textwidth}{Duan \& Pettie \cite{DuanP20}} & Det. & $O(\mbar d_{\star}\log n)$ & $O(\mbar n\log n)$ & $O(d^{3}\log^{3}n)$ & $O(d)$\tabularnewline
\cline{2-6} 
 & Rand. & $O(\mbar\log^{6}n)$ & $O(\mbar n\log n)$ & $\bar{O}(d^{2}\log^{3}n)$ w.h.p. & $O(d)$\tabularnewline
\hline 
van den Brand \& Saranurak \cite{van2019sensitive}
& Rand. & $O(n^{2})$ & $O(n^{\omega})$ & $O(d^{\omega})$ & $O(d^{2})$\tabularnewline
\hline 
\multirow{2}{0.18\textwidth}{Pilipczuk et al. \cite{pilipczuk2021algorithms}} & Det. & $\mbar2^{2^{O(d_{\star})}}$ & $\mbar n^{2}2^{2^{O(d_{\star})}}$ & $2^{2^{O(d_{\star})}}$ & $2^{2^{O(d_{\star})}}$\tabularnewline
\cline{2-6} 
 & Det. & $n^{2}\poly(d_{\star})$ & $\poly(n)2^{O(d_{\star}\log d_{\star})}$ & $\poly(d_{\star})$ & $\poly(d_{\star})$\tabularnewline
\hline 
\multirow{2}{0.18\textwidth}{\textbf{This paper}} 
& Det. & $O(\mbar\log^{3}n)$ & $O(\mbar n\log n)$ & $\bar{O}(d^{2}\log^{3}n\log^{4}d)$ & $O(d)$\tabularnewline
\cline{2-6}
& Det. & $O(\mbar\log^{*}n)$ & $\Ohat(m)+\Otil(\mbar d_{\star})$ & $\Ohat(d^{2})$ & $O(d)$\tabularnewline
\hline 
\end{tabular}

}

\caption{Complexity of known vertex-failure connectivity oracles. $\protect\mbar=\min\{m,n(d_{\star}+1)\}$\label{tab:compare}. The randomized algorithms are all Monte Carlo. The $\log^{*}n$ factor in the space of the last result can be substituted with any slowly growing function.}
\end{table}

As a side result, we also study a related problem of \emph{vertex-failure
global connectivity oracles.} This oracle is similar to a vertex-failure
connectivity oracle, except that in the second phase, after we are
given $D\subset V$, we only need to return if $G\setminus D$ is
connected and there is no third phase. Surprisingly, although this
problem might seem easier because there is only one binary number
to maintain instead of $\binom{n}{2}$ possible queries, we show that
there is no oracle with reasonable update time for this problem. 
\begin{theorem}
\label{thm:main lower global}Let $\A$ be a vertex-failure global
connectivity oracle with $\poly(n)$ preprocessing time. Assuming
SETH, $\A$ must spend $\Omegahat(n)$ update time even when the deletion
set $D$ has size $|D|=n^{o(1)}$.
\end{theorem}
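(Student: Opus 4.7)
The plan is to reduce from Orthogonal Vectors (OV). Under SETH, for any $\varepsilon>0$ no algorithm decides in $N^{2-\varepsilon}$ time whether some pair $(a,b)\in A\times B$ satisfies $\langle a,b\rangle=0$, where $A,B\subseteq\{0,1\}^d$ with $|A|=|B|=N$ and $d=\omega(\log N)$. I will choose $d=N^{o(1)}$, which still falls within the OV lower bound, and encode for each $a\in A$ the question ``is some $b\in B$ orthogonal to $a$?'' as one global-connectivity update with deletion set of size at most $d=N^{o(1)}$ on a single fixed graph.

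\medskip

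\noindent\emph{Graph.} Construct $G$ on vertex set $\{r\}\cup\{c_1,\dots,c_d\}\cup\{v_b:b\in B\}$ with edges $\{(r,c_i):i\in[d]\}\cup\{(v_b,c_i):b_i=1\}$, so $|V(G)|=\Theta(N)$ and $|E(G)|=N^{1+o(1)}$. For a query $a\in A$, set $D_a=\{c_i:a_i=0\}$. In $G\setminus D_a$ the surviving coordinate vertices $\{c_i:a_i=1\}$ and $r$ form a star, while vector vertex $v_b$ has a surviving neighbor iff some $i$ satisfies $a_i=b_i=1$, i.e., iff $\langle a,b\rangle\neq 0$. Hence $G\setminus D_a$ is connected exactly when no $b\in B$ is orthogonal to $a$, so one update of the global-connectivity oracle $\A$ on input $D_a$ resolves the single-$a$ OV subproblem.

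\medskip

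\noindent\emph{Complexity and obstacle.} Preprocess $G$ once and then perform one update per $a\in A$, for total time $t_p(\Theta(N))+N\cdot t_u(\Theta(N),N^{o(1)})$. If the oracle had update time $t_u(n,d)=O(n^{1-\varepsilon})$ whenever $d=n^{o(1)}$, the update term alone would contribute only $O(N^{2-\varepsilon})$, breaking OVH provided the $\poly(N)$ preprocessing is subquadratic. The main obstacle is precisely this preprocessing: the theorem permits any constant-degree polynomial $t_p(n)=n^c$, which on its own can exceed $N^2$. I would neutralize it with the standard chunking reduction: partition $B$ into $N/M$ chunks of size $M=N^\alpha$, preprocess one graph of size $\Theta(M)$ per chunk, and run all $N$ queries against each chunk. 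Preprocessing across chunks costs $N^{1+\alpha(c-1)}$ and queries cost at most $N^{2-\alpha\varepsilon}$ in total, so choosing $\alpha\in(\delta/\varepsilon,(1-\delta)/(c-1))$ for any sufficiently small $\delta<\varepsilon/(c-1+\varepsilon)$ makes both terms $O(N^{2-\delta})$, contradicting OVH and hence SETH. The conceptual core is the graph construction, which compresses an OR over all $b\in B$ of a per-$b$ predicate depending on $a$ into a single global-connectivity bit while keeping $|D|$ at $N^{o(1)}$; the rest of the argument is routine balancing.
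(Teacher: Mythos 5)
Your proposal is correct and is essentially the paper's proof: the paper uses the same coordinate-vertex gadget (a hub $r$/$u$, one vertex per coordinate attached to the hub, one vertex per vector attached to the coordinates where it is $1$, and deletion of the coordinates where the query vector is $0$), with the roles of $A$ and $B$ merely swapped. The only presentational difference is that the paper absorbs the polynomial preprocessing by invoking the Unbalanced Orthogonal Vectors Hypothesis of Bringmann--K\"unnemann rather than performing your explicit chunking of $B$, but UOVH is itself derived from SETH by exactly that balancing argument.
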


This is in stark contrast with the edge version; edge-failure connectivity
oracles of \cite{patrascu2007planning,kapron2013dynamic,gibb2015dynamic}
with $\Otil(d)$ update time can also answer global connectivity and
even count the number of connected components.

\paragraph{Technical Highlights.}

The technical novelty of this paper lies in the new oracle from \Cref{thm:main}
with two parts to be highlighted. 

First, we give a $\Ohat(m)$-time algorithm for computing the \emph{low-degree
hierarchy }of \cite{DuanP20}. This step previously requires
$O(mn\log n)$ time in the preprocessing algorithm by Duan and Pettie
\cite{DuanP20} and it was their bottleneck. This new
algorithm then leads to faster preprocessing time. Actually, similar hierarchies are the key structures behind most vertex-failure
oracles (for the connectivity problem \cite{duan2010connectivity}
and for the shortest path problem \cite{duan2021approximate}) Our
construction is based on expander decomposition, which is very versatile,
and so we expect that the approach can be adapted to solves more problems
under vertex failures. 

Towards our construction, we give the first almost-linear-time algorithm for computing \emph{vertex-expander decomposition}.
Given that in the last decade numerous fast graph algorithms are based on edge-expander decomposition (see e.g.~\cite{ST04,kelner2014almost,racke2014computing,nanongkai2017dynamic,chalermsook2021vertex}), our fast vertex-expander decomposition will likely very useful.
To facilitate future applications, we show a general decomposition that works even for general vertex weights and demands. We even generalize the technique to obtain hypergraph-expander decomposition. See \Cref{lem:overview:decomp} for the basic version,  \Cref{thm:ComplexDecomp} for the weighted version and \Cref{thm:HyperDecomp} for the hypergraph version.

Second, given a near-optimal low degree hierarchy, the main challenge
is designing an oracle with small space and update time based on it.
There were two previous solutions by Duan and Pettie \cite{DuanP20}:
(1) a randomized Monte-Carlo near-optimal algorithm with $\Otil(m)$
space and $\Otil(d^{2})$ update time, and (2) a deterministic algorithm
with $\Otil(md_{\star})$ space and $\Otil(d^{3})$ update time. We
show how to achieve the best of both, near-optimal and deterministic. The techniques we developed can be interpreted as a deterministic \textit{repairable hypergraph-to-graph connectivity-sparsifer} which transforms a hypergraph to a graph with the same vertex set and almost preserves pairwise connectivity in the vertex-failure setting. Actually, the sparsifer is somewhat stronger because it preserves the connectivity of each hyperedge individually. Thus we expect that this tool can have applications on other vertex-failures problem. See the discussion at the very end of \Cref{sec:overview construct} for details.

\paragraph{Related work.}

In \cite{henzinger2016incremental}, Henzinger and Neumann studied the
same problem but in the \emph{fully dynamic} setting where the updates
can both turn-on or turn-off vertices (not just turning-off as in
our setting). Their result is a reduction to the deletion-only version
and so our new result implies an improved solution in their setting
too. When updates are given as an online sequence and fully dynamic,
this is exactly the well-studied \emph{dynamic subgraph connectivity
}problem\emph{ }\cite{frigioni2000dynamically,chan2006dynamic,chan2011dynamic,duan2010new,duan2017faster}
whose complexity was just settled for combinatorial algorithms by the recent conditional lower bounds by \cite{jin2022tight}. In
planar graphs, a near-optimal vertex-failure connectivity oracle was
given by Borradaile et al. \cite{borradaile2012connectivity}.

\paragraph{Organization.}

In \Cref{sect:TechnicalOverview} we give an overview of our techniques. We begin with some basic notations in \Cref{sect:Preliminaries}. In \Cref{sect:VertexExpander} we discuss the vertex expander decomposition. In \Cref{sect:LowDegreeHierarchy} we introduce the low degree hierarchy in \cite{DuanP20} and show an efficient construction algorithm. In \Cref{sect:InterCom} we augment the low degree hierarchy with some affiliated structures, and we show the update and query algorithm in \Cref{sect:UpdateAndQuery}. In \Cref{sect:LowerBound}, we show the conditional lower bounds which almost match the upper bound from our oracle, and we discuss some open problems in \Cref{sect:OpenProblems}. \Cref{Appendix:OmittedProof} includes some proofs omitted in \Cref{sect:VertexExpander} and \Cref{sect:InterCom}. In \Cref{sect:DetVertexFlow} we discuss a deterministic approximate vertex-capacitated max flow algorithm. In \Cref{sect:HyperDecomp} we extend the expander decomposition results in \Cref{sect:VertexExpander} to hypergraphs.

\section{Technical Overview}
\label{sect:TechnicalOverview}
In this section, we give an overview of our main technical contribution,
which consists of two parts below.

\subsection{Low Degree Hierarchy via Vertex Expander Decomposition}

\label{sec:overview:hierarchy}

The \emph{low degree hierarchy} of Duan and Pettie \cite{DuanP20}
is the crucial underlying structure of their oracle and also ours. We discuss it in \Cref{sect:LowDegreeHierarchy} in details.
Although its definition is quite complicated (see \Cref{def:LowDegreeHierarchy}), its purpose
is easy to describe. Informally, this hierarchy allows us to assume
that the input graph is of the following format\footnote{This is informal because this claim is slightly simplistic; the graph
$G$ is a bit more complicated (see \Cref{def:AbstractGraph} of the abstract graph).}:
\begin{enumerate}
\item $G=(L,R,E)$ is a semi-bipartite graph (i.e. there is no edges between
$R$),
\item $L$ is spanned by a path $\tau$ and $\tau$ is given to us. 
\end{enumerate}
We will see why this graph can be useful in \Cref{sec:overview construct}.%

{} For now, we focus on constructing the hierarchy itself.

Duan and Pettie \cite{DuanP20} showed that constructing the hierarchy reduces to
$O(\log n)$ calls to a decomposition related to Steiner forests.
For any terminal set $T\subseteq V$ in a graph $G$, we say that
a forest $F\subseteq E$ is a \emph{$T$-Steiner forest} if, for every
terminal pair $u,v\in T$, $u$ and $v$ are connected in $F$ iff
there are connected in $G$. The decomposition says that, for any
terminal set $T$, one can delete $|T|/2$ vertices from the graph,
so that the resulting graph contains a low-degree Steiner forest for
the remaining terminals. This is formalized as follows:
\begin{lem}
[Low-degree Steiner Forest Decomposition \cite{DuanP20}]\label{lem:lowdeg forest decomp}Let
$G=(V,E)$ be a graph with terminal set $T\subseteq V$. There is
a $O(m|T|\log|T|)$-time algorithm that computes 
\begin{itemize}
\item a vertex set $X$ of size at most $|T|/2$, and
\item a $(T\setminus X)$-Steiner forest $F$ in $G\setminus X$ with maximum
degree $4$.%

\end{itemize}
\end{lem}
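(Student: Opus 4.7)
The plan is to start from any Steiner forest of $T$ in $G$ and iteratively identify a small ``hub set'' $X$ of high-degree vertices using a handshake argument. First I would compute an initial $T$-Steiner forest $F_0$ in $G$ by a multi-source BFS from the terminals in $O(m)$ time (each vertex links to a parent along a shortest path to its nearest terminal, and BFS trees whose roots lie in the same component of $G$ are glued by a single extra edge). I would then simplify $F_0$ by repeatedly deleting non-terminal leaves (which carry no connectivity information) and suppressing non-terminal degree-$2$ vertices, obtaining a topological Steiner forest $F_0^{\top}$ in which every leaf lies in $T$ and every non-terminal internal vertex has degree at least $3$.

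The structural core of the argument is a handshake count on $F_0^{\top}$: in each tree $\sum_v (2-\deg(v)) = 2$, leaves (all in $T$) contribute $+1$ each, degree-$2$ and degree-$3$ and degree-$4$ vertices contribute at least $-2$, and every degree-$\ge 5$ vertex contributes at most $-3$. Letting $L \le |T|$ be the number of leaves of $F_0^{\top}$ and letting $X$ be its set of degree-$\ge 5$ vertices, this yields $3|X| \le L-2 \le |T|$, so $|X| \le |T|/3 \le |T|/2$. Because suppressing non-terminal degree-$2$ vertices does not alter the degree of any other vertex, the set $X$ is identical to the set of degree-$\ge 5$ vertices of $F_0$ itself.

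It remains to exhibit a $(T\setminus X)$-Steiner forest of max degree $4$ in $G\setminus X$. Taking $F_0\setminus X$ gives a forest of max degree at most $4$ automatically, but it may fail to connect pairs of remaining terminals whose $F_0$-path went through $X$ even though they are still connected in $G\setminus X$. To repair this, I would recompute a Steiner forest of $T\setminus X$ in $G\setminus X$ from scratch by BFS and re-apply the hub-detection step, iterating until no degree-$\ge 5$ vertex remains. With at most $|T|$ iterations and $O(m\log|T|)$ work per iteration (heap-based BFS to maintain the forest), the total running time comes out to $O(m|T|\log|T|)$.

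The hard part, where I expect the real work to lie, is proving that the cumulative size of $X$ across all iterations stays within $|T|/2$ rather than cascading. Each newly added hub of degree $d\ge 5$ in the current Steiner forest must be charged to at least two terminals in distinct subtrees created by its removal, in such a way that no terminal is ever charged twice across the entire execution. A natural scheme is to orient each topological tree toward a chosen root, charge each hub to a canonical pair of terminals (say the deepest terminals in two of its smallest subtrees), and mark them as ``spent''; showing that this charging is consistent across the repeated recomputation of the Steiner forest, so that the global count of used terminals never exceeds $|T|/2$, is the heart of the proof.
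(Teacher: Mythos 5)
This lemma is not proved in the paper at all: it is imported verbatim from Duan and Pettie \cite{DuanP20}, and the only hint given is that it ``is based on the $(+1)$-additive-approximation algorithm for computing minimum degree spanning trees by Furer and Raghavachari \cite{furer1994approximating}.'' So your proposal is necessarily a from-scratch reconstruction, and it contains a genuine gap at exactly the point you flag yourself. The single-round handshake count is fine (it even gives $|X|\le |T|/3$ for one forest), but the object it produces is useless on its own: the degree-$\ge 5$ vertices of an \emph{arbitrary} BFS-based Steiner forest $F_0$ are high-degree only in $F_0$, not in $G$, so after deleting them the surviving terminals can still be connected in $G\setminus X$ through completely different routes, forcing a recomputation. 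Your repair loop then has no convergence guarantee: the new hubs need not be terminals, so $|T_i|$ does not shrink, each round can contribute up to $|T|/3$ fresh hubs, and nothing rules out $\Theta(|T|)$ rounds. The charging scheme you sketch at the end (``charge each hub to two terminals in distinct subtrees, mark them spent'') is precisely the step that fails: a terminal pair separated by a hub in the \emph{forest} need not be separated by that hub in the \emph{graph}, so the same terminals can legitimately be re-used to certify hubs in later rounds, and the global budget of $|T|/2$ is never enforced.

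The missing ingredient is the witness (blocking-set) structure supplied by F\"urer--Raghavachari's local search. When their algorithm terminates on $(G_i,T_i)$ with a Steiner tree of maximum degree $\Delta\ge 5$, it certifies a set $W_i$ of high-degree vertices whose removal from the \emph{graph} $G_i$ (not merely from the tree) shatters the remaining terminals into at least $(\Delta-2)|W_i|+\Omega(1)\ge 3|W_i|$ additional terminal-containing components. Since the number of terminal-containing components of the current graph can never exceed $|T|$, these increments telescope and give $\sum_i |W_i|\le |T|/3<|T|/2$ over all iterations, while the final forest has maximum degree $4$ by the stopping condition; the $O(m|T|\log|T|)$ bound is the cost of the local search. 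Without local optimality and its witness, no choice of initial forest and no handshake argument yields a set whose removal provably disconnects terminals in $G$, which is why your iteration cannot be closed as written.
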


\Cref{lem:lowdeg forest decomp} is based on the $(+1)$-additive-approximation
algorithm for computing minimum degree spanning trees by Furer and
Raghavachari \cite{furer1994approximating}. Calling \Cref{lem:lowdeg forest decomp}
with $T=V$ leads to $\Omega(mn\log n)$ preprocessing time in \cite{DuanP20}. 

Our key contribution is showing that, when the maximum degree of $F$
is relaxed from $4$ to $n^{o(1)}$, the running time of \Cref{lem:lowdeg forest decomp}
can be improved from $O(m|T|\log|T|)$ to $\Ohat(m)$. We completely
bypass \cite{furer1994approximating} and exploit \emph{vertex expanders}
instead. 

Recall that a \emph{vertex cut} $(L,S,R)$ is a vertex partition such
that $L,R\neq\emptyset$ and there is no edge between $L$ and $R$.
We say that a terminal set $T$ is \emph{$\phi$-linked in $G$ }if,
for any vertex cut $(L,S,R)$ in $G$, we have that $|S|\ge\phi\min\{|(L\cup S)\cap T|,|(R\cup S)\cap T|\}$.
If the whole set $V$ is \emph{$\phi$-linked} in $G$, then we say
$G$ is a \emph{$\phi$-vertex-expander}. When $G$ is a $\frac{1}{n^{o(1)}}$-vertex-expander,
we usually just say that $G$ is a vertex expander. When $T$ is $\frac{1}{n^{o(1)}}$-linked
in $G$, then we say that $T$ is \emph{well-linked} $G$.

Our starting point is the observation that any vertex expander contains
a low degree spanning tree. More generally, if $T$ is well-linked
in $G$, then there is a low degree $T$-Steiner tree. Chekuri et
al. \cite{chekuri2005multicommodity} showed how to compute such tree
in polynomial time. As this is too slow for us, we show that how to
compute it in almost-linear time:
\begin{lem}
\label{lem:low deg tree}There is a deterministic algorithm that,
given any $\phi$-linked set $T$ in $G$, compute a $T$-Steiner
tree with maximum degree $O(\frac{\log^{2}n}{\phi})$ in $\Ohat(m)$
time.
\end{lem}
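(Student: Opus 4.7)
The plan is to build the Steiner tree by \emph{iterative halving} of the active terminal set over $O(\log |T|)$ rounds, using a single-source single-sink approximate vertex max flow in each round to route a matching between two halves of the active terminals. Initialize $T_0 := T$. In round $i$, arbitrarily bipartition $T_i$ into equal halves $L_i, R_i$, add an auxiliary super-source $s^{*}$ adjacent to $L_i$ and super-sink $t^{*}$ adjacent to $R_i$, and compute an approximate vertex-capacitated max $s^{*}$-$t^{*}$ flow in this augmented graph with vertex capacity $c = O(\log n / \phi)$ on every original vertex. Decompose the flow into integer paths; their endpoints pair up $L_i$ with $R_i$ into a matching $M_i$, and let $T_{i+1}$ contain one representative from each matched pair, so that $|T_{i+1}| = |T_i|/2$.

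The $\phi$-linkedness of $T$ passes down to every subset $T_i \subseteq T$, because in the defining inequality $|S| \ge \phi \min\{|(L\cup S)\cap T|,|(R\cup S)\cap T|\}$, replacing $T$ by $T_i$ only shrinks the right-hand side. In particular every vertex cut of $G$ separating $L_i$ from $R_i$ has separator size at least $\phi |T_i|/2$, so with vertex capacity $c = O(\log n / \phi)$ the max flow value is at least $|T_i|/2$. Decomposing the flow into integer paths (either by integer max flow on the vertex-split unit-capacity representation, or by standard LP rounding at the cost of the $\log n$ factor already absorbed into $c$) produces $|T_i|/2$ integer paths routing $L_i$ to $R_i$ with vertex congestion $O(\log n / \phi)$.

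After $O(\log |T|)$ rounds the active set shrinks to a single vertex, so the union $H := \bigcup_i M_i$ is a connected subgraph of $G$ spanning $T$. At any vertex $v$, each round contributes at most $O(\log n / \phi)$ incident edges (at most twice the per-round congestion at $v$), so $H$ has maximum degree $O(\log^2 n / \phi)$, and any spanning tree of the component of $H$ containing $T$ is a valid $T$-Steiner tree with the same maximum degree. Each round costs one approximate vertex-capacitated max flow call plus a linear-time flow decomposition, both running in $\Ohat(m)$ time using the deterministic algorithm of \Cref{sect:DetVertexFlow}, for a total of $\Ohat(m)$ across $O(\log |T|)$ rounds. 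The principal obstacle is deterministically extracting integer paths that achieve the claimed congestion bound and exactly halve the active set per round: this requires pairing the approximate max flow with an integral rounding step and slightly inflating $c$ to absorb the $(1-\varepsilon)$-approximation error without sacrificing the congestion bound.
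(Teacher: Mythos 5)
Your route is genuinely different from the paper's. The paper proves this lemma by running the full cut-matching game on the fixed terminal set $T$: the cut player of \cite{ChuzhoyGLNPS20,LiS21} adaptively chooses the bipartitions so that after $O(\log n)$ rounds the union of the embedded matchings is an \emph{expander} on $T$ (see \Cref{coro:EmbeddingOfExpander}), and connectivity of the path union follows from expansion. You instead use Bor\r{u}vka-style halving with \emph{arbitrary} bipartitions of a shrinking active set, and connectivity follows structurally (each matched pair is joined by a path and contributes one surviving representative, so by induction every terminal is connected to the final survivor). This buys you simplicity: you never need the cut player, only the flow-based matching step, because you do not need the union to be an expander --- only connected. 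The congestion accounting is the same in both approaches ($O(\log n/\phi)$ per round times $O(\log n)$ rounds), so the degree bound $O(\log^2 n/\phi)$ comes out identically, and the running time is $O(\log n)$ flow calls either way.

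There is one step that fails as written: a single call to the $(1-\epsilon)$-approximate vertex max flow of \Cref{lemma:DetVertexFlow} returns a flow of value only $(1-\epsilon)|T_i|/2$, so a constant (or at best $1/\polylog$) fraction of $L_i$ is left unmatched, and ``slightly inflating $c$'' does not repair this --- increasing capacities cannot force the approximate flow to saturate the source. Either of two fixes works. (a) Use the paper's matching player (\Cref{thm:RandMatchingPlayer}) with $z=0$ and linkedness parameter $\phi/2$: it iterates the flow $O(\log n)$ times internally, returns a \emph{perfect} matching with total congestion still $O(\log n/\phi)$, and $\phi$-linkedness of $T_i$ (which, as you correctly note, is inherited from $T$) rules out the sparse-cut outcome. (b) Keep unmatched terminals of both $L_i$ and $R_i$ active in $T_{i+1}$; then $|T_{i+1}|\le(1/2+O(\epsilon))|T_i|$ still shrinks geometrically for constant $\epsilon<1/2$, connectivity is preserved since every terminal either gets matched or survives, and the round count remains $O(\log|T|)$. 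With either repair, your argument is a valid alternative proof of the lemma.
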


\begin{proof}
[Proof sketch]We run in $\Ohat(m)$ time the cut-matching game between
vertices in $T$ where the cut player is from of \cite{ChuzhoyGLNPS20}
and the matching player is the deterministic almost-linear time vertex
max flow derived from \cite{bernstein2022deterministic} (see \Cref{lemma:DetVertexFlow}
for the proof). Since $T$ is $\phi$-linked, the game must return
an embedding ${\cal P}$ of an expander $W$ where $V(W)=T$ and maximum
degree $O(\log n)$, where ${\cal P}$ has vertex congestion at most
$O(\frac{\log n}{\phi})$. The union of all paths in ${\cal P}$ will
span $T$ can has maximum degree $O(\frac{\log^{2}n}{\phi})$, which
spans our desired $T$-Steiner tree. 
\end{proof}
Of course, we cannot assume that $T$ itself is well-linked. However,
via a standard expander decomposition framework, 
the problem
can be reduced to the well-linked case modulo removing some vertices.
To formally carry out this approach, however, we will need a fast
vertex expander decomposition algorithm, and, indeed, we show how
to achieve that in almost-linear-time (see \Cref{thm:ComplexDecomp} and \Cref{Coro:WeightedDecomposition} for details):
\begin{lem}
[Vertex Expander Decomposition with Terminals]\label{lem:overview:decomp}There
is a deterministic algorithm that, given a graph $G=(V,E)$, a terminal
set $T\subseteq V$ and a parameter $\phi$, in $\Ohat(m)$ time computes
a separator set $X$ of size $|X|\le\phi|T|n^{o(1)}$ such that, for
each connected component $G_{i}=G[V_{i}]$ in $G\setminus X$, $T\cap V_{i}$
is $\phi$-linked in $G_{i}$. In particular, if $T=V$, then $G_{i}$
is a $\phi$-vertex expander.
\end{lem}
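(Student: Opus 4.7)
The plan is to follow the standard expander decomposition meta-framework but instantiated in the vertex-capacitated setting. First I would build a ``certify-or-cut'' primitive: given $(G,T,\phi)$, the primitive runs in $\Ohat(m)$ time and either (i) certifies that $T$ is $\phi$-linked in $G$ by producing a low-vertex-congestion embedding into $G$ of a bounded-degree expander on $T$, or (ii) outputs a vertex cut $(L,S,R)$ that is both sparse ($|S|\le \phi|T|\cdot n^{o(1)}$) and balanced ($\min\{|T\cap(L\cup S)|,\,|T\cap(R\cup S)|\}\ge |T|/n^{o(1)}$). This primitive is obtained by running the cut-matching game of \cite{ChuzhoyGLNPS20} on the terminal set, instantiating the cut player as in \cite{ChuzhoyGLNPS20} and the matching player via the deterministic almost-linear-time vertex max flow of \Cref{lemma:DetVertexFlow}. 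In each of the $\polylog n$ rounds, the matching player attempts to route the cut player's bipartition by $|T|/2$ vertex-disjoint (up to low vertex congestion) paths; if every round succeeds, the union of the embedded matchings yields the desired expander embedding and witnesses $\phi$-linkedness of $T$ in $G$, and if a round fails, the vertex mincut returned by the flow algorithm is exactly the required balanced sparse vertex cut.

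Next I would lift this primitive to a full decomposition by the standard recursion. On input $(G,T,\phi)$, if the primitive certifies $\phi$-linkedness, halt; otherwise add $S$ to the separator $X$ and recurse on $(G[L],T\cap L,\phi)$ and $(G[R],T\cap R,\phi)$. Since the sibling subproblems live on disjoint vertex sets, the total edge count across all active subproblems at a single recursion level is at most $m$, so each level costs $\Ohat(m)$ time. The balance guarantee forces each subproblem to inherit at most $(1-1/n^{o(1)})|T|$ terminals, so the recursion depth is $n^{o(1)}\log n$, yielding an overall runtime of $\Ohat(m)$.

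For the separator-size bound I would argue level-by-level. At any recursion level, the sum of terminal counts across active subproblems is at most $|T|$, so that level contributes at most $\phi|T|\cdot n^{o(1)}$ to $|X|$; summing over the $n^{o(1)}\log n$ levels gives $|X|\le \phi|T|\cdot n^{o(1)}$, as required. The output property is immediate from the leaves of the recursion: each connected component $G_i=G[V_i]$ of $G\setminus X$ is a leaf on which the primitive has certified that $T\cap V_i$ is $\phi$-linked in $G_i$, and in the special case $T=V$ this means $G_i$ is a $\phi$-vertex-expander.

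The main technical obstacle is the clean implementation of the \emph{vertex}-capacitated cut-matching game together with a deterministic vertex max flow. Unlike the edge setting, the matching player must route paths with bounded \emph{vertex} congestion, and the mincut it produces must translate into a balanced sparse \emph{vertex} separator on the terminals (not merely a sparse edge cut). Reconciling the sparsity--balance guarantees of \cite{ChuzhoyGLNPS20} with the approximation guarantees of the deterministic vertex max flow of \cite{bernstein2022deterministic} (through the interface of \Cref{lemma:DetVertexFlow}) so that their losses compose into a single $n^{o(1)}$ overhead, and so that terminals absorbed into $S$ during a failed round do not spoil the balance of the recursion, is where the care is needed; once this primitive is in place, the remainder of the decomposition goes through by the same accounting as its edge-capacitated analogues.
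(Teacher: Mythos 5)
Your high-level recursion and the charging argument for $|X|$ are fine, but the primitive they rest on does not exist as stated, and this is the crux of the whole lemma. You claim a dichotomy: either certify that $T$ is $\phi$-linked, or produce a sparse vertex cut that is \emph{balanced}, with $\min\{|T\cap(L\cup S)|,|T\cap(R\cup S)|\}\ge |T|/n^{o(1)}$. No such dichotomy holds: a graph can fail to be $\phi$-linked solely because of a very unbalanced sparse cut (think of a small cluster of terminals hanging off a single vertex), in which case no balanced sparse cut exists at all, yet you cannot certify linkedness either. The cut-matching game does not rescue you here. To force the matching player to return a balanced cut you must allow its matchings to be deficient by some $z\approx|T|/n^{o(1)}$ (this is the role of $z$ in \Cref{thm:RandMatchingPlayer}); but then the deficiency is covered by fake edges, the embedded graph is only a ``nearly expander,'' and on termination you can only certify that cuts with \emph{both} sides of terminal mass $\gtrsim z\cdot\mathrm{polylog}$ are non-sparse (\Cref{lemma:certificate}) --- not full $\phi$-linkedness, which is what your leaves need. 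Conversely, with $z=0$ you do certify linkedness on success, but on failure the returned cut is only guaranteed to have both sides nonempty, so your depth bound and per-level terminal accounting collapse.

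The paper closes exactly this gap with a different dichotomy (\Cref{lemma:CutOrCertify}): either a balanced sparse cut with both sides $\ge\d(V(G))/3$, or an \emph{unbalanced} sparse cut $(L,S,R)$ together with a certificate that the large side $G[R]$ is already a $\phi$-expander. The second branch is obtained by iterating the most-balanced-sparse-cut subroutine (\Cref{lemma:MostBalanceCut}) with geometrically adjusted parameters $(\phi_i,z_i)$ over $r$ rounds, pruning off small sparse pieces until the residual graph is certified (\Cref{lemma:AdjustParameters}); the recursion then descends only into the small side $L$, which carries at most half the demand, so the depth stays $O(\log n)$ without any balance guarantee from the game itself. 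You gesture at this difficulty in your final paragraph (``terminals absorbed into $S$ \dots do not spoil the balance''), but the problem is not one of bookkeeping: the missing ingredient is the third output mode --- an unbalanced cut whose large side comes with an expander certificate --- and the pruning argument that produces it. Without it, your recursion has no valid base case on non-expanders that admit only unbalanced sparse cuts.
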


\paragraph{Speed up the Low-degree Steiner Forest Decomposition.}

Now, we are ready to speed up \Cref{lem:lowdeg forest decomp}. Given
a terminal set $T$, we call \Cref{lem:overview:decomp} with $\phi=1/n^{o(1)}$
and obtain the vertex set $X$ of size at most $|T|/2$ such that,
for each connected component $G_{i}=G[V_{i}]$ in $G\setminus X$,
we have that $T\cap V_{i}$ is well-linked in $G_{i}$. Then, we just
apply \Cref{lem:low deg tree} on each $G_{i}$ with terminal $T\cap V_{i}$.
The union of Steiner tree on each $G_{i}$ gives us the $(T\setminus X)$-Steiner
forest $F$ with maximum degree $n^{o(1)}$ as desired. 

To compute the low-degree hierarchy itself, we call the low-degree Steiner
forest decomposition $O(\log n)$ in the same way as in \cite{DuanP20}.
The idea is simple. Set $T_{1}\gets V$ as an initial terminal set.
For any $i>1$, we invoke \Cref{lem:lowdeg forest decomp} with terminal
set $T\gets T_{i-1}$ and obtain $T_{i}\gets X$. As $|T_{i}|\le|T_{i-1}|/2$,
we have $T_{O(\log n)}=\emptyset$. The low-degree hierarchy is naturally defined based on $T_{1},T_{2},\dots$. See \Cref{sect:HierarchyConstruction} for details. 

\subsection{Optimal Deterministic Oracles }

\label{sec:overview construct}

Next, we give an overview how to obtain a vertex-failure oracle with
near-optimal space and update time. (Achieving near optimal preprocessing
and query time is relatively simpler.) This part is discussed in \Cref{sect:InterCom} and \Cref{sect:UpdateAndQuery} in details. We will start by describing
the previous deterministic construction with $\Otil(md)$ space and
$\Otil(d^{3})$ update time of \cite{DuanP20}. Then,
we show how to improve either space or update time individually. Obtaining
both improvement simultaneously turn out to be challenging. This is
the most technically involved step of this part and we will sketch
the idea at the end of the overview.

Throughout this overview, we will assume that the input graph $G=(L,R,E)$
is a semi-bipartite graph (i.e.~there is no edge between $R$). Moreover,
$L$ is spanned by a path $\tau$ and $\tau$ is given to us. This
restricted structure of $G$ will allows us to explain the key ideas
more clearly. Actually, as mentioned in \Cref{sec:overview:hierarchy},
the low-degree hierarchy ``almost'' allows us to reduce the problem
to this case. So this assumption is almost without lost of generality.

To simplify notations, assume further that $|L|=|R|=n$ and the vertex
set $D$ to be deleted always contains $d$ vertices from $L$, denoted
by $D_{L}$, and $d$ vertices from $R$, denoted by $D_{R}$. 
In the updated graph $G\setminus D$, the path $\tau$ will be split
into $d+1$ connected intervals, denoted by $\I=I_{1},\dots,I_{d+1}$.
We call each interval $I\in\I$ a \emph{left interval } (or simply an \textit{interval}) and each vertex
$v\in R$ a \emph{right vertex}. Sometimes, we use $\gamma$ to denote
a right vertex, to make it consistent with the main body of the paper.
Let $A_{\gamma}\subseteq L$ denote the ordered set of vertices adjacent to $\gamma$ with order consistent with the path $\tau$.

To support connectivity queries in $O(d)$ time, given the set $D$,
it suffices (as shown in \cite{duan2010connectivity,DuanP20})
to compute connectivity between the left intervals $\I$ in $G\setminus D$,
i.e., whether $I$ and $I'$ are in the same connected components
of $G\setminus D$ for every $I,I'\in\I$. Suppose the query is $(u,v)$.
In the easy case when both $u,v\in L$, we just identify the intervals
$I$ containing $u$ and $I'$ containing $v$ in $O(1)$ time and
answer if $I$ and $I'$ are connected in $G\setminus D$ in $O(1)$.
Now, even if $u\in R$, we can identity a non-deleted neighbor $u'$
of $u$ in $O(d)$ time as there are only $O(d)$ deleted vertices.
Note that $u'\in L$ and, say, $u'\in I$. If $v\in R$, then we similarly
find a neighbor $v'\in L$ of $v$ in $O(d)$ time, and say $v'\in I'$.
Finally, we answer if $I$ and $I'$ are connected in $G\setminus D$.
The total query time is $O(d)$ time. 

From now, we discuss how to preprocess $G$ using small space so that,
given $D$, we can compute the connectivity between the left intervals
$\I$ in $G\setminus D$ fast. Below, for any set $L',L''\subset L$
on the left, we say that $L'$ and $L''$ are \emph{adjacent} in $G$
if there exists an edge $(u,v)\in L\times L''$ or there exists a
right vertex $\gamma$ adjacent to both vertices in $L'$ and $L''$.
We say that $L'$ and $L''$ are \emph{adjacent via $(u,v)$} in the
former case and \emph{adjacent via $\gamma$ }in the latter case.

The bounds discussed below will be slightly better than our final
bounds for general graphs. More precisely, we will pay an additional
$n^{o(1)}$ factor in update time and $\omega(1)$-factor in space
when apply the ideas together with the low-degree hierarchy.

\subsubsection{Warming-up: No right vertices.}

To motivate the idea, consider the extremely simple case when $R=\emptyset$.
We will show a data structure with $O(m)$ space and $\Otil(d^{2})$
update time. In this case, the whole edge set $E(G)$ can be represented
in a 2-dimentional table $\Table=[L]\times[L]$ where the entry $(u,v)$
is the number of (multi-)edges between $u$ and $v$. Hence, during
preprocessing we can build a range counting data structure on this
table, which takes $O(|E(G)|)=O(m)$ space. Given the deletion set
$D$ and so the left intervals $\I$ are defined, we can check if
$I$ and $I'$ are adjacent in $G\setminus D$ by querying the range
counting data structure in $\Otil(1)$ time if 
\[
|E(G)\cap(I\times I')|>0.\footnote{We note that in our complete algorithm, we will treat undirected edges in such counting problem as ordered pairs, so strictly speaking, we should check if $|E(G)\cap(I\times I')|+|E(G)\cap(I'\times I)|>0$. For simplicity, we will only consider a half of each counting problem throughout the overview, and the other half is almost symmetric.}
\]
By querying between all pairs $I,I'\in\I$, we deduce the connected
components of $\I$ in $\Otil(d^{2})$ time. 

\subsubsection{Deterministic Construction of \cite{DuanP20}: $\Otil(md)$
space and $\protect\Otil(d^{3})$ update time.}

We first describe the deterministic construction of \cite{DuanP20} that uses $\Otil(md)$ space and $\tilde{O}(d^{3})$ update time. We view the high-level idea of \cite{DuanP20} for handling
right vertices as follows: Construct a \emph{restricted} (multi-)graph
$\Gaug$ whose vertex set only contains $V(\Gaug)=L$, but $\Gaug$
should captures connectivity between left vertices of $G$ \emph{even
after vertex deletions}. Since $\Gaug$ contains only left vertices,
we can use the range counting idea on $\Gaug$. 

Here is a natural way to construct $\hat{G}$. Starting with $\hat{G}=G[L]$, for each right vertex $\gamma\in R$, add a $(d+1)$-vertex connected graph on its neighbors $A_{\gamma}\subseteq L$ (or a clique if $|A_{\gamma}|\leq d$), in which the edges are called the \textit{artificial edges} of $\gamma$, denoted by $\hat{E}_{\gamma}$. Specially, we define $\hat{E}_{\gamma}=A_{\gamma}\times B_{\gamma}$, where $B_{\gamma}\subseteq A_{\gamma}$ is an arbitrary subset of size $\min\{|A_{\gamma}|,d+1\}$.\footnote{We note that the deterministic construction in \cite{DuanP20} did not define $\Ehat_{\gamma}$ as a product set $A_{\gamma}\times B_{\gamma}$, but their $\hat{E}_{\gamma}$ still forms a $\min\{|A_{\gamma},d+1|\}$-connected graph on $A_{\gamma}$. Our product-set construction will be crucial for faster update time as will be explained in \Cref{sect:UpdateImprovement}.} Observe that $\Ehat_{\gamma}$ spans $A_{\gamma}$ with the following
\emph{fault-tolerant} property: for any $D_{L}\subseteq L$, $A_{\gamma}\setminus D_{L}$
(which are obliviously connected in $G\setminus D_{L}$ via $\gamma$)
are still connected in $\Gaug\setminus D_{L}$. It follows that, for
any deleted set $D=D_{L}\cup D_{R}$, the graph $\Gaug_{D}:=\Gaug\setminus(D_{L}\cup\bigcup_{\gamma\in D_{R}}\Ehat_{\gamma})$
preserves connectivity between all left vertices of $G\setminus D$
and, in particular, between all left intervals $\I$. So it suffices
to computed components of $\I$ in $\Gaug_{D}$. We can query if $I$ and $I'$ are adjacent in $\Gaug_{D}$ by checking
if 
\[
|E(\Gaug)\cap (I\times I')|-\sum_{\gamma\in D_{R}}|\Ehat_{\gamma}\cap(I\times I')|>0,
\]
which can be computed using $1+|D_{R}|=O(d)$ range queries. By querying
all pairs, the connected components of $\I$ in $\Gaug_{D}$ can be
deduced in $\Otil(d^{3})$ time as desired. The total space for range counting structures on $E(\hat{G})$ and all $\hat{E}_{\gamma}$ is $\Otil(|E(\Gaug)|+\sum_{\gamma}|\Ehat_{\gamma}|)=\Otil(|E(G)|+\sum_{\gamma}|A_{\gamma}|d)=\Otil(md)$.

\subsubsection{Update-Time Improvement: $\Otil(md)$ space and $\protect\Otil(d^{2})$
update time. }
\label{sect:UpdateImprovement}

Then we will show how to deterministically improve the update time to $\tilde{O}(d^{2})$. To improve the update time to $\Otil(d^{2})$, the high-level idea is that we want to avoid querying for all $O(d^{2})$
pairs of left intervals $I,I'\in\I$.
We will use a \Boruvka's based algorithm instead, which can reduce the number
of queries to $\Otil(d)$. However, the queries become more complicated. 

Recall that \Boruvka's algorithm uses the ``hook and merge'' approach. There are $O(\log|{\cal I}|)$ phases. In each phase there has already been a partition of intervals ${\cal I}$ into groups $Z_{1},...,Z_{z}$ such that intervals in a same group are in a same connected component. In this phase, these groups will be further merged by looking for an adjacent group for each $Z_{k}$\footnote{Namely, find another group $Z_{k'}$ such that some $I'\in Z_{k'}$ and some $I\in Z_{k}$ are adjacent.}, which can be reduced to some \textit{batched-adjacency queries}: given $Z_{k}$ and a batch of groups $Z_{l},Z_{l+1},...,Z_{r}$, is there any group in this batch adjacent to $Z_{k}$? This query can be answered by checking if 
\begin{equation}
    \sum_{I\in Z_{k}}\sum_{I'\in Z_{l,r}}|E(\hat{G})\cap(I\times I')|-\sum_{\gamma\in D_{R}}\sum_{I\in Z_{k}}\sum_{I'\in Z_{l,r}}|\hat{E}_{\gamma}\cap(I\times I')|>0,
    \label{eq:Boruvka}
\end{equation}%
where $Z_{l,r}=\bigcup_{l\leq k'\leq r}Z_{k'}$. By further exploiting the fact that each $\Ehat_{\gamma}$ is a product set $A_{\gamma}\times B_{\gamma}$, Inequality (\ref{eq:Boruvka}) is equivalent to
\[
\sum_{I\in Z_{k}}\sum_{I'\in Z_{l,r}}|E(\hat{G})\cap(I\times I')|-\sum_{\gamma\in D_{R}}\left(\sum_{I\in Z_{k}}|A_{\gamma}\cap I|\cdot\sum_{I'\in Z_{l,r}}|B_{\gamma}\cap I'|\right)>0.
\]

The key to fast update time is the following. We arrange intervals in an ordered list ${\cal I}_{t}$ ($t$ denotes the phase number) such that ${\cal I}_{t}$ is the concatenation of $Z_{1},...,Z_{z}$. 
Now, by preprocessing a 2D-counting table $\Table_t = \{|E(\hat{G})\cap (I\times I')|\}_{I\times I'\in {\cal I}_{t}\times{\cal I}_{t}}$ with size $|{\cal I}|^{2}$, the first term can be answered by a single 2D-range counting query on $\Table_t$.
Next, by preprocessing, for each $\gamma\in D_{R}$, a 1D counting array $\textsf{A-Array}_{\gamma,t} = \{|A_{\gamma}\cap I|\}_{I\in {\cal I}_{t}}$ and another 1D array $\textsf{B-Array}_{\gamma,t} = \{|B_{\gamma}\cap I|\}_{I\in {\cal I}_{t}}$, the second term can be answered using $2|D_{R}|$ queries to these arrays $\textsf{A-Array}_{\gamma,t}$ and $\textsf{B-Array}_{\gamma,t}$.
Therefore, the query time for each batched-adjacency query is $\Otil(d)$.
For each phase, we make $\Otil(d)$ such queries which take $\Otil(d^{2})$ total query time. 
Notice that since the ordered list ${\cal I}_{t}$ will be shuffled between phases,
$\Table_t, \textsf{A-Array}_{\gamma,t}$ and $\textsf{B-Array}_{\gamma,t}$ must be rebuilt at the beginning of each phase. But this takes $\Otil(d^{2})$ total time. Over all $O(\log |\cal{I}|) $ phases, the total update time is $\Otil(d^{2})$.

We emphasize that this approach highly relies on that each $\Ehat_{\gamma}$ is a product set $A_{\gamma}\times B_{\gamma}$, which enables us to consider two dimensions separately. If $\Ehat_{\gamma}$ is less structured, a 2D counting table is required for each $\gamma\in D_{R}$ in this approach, and we cannot afford to construct them in $\Otil(d^{2})$ update time. The space is still $O(md)$ as shown above. Note that this is an inherent limitation of this approach. As long as $\Ehat_{\gamma}$
forms $\min\{|A_{\gamma}|,d+1\}$-connected graph on $A_{\gamma}$, we will have space $\Omega(md)$ in the worst case.

\subsubsection{Space Improvement: $O(m)$ space and $\protect\Otil(d^{3})$ update
time. }

Now, we will improve the space to $O(m)$, but it will bring the update time back to $\Otil(d^{3})$. This step already gives the first deterministic oracle with
near-linear space. The high-level idea is to relax the fault-tolerant property in order to get sparser artificial edges. Concretely, it is acceptable that, for some right vertices $\gamma$, its artificial edges $\Ehat_{\gamma}$ do not satisfy the fault-tolerant property, but we want the number of such vertices to be small so we can afford to ``repair'' their $\Ehat_{\gamma}$ in an additional \textit{repairing phase} of the update algorithm. We note that the Monte Carlo construction in \cite{DuanP20} also takes advantage of a similar relaxation.

Towards this goal, let $\Ehat =\bigcup_{\gamma\in R}\Ehat_\gamma$ contains all artificial edges. 
We will redefine the artificial edges $\Ehat_{\gamma}$
of $\gamma$ so that the total number of artificial edges \textit{not counting multiplicity}, denoted by $\Vert\Ehat\Vert_{0}$, is linear to $\sum_{\gamma\in R}|A_{\gamma}|=O(m)$. The construction
works by scanning through right vertices $\gamma\in R$ one by one.
Initialize $\Ehat=\emptyset$. Then, for each $\gamma$, if $(u,v)\in \Ehat$
for all $(u,v)\in A_{\gamma}\times B_{\gamma}$ (as before $B_{\gamma}\subseteq A_{\gamma}$ is an arbitrary subset with size $\min\{|A_{\gamma}|,d+1\}$), then define $\Ehat_{\gamma}=A_{\gamma}\times B_{\gamma}$. Otherwise,
if there exists $(u_{\gamma},v_{\gamma})\notin \Ehat$ where $u_{\gamma},v_{\gamma}\in A_{\gamma}$,
then set $\Ehat_{\gamma}=A_{\gamma}\times\{u_{\gamma},v_{\gamma}\}$. Lastly, we insert $\Ehat_{\gamma}$ into $\Ehat$ and proceed to the next right vertex. 
Note that although $\sum_{\gamma\in R}|\Ehat_{\gamma}|$ can be large, $\Vert\Ehat\Vert_{0}$ is bounded by the number of new artificial edges added in each round. Since there is no new edge in the former case, it follows that  $\Vert\Ehat\Vert_{0}=\sum_{\gamma\in R}|A_{\gamma}|=O(m)$. 

We then show there are only a few $\Ehat_{\gamma}$ without the fault-tolerant property. Let $R_{\fault}$ contain all $\gamma$ where $D_{L}$ contains both $u_{\gamma}$ and $v_{\gamma}$. Observe that $A_{\gamma}\setminus D_{L}$
is not connected by $\Ehat_{\gamma}\setminus D_{L}$
only if $\gamma\in R_{\fault}$. As
$D_{L}$ contains at most $|D_{L}|^{2}$ different pairs of vertices,
so $|R_{\fault}|\le|D_{L}|^{2}=O(d^{2})$. Therefore, we have that
the graph $\Gaug_{D}:=\Gaug\setminus(D_{L}\cup\bigcup_{\gamma\in D_{R}}\Ehat_{\gamma})$
still preserves preserves connectivity between all left intervals
of $G\setminus D$, except that it might miss the connectivity information
from each $\gamma\in R_{\fault}$. 

To compute the connected components of $\I$ in $G\setminus D$, we
fist compute connected components of $\I$ in $\Gaug_{D}$. This can
be done in $\Otil(d^{2})$ time as in \Cref{sect:UpdateImprovement}. But then some connected components $\I$
in $\Gaug_{D}$ may be further connected via $\gamma\in R_{\fault}$.
Therefore, we add a \emph{reparing phase} in the update algorithm. For every $I\in\I$
and $\gamma\in R_{\fault}$, we simply check $I\cap A_{\gamma}\neq\emptyset$. If there is some $I,I'\in {\cal I}$ and $\gamma\in R_{\fault}$ such that $I\cap A_{\gamma}\neq\emptyset$
and $I'\cap A_{\gamma}\neq\emptyset$, then the connected components of
$I$ and $I'$ can be merged as $I$ and $I'$ are adjacent via $\gamma$. After these queries, we can finally deduce
connected components of $\I$ in $G\setminus D$. We can check
if $I\cap A_{\gamma}\neq\emptyset$ in $\Otil(1)$ time. 
However, there are $O(d^{3})$ pairs of $I,I'\in\I$ and $\gamma\in R_{\fault}$, the repairing phase needs $\Otil(d^{3})$ time.

Therefore, we obtain an oracle with $O(m)$ space and $\Otil(d^3)$ update time.

\subsubsection{Repairing-Time Improvement: $\Otil(d^{3})$ update time with $\Otil(d^{2})$ repairing time}
\label{sect:RepairImprovement}

As a prerequisite to our final construction in \Cref{sec:overview:final}, here we will improve the running time of the repairing phase from $\Otil(d^{3})$ to $\Otil(d^{2})$. However, the total update time would  still be $\Otil(d^{3})$.
because our artificial edges $\Ehat_{\gamma}$ in this step will not be well-structured and so we cannot apply the \Boruvka's based algorithm for interval connectivity as we did in \Cref{sect:UpdateImprovement}. We will sketch how to handle this issue in \Cref{sec:overview:final}.

The high-level idea is to redesign the artificial edges $\Ehat_{\gamma}$ for each $\gamma\in R$, such that for any failure set $D$, vertices in $A_{\gamma}\setminus D_{L}$ may not be connected by $\Ehat_{\gamma}\setminus D_{L}$ itself, but they will be connected by $(\Ehat_{\gamma}\cup\hat{F}_{\gamma})\setminus D_{L}$, where $\hat{F}_{\gamma}$ is a small set of \textit{$D$-repairing edges} generated after $D$ is deleted. In fact, in our following construction, the total number of $D$-repairing edges for all $\gamma$ is bounded by $O(d^{2})$. Therefore, the repairing phase only needs to add the connectivity provided by these repairing edges and it takes $\Otil(d^{2})$ time. 

We introduce a new structure called the \textit{segmentation hierarchy}. For each $\gamma\in R$, the segmentation hierarchy ${\cal S}_{\gamma}$ is simply a recursive division of $A_{\gamma}$ like a segment tree. The elements of ${\cal S}_{\gamma}$ are \textit{segments}, which are consecutive sublists of $A_{\gamma}$. ${\cal S}_{\gamma}$ is partitioned into $r=O(\log |A_{\gamma}|)$ levels ${\cal S}_{\gamma,1},...,{\cal S}_{\gamma,r}$ such that for each level $j$, segments in ${\cal S}_{\gamma,j}$ partition the list $A_{\gamma}$. In particular, the top level has the unique segment $A_{\gamma}$ and each segment at the bottom level is a singleton list of each vertex in $A_{\gamma}$. Furthermore, for each level $j\geq 2$, each segment $S\in {\cal S}_{\gamma,j}$ is the union of two \textit{child-segments} $S_{1},S_{2}\in {\cal S}_{\gamma,j-1}$. See \Cref{def:Segmentation} for the complete definition.

The construction of $\Ehat_{\gamma}$ is similar to the previous step. We process $\gamma\in R$ one by one and initialize $\Ehat=\emptyset$. 
Let $B_\gamma$ be an arbitrary subset of $A_\gamma$ of size $\min\{d+1,|A_\gamma|\}$.
For each $\gamma\in R$ if there are less than $|{\cal S}_{\gamma}|$ pairs in $(A_{\gamma}\times B_{\gamma})\setminus \Ehat$, we set $\Ehat_{\gamma}=A_{\gamma}\times B_{\gamma}$ as usual. Otherwise we assign a distinct pair $(u_{S},v_{S})\in (A_{\gamma}\times B_{\gamma})\setminus \Ehat$ to each segment $S\in {\cal S}_{\gamma}$ and let $\Ehat_{\gamma}=\sum_{S\in{\cal S}_{\gamma}}S\times\{u_{S},v_{S}\}$. Lastly, insert $\Ehat_{\gamma}$ to $\Ehat$ and proceed to the next $\gamma$. Now, we claim that the total number of artificial edges (without multiplicity) is $\| \Ehat \|_0 = \Otil(m)$ since the former case will bring at most $|{\cal S}_{\gamma}|$ new edges and in the latter case the number of new edges is at most $|\Ehat_{\gamma}|=O(\sum_{S\in{\cal S}_{\gamma}}|S|)=\Otil(|A_{\gamma}|)$. So $\| \Ehat \|_0 = \sum_{\gamma \in R} \Otil(|A_\gamma|) = \Otil(m)$ and so the total space is $O(\| \Ehat \|_0) = \Otil(m)$.

The repairing algorithm is based on the segmentation hierarchy. 
Let ${\cal S}_{\fault}$ contain all segments $S$ (in ${\cal S}_{\gamma}$ for \textit{all} $\gamma\in R$) such that $u_{S},v_{S}\in D_{L}$. 
By the same argument, $S\setminus D_{L}$ is not connected by $\Ehat_{\gamma}\setminus D_{L}$ only if $S\in{\cal S}_{\fault}$, and we have $|{\cal S}_{\fault}|=O(d^{2})$.  For each $\gamma\in R$ with nonempty ${\cal S}_{\fault}\cap{\cal S}_{\gamma}$, suppose that we are given a non-deleted vertex $v_{\gamma}\in A_{\gamma}\setminus D_{L}$.\footnote{In the complete algorithm, the $v_{\gamma}$ for each $\gamma\in R$ with ${\cal S}_{\fault}\cap{\cal S}_{\gamma}\neq\emptyset$ can be computed in totally $\Otil(d^2)$ time.} Then we construct $D$-repairing edges $\hat{F}_{\gamma}$ by, for each $S\in{\cal S}_{\fault}\cap {\cal S}_{\gamma}$, connecting its two child-segments $S_{1},S_{2}$ to $v_{\gamma}$. Concretely, we add to $\hat{F}_{\gamma}$ four $D$-repairing edges connecting $v_{\gamma}$ to each of $u_{S_{1}},v_{S_{1}},u_{S_{2}},v_{S_{2}}$ for each $S\in {\cal S}_{\fault}\cap {\cal S}_{\gamma}$. Thus there are totally $O(|{\cal S}_{\fault}|)=O(d^{2})$ $D$-repairing edges and the repairing time is $\Otil(d^{2})$.

To see the correctness, we say that a segment $S$ is \emph{maximal} if $S \in \cS_\gamma \setminus \cS_{\fault}$ but its parent-segment is in $\cS_\gamma \cap \cS_{\fault}$. Observe that maximal segments cover $A_\gamma \setminus D$, the whole non-deleted part of $A_\gamma$. Now, for each maximal segment $S$, we have that either $u_S \notin D$ or $v_S \notin D$. Say $u_S \notin D$,  so  $u_S$ is connected to $v_\gamma$ via $\hat{F}_\gamma$ and the whole $S$ is connected to $u_S$ via $\Ehat_\gamma$. Therefore, indeed we have that vertices in  $A_\gamma \setminus D$ will be connected via $(\Ehat_\gamma \cap \Fhat_\gamma)\setminus D$ as promised.

\subsubsection{Final Algorithm: $\Otil(m)$ space and $\Otil(d^{2})$ update
time. }
\label{sec:overview:final}

The last step to our final algorithm is to make $\Ehat_{\gamma}$ in \Cref{sect:RepairImprovement} well-structured. We cannot hope each $\Ehat_{\gamma}$ forms a perfect product set because of the segmentation hierarchy. Yet, we can still design it to have some good structural property so that it is compatible with the \Boruvka's based algorithm in \Cref{sect:UpdateImprovement}, which finally leads us to $\Otil(d^{2})$ update time and $\Otil(m)$ space simultaneously.

The construction algorithm of artificial edges is relatively simple. We only summarize the outputs here and see \Cref{sect:ArtificialEdges} for details. For each right vertex $\gamma\in R$, we still assign a pair $(u_{S},v_{S})\in A_{\gamma}\times A_{\gamma}$ to each segment $S\in{\cal S}_{\gamma}$, and the actual $\Ehat_{\gamma}$ will be
\[
\Ehat_{\gamma}=A_{\gamma}\times B_{\gamma}+\sum_{S\in{\cal S}_{\gamma}}S\times \{u_{S}\},\footnote{The add operations denote the multiset union operations.}
\]
where $B_{\gamma}$ is designed so that there will be $\| \Ehat \|_0 = \Otil(m)$ distinct artificial edges using similar argument as in \Cref{sect:RepairImprovement}. So the space bound is $\Otil(m)$

Now, we discuss how this structure enables $\Otil(d^2)$ update time.
For each $S\in {\cal S}_{\gamma}$, we call $u_{S}$ the \textit{witness} of $S$. Furthermore, let $f:{\cal S}_{\gamma}\to A_{\gamma}$ be the witness function defined by $f(S)=u_{S}$ for each $S\in{\cal S}_{\gamma}$. Our construction ensures that $f$ has the following \textit{monotonically increasing} property on each level ${\cal S}_{\gamma,j}$: for each $S,S'\in{\cal S}_{\gamma,j}$ such that $S$ is located before $S'$ on $A_{\gamma}$, $f(S)$ is located before $f(S')$ according the order of vertices in $A_{\gamma}$. 

Answering a batched-adjacency query in \Cref{sect:UpdateImprovement} given $Z_{k}$ and $Z_{l,r}$ now becomes more involved. We sketch the ideas below and see \Cref{sect:IntervalConnectivity} for details. The product term $A_{\gamma}\times B_{\gamma}$ is easy to deal with using the idea in \Cref{sect:UpdateImprovement}. To handle the term $\sum_{S\in{\cal S}_{\gamma}}S\times f(S)$, we first write down its contribution to the second term of Inequality (\ref{eq:Boruvka}) for fixed $\gamma\in D_{R}$, level $j$ and $I\in Z_{k}$ as
\[
Q(\gamma,j,I)=\left|\left(\sum_{S\in{\cal S}_{\gamma,j}}S\times f(S)\right)\cap \left(\bigcup_{I'\in Z_{l,r}}I'\times I \right)\right|.\footnote{For simplicity, in $Q(\gamma,j,I)$ we switch $I$ and $\bigcup_{i'\in Z_{l,r}}I'$ of the actual contribution $|(\sum_{S\in{\cal S}_{\gamma,j}}S\times f(S))\cap (I\times \bigcup_{I'\in Z_{l,r}}I')|$, which can be computed using the same idea of computing $Q(\gamma,j,I)$. Actually, in the complete algorithm, both contributions (before and after switching) will be computed as shown in \Cref{sect:ComputingDelta3}.}
\]
In fact, there will be totally at most $\Otil(d^{2})$ queries on $Q(\gamma,j,I)$, summing over different $\gamma,j,I$ in all \Boruvka's phases. In what follows, we discuss how to compute $Q(\gamma,j,I)$ for fixed $\gamma,j,I$ in $\Otil(1)$ time.

Recall that $I$ is an interval on the path $\tau$. 
Consider the set of segments $\{ S \in \cS_{\gamma,j} \mid f(S) \in I\}$, i.e., the set of level-$j$ segments whose witnesses are in $I$. 
By the monotonically increasing property of $f$, observe that the union of these segments form a consecutive sublist of $A_{\gamma}$, denoted by $A_{\gamma}(v_{L},v_{R})$. Then the contribution can be rewritten as \[Q(\gamma,j,I)=\left|A_{\gamma}(v_{L},v_{R})\cap\bigcup_{I'\in Z_{l,r}}I'\right|.\] 
This query can be further reduced to one 3D range counting query by exploiting that intervals $I'$ in $Z_{l,r}$ locate consecutively in the order ${\cal I}_{t}$ for the $t$-th \Boruvka's phase. Therefore, $Q(\gamma,j,I)$ can be indeed queried in $\Otil(1)$ time.
Moreover, the required data structures can be constructed in $\Otil(d^{2})$ time at the beginning of each \Boruvka's phase.

\paragraph{Perspective: Repairable Hypergraph-to-Graph Connectivity-Sparsifiers.}

Lastly we discuss our techniques in \Cref{sec:overview construct} in a different perspective. If we corresponds left vertices and right vertices in the semi-bipartite graph to vertices and hyperedges in a hypergraph respectively, the techniques on constructing artificial edges and $D$-repairing edges can be viewed as some kind of sparsifier which transform a hypergraph $H$ to a graph $G$.
Concretely, the transformation will keep the vertex set unchanged, and for each hyperedge $e\in E(H)$, add an edge set $\Ehat(e)$ to $E(G)$. Namely we will have $V(G)=V(H)$ and $E(G)=\bigcup_{e\in E(H)}\Ehat(e)$. Moreover, the sparsifier has the following properties.
\begin{itemize}
    \item (Vertex Restricted) The vertex set is unchanged.
    \item (Sparse) The number of edges in $G$ is nearly linear to the total size of edges in $H$. Namely $|E(G)|=\Otil(\sum_{e\in E(H)}|e|)$, where we use the same notation $e$ to denote the set of $e$'s endpoints. %
    \item (Fault-Tolerant/Repairable) Given any failure set $D\subseteq V(G)$ with $|D|=d$, by augmenting each $\hat{E}(e)$ with a set $\hat{F}(e)$ of repairing edges, it will be guaranteed that for each hyperedge $e\in E(H)$, vertices in $e$ are still connected by $\Ehat(e)\cup\hat{F}(e)$. Moreover, the total size of $\hat{F}(e)$ summing over all $e\in E(H)$ is $\Otil(d^{2})$, and all nonempty $\hat{F}(e)$ can be computed in totally $\Otil(d^{2})$ time.
\end{itemize}

We apply this tool to obtain an nearly optimal vertex-failure connectivity oracle. Furthermore, we note that this sparsifier is somewhat stronger than maintaining just pairwise connectivity because the connectivity of each hyperedge $e$ is perserved by $\hat{E}(e)\cup\hat{F}(e)$ individually. Thus we expect that this tool or its variant can have applications on other vertex-failure problems.

\section{Preliminaries}
\label{sect:Preliminaries}

Throughout the paper, we use the standard graph theoretic notation. For any graph, we use $V(\cdot)$ and $E(\cdot)$ to denote its vertex set and edge set respectively. If there is no other specification, $G$ will denote the graph on which we build the oracle and we set $n=|V(G)|$ and $m=|E(G)|$. Given a graph $G=(V(G),E(G))$, for any $S\subseteq V(G)$, we let $G[S]$ denote the subgraph of $G$ induced by vertices $S$, and let $N_{G}(S)=\{u\mid (u,v)\in E(G),u\notin S,v\in S\}$ denote the set of vertices adjacent to $S$. Also, for any $S\subseteq V(G)$, we use $G\setminus S$ to denote the graph after removing vertices in $S$ and edges incident to them. Similarly, for any $F\subseteq E(G)$, $G\setminus F$ denote the graph after removing edges in $F$.

In this paper, we will use \textit{ordered lists} frequently. An ordered list $L$ is a list of totally ordered elements, and it inherits properties and operations from sets. We say for each $1\leq k\leq |L|$ the $k$-th element (denoted by $L(k)$) has \textit{rank} $k$ on $L$ or has \textit{$L$-rank} $k$, and let $L(k_{1},k_{2})$ denote the consecutive sublist from rank $k_{1}$ to $k_{2}$. We define an operations on ordered list called \textit{restrictions} as follows. Assume we have an ordered list $L$ as the base. Let $L'$ be a sublist (maybe inconsecutive) of $L$ and let $\bar{L}$ be a consecutive sublist on $L$. The \textit{restriction of $\bar{L}$ on $L'$}, denoted by $\bar{L}\cap L'$, is a consecutive sublist on $L'$ that contains elements in $\bar{L}\cap L'$. When $L'$ is stored explicitly (with the $L$-rank of each element), the restriction $\bar{L}\cap L'$ will be stored as two integers in our algorithm, namely the $L'$-ranks of the leftmost and rightmost element of $\bar{L}\cap L'$, and they can be computed in $O(\log|L'|)$ time via the binary search. 

We also use the notion of \textit{multisets}. A multiset $M$ is a set but allows repeated elements. We let $|M|$ denote its size counting multiplicity, while we let $\Vert M\Vert_{0}$ denote its size without multiplicity. For two multisets $M_{1}$ and $M_{2}$, we call their multiset-union as their \textit{sum}, denoted by $M_{1}+M_{2}$, and we use $\sum_{i=1}^{k}M_{i}$ denote the multiset-union of several multisets $M_{1},...,M_{k}$. The intersection of a multiset $M$ and a set $S$ is still a multiset denoted by $M\cap S$. Concretely, for each element of $M$, it will be kept in $M\cap S$ if it is in $S$, otherwise it will be dropped.

\section{Weighted Vertex Expander Decomposition}
\label{sect:VertexExpander}

In this section, we introduce an almost-linear time algorithm to compute a \textit{vertex expander decomposition} of an undirected graph $G$ with non-negative weight function $w(v)$ and demand function $\d(v)$ on its vertices $v\in V(G)$. For simplicity, we denote the total weight for a set $S\subseteq V(G)$ by $w(S)=\sum_{v\in S}w(v)$, and similarly the total demand for $S$ is $\d(S)=\sum_{v\in S}\d(v)$\footnote{We use such notations for other weight functions throughout the paper.}. The term \textit{vertex expander} refers to some ``highly connected'' graph in the sense that there is no sparse vertex cut, or strictly speaking, it has high \textit{vertex expansion} (see \Cref{def:VertexExpansion}).

\begin{definition}[Weighted Vertex Expansion]
Let $G=(V(G),E(G),w,\d)$ be a graph with weight $w(v)\geq 0$ and demand $\d(v)\geq 0$ for vertex each $v\in V(G)$. A \textit{vertex cut} of $G$ is a partition $(L,S,R)$ of $V(G)$ such that $(u,v)\notin E(G)$ for all $u\in L,v\in R$. For any vertex cut $(L,S,R)$ of $G$, if $\d(L\cup S)>0$ and $\d(R\cup S)>0$, it has \textit{vertex expansion} 
\[
h_{G}(L,S,R)=\frac{w(S)}{\min\{\d(L\cup S),\d(R\cup S)\}}.
\]
The vertex expansion of $G$, denoted by $h(G)$, is the minimum vertex expansion of any vertex cut $(L,S,R)$ of $G$ with $\d(L\cup S),\d(R\cup S)>0$.
\label{def:VertexExpansion}
\end{definition}

In a vertex expander decomposition of $G$, $G$ will be separated by a small fraction vertices called the \textit{separator}, such that after removing the separator, each of the remaining connected components is a vertex expander. See \Cref{def:Decomp} for details.

\begin{definition}[Weighted Vertex Expander Decomposition]
Let $G=(V(G),E(G),w,\d)$ be a graph with weight $w(v)\geq 0$ and demand $\d(v)\geq 0$ for each vertex $v\in V(G)$. For parameters $\epsilon>\phi> 0$, an \textit{$(\epsilon,\phi)$-vertex expander decomposition} of $G$ is a collection ${\cal G}=(V_{1},...,V_{k})$ of disjoint subsets of $V(G)$. Let $X=V(G)\setminus (\bigcup_{i} V_{i}$) denote the \textit{separator}. Then ${\cal G}$ further satisfies the following.
\begin{itemize}
    \item %
    for each pair of distinct $V_{i},V_{i'}\in{\cal G}$, there is no edge connecting $V_{i}$ and $V_{i'}$,
    \item for all $V_{i}\in{\cal G}$, $h(G[V_{i}])\geq\phi$, and
    \item $w(X)\leq \epsilon\cdot\d(V(G))$.
\end{itemize}
\label{def:Decomp}
\end{definition}

This section is dedicated to show the fast deterministic vertex expander decomposition algorithm as stated in \Cref{thm:ComplexDecomp}. We note that there has been a long line of works on fast expander decomposition e.g.~\cite{ST04,NS17,SW19,ChuzhoyGLNPS20}. The first deterministic almost-linear time algorithm for expander decomposition in the edge cut setting has been shown in \cite{ChuzhoyGLNPS20} and later it was generalized to weighted graphs in \cite{LiS21}. We follow the approach in \cite{ChuzhoyGLNPS20} and generalize it to the vertex expander and weighted setting. Although the required techniques and tools have been developed, we remark that to the best of our knowledge, an efficient algorithm on vertex expander decomposition has never been shown explicitly in the literature before.

\begin{theorem}
Let $G=(V(G),E(G),w,\d)$ be an $n$-vertex $m$-edge graph with for each $v\in V(G)$, integral weight $1\leq w(v)\leq U$ and integral demand $0\leq \d(v)\leq w(v)$. Given parameters $0<\epsilon\leq 1$, $1\leq r\leq \lfloor\log_{20}\d(V(G))\rfloor$, there is an algorithm that computes an $(\epsilon,\phi)$-vertex expander decomposition of $G$, where $\phi=\epsilon/(\log^{O(r)}U\log^{O(r^{5})}(m\log U))$. The algorithm can be deterministic with running time $O(m^{1+o(1)+O(1/r)}\cdot U^{1+O(1/r)}\cdot\log^{O(r^{4})}(mU)/\phi)$, or randomized with running time $O(m^{1+o(1)}\cdot(mU)^{O(1/r)}\cdot\log^{O(r^{4})}(mU))$ and high correct probability.
\label{thm:ComplexDecomp}
\end{theorem}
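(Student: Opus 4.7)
The plan is to port the almost-linear-time edge expander decomposition framework of \cite{ChuzhoyGLNPS20}, later generalized to weighted graphs in \cite{LiS21}, to the vertex-cut setting, using as the flow oracle the deterministic almost-linear-time vertex max flow subroutine guaranteed by \Cref{lemma:DetVertexFlow}. At the top level the algorithm will be a recursive procedure $\textsc{VertexED}(G,\phi)$ that invokes a ``most-balanced sparse vertex cut or expander certificate'' primitive $\textsc{BalCutPrune}_{\mathrm{vtx}}$: either it returns a vertex cut $(L,S,R)$ whose expansion $w(S)/\min\{\d(L\cup S),\d(R\cup S)\}$ is roughly $\phi$ and whose smaller side carries at least a $1/\polylog$ fraction of the demand, in which case we add $S$ to the separator $X$ and recurse on $G[L]$ and $G[R]$; or it certifies that every $\phi$-sparse vertex cut is extremely unbalanced, in which case a trimming step peels off a small set of unbalanced witness vertices and deposits the rest as an expander component of $\cG$. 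As in \cite{ChuzhoyGLNPS20}, recursion depth is $O(\log \d(V))$ and the total weight charged to $X$ is bounded by a geometric series summing to $\epsilon\cdot\d(V)$ provided $\phi$ is small enough by the claimed poly-logarithmic factor.

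The main primitive $\textsc{BalCutPrune}_{\mathrm{vtx}}$ will be built from a vertex-capacitated cut-matching game. I would instantiate the cut-player of \cite{ChuzhoyGLNPS20} (which already comes with the $(mU)^{O(1/r)}$ hierarchical overhead producing the $r$-parameter in the statement) and use as the matching-player the vertex max-flow of \Cref{lemma:DetVertexFlow}: each round either embeds a perfect matching between the two sides of the cut player's bisection with small vertex congestion, or returns a $\phi$-sparse vertex cut that we pass up as the desired balanced cut. Standard cut-matching analysis then shows that after $O(\log^2 n)$ rounds the union of embedded matchings certifies that the non-cut region is a $\Omega(\phi/\polylog)$-vertex-expander. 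The trimming step adapted from Saranurak--Wang uses a single vertex flow computation to remove the ``bad'' vertices adjacent to the fake expander boundary while adding only a tiny amount to $X$; its correctness in the vertex-cut setting follows from the same local-flow argument, routing demand toward the complement via vertex-disjoint paths.

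Weights and demands enter in two clean places. First, the statement requires $0\le \d(v)\le w(v)\le U$ integrally; I treat $\d(v)$ as the ``source/sink capacity'' that must be routed and $w(v)$ as the ``vertex capacity'' that paths may use, so every cut-matching instance is set up with source capacities $\d$ and routing capacities $w$, and \Cref{lemma:DetVertexFlow} is invoked on an auxiliary graph in which each $v$ is split into an in/out pair with the appropriate edge capacity equal to $w(v)$, and a source and sink are attached with capacity $\d(v)$. Because $w(v)\le U$ and all values are integral, the total auxiliary size is $O((m+n)U)$, which is where the $U^{1+O(1/r)}$ factor in the deterministic running time originates. Second, the potential function in the recursion is tracked in units of $\d$, so the ``$\epsilon\cdot\d(V)$ separator budget'' analysis of \cite{ChuzhoyGLNPS20,LiS21} carries over verbatim once the expansion is measured by the weighted formula of \Cref{def:VertexExpansion}.

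The main obstacle, and the place the proof will require the most care, is verifying that the trimming/pruning argument of \cite{ChuzhoyGLNPS20} still produces the advertised $\phi$-vertex-expander in each surviving piece when the underlying sparsity measure is the vertex-expansion of \Cref{def:VertexExpansion} rather than edge conductance. The subtle point is that a vertex in $S$ simultaneously contributes to the cut and sits on both sides, which breaks several accounting identities that are clean in the edge case; handling this cleanly requires redefining the local potential in terms of $w(S)$ rather than edge boundary and re-deriving the ``partial decomposition'' invariant so that the recursion depth stays $O(\log\d(V))$ and the final $\phi$ absorbs only an extra $\log^{O(r)}U$ factor from the $U$-dependent auxiliary scaling. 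Once that invariant is re-established, the deterministic running time follows by plugging \Cref{lemma:DetVertexFlow} into the $O(\log n)$ levels of the recursion, and the randomized bound follows by replacing the \cite{ChuzhoyGLNPS20} cut-player with the standard random-walk based one of Khandekar--Rao--Vazirani applied to vertex expansion.
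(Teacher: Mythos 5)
Your high-level skeleton — a recursive decomposition driven by a ``balanced sparse vertex cut or expander certificate'' primitive, itself built from a cut-matching game whose matching player is the deterministic vertex-capacitated max flow of \Cref{lemma:DetVertexFlow} — matches the paper's route (\Cref{lemma:CutOrCertify} via \Cref{lemma:MostBalanceCut}, with the charging argument over $O(\log \d(V(G)))$ recursion levels). But the step you yourself flag as the main obstacle is where your plan and the paper diverge, and your proposed resolution is the problematic one. You want to finish each piece with a Saranurak--Wang style trimming/pruning step that uses a single local flow to peel off the vertices adjacent to the fake-edge boundary. The paper does not do this: it uses the parameter-adjusting technique of \cite{NS17} (\Cref{lemma:AdjustParameters} and the proof of \Cref{lemma:CutOrCertify}), i.e.\ it repeatedly re-invokes the most-balanced sparse cut primitive with a geometrically shrinking balance threshold $z_i$ and a degrading sparsity threshold $\phi_i$ over $r$ phases, and the expander guarantee on the surviving part comes purely from the certificate of \Cref{lemma:MostBalanceCut} (via \Cref{lemma:certificate}: a low-vertex-congestion embedding of a nearly-expander implies all balanced vertex cuts are non-sparse). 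This is exactly how the paper sidesteps the accounting issue you identify — that a separator vertex sits on both sides of a vertex cut — because no local-flow/unit-flow potential argument is ever needed; the only place the double-counting must be handled is in \Cref{lemma:certificate} and in the bookkeeping of \Cref{lemma:AdjustParameters}, both of which are routine. As written, your plan leaves its hardest step (re-deriving the SW19 trimming invariant for vertex expansion) unresolved, and it is not clear it goes through.

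Two smaller inaccuracies. First, the randomized running time in the theorem is obtained by replacing the \emph{matching player's} flow subroutine (\Cref{lemma:DetVertexFlow}) with the randomized almost-linear-time vertex max flow of \Cref{lemma:RandVertexFlow} (see \Cref{remark:DetDecomp}); the cut player stays the deterministic one of \cite{LiS21}, which is also where the $(mU)^{O(1/r)}$ factor comes from in both variants — not from swapping in a KRV random-walk cut player as you suggest. Second, the $U^{1+O(1/r)}$ factor in the deterministic bound does not come from blowing the auxiliary graph up to size $O((m+n)U)$; the auxiliary graph stays of size $O(m+n)$, and the $U$-dependence enters through the $n^{o(1)}C_{\ssum}$ flow-decomposition term in \Cref{lemma:DetVertexFlow}, where the total finite capacity is $O(nU/\phi)$.
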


\begin{corollary}
Let $G=(V(G),E(G),w,\d)$ be an $n$-vertex $m$-edge graph with non-negative real weight $w(v)\geq 0$ and demand $\d(v)\geq 0$ for each vertex $v\in V(G)$. Let $U$ be the ratio bound of both weight and demand such that $\frac{\max w(v)}{\min_{w(v)>0}w(v)},\frac{\max \d(v)}{\min_{\d(v)>0}\d(v)}\leq U$. Given parameter $\epsilon > 0$, there is an algorithm that computes an $(\epsilon,\phi)$-vertex expander decomposition of $G$, where $\phi=\epsilon/(m^{o(1)}U^{o(1)})$. The algorithm can be deterministic with running time $O(m^{1+o(1)}\cdot U^{1+o(1)}/\phi)$, or randomized with running time $O(m^{1+o(1)}\cdot U^{o(1)})$ and high correct probability.
\label{Coro:WeightedDecomposition}
\end{corollary}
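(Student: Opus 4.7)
My plan is to reduce the real-valued case to \Cref{thm:ComplexDecomp} by scaling and rounding $w$ and $\d$ to integers, and then to choose the recursion depth $r$ in the theorem so that all overhead factors absorb into $m^{o(1)} U^{o(1)}$. Zero-weight vertices can be added to the separator at no cost, so WLOG $w(v) > 0$ for every $v$; and WLOG $\d(v) \le w(v)$, otherwise I redefine $w \leftarrow \max(w,\d)$, which at most doubles the effective weight ratio.

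Let $w_{\min} = \min_v w(v)$ and $\d_{\min} = \min_{\d(v)>0} \d(v)$. I use the uniform scaling factor $\alpha = \max\{1/w_{\min},\, 1/\d_{\min}\}$ and set $w'(v) = \lceil \alpha w(v)\rceil$ and $\d'(v) = \lfloor \alpha \d(v)\rfloor$. Because the scaling is uniform, the vertex expansion of any cut is preserved up to rounding, and since $\lfloor x\rfloor \ge x/2$ for $x \ge 1$, the rounded expansion lies within a factor of $4$ of the original. The separator bound translates directly: $\alpha w(X) \le w'(X) \le \epsilon \d'(V) \le \epsilon\alpha\d(V)$ implies $w(X) \le \epsilon \d(V)$. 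A short case analysis based on whether $w_{\min} \le \d_{\min}$ (with $\alpha = 1/w_{\min}$ and integer weights in $[1,\lceil U\rceil]$) or $w_{\min} > \d_{\min}$ (with $\alpha = 1/\d_{\min}$, so min integer weight is $\lceil w_{\min}/\d_{\min}\rceil$ and max is at most $\lceil Uw_{\min}/\d_{\min}\rceil$) shows that the resulting integer-weight ratio is $U' = O(U)$, that $\d' \le w'$ is preserved, and that every vertex with $\d(v) > 0$ retains an integer demand of at least $1$, so no demand is lost.

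I then invoke \Cref{thm:ComplexDecomp} with a recursion depth $r$ chosen so that $r \to \infty$ while $r^5\log\log(mU) = o(\log(mU))$, for instance $r = \lfloor (\log(mU))^{1/6}\rfloor$. This makes $m^{O(1/r)}$, $U'^{O(1/r)}$, $\log^{O(r^5)}(m\log U')$, and $\log^{O(r^4)}(mU')$ all $m^{o(1)} U^{o(1)}$, so the theorem's output expansion $\phi' = \epsilon/(m^{o(1)} U^{o(1)})$ translates to real-valued $\phi = \Theta(\phi')$ as required, and the deterministic and randomized runtimes reduce to $O(m^{1+o(1)} U^{1+o(1)}/\phi)$ and $O(m^{1+o(1)} U^{o(1)})$ respectively. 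The main technical wrinkle — and the part I expect to be most fiddly — is the theorem's side condition $r \le \lfloor \log_{20} \d(V)\rfloor$: when the integer $\d(V)$ is too small to admit the desired $r$, I handle it by multiplying both $w$ and $\d$ by an additional factor $K = 20^r = (mU)^{o(1)}$, which inflates $U'$ only to $U'\cdot K = m^{o(1)} U^{1+o(1)}$ and is absorbed into the slack of the final bounds.
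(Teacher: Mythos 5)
Your overall strategy (scale and round to integers, then invoke \Cref{thm:ComplexDecomp} with a suitable $r$) is the right one and matches the paper's, but two of your reduction steps have genuine gaps.

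First, the step ``WLOG $\d(v)\le w(v)$ by redefining $w\leftarrow\max(w,\d)$'' is unsound. Inflating a vertex's weight only \emph{increases} the expansion $h_G(L,S,R)=w(S)/\min\{\d(L\cup S),\d(R\cup S)\}$ of every cut, so a piece certified as a $\phi$-expander under $\max(w,\d)$ need not be a $\phi$-expander under the original $w$ --- and the loss is the unbounded factor $\max_v \d(v)/w(v)$, not a constant. Concretely, take a star whose center has $w=1$, $\d=n$ and whose $n$ leaves have $w=\d=1$: under $\max(w,\d)$ the whole star is an $\Omega(1)$-expander, but under the original weights the cut $S=\{\text{center}\}$ has expansion $\approx 1/n\ll\phi$. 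Vertices with $\d(v)$ much larger than $w(v)$ must instead be \emph{moved into the separator}, where their small weight can be charged against the budget $\epsilon\cdot\d(V(G))$; this is exactly what the paper does after renormalizing so that the threshold becomes $\epsilon'=1$ (it drops every vertex with $w(v)\le\d(v)$ up front). Relatedly, you never address the theorem's requirement $\epsilon\le 1$, which is what that renormalization step is also for.

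Second, your uniform scaling $\alpha=\max\{1/w_{\min},1/\d_{\min}\}$ controls only the \emph{ratio} of the resulting integer weights, but in \Cref{thm:ComplexDecomp} the parameter $U$ bounds their \emph{magnitude} (the minimum is $1$), and the running time carries a factor $U^{1+O(1/r)}$. When $w$ and $\d$ live on different scales --- say $w\equiv C$ for a huge $C$ and $\d\equiv 1$, so the corollary's ratio bound is $U=1$ --- your scaling uses $\alpha=1/\d_{\min}=1$ and hands the theorem integer weights of magnitude $C$, blowing the runtime up to $\Omega(C)$ rather than the claimed $O(m^{1+o(1)}/\phi)$. The fix is to scale $w$ and $\d$ by \emph{different} factors ($1/\min_{w>0}w$ and $1/\min_{\d>0}\d$ respectively), which is legitimate because rescaling $w$ alone multiplies every cut's expansion and the separator budget by the same factor, and is compensated by replacing $\epsilon$ with $\epsilon\cdot\delta_{\d}/\delta_{w}$ --- again as in the paper's proof. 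Your choice of $r$ and your $K=20^{r}$ trick for the side condition $r\le\lfloor\log_{20}\d(V(G))\rfloor$ are fine.
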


\begin{remark}
In the following of \Cref{sect:VertexExpander}, we only discuss the less efficient deterministic algorithm in \Cref{thm:ComplexDecomp}. It suffices for our application in \Cref{sect:LowDegreeHierarchy} because we have $\phi=1/n^{o(1)}$ and $U=O(1)$ in that scenario. We emphasize that we can improve the running time at a cost of making the algorithm randomized, by simply substituting the approximate vertex-capacitaed max flow algorithm in \Cref{lemma:DetVertexFlow} with the faster randomized version in \Cref{lemma:RandVertexFlow}.
\label{remark:DetDecomp}
\end{remark}

\Cref{thm:ComplexDecomp} follows the natural divide-and-conquer paradigm which invokes an efficient algorithm in \Cref{lemma:CutOrCertify} as a subroutine. Roughly speaking, the subroutine will either return a balanced sparse vertex cut (a vertex cut is balanced if it has large demands on both sides), or exhibit a large vertex expander separated from the other part by a sparse vertex cut. In both cases, the graph size of the recursive calls will shrink by a constant factor. \Cref{Coro:WeightedDecomposition} is a corollary of \Cref{thm:ComplexDecomp} where the condition on weights and demands is relaxed and the formula of time complexity is simplified. The complete proof of \Cref{thm:ComplexDecomp} and \Cref{Coro:WeightedDecomposition} is shown in \Cref{proof:ComplexDecomp}.

\begin{lemma}
Let $G=(V(G),E(G),w,\d)$ be an $n$-vertex $m$-edge graph with, for each $v\in V(G)$, integral weight $1\leq w(v)\leq U$ and integral demand $0\leq \d(v)\leq w(v)$. Given parameters $0<\epsilon\leq 1$ and $1\leq r\leq \lfloor\log_{20}\d(V(G))\rfloor$, there is a deterministic algorithm that computes a vertex cut $(L,S,R)$ (possibly $L=S=\emptyset$) of $G$ with $w(S)\leq \epsilon\cdot\d(L\cup S)$ which further satisfies
\begin{itemize}
    \item either $\d(L\cup S),\d(R\cup S)\geq \d(V(G))/3$; or
    \item $\d(R)\geq \d(V(G))/2$ and $h(G[R])\geq \phi$ for some $\phi=\epsilon/(\log^{O(r)}U\log^{O(r^{5})}(m\log U))$.
\end{itemize}
The running time of this algorithm is $O(m^{1+o(1)+O(1/r)}\cdot U^{1+O(1/r)}\cdot\log^{O(r^{4})}(mU)/\phi)$.
\label{lemma:CutOrCertify}
\end{lemma}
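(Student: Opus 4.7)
The plan is to adapt the ``most-balanced sparse cut or certify expander'' framework of Chuzhoy--Gao--Li--Nanongkai--Peng--Saranurak (the cut-matching engine that also powers \cite{ChuzhoyGLNPS20}) from the edge-expansion setting to vertex expansion with general weights~$w$ and demands~$\d$. The two new ingredients we swap in are: (i)~the deterministic approximate vertex-capacitated max flow of \Cref{lemma:DetVertexFlow} as the matching player, in place of edge max flow; and (ii)~a vertex-splitting reduction that replaces each $v$ by a small gadget of $w(v)$ copies, letting $w(v)-\d(v)$ of them act as ``free'' separator slots. This reduction lifts the quoted theorem into an essentially unweighted instance at a cost of a factor $U$ in size, which is precisely where the $U^{1+O(1/r)}$ factor in the running time comes from.

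Next, I would run the recursive cut-matching game of \cite{ChuzhoyGLNPS20} on this lifted instance. At each round the cut player (whose recursion depth is controlled by the parameter $r$, and whose overhead is the $m^{O(1/r)}\log^{O(r^{4})}(mU)$ factor) proposes two demand-balanced sets $A,B$, and the matching player routes a vertex-capacitated multicommodity flow between them using \Cref{lemma:DetVertexFlow}. There are two outcomes per round. If the flow routes with low vertex congestion, embed the resulting matching into a growing witness graph $W$; after $O(\log \d(V(G)))$ successful rounds, $W$ becomes a constant-expansion expander on the demand-carrying vertices, and its low-vertex-congestion embedding into $G$ certifies $h(G)\ge \phi$ with $\phi = \epsilon/(\log^{O(r)} U\log^{O(r^{5})}(m\log U))$; output $(L,S,R)=(\emptyset,\emptyset,V(G))$. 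Otherwise the flow's dual certificate yields a sparse vertex cut $(L',S',R')$ with $w(S')\le\epsilon\cdot\min\{\d(L'\cup S'),\d(R'\cup S')\}$ whose balance is inherited from $A,B$.

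When an unbalanced sparse cut is returned, I would iterate on the larger side in the standard ``peel and recurse'' fashion, accumulating cut vertices into~$S$ until either (a)~the accumulated cut becomes balanced, giving the first bullet; or (b)~the remaining side $R$ has $\d(R)\ge \d(V(G))/2$ and survives an entire run of the cut-matching game on $G[R]$, yielding $h(G[R])\ge\phi$ and hence the second bullet. A geometric-shrinking/potential argument ensures that the cumulative $w(S)$ stays within $\epsilon\cdot\d(L\cup S)$ because every peeled cut is itself sparse with slack~$\epsilon$, and the number of peels is $O(\log \d(V(G)))$. The total running time is $O(\log \d(V(G)))$ invocations of the cut-matching game on a lifted graph of size $\tilde{O}(mU)$, which, combined with the $m^{O(1/r)}\log^{O(r^{4})}(mU)$ cut-player overhead per level and the almost-linear vertex max-flow time of \Cref{lemma:DetVertexFlow}, matches the claimed $O(m^{1+o(1)+O(1/r)}\cdot U^{1+O(1/r)}\cdot\log^{O(r^{4})}(mU)/\phi)$ bound.

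The main obstacle will be the analysis of the trimming (peeling) step in the weighted vertex setting. Unlike the edge case, a cut vertex pays its full weight $w(v)$ while only its demand $\d(v)\le w(v)$ is removed from the balance potential, so a naive accounting can inflate $w(S)$ beyond $\epsilon\cdot\d(L\cup S)$. Handling this requires a careful two-potential argument that charges each peeled layer's $w(S')$ against a geometrically shrinking demand mass, using the inequality $\d(v)\le w(v)$ together with the strict $\epsilon$-sparsity of every peeled cut, and ensuring that the interior cuts produced by the recursive cut player on $G[R]$ are ``fully expanding'' inside $R$ rather than only with respect to the surviving demand---the standard fix is to re-invoke \Cref{lemma:CutOrCertify} on the peeled side with updated demands, which terminates after $O(\log \d(V(G)))$ rounds by the demand-halving invariant.
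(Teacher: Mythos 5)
Your high-level ingredients match the paper's: the cut-matching game with the deterministic vertex-capacitated max flow of \Cref{lemma:DetVertexFlow} as the matching player, followed by a peel-and-recurse loop on unbalanced sparse cuts. However, there is a genuine gap in how you get from ``survives the cut-matching game'' to ``$h(G[R])\geq\phi$ for \emph{all} cuts.'' The matching player (\Cref{thm:RandMatchingPlayer}) is only guaranteed to return either a cut with both sides of demand $>z$, or a matching missing up to $z$ units of demand. Consequently a single run of the game certifies expansion only for cuts that are balanced at scale roughly $z\cdot\mathrm{polylog}$ (this is exactly the hypothesis of \Cref{lemma:certificate}), not for all cuts. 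If you force $z=0$ to get a full certificate (as in \Cref{coro:EmbeddingOfExpander}), the returned cut can be arbitrarily unbalanced, and your peeling loop may then run for $\Omega(n)$ iterations rather than the $O(\log\d(V(G)))$ you claim, destroying the running time. Your single two-potential peeling pass cannot simultaneously achieve full certification and a bounded number of peels.

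The paper resolves this tension with the parameter-adjustment technique of Nanongkai--Saranurak, and this is precisely where the parameter $r$ enters: it first proves \Cref{lemma:AdjustParameters} (one round of ``peel with balance threshold $z$, certify near-expansion at the smaller threshold $z'$, with at most $O(z/z')$ peels''), and then composes $r$ such rounds with thresholds $z_i=(2\d(V(G)))^{(r+1-i)/r}$ shrinking geometrically by the factor $\d(V(G))^{1/r}\leq(nU)^{1/r}$ and expansion guarantees $\phi_i$ degrading by a fixed $\log^{2}U\log^{O(r^{4})}(m\log U)$ factor per round. After $r$ rounds $z_{r+1}=1$, so every nontrivial cut is ``balanced'' and $G[R]$ is a genuine vertex expander. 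This compounding is also the actual source of the $\log^{O(r)}U\log^{O(r^{5})}(m\log U)$ loss in $\phi$ and of the $(mU)^{O(1/r)}$ factor in the running time (the per-round peel count $z_i/z_{i+1}$), neither of which your accounting explains; in particular the $U^{1+O(1/r)}$ does not come from a vertex-splitting lift but from the matching player's $O(m^{1+o(1)}U/\phi)$ cost times the $(nU)^{1/r}$ peel count. Separately, the sparsity bookkeeping you flag as the main obstacle is in fact immediate: the peeled sets $L_i\cup S_i$ are disjoint and each satisfies $w(S_i)\leq\phi_i\cdot\d(L_i\cup S_i)\leq\epsilon\cdot\d(L_i\cup S_i)$, so summing gives $w(S)\leq\epsilon\cdot\d(L\cup S)$ directly.
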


\Cref{lemma:CutOrCertify} is an application of the cut-matching-game framework. We first introduce the cut-matching game in \Cref{sect:WeightedCutMatchingGame}. Then in \Cref{sect:BalancedSparseCut}, we exploit the cut-matching-game framework to solve the \textit{most-balanced vertex cut} problem (see \Cref{lemma:MostBalanceCut}), which is a critical step getting to \Cref{lemma:CutOrCertify}. The complete proof of \Cref{lemma:CutOrCertify} using \Cref{lemma:MostBalanceCut} is in \Cref{proof:BalCutPrune}.

\subsection{The Cut-Matching Game}
\label{sect:WeightedCutMatchingGame}

The cut-matching game is an interactive process between a \textit{cut player} and a \textit{matching player} to construct an expander. The goal of the cut player is to ensure the game will terminate with an expander in a small number of rounds, while the matching player can act arbitrarily or even adversarially to slow down this process. The game was first introduced by Khandekar, Rao and Variranzi \cite{KRV09} to design fast algorithm to compute sparse cuts and balanced cuts. 

Before we describe the cut-matching game, we emphasize that the game is designed under the edge cut setting, where whether an edge cut is sparse or not is measured by its \textit{sparsity} as defined in \Cref{def:sparsity}, and a graph is an \textit{expander} if there is no sparse edge cut. 

\begin{definition}[Sparsity]
Let $H=(V(H),E(H),w_{H},\d)$ be a graph with weight $w(e)\geq 0$ for each edge $e\in E(H)$ and demand $\d(v)\geq 0$ for each vertex $v\in V(H)$. For any edge cut $(S,V(H)\setminus S)$ with $0<\d(S)<\d(V(H))$, it has \textit{sparsity}
\[
\Psi_{H}(S,V(H)\setminus S)=\frac{w_{H}(E_{H}(S,V(H)\setminus S))}{\min\{\d(S),\d(V(H)\setminus S)\}},
\]
where $E_{H}(S,V(H)\setminus S)$ is the set of edges crossing the cut $(S,V(H)\setminus S)$ and $w_{H}(E')=\sum_{e\in E'}w(e)$ for any $E'\subseteq E(H)$. The sparsity of $H$, denoted by $\Psi(H)$, is the minimum sparsity of any edge cut $(S,V(H)\setminus S)$ with $0<\d(S)<\d(V(H))$.
\label{def:sparsity}
\end{definition}

The game we exploited is actually a variant by Khandekar et al. \cite{KKOV07}, and we will use it in a weighted setting. Specifically, the cut-matching game starts with a graph $H=(V(H),E(H),w_{H},\d)$, where each vertex $v\in V(G)$ has integral demands $0\leq\d(v)\leq U$ and the edge set $E(H)$ is empty. During the game, $E(H)$ will grow and each edge $e\in E(H)$ will have integral weight $1\leq w_{H}(e)\leq U$. When the game ends, $H$ will be guaranteed to be an expander with proper sparsity.

The cut-matching game has several rounds. In the $i$-th round, the cut player will choose a cut $(A_{i},B_{i})$ of $V(H)$ with $\d(B_{i})\geq \d(A_{i})\geq \d(V(H))/4$ and $w_{H}(E_{H}(A_{i},B_{i}))\leq \d(V(H))/100$. If no such cut exists, it can be proved that there must be a set $S\subseteq V(H)$ such that $\d(S)\geq \d(V(H))/2$ and $H[S]$ can be certificated as an expander. The cut player will choose the cut $(A_{i},B_{i})=(V(H)\setminus S,S)$, and the game will terminate right after this round. Given the current graph $H$ and the cut $(A_{i},B_{i})$ chosen by the cut player, the matching player should choose an arbitrary $\d$-matching $M_{i}$ with value $M_{i}(A_{i})=\d(A_{i})$ (see \Cref{def:dMatching}) between $A_{i}$ and $B_{i}$, and then this matching $M_{i}$ will be added to $E(H)$.

\begin{definition}[$\d$-matching]
For two sets $A,B$ of vertices with demand $\d(v)\geq 0$ for each $v\in A\cup B$ (assuming $\d(A)\leq \d(B)$), a $\d$-matching between $A$ and $B$ is an edge-weighted bipartite graph of $A$ and $B$ such that for each $v\in A\cup B$, the weighted degree (denoted by $M(v)$) is at most $\d(v)$. The value of $M$ is the total weight of edges, which is equal to $M(A)=\sum_{v\in A}M(v)$. We say this matching is \textit{perfect} if $M(A)=\d(A)$.
\label{def:dMatching}
\end{definition}

\Cref{lemma:CMGRound} follows the analysis in \cite{KKOV07}. Strictly speaking, \cite{KKOV07} only showed that the cut-matching game on an $n$-vertex \textit{unweighted} graph has at most $O(\log n)$ rounds. However, observe that the game on a weighted graph $H$ can also run on an unweighted graph $H'$ in which $w(v)$ copies are created for each original vertex $v\in V(H)$, so \Cref{lemma:CMGRound} follows.

\begin{lemma}
The number of rounds of the cut-matching game will be at most $O(\log(nU))$.
\label{lemma:CMGRound}
\end{lemma}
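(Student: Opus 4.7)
The plan is to reduce the weighted cut-matching game to its unweighted analog via a standard vertex/edge duplication, and then quote the bound from \cite{KKOV07}. Concretely, I would define an unweighted graph $H'$ on the vertex set $V(H')=\{(v,i):v\in V(H),\,1\le i\le \d(v)\}$ where each copy has unit demand, and I would replace each weighted edge $e=(u,v)\in E(H)$ with $w_H(e)$ parallel unit-weight edges placed between arbitrary copies of $u$ and copies of $v$ (respecting the matching structure described below). Note that $|V(H')|=\d(V(H))\le nU$ since each demand is integral and at most $U$.

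Next, I would show that every round of the weighted game on $H$ can be simulated by one round of the unweighted game on $H'$ with the analogous guarantees. A cut $(A_i,B_i)$ in $H$ with $\d(B_i)\ge \d(A_i)\ge \d(V(H))/4$ and $w_H(E_H(A_i,B_i))\le \d(V(H))/100$ lifts to a cut $(A_i',B_i')$ in $H'$, where $A_i'$ collects all copies of vertices in $A_i$ (and symmetrically for $B_i'$); the demand/weight inequalities carry over verbatim since demands and cut weights are preserved by the duplication. A $\d$-matching $M_i$ between $A_i$ and $B_i$ with value $\d(A_i)$ translates into an ordinary perfect matching on the unit-demand side $A_i'$ by sending each unit of the fractional matching at vertex $v$ to one of the $\d(v)$ copies (so each copy in $A_i'$ is saturated exactly once). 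Adding the $w_{H'}(e')$ parallel unit-weight edges corresponding to $M_i$ preserves the invariant that the weighted graph $H$ is the ``contracted'' version of the unweighted $H'$.

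After these simulations, I would invoke the \cite{KKOV07} bound directly: the unweighted cut-matching game on an $N$-vertex graph terminates in $O(\log N)$ rounds. Setting $N=|V(H')|\le nU$ yields $O(\log(nU))$ rounds for the weighted game on $H$, which is the claim. Since the number of rounds only depends on the cut player's ability to find a balanced sparse cut (or certify an expander), and the matching player's role is unchanged modulo duplication, the simulation preserves exactly the quantities controlled by the KKOV07 potential argument.

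The only mildly nontrivial point, and thus the step I would be most careful about, is checking that the translation of a $\d$-matching to a perfect matching on unit demands is always feasible: I need the matching player's weighted choice at vertex $v$ to be distributable among $v$'s $\d(v)$ copies so that each copy receives total weight exactly $1$ (when $v\in A_i$) or at most $1$ (when $v\in B_i$). Because the edge weights $w_H(e)$ and demands $\d(v)$ are integral, one can realize $M_i$ as an integral multigraph of unit-weight edges between copies, and then the constraint $M_i(v)\le \d(v)$ translates to each copy of $v$ being used at most once. With this feasibility verified, the rest of the argument is a direct quotation of the unweighted analysis.
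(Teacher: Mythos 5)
Your proposal is correct and matches the paper's argument: the paper also reduces to the unweighted game by creating one copy per unit of demand (so at most $nU$ vertices) and then invokes the $O(\log N)$ round bound of KKOV07. You simply spell out the simulation and the integrality check that the paper leaves implicit.
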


\

\noindent\textbf{The Cut Player.} We will use the algorithm in \Cref{thm:DetCutPlayer} from \cite{LiS21} to implement the cut player.%

\begin{lemma}[\textit{Theorem 2.4 in \cite{LiS21}}]
Let $H=(V(H),E(H),w_{H},\d)$ be an $n$-vertex $m$-edge graph with integral edge weight $1\leq w_{H}(e)\leq U$ for all $e\in E(H)$ and integral demand $0\leq d(v)\leq U$ for all $v\in V(H)$. Given a parameter $r\geq 1$, there is a deterministic algorithm that returns either
\begin{itemize}
    \item a cut $(A,B)$ in $H$ with $\d(A),\d(B)\geq \d(V(H))/4$ and $w_{H}(E_{H}(A,B))\leq \d(V(H))/100$; or
    \item a set $S\subseteq V(H)$ with $\d(S)\geq \d(V(H))/2$ such that $\Psi(H[S])\geq 1/\log^{O(r^{4})}m$.
\end{itemize}
The running time of this algorithm is $O(m\cdot(mU)^{O(1/r)}\log^{O(r^{2})}(mU))$.
\label{thm:DetCutPlayer}
\end{lemma}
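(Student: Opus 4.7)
The plan is to adapt the deterministic cut-player algorithm of Chuzhoy et al.~\cite{ChuzhoyGLNPS20} for unit-capacity graphs to the weighted setting. The unit-capacity algorithm operates recursively: at each of $r$ levels, the algorithm runs a short cut-matching game whose matching player is realized by approximate max-flow, and whose cut player is itself implemented by recursion at depth $r-1$. Larger recursion depth yields a faster per-flow running time (an $n^{1+O(1/r)}$ factor) at the price of an $O(r^4)$-th power loss in the sparsity guarantee. To extend this to edge-weighted graphs with vertex demands, I would carry out the same recursion but use a \emph{weighted} approximate max-flow subroutine in place of the unit-capacity one, so that $\d$-matchings can be routed across cuts while respecting both edge weights up to $U$ and demand constraints.

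In more detail, I would set up an outer cut-matching game on a growing witness multigraph $W$ with vertex set and demands identical to $H$, initially edgeless. I would run $O(\log(\d(V(H)))) = O(\log(mU))$ rounds, as justified by a weighted adaptation of \Cref{lemma:CMGRound}. In each round, the inner depth-$(r-1)$ cut player either returns a vertex partition $(A,B)$ of $W$ that is balanced with respect to $\d$ and whose cut weight in $W$ is small, or certifies that the bulk of $W$ by demand induces a weighted expander of sparsity $1/\log^{O((r-1)^4)}(mU)$. In the balanced case, I invoke a weighted max-flow on $H$ with source demand $\d(A)$ and sink demand $\d(B)$ and try to route a $\d$-matching between $A$ and $B$; if the routing succeeds with low edge/vertex congestion, the routed matching is appended to $W$ and the game continues; otherwise the max-flow subroutine exposes a sparse cut in $H$, which is returned as the outer output. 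If all rounds complete without ever returning a cut, the certificate from the last inner call yields a set $S \subseteq V(W)$ with $\d(S) \geq \d(V(H))/2$ on which $W[S]$ is a weighted expander, and a standard boosting argument (using the routings already embedded into $H$) lifts this to sparsity $\Omega(1/\log^{O(r^4)}(mU))$ in $H[S]$.

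The main obstacle is controlling the loss of sparsity across levels in the presence of weights. The unit-capacity analysis of \cite{ChuzhoyGLNPS20} relies on each routed matching transferring witness-graph sparsity to host-graph sparsity with only a polylog-in-$m$ congestion overhead; with edge weights in $[1,U]$ and demands in $[0,U]$, a naive reduction by splitting each weighted edge into $w$ parallel unit edges would blow up the instance by a factor of $U$, which is too expensive. Instead, I would track weighted congestion and weighted sparsity natively through every level of the recursion, replacing combinatorial counts by $w_H$-weighted and $\d$-weighted sums throughout the KKOV-style potential argument; this is precisely the source of the extra $\log^{O(r^4)}(mU)$ factor (from carrying logs of weighted totals rather than of vertex counts) as well as the $(mU)^{O(1/r)}$ overhead per recursion level.

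Finally, for the running time I would plug in a deterministic almost-linear-time weighted max-flow subroutine, e.g.~the vertex-capacitated flow derived from \cite{bernstein2022deterministic} that is invoked elsewhere in this paper (\Cref{lemma:DetVertexFlow}), costing $m^{1+o(1)}$ per call and providing both a routed flow and a min-cut. Each of the $O(\log(mU))$ outer rounds makes one such max-flow call and one depth-$(r-1)$ recursive cut-player call on a witness graph whose size is bounded by $O(m \cdot \mathrm{polylog}(mU))$; unfolding the recursion with branching factor roughly $r$ to depth $r$ yields the claimed bound of $O(m \cdot (mU)^{O(1/r)} \cdot \log^{O(r^2)}(mU))$, matching the statement of \Cref{thm:DetCutPlayer}.
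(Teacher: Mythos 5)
The paper does not actually prove this lemma: it is imported verbatim as Theorem~2.4 of \cite{LiS21} (hence the bracketed attribution in the statement), so there is no internal argument to compare against, and a correct solution here could legitimately consist of the citation alone. Your outline does track the strategy of that reference --- a recursive deterministic cut player in the style of \cite{ChuzhoyGLNPS20}, with a weighted KKOV-type potential analysis --- but as written it is a plan rather than a proof. Every quantitative claim that constitutes the content of the lemma is deferred: the $\log^{O(r^4)}m$ sparsity loss is attributed to ``tracking weighted congestion and weighted sparsity natively,'' the balance constants $1/4$ and $1/100$ are never derived, and the running time is obtained by ``unfolding the recursion,'' with the accounting left undone.

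Two of the gaps are concrete enough to name. First, the running-time bookkeeping is off: the branching factor of the recursion is the number of cut-matching rounds per level, i.e.\ $O(\log(mU))$, not ``roughly $r$'' as you assert, and the $(mU)^{O(1/r)}$ factor in the target bound comes from the specific cheap flow/base-case subroutines used inside the recursive cut player, not from an almost-linear-time max flow. Plugging in \Cref{lemma:DetVertexFlow} as you propose is actually in tension with the claimed bound, since that subroutine carries an additive $n^{o(1)}C_{\ssum}$ term in the total capacity, which you never bound, and it is a \emph{vertex}-capacitated routine. That is the second issue: this lemma is about edge sparsity $\Psi$ and edge cuts $w_H(E_H(A,B))$, so the matchings inside the cut player must be embedded with bounded \emph{edge} congestion in $H$; your hedge ``edge/vertex congestion'' and your choice of flow primitive point toward the vertex-cut machinery used elsewhere in the paper, which does not directly yield the edge-sparsity certificate $\Psi(H[S])\ge 1/\log^{O(r^4)}m$ required here. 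To make the argument self-contained you would need to carry out the weighted potential analysis and the level-by-level congestion and running-time bounds explicitly; otherwise, citing \cite{LiS21} as the paper does is the appropriate resolution.
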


\Cref{prop:CMGExpansion} concludes that the game ends with an expander, which is an easy observation from the fact that in the last round of the game, a large subgraph $H[S]$ has been an expander by \Cref{thm:DetCutPlayer} and the matching player connects vertices in $V(H)\setminus S$ to $S$ by a $\d$-matching with value $|V(H)\setminus S|$.

\begin{proposition}
If the cut player is implemented by the algorithm in \Cref{thm:DetCutPlayer} with parameter $r\geq 1$, the final graph $H$ has sparsity $\Psi(H)\geq 1/\log^{O(r^{4})}m$, where $m$ is the final number of edges.
\label{prop:CMGExpansion}
\end{proposition}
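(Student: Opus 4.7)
The plan is to show $\Psi(H) \geq \psi/3$ with $\psi := 1/\log^{O(r^{4})}m$, from which the claim follows after absorbing the constant into the $O(\cdot)$. The structure exploited is what the final round of the game provides: \Cref{thm:DetCutPlayer} yields $S \subseteq V(H)$ with $\d(S) \geq \d(V(H))/2$ and $\Psi(H[S]) \geq \psi$, after which the matching player adds a $\d$-matching $M$ between $A := V(H) \setminus S$ and $S$ that is perfect at $A$, i.e., $M(v) = \d(v)$ for every $v \in A$. The last matching is incident only to $A$, so $H[S]$ is unaffected by it; matching edges added in earlier rounds can only increase crossing weights for any cut. Thus it suffices to lower-bound the final cut weight using only (i) the internal expander $H[S]$ and (ii) the final matching $M$.

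Fix any cut $(T, \bar{T})$ in the final $H$ with $\d(T) \leq \d(\bar{T})$ and decompose $T_1 = T \cap S$, $T_2 = T \cap A$, $\bar{T}_1 = S \setminus T_1$, $\bar{T}_2 = A \setminus T_2$. From (i), the crossing weight is at least $\psi \min\{\d(T_1), \d(\bar{T}_1)\}$. From (ii), the matching weight crossing the cut is $M(T_2, \bar{T}_1) + M(\bar{T}_2, T_1)$; using $M(T_2) = \d(T_2)$, $M(\bar{T}_2) = \d(\bar{T}_2)$ (perfectness at $A$) together with $M(T_1) \leq \d(T_1)$ and $M(\bar{T}_1) \leq \d(\bar{T}_1)$, this crossing matching weight is at least
\[
\max\bigl\{0, \d(T_2) - \d(T_1)\bigr\} + \max\bigl\{0, \d(\bar{T}_2) - \d(\bar{T}_1)\bigr\}.
\]

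A short case analysis on how $\d(T)$ splits between $S$ and $A$ finishes the argument. If $\d(T_1) < \d(T)/3$ then $\d(T_2) > 2\d(T)/3$ and (ii) gives crossing weight $> \d(T)/3$, so $\Psi_H(T, \bar{T}) > 1/3$. If $\d(T)/3 \leq \d(T_1) \leq \d(S)/2$ then $\min\{\d(T_1), \d(\bar{T}_1)\} = \d(T_1)$ and (i) yields at least $\psi \d(T_1) \geq \psi \d(T)/3$. Finally if $\d(T_1) > \d(S)/2$ we sub-split: when $\d(\bar{T}_1) \geq \d(T)/3$ apply (i); otherwise $\d(\bar{T}_2) \geq \d(\bar{T}) - \d(\bar{T}_1) \geq 2\d(T)/3$ and (ii) yields $\d(\bar{T}_2) - \d(\bar{T}_1) > \d(T)/3$. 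In every case $\Psi_H(T, \bar{T}) \geq \psi/3$, proving the proposition.

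The main obstacle is precisely the last sub-case, where $T_1$ absorbs most of $S$: the $H[S]$-bound $\psi \min\{\d(T_1), \d(\bar{T}_1)\} = \psi \d(\bar{T}_1)$ can be arbitrarily small and the expander property of $H[S]$ alone is useless. The critical leverage is perfectness of $M$ at $A$: it forces $\bar{T}_2$ to send all of its $\d$-weight into $S$, and since $\bar{T}_1$ is too small to absorb it, most must land in $T_1$ and cross the cut. Without the perfectness assumption at $A$ this sub-case would fail; the remainder of the argument is routine linear bookkeeping and the constant $1/3$ is clearly not optimal.
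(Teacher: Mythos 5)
Your proof is correct and follows exactly the route the paper intends: the paper leaves \Cref{prop:CMGExpansion} as an "easy observation" combining the expander certificate on $H[S]$ from \Cref{thm:DetCutPlayer} with the perfect $\d$-matching from $V(H)\setminus S$ into $S$ added in the final round, and your three-case analysis (including the correct identification that perfectness of $M$ at $A$ is what rescues the sub-case $\d(T_1)>\d(S)/2$, $\d(\bar T_1)$ small) is the natural complete write-up of that observation. The only cosmetic remark is that the constant loss of $3$ is harmlessly absorbed into the $O(r^4)$ exponent, as you note.
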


\

\noindent\textbf{The Matching Player.} As shown in \Cref{sect:BalancedSparseCut}, in order to solve the most-balanced sparse vertex cut problem, we will try to \textit{embed} (see \Cref{def:embedding}) an expander generated by the cut-matching game into the graph $G$ we want to decompose, and we hope the embedding has low \textit{vertex congestion}. Since the expander is the union of a small number of matchings decided by the matching player, the strategy of the matching player is to somehow compute a matching and its low-vertex-congestion embedding simultaneously\footnote{To be precise, the matching player in \Cref{thm:RandMatchingPlayer} will compute either a matching which can be almost embedded into $G$, or some balanced sparse vertex vertex cut. This suffices to show the result in \Cref{sect:BalancedSparseCut}.} in each round. Because the number of rounds is small, we can simply takes the union of embeddings of matchings as the desired embedding of the expander. 

Analogous to \cite{NS17,ChuzhoyGLNPS20}, we implement the matching player (see \Cref{thm:RandMatchingPlayer}) using approximate vertex-capacitated max flow algorithm as a subroutine.
We discuss the approximate vertex-capacitated max flow algorithm in \Cref{sect:DetVertexFlow} and leave the complete proof of \Cref{thm:RandMatchingPlayer} in \Cref{proof:DetMatchingPlayer}.

\begin{definition}[Embedding]
Let $G=(V(G),E(G),w)$ be a graph with positive integral vertex weights $w$, and let $H=(V(H),E(H),w_{H})$ be an graph with $V(H)=V(G)$ and positive integral edge weights $w_{H}$. An \textit{embedding of $H$ into $G$} is a collection ${\cal P}$ of unweighted paths in $G$, where for each $e=(u,v)\in E(H)$, there are exactly $w_{H}(e)$ paths in ${\cal P}$ connecting $u$ and $v$. If there exists a value $\eta$ such that each vertex $v\in V(G)$ appears in at most $\eta\cdot w(v)$ paths in ${\cal P}$, we say that the embedding has \textit{vertex-congestion} $\eta$. We use $E({\cal P})$ to denote edges in $E(G)$ involved in ${\cal P}$.
\label{def:embedding}
\end{definition}

\begin{lemma}
Let $G=(V(G),E(G),w,\d)$ be an $n$-vertex $m$-edge graph with, for each $v\in V(G)$, integral weights $1\leq w(v)\leq U$ and integral demands $0\leq \d(v)\leq w(v)$. Let $(A,B)$ be an edge cut of $G$ with $\d(A)\leq \d(B)$. Given $z\geq 0,0<\phi\leq 1$, there is a deterministic algorithm that computes
\begin{itemize}
    \item either a $\d$-matching $M$ between $A$ and $B$ with value at least $\d(A)-z$ such that there is an embedding ${\cal P}$ of $M$ into $G$ with vertex-congestion $O(\log (nU)/\phi)$; or
    \item a vertex cut $(L,S,R)$ in $G$ with $\d(L),\d(R)> z$ and $h_{G}(L,S,R)\leq \phi$. 
\end{itemize}
The running time of this algorithm is $O(m^{1+o(1)}U\log U/\phi)$. If the output is the $\d$-matching $M$, the algorithm can further output the set $E({\cal P})$ of ${\cal P}$.
\label{thm:RandMatchingPlayer}
\end{lemma}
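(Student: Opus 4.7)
The plan is to reduce \Cref{thm:RandMatchingPlayer} to a single call of the deterministic approximate vertex-capacitated maximum flow algorithm of \Cref{lemma:DetVertexFlow}. I would construct an auxiliary network $G'$ by adding a super-source $s$ joined to each $v\in A$ by an edge of capacity $\d(v)$ and a super-sink $t$ joined to each $v\in B$ by an edge of capacity $\d(v)$. Each internal vertex $v\in V(G)$ receives capacity $w(v)\cdot\eta$, where $\eta=\Theta(\log(nU)/\phi)$ is chosen large enough to absorb the approximation factor of the flow routine; the vertices in $A$ and $B$ are given much larger capacity (e.g.~$U\cdot n$) so that they never appear in a min vertex cut, ensuring $A\subseteq L$ and $B\subseteq R$ for any returned cut. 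The target is a feasible $s$-$t$ flow of value at least $\d(A)-z$ in $G'$. Invoking \Cref{lemma:DetVertexFlow} on $G'$ takes $O(m^{1+o(1)}\,U\log U/\phi)$ time, since $G'$ has $O(m)$ edges and internal vertex capacities bounded by $O(U\log(nU)/\phi)$.

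In the flow case, I would decompose the returned integral flow into $s$-$t$ paths in $\Otil(m)$ time via a standard path decomposition. Each path, stripped of $s$ and $t$, becomes a $u$-$v$ path in $G$ with $u\in A,v\in B$; the collection of these paths is the embedding ${\cal P}$, and $E({\cal P})$ is read off directly. Since each internal $v\in V(G)$ has capacity $w(v)\cdot\eta$, the congestion of ${\cal P}$ is at most $\eta=O(\log(nU)/\phi)$. Setting $M(u,v)$ equal to the total flow between $u$ and $v$ yields a valid $\d$-matching of value $\ge\d(A)-z$, since the source/sink edge capacities automatically enforce $M(u)\le\d(u)$ and $M(v)\le\d(v)$.

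In the cut case, let $(L_{G'},S_{G'},R_{G'})$ be the returned vertex cut of $G'$ with $s\in L_{G'}$, $t\in R_{G'}$; set $L=L_{G'}\setminus\{s\}$, $S=S_{G'}$, $R=R_{G'}\setminus\{t\}$. By the large-capacity trick above, $A\subseteq L$ and $B\subseteq R$, so $\d(L)\ge\d(A)>z$ and $\d(R)\ge\d(B)\ge\d(A)>z$ (noting that the cut case only arises when $z<\d(A)$; otherwise the empty matching trivially satisfies the first alternative). The approximate-duality bound gives $w(S)\cdot\eta\le\alpha F$ for approximation factor $\alpha=n^{o(1)}$ and true max flow $F<\d(A)-z<\d(A)$, so the choice of $\eta$ yields $w(S)\le\phi\,\d(A)\le\phi\cdot\min\{\d(L\cup S),\d(R\cup S)\}$, i.e.~$h_G(L,S,R)\le\phi$. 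The principal obstacle is the interplay between the approximation slack of \Cref{lemma:DetVertexFlow} and the choice of $\eta$ needed to absorb it while simultaneously keeping the vertex-congestion and sparsity bounds tight; the remaining ingredients (flow-to-path decomposition and integral rounding) are standard and fit within the claimed running time.
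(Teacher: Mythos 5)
Your basic gadget (a source/sink network whose auxiliary capacities encode the demands, with internal vertex capacities scaled by roughly $w(v)/\phi$) is the right starting point, but the proof as proposed has two genuine gaps.

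First, a single call to the approximate flow routine cannot deliver the \emph{additive} guarantee of value $\geq \d(A)-z$. \Cref{lemma:DetVertexFlow} only returns a $(1-\epsilon)$-fraction of the max flow for a constant (or $1/\polylog$) $\epsilon$. When the true max flow equals $\d(A)$ (the case where no sparse cut exists), one call yields value $(1-\epsilon)\d(A)$, which is strictly less than $\d(A)-z$ whenever $z<\epsilon\,\d(A)$; in particular the argument fails outright for $z=0$, which is exactly the setting needed by \Cref{coro:EmbeddingOfExpander} and hence by \Cref{lem:low deg tree}. Enlarging $\eta$ does not help, since the bottleneck is the total source-side capacity $\d(A)$, not the internal congestion budget. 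The paper resolves this by iterating: in each of $O(\log(nU))$ rounds it runs the flow routine with $\epsilon=1/2$ on the \emph{residual} demands $\d_i=\d-M_i$, routes at least $\lceil(\d_i(A)-z)/3\rceil$ new units, and accumulates the matching; the deficit $\d(A)-M_i(A)-z$ shrinks geometrically and the per-round congestion $\lceil 1/\phi\rceil$ sums to $O(\log(nU)/\phi)$.

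Second, the cut case is broken. Since $(A,B)$ is an edge cut of $G$, the pair $(A,B)$ partitions $V(G)$, so ``$A\subseteq L$ and $B\subseteq R$'' would force $S=\emptyset$ and hence no edges between $A$ and $B$ — impossible in general. (Your construction is also internally inconsistent: every vertex of $G$ lies in $A\cup B$, so you cannot simultaneously assign capacity $w(v)\eta$ to ``internal'' vertices and capacity $Un$ to ``the vertices in $A$ and $B$.'') The correct argument must let the returned cut split both $A$ and $B$: in the paper, each $u\in A$ is attached to $s$ through an auxiliary vertex $u_A$ of capacity $\d_i(u)$ (this also handles the fact that \Cref{lemma:DetVertexFlow} only supports vertex capacities, not capacities on the edges $(s,u)$), and one argues that every $u\in A\setminus L'$ forces $u_A\in S'$ or $u\in S'$, contributing $\d_i(u)$ to the cut capacity. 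This yields $\d_i(A\cap L')>z$, hence $\d(L)>z$ (and symmetrically $\d(R)>z$), and separately $\lceil 1/\phi\rceil\,w(S)\le \d_i(A\cap(L\cup S))\le\d(L\cup S)$, giving $h_G(L,S,R)\le\phi$. Without this accounting, your chain $w(S)\le\phi\,\d(A)\le\phi\min\{\d(L\cup S),\d(R\cup S)\}$ and the lower bounds on $\d(L),\d(R)$ have no support.
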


\subsection{Most-Balanced Sparse Vertex Cut}
\label{sect:BalancedSparseCut}

In a most-balanced sparse vertex cut problem, the goal is to compute a sparse vertex cut $(L,S,R)$ with vertex expansion at most a given parameter $\phi$ such that it is the most-balanced in the sense that the demand of the smaller side, i.e. $\min\{\d(L),\d(R)\}$, is maximized. This problem in the edge cut setting has been studied in \cite{ST04,KRV09,NS17,Wul17}. %
As shown in these works, there exists an efficient approximation algorithm for the bi-criteria decision version of this problem. By applying the same techniques to the vertex cut setting, we can obtain \Cref{lemma:MostBalanceCut}.

\begin{lemma}
Let $G=(V(G),E(G),w,\d)$ be an $n$-vertex $m$-edge graph with, for each $v\in V(G)$, integral weight $1\leq w(v)\leq U$ and integral demand $0\leq \d(v)\leq w(v)$. Given parameters $0<\phi\leq 1,z\geq0$ and $r\geq 1$, there is a deterministic algorithm that either
\begin{itemize}
    \item returns a vertex cut $(L,S,R)$ of $G$ with $\d(R)\geq\d(L)> z$ and $h_{G}(L,S,R)\leq\phi$; or
    \item certifies that every vertex cut $(L,S,R)$ with $\d(L\cup S),\d(R\cup S)\geq \alpha z\log U\log^{\alpha r^{4}}(m\log U)$ has $h_{G}(L,S,R)\geq \phi/(\alpha\log^{2} U\log^{\alpha r^{4}}(m\log U))$, where $\alpha>0$ is a universal constant.
\end{itemize}
The running time of this algorithm is $O(m^{1+o(1)+O(1/r)}\cdot U^{1+O(1/r)}\cdot\log^{O(r^{2})}(mU)/\phi)$.
\label{lemma:MostBalanceCut}
\end{lemma}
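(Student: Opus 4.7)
The plan is to apply the cut-matching game framework as a black box, adapting the edge-cut arguments of~\cite{ST04,KRV09,NS17,Wul17} to the vertex-cut setting. We maintain an auxiliary graph $H$ with $V(H)=V(G)$, vertex demands $\d_H(v)=\d(v)$, and $E(H)=\emptyset$ initially, and run the game for $T=O(\log(nU))$ rounds (\Cref{lemma:CMGRound}). In round $i$, we invoke the cut player of \Cref{thm:DetCutPlayer} on $H$ with parameter $r$; it returns either a balanced edge cut $(A_i,B_i)$ of $H$ satisfying $\d(A_i),\d(B_i)\geq \d(V(H))/4$ and $w_H(E_H(A_i,B_i))\leq \d(V(H))/100$, or an expander subset $S^{\star}\subseteq V(H)$ (in which case we use the cut $(V(H)\setminus S^{\star},S^{\star})$ and terminate after this round). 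We then invoke the matching player of \Cref{thm:RandMatchingPlayer} on $G$ with this cut, slack $z$, and congestion parameter $\phi$. If the matching player returns a vertex cut $(L,S,R)$ of $G$ with $\d(L),\d(R)>z$ and $h_G(L,S,R)\leq\phi$, we immediately return it (the first bullet of the lemma). Otherwise we obtain a $\d$-matching $M_i$ of value $\geq \d(A_i)-z$ together with an embedding ${\cal P}_i$ of $M_i$ into $G$ of vertex-congestion $O(\log(nU)/\phi)$; we add $M_i$ to $E(H)$ and accumulate ${\cal P}_i$ into a global embedding ${\cal P}=\bigcup_i {\cal P}_i$.

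Suppose we reach the end of the game without returning a sparse cut. By \Cref{prop:CMGExpansion}, the final graph $H$ satisfies $\Psi(H)\geq 1/\log^{O(r^4)}m_H$, where $m_H\leq T\cdot \d(V(H))=O(nU\log(nU))$ so that $\log m_H=O(\log(mU))$. The combined embedding ${\cal P}$ of $H$ into $G$ has vertex-congestion $\eta=T\cdot O(\log(nU)/\phi)=O(\log^2(nU)/\phi)$. Now fix any vertex cut $(L,S,R)$ of $G$ with $\d(L\cup S)\leq \d(R\cup S)$, and regard $(L\cup S,R)$ as a bipartition of $V(H)=V(G)$. Every edge of $H$ crossing this bipartition embeds in ${\cal P}$ as a path in $G$ between a vertex of $L\cup S$ and a vertex of $R$, which must use at least one vertex of the vertex separator $S$; summing weights and using that each vertex $v$ appears in at most $\eta\cdot w(v)$ paths gives $w_H(E_H(L\cup S,R))\leq \eta\cdot w(S)$. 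On the other hand, sparsity of $H$ gives $w_H(E_H(L\cup S,R))\geq \Psi(H)\cdot \min\{\d(L\cup S),\d(R)\}$. Combining the two,
\[
w(S)\;\geq\;\frac{\Psi(H)}{\eta}\cdot\min\{\d(L\cup S),\d(R)\}.
\]

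To turn this into a bound on $h_G(L,S,R)=w(S)/\min\{\d(L\cup S),\d(R\cup S)\}$, we use $\d(R)\geq \d(R\cup S)-w(S)$ and absorb two sources of slack: the weight-to-demand ratio $U$ (which allows $w(S)$ to exceed $\d(S)$ by a factor of $U$), and the per-round matching-player slack $z$ (contributing at most $Tz=O(z\log(nU))$ of unrouted demand across the game, which limits how tightly sparsity translates into the vertex-cut bound). Concretely, if $\min\{\d(L\cup S),\d(R\cup S)\}\geq \alpha z\log U\log^{\alpha r^4}(m\log U)$ for a sufficiently large universal constant $\alpha$, then the above inflation is dominated by the balance threshold, yielding $\min\{\d(L\cup S),\d(R)\}=\Omega(\min\{\d(L\cup S),\d(R\cup S)\})$ and hence $h_G(L,S,R)\geq \phi/(\alpha\log^2 U\log^{\alpha r^4}(m\log U))$. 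The running time is $T$ times the per-round costs of \Cref{thm:DetCutPlayer} and \Cref{thm:RandMatchingPlayer}, which sums to the stated bound.

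The main obstacle is the certification step: converting the edge-expansion of the abstract graph $H$, together with the low vertex-congestion embedding ${\cal P}\colon H\hookrightarrow G$, into a clean lower bound on the vertex expansion $h_G(L,S,R)$ of an arbitrary balanced cut of $G$. This requires simultaneously absorbing the matching-player slack $z$, the total embedding congestion $\eta$, and the weight factor $U$ into the balance threshold and the final expansion approximation, while ensuring that $\Psi(H)$ applies to the induced bipartition of $V(H)$ with enough balance on both sides.
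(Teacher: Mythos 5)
Your overall route is the same as the paper's: run the KKOV-style cut-matching game with the cut player of \Cref{thm:DetCutPlayer} and the matching player of \Cref{thm:RandMatchingPlayer}, return the matching player's vertex cut if one ever appears, and otherwise certify balanced cuts of $G$ via the low-vertex-congestion embedding of the accumulated graph $H$. Your final certification paragraph is essentially the paper's \Cref{lemma:certificate}, including the key observation that every embedded path crossing the bipartition $(L\cup S,R)$ must contain a vertex of $S$, so that $w_H(E_H(L\cup S,R))\le \eta\cdot w(S)$.

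The gap is in how you handle the fact that $M_i$ is not a perfect $\d$-matching. You add only $M_i$ to $E(H)$ and then invoke \Cref{lemma:CMGRound} and \Cref{prop:CMGExpansion} for the resulting $H$; neither applies as stated, since both presuppose that the matching player adds a \emph{perfect} $\d$-matching between $A_i$ and $B_i$ in every round. With only partial matchings, the potential argument behind the $O(\log(nU))$ round bound can fail, and the terminal graph need not be an expander at all (a vertex whose demand is never matched is isolated in $H$), so the inequality $w_H(E_H(L\cup S,R))\ge \Psi(H)\cdot\min\{\d(L\cup S),\d(R)\}$ on which your certification rests is unjustified. The paper's fix is to augment each $M_i$ by an arbitrary perfect $\d_i$-matching $F_i$ of \emph{fake} edges of total weight at most $z$, feed $M_i\cup F_i$ to the game (so the round bound and $\Psi(H)\ge 1/\log^{O(r^4)}|E(H)|$ hold for $H=\bigcup_i (M_i\cup F_i)$), and then prove that the actually embedded graph $\hat H=H-F$ still has sparsity at least $\Psi(H)/2$ on every cut whose smaller side has demand at least $2w_H(F)/\Psi(H)=O(z\log(nU)/\Psi(H))$ (\Cref{claim:WeightedSparsityWithFake}); this is precisely where the $\alpha z\log U\log^{\alpha r^4}(m\log U)$ balance threshold in the second bullet comes from. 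You correctly anticipate that the $O(z\log(nU))$ of unrouted demand must be absorbed into the balance threshold, but absorbing it \emph{after} having already asserted $\Psi(H)\ge\cdots$ for the partial-matching graph is not a valid chain of reasoning; you need the fake-edge device (or an equivalent direct argument that balanced cuts of $\bigcup_i M_i$ are sparse) to make the certification sound.
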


The key idea is trying to embed an expander into the graph $G$ with low vertex congestion via the cut-matching game. If we fail to get the embedding because the matching player throws out a balanced sparse cut in some round, then this cut is an acceptable output of \Cref{lemma:MostBalanceCut}. Otherwise, we hope to get such an embedding, so that the algorithm can also terminate because intuitively, a low-vertex-congestion embedding of an expander into $G$ can certify that $G$ is a vertex expander. 

However, the matching player in \Cref{thm:RandMatchingPlayer} is somewhat weaker. In the latter case, it only computes in each round a non-perfect (but large) matching $M_{i}$ embeddable into $G$. Equivalently, if we augment $M_{i}$ into a perfect matching $M_{i}\cup F_{i}$ by adding some \textit{fake edges} $F_{i}$, finally we can only embed a ``nearly expander'' $\hat{H}=H-F$ into $G$ where $H=\bigcup_{i}M_{i}\cup F_{i}$ is an expander and $F=\bigcup_{i} F_{i}$ the set of a small number of fake edges. Fortunately, this suffices to prove \Cref{lemma:MostBalanceCut} because intuitively, all balanced cuts in the ``nearly expander'' $\hat{H}$ are not sparse, and the low-vertex-congestion embedding of $\hat{H}$ into $G$ can certify that all balanced vertex cuts in $G$ have high vertex expansion as desired. 

\Cref{coro:EmbeddingOfExpander} is a corollary of \Cref{lemma:MostBalanceCut} where we set the balanced parameter $z=0$. It will be used in \Cref{sect:LowDegreeHierarchy} as a subroutine. The complete proof of \Cref{lemma:MostBalanceCut} and \Cref{coro:EmbeddingOfExpander} is in \Cref{proof:MostBalanceCut}.

\begin{corollary}
Let $G=(V(G),E(G),w,\d)$ be an $n$-vertex $m$-edge graph with, for each $v\in V(G)$, integral weight $1\leq w(v)\leq U$ and integral $0\leq \d(v)\leq w(v)$. Given parameters $0<\phi\leq 1$, there is a deterministic algorithm that either
\begin{itemize}
    \item returns a vertex cut $(L,S,R)$ of $G$ with $\d(R)\geq\d(L)>0$ and $h_{G}(L,S,R)\leq\phi$; or
    \item certifies that there is an embedding ${\cal P}$ of an expander $H$ with the same demand $\d$ and $\Psi(H)\geq 1/(m\log U)^{o(1)}$ into $G$ with vertex-congestion at most $O(\frac{\log^{2}(nU)}{\phi})$ and returns the set $E({\cal P})$ of edges involved in this embedding.
\end{itemize}
The running time of this algorithm is $O(m^{1+o(1)}U^{1+o(1)}/\phi)$.
\label{coro:EmbeddingOfExpander}
\end{corollary}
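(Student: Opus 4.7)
The plan is to instantiate the cut-matching game of \Cref{sect:WeightedCutMatchingGame} on a graph $H = (V(G), \emptyset, w_H, \d)$ that shares the demand function $\d$ of $G$, using the cut player of \Cref{thm:DetCutPlayer} (with a parameter $r$ to be chosen) and the matching player of \Cref{thm:RandMatchingPlayer} with the given $\phi$ and balanced parameter $z=0$. Throughout, maintain a running collection $\cP$ of paths in $G$ that will form the certifying embedding.

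In each round $i$, the cut player either certifies termination or returns a balanced sparse edge cut $(A_i,B_i)$ of the current $H$. In the latter case, hand $(A_i,B_i)$ to the matching player. By \Cref{thm:RandMatchingPlayer} with $z=0$, one of two things happens: (a) the matching player returns a $\d$-matching $M_i$ of value at least $\d(A_i) - 0 = \d(A_i)$ (hence a perfect $\d$-matching, since any $\d$-matching has value at most $\d(A_i)$) together with an embedding $\cP_i$ of $M_i$ into $G$ of vertex-congestion $O(\log(nU)/\phi)$; or (b) it returns a vertex cut $(L,S,R)$ of $G$ with $\d(L),\d(R) > 0$ and $h_G(L,S,R) \leq \phi$. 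In case (b), halt immediately and output $(L,S,R)$ as the first alternative of the corollary. In case (a), append $M_i$ to $E(H)$ and the edges of $\cP_i$ to $E(\cP)$, then proceed.

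If case (b) never occurs, \Cref{lemma:CMGRound} guarantees termination after $t = O(\log(nU))$ rounds, and \Cref{prop:CMGExpansion} ensures $\Psi(H) \geq 1/\log^{O(r^4)}|E(H)|$. By construction $V(H) = V(G)$ with demand function $\d$, and $\cP = \bigcup_i \cP_i$ is an embedding of $H$ into $G$ whose vertex-congestion is bounded by $t \cdot O(\log(nU)/\phi) = O(\log^2(nU)/\phi)$. The algorithm outputs $E(\cP) = \bigcup_i E(\cP_i)$, which \Cref{thm:RandMatchingPlayer} provides explicitly.

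The main point requiring care is the choice of $r$, which must simultaneously satisfy $\log^{O(r^4)}|E(H)| \leq (m\log U)^{o(1)}$ (so that $\Psi(H) \geq 1/(m\log U)^{o(1)}$ as required) and $(mU)^{O(1/r)}\log^{O(r^2)}(mU) \leq m^{o(1)}U^{o(1)}$ (so that the cut player's cost fits the target bound). Since each round adds at most $O(n^2)$ edges to $H$, one has $|E(H)| = \mathrm{poly}(nU)$ throughout, so any sufficiently slowly growing $r = r(m,U) \to \infty$ with $r^4 \log\log(mU) = o(\log(mU))$ works; for instance, $r = \lfloor(\log(mU)/\log\log(mU))^{1/5}\rfloor$. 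A minor technical wrinkle is that the cut player expects integral edge weights in $[1,U]$, whereas accumulated weights in $H$ may reach $O(tU) = O(U\log(nU))$; this is absorbed by substituting $U' = U\log(nU)$ in the cut player's bounds, which stays inside the $U^{o(1)}$ overhead. With these choices, the matching player's $O(m^{1+o(1)}U\log U/\phi)$ cost dominates each round, and multiplying by $t = O(\log(nU))$ yields the overall running time $O(m^{1+o(1)}U^{1+o(1)}/\phi)$ as claimed.
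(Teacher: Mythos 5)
Your proposal is correct and follows essentially the same route as the paper: the paper proves \Cref{coro:EmbeddingOfExpander} by invoking \Cref{lemma:MostBalanceCut} with $z=0$, and the proof of that lemma is exactly the cut-matching game you run (cut player from \Cref{thm:DetCutPlayer}, matching player from \Cref{thm:RandMatchingPlayer}), with $z=0$ eliminating the fake edges so the whole expander embeds with congestion $O(\log^2(nU)/\phi)$. Your choice of $r$ differs from the paper's $r=\log\log\log(mU)$ but serves the same purpose, and your handling of the edge-weight/edge-count growth in $H$ matches the paper's bookkeeping.
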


The remaining proof getting to \Cref{lemma:CutOrCertify} follow the parameter-adjusting technique in \cite{NS17}, which also appeared in \cite{ChuzhoyGLNPS20}. The algorithm keep prunning the graph $G$ using \Cref{lemma:MostBalanceCut} in several phases, and the parameters $\phi$ and $z$ passed to \Cref{lemma:MostBalanceCut} will change between phases. When the algorithm terminates, the vertex cut separated the pruned part and the remaining part is sparse since the graph is pruned by a sparse vertex cut each time. If the pruned part is large, then we obtain a balanced sparse vertex cut, otherwise the remaining part is large and it will be certified as an expander. The complete proof is shown in \Cref{proof:BalCutPrune}.

\section{The Low Degree Hierarchy}
\label{sect:LowDegreeHierarchy}

The \textit{low degree hierarchy}, which was first introduced in \cite{DuanP20}, is an important ingredient of our oracle. In this section, we will show an alternative construction algorithm of the low degree hierarchy, which exploits vertex expander decomposition. Compared to the construction in \cite{DuanP20}, our algorithm is more time-efficient, at a cost of giving a slightly low-quality hierarchy.

In short, the low degree hierarchy includes a laminar set ${\cal C}$ of components and a set ${\cal T}$ of low-degree Steiner trees. Every component $\gamma\in{\cal C}$ is corresponding to a tree $\tau\in{\cal T}$ which will span all (but may not only) \textit{terminals} of $\gamma$, where terminals of $\gamma$ are vertices in $V(\gamma)$ that are not in any of the child components. The main result of this section is an almost-linear time construction algorithm for the low degree hierarchy as shown in \Cref{thm:LowDegreeHierarchy}. In \Cref{sect:RepeatDecomp}, we will show how to use vertex expander decomposition repeatedly to construct a prototype of the low degree hierarchy. In \Cref{sect:HierarchyConstruction}, we will refine the prototype to formally construct the hierarchy and complete the proof of \Cref{thm:LowDegreeHierarchy}.

\begin{definition}[Low Degree Hierarchy]
Let $G$ be a connected undirected graph. A $(p,\Delta)$-low degree hierarchy with height $p$ and degree parameter $\Delta$ on $G$ is a pair $({\cal C},{\cal T})$ of sets, where ${\cal C}$ is a set of vertex-induced connected subgraphs called \textit{components}, and ${\cal T}$ is a set of Steiner trees with maximum vertex degree at most $\Delta$.

The set ${\cal C}$ of components is a laminar set. Specifically, ${\cal C}$ has the following properties.
\begin{itemize}
    \item[(1)] Components in ${\cal C}$ belong to $p$ levels and we denote by ${\cal C}_{i}$ the set of components at level $i$. In particular, at the top level $p$, ${\cal C}_{p}=\{G\}$ is a singleton set with the whole $G$ as the unique component. Furthermore, for each level $i\in[1,p]$, components in ${\cal C}_{i}$ are vertex-disjoint and there is no edge in $E(G)$ connecting two components in ${\cal C}_{i}$.
    \item[(2)] For each level $i\in [1,p-1]$ and each component $\gamma\in{\cal C}_{i}$, there is a unique component $\gamma'\in{\cal C}_{i+1}$ such that $V(\gamma)\subseteq V(\gamma')$, where we say that $\gamma'$ is the \textit{parent-component} of $\gamma$ and that $\gamma$ is a \textit{child-component} of $\gamma'$.
    \item[(3)] For each component $\gamma\in{\cal C}$, the \textit{terminals of $\gamma$}, denoted by $U(\gamma)$, are vertices in $\gamma$ but not in any of $\gamma$'s child-components. Note that $U(\gamma)$ can be empty. In particular, for each $\gamma\in{\cal C}_{1}$, $U(\gamma)=V(\gamma)$.
\end{itemize}

Generally, for each level $i\in[1,p]$, we define the \textit{terminals at level $i$} be terminals in all components in ${\cal C}_{i}$, denoted by $U_{i}=\bigcup_{\gamma\in{\cal C}_{i}}U(\gamma)$.

The set ${\cal T}$ of low-degree Steiner trees has the following properties
\begin{itemize}
    \item[(4)] ${\cal T}$ can also be partitioned into subsets ${\cal T}_{1},...,{\cal T}_{p}$, where ${\cal T}_{i}$ denote trees at level $i$ and trees in ${\cal T}_{i}$ are vertex-disjoint.
    \item[(5)] For each level $i\in[1,p]$ and tree $\tau\in{\cal T}_{i}$, the \textit{terminals of $\tau$} is defined by $U(\tau)=U_{i}\cap V(\tau)$.
    \item[(6)] For each level $i\in[1,p]$ and each component $\gamma\in{\cal C}_{i}$ with $U(\gamma)\neq\emptyset$, there is a tree $\tau\in{\cal T}_{i}$ such that $U(\gamma)\subseteq U(\tau)$, denoted by $\tau(\gamma)$. We emphasize that two different components $\gamma$ and $\gamma' \in {\cal C}_i$ may correspond to the same tree $\tau \in {\cal T}_i$. 
\end{itemize}
\label{def:LowDegreeHierarchy}
\end{definition}

\begin{proposition}
In \Cref{def:LowDegreeHierarchy}, terminal sets of components $\{U(\gamma)\mid \gamma\in{\cal C}\}$ forms a partition of $V(G)$. Similarly, regarding terminal sets of Steiner trees and levels, $\{U(\tau)\mid\tau\in{\cal T}\}$ and $\{U_{i}\mid 1\leq i\leq p\}$ are also partitions of $V(G)$.
\label{prop:TerminalPartition}
\end{proposition}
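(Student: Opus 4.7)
The plan is to prove the three partition claims in order: first the component partition, then immediately derive the level partition, and finally use property (6) to upgrade to the tree partition. The key structural inputs are the laminarity and vertex-disjointness in (1)--(2) together with the definition of terminals in (3).

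For the first claim, for each $v \in V(G)$ the plan is to identify the unique smallest component containing $v$ and show that $v$ lies in its terminal set. Since ${\cal C}_p = \{G\}$ by (1), the set of levels $i$ such that some $\gamma \in {\cal C}_i$ contains $v$ is nonempty; let $i^*(v)$ be its minimum. Vertex-disjointness within a level makes the component $\gamma^* \in {\cal C}_{i^*(v)}$ containing $v$ unique, and by minimality no child-component of $\gamma^*$ (which lives at level $i^*(v)-1$) contains $v$, so $v \in U(\gamma^*)$ by (3). Uniqueness: if also $v \in U(\gamma')$ with $\gamma' \in {\cal C}_{i'}$ and $i' > i^*(v)$, I would iterate the parent map of (2) from $\gamma^*$ to produce a chain up to level $i'$, land inside some component at level $i'$ that by disjointness must equal $\gamma'$, and then the level-$(i'-1)$ link of the chain is a child-component of $\gamma'$ containing $v$, contradicting $v \in U(\gamma')$; once the levels agree, disjointness at that level forces $\gamma' = \gamma^*$.

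The level partition is then immediate: $U_i = \bigcup_{\gamma \in {\cal C}_i} U(\gamma)$ by definition, and pairwise disjointness of the $U(\gamma)$'s means $\{U_i\}_{1 \le i \le p}$ partitions $V(G)$ and is refined by $\{U(\gamma)\}_{\gamma \in {\cal C}}$.

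For the tree partition, coverage follows from (6): for $v \in V(G)$, part~1 gives $v \in U(\gamma)$ for some $\gamma \in {\cal C}_i$, and since $U(\gamma) \neq \emptyset$ property (6) supplies a tree $\tau(\gamma) \in {\cal T}_i$ with $U(\gamma) \subseteq U(\tau(\gamma))$, so $v \in U(\tau(\gamma))$. For disjointness, suppose $\tau \in {\cal T}_i$ and $\tau' \in {\cal T}_{i'}$ are distinct: if $i \neq i'$ then $U(\tau) \cap U(\tau') \subseteq U_i \cap U_{i'} = \emptyset$ by the level partition; and if $i = i'$ then (4) gives $V(\tau) \cap V(\tau') = \emptyset$, so $U(\tau) \cap U(\tau') \subseteq V(\tau) \cap V(\tau') = \emptyset$. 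The only delicate step is the uniqueness argument inside part~1, where the parent-chain from (2) must be combined with same-level disjointness; everything else is a direct unwinding of the definitions, so I do not anticipate a real obstacle.
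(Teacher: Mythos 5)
Your proof is correct and is exactly the direct unwinding of \Cref{def:LowDegreeHierarchy} that the paper intends (the paper states this proposition without proof, treating it as an immediate consequence of properties (1)--(6)). The one delicate step you flag --- combining the parent-chain from property (2) with same-level vertex-disjointness to get uniqueness of the component whose terminal set contains $v$ --- is handled correctly, and the reductions of the level and tree partitions to the component partition via properties (4)--(6) are sound.
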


The main result of this section is \Cref{thm:LowDegreeHierarchy}. As a comparison, we state the result of \cite{DuanP20} in \Cref{thm:DPLowDegreeHierarchy}. Combining new algorithms in the following sections, we can get some tradeoffs on complexity measures of our final oracles by choosing the construction algorithm of the low degree hierarchy.

\begin{theorem}
Given an $n$-vertex $m$-edge graph $G$, there is a deterministic algorithm to construct the low degree hierarchy $({\cal C},{\cal T})$ with $p=O(\beta(n))$ levels and degree parameter $\Delta=n^{o(1)}$, where $\beta(n)$ is any slowly growing function e.g. $\beta(n)=\log^{*}n$. The running time of this algorithm is $m^{1+o(1)}$ and the space to store the hierarchy explicitly is $O(pm)=O(m\cdot\beta(n))$.
\label{thm:LowDegreeHierarchy}
\end{theorem}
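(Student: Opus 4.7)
My plan is to build the hierarchy iteratively by driving a shrinking sequence of terminal sets $T_1 \supseteq T_2 \supseteq \dots$ through a fast Steiner forest decomposition, and then read off the components and trees of ${\cal C}$ and ${\cal T}$ directly from the iterates. The Steiner forest decomposition I would use is the composition of \Cref{lem:overview:decomp} with \Cref{lem:low deg tree}: given terminal set $T$ and a parameter $\phi$, invoke the vertex expander decomposition to obtain a separator $X$ with $|X| \le \phi |T| n^{o(1)}$ such that inside every connected component of $G \setminus X$ the surviving terminals are $\phi$-linked, then invoke \Cref{lem:low deg tree} on each component to produce a Steiner tree of maximum degree $O(\log^2 n / \phi)$. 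Let $F$ be the union of these per-component trees.

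Concretely, I would initialize $T_1 := V(G)$, and at iteration $i$ run the decomposition above on $T_i$ with a common parameter $\phi$, producing a separator $X_i$ and a forest $F_i$; set $T_{i+1} := X_i$. To read off the hierarchy, I let $A_i := V(G) \setminus T_{i+1}$, take ${\cal C}_i$ to be the connected components of $G[A_i]$, and attach to each $\gamma \in {\cal C}_i$ the restriction of $F_i$ to $V(\gamma)$ as its tree $\tau(\gamma)$. Chasing the definitions, the terminals of $\gamma$ at level $i$ are $U(\gamma) = V(\gamma) \cap (T_i \setminus T_{i+1})$, which is exactly what the restriction of $F_i$ spans, so the spanning clause of \Cref{def:LowDegreeHierarchy} is met. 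The parent--child relation is also forced: since $A_{i-1} \subseteq A_i$, every level-$(i-1)$ component sits inside a unique level-$i$ component.

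For the parameters, I want to pick $\phi$ so that the number of iterations $p$ is $O(\beta(n))$ while the tree degree stays $n^{o(1)}$. Setting $\phi = n^{-1/\beta(n) - o(1)}$ yields $|T_{i+1}| \le |T_i| / n^{\Omega(1/\beta(n))}$, so $p = O(\beta(n))$ iterations suffice to drive the terminal set to empty, and simultaneously $O(\log^2 n / \phi) = n^{1/\beta(n) + o(1)} = n^{o(1)}$ provided $\beta(n) \to \infty$ (which covers $\log^* n$ and essentially any other slowly growing function named in the statement). Each iteration costs $\Ohat(m)$ by \Cref{lem:overview:decomp} and \Cref{lem:low deg tree}, so the total running time is $\Ohat(m) \cdot p = m^{1+o(1)}$, and storing the hierarchy explicitly costs $O(p(m+n)) = O(pm)$ because every vertex and edge appears in at most one component per level.

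The most delicate step, which is why the excerpt splits the argument into a prototype stage (\Cref{sect:RepeatDecomp}) followed by a refinement (\Cref{sect:HierarchyConstruction}), will be reconciling this clean iterative picture with the formal clauses of \Cref{def:LowDegreeHierarchy}: handling components whose terminal set $U(\gamma)$ is empty (so that no corresponding tree needs to exist), checking that the restriction of $F_i$ to each $\gamma$ inherits the degree bound $\Delta$ and spans $U(\gamma) \subseteq U(\tau(\gamma))$, and confirming that the $n^{o(1)}$ slack inside \Cref{lem:overview:decomp} composes correctly with the chosen $\phi$ so that the per-iteration shrinkage factor stays strictly above $1$ throughout. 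I expect these checks to be routine rather than surprising, but they are the places where the argument genuinely needs care.
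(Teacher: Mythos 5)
Your overall strategy---iterating the vertex expander decomposition on shrinking terminal sets $T_1=V(G)\supseteq$ (in size) $T_2,T_3,\dots$, extracting a low-degree Steiner tree per well-linked component via the cut-matching embedding, and tuning $\phi=n^{-1/\beta(n)-o(1)}$ so that $p=O(\beta(n))$ while the degree stays $n^{o(1)}$---is exactly the paper's approach, and your running-time and parameter accounting is correct. But there is a genuine gap in how you read off the components. You set $A_i=V(G)\setminus T_{i+1}$ and assert that $A_{i-1}\subseteq A_i$, so that the connected components of the $G[A_i]$ form a laminar family. That containment is false: $T_{i+1}$ is the \emph{separator} produced by decomposing $G$ with respect to the terminal set $T_i$, and nothing in \Cref{lem:overview:decomp} (or \Cref{Coro:WeightedDecomposition}) forces the separator to be a subset of the terminals. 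The vertices of a sparse vertex cut are generically non-terminals, so $T_{i+1}\not\subseteq T_i$, hence $A_{i-1}=V(G)\setminus T_i$ and $A_i=V(G)\setminus T_{i+1}$ are incomparable. A level-$(i-1)$ component of $G\setminus T_i$ can then contain vertices of $T_{i+1}$ and be split across, or fail to embed into, the level-$i$ components, violating property (2) of \Cref{def:LowDegreeHierarchy}. Your closing list of ``delicate steps'' does not include this one, and it is the step that actually fails.

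The repair is what the paper's refinement stage does: define the level-$i$ terminals as $U_i=T_i\setminus\bigcup_{i'>i}T_{i'}$ and take $\mathcal{C}_i$ to be the connected components of $G\setminus\bigcup_{i'>i}U_{i'}=G\setminus\bigcup_{i'>i}T_{i'}$. The removed sets $\bigcup_{i'>i}T_{i'}$ are nested in $i$ by construction, so the components do form a laminar family, and $\{U_i\}$ partitions $V(G)$. One must then re-establish property (6) --- that each $\gamma\in\mathcal{C}_i$ with $U(\gamma)\neq\emptyset$ has its terminals inside a single tree of $\mathcal{T}_i$ --- which holds because $G\setminus\bigcup_{i'>i}T_{i'}\subseteq V(G)\setminus T_{i+1}=\bigcup_j V_j$ and no edge joins distinct $V_j$'s, so each $\gamma$ lies inside a single expander piece $V_j$ and $U(\gamma)\subseteq V_j\cap U_i=U(\tau_{i,j})$. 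With that modification (and the corresponding correction $U(\gamma)=V(\gamma)\cap(T_i\setminus\bigcup_{i'>i}T_{i'})$ in place of your $V(\gamma)\cap(T_i\setminus T_{i+1})$), your argument matches \Cref{lemma:RepeatDecompProperty} and \Cref{lemma:LowDegreeHierarchyCorrectness} and goes through.
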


\begin{theorem}[\cite{DuanP20}]
Given an $n$-vertex $m$-edge graph $G$, there is a deterministic algorithm to construct the low degree hierarchy $({\cal C},{\cal T})$ with $p=O(\log n)$ levels and degree parameter $\Delta=4$. The running time of this algorithm is $O(mn\log n)$ and the space to store the hierarchy explicitly is $O(m\log n)$.
\label{thm:DPLowDegreeHierarchy}
\end{theorem}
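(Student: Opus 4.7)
My plan is to construct the hierarchy $({\cal C}, {\cal T})$ top-down by recursively applying Lemma \ref{lem:lowdeg forest decomp}. I would start with ${\cal C}_p = \{G\}$ and, at each level $i$ from $p$ down to $1$, process every component $\gamma \in {\cal C}_i$ by invoking the Lemma on the induced subgraph of $\gamma$ with terminal set $T = V(\gamma)$. The Lemma returns a vertex set $X(\gamma)$ of size at most $|V(\gamma)|/2$ together with a $(V(\gamma)\setminus X(\gamma))$-Steiner forest $F(\gamma)$ in $\gamma \setminus X(\gamma)$ of maximum degree $4$. I would then declare $U(\gamma) := V(\gamma)\setminus X(\gamma)$ as the terminals of $\gamma$, add the trees of $F(\gamma)$ to ${\cal T}_i$, and form the child components at level $i-1$ as the connected components of $\gamma[X(\gamma)]$. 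The recursion terminates once every vertex has been designated as a terminal at some level.

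Since the Lemma guarantees $|X(\gamma)| \leq |V(\gamma)|/2$, the total number of ``unresolved'' vertices (i.e., vertices still belonging to some component that has not yet hit the base level) halves at every iteration. Hence after $p = O(\log n)$ iterations every vertex has been placed in some $U(\gamma)$, giving the claimed depth, and $\Delta = 4$ is inherited directly from the Lemma. Properties (1)--(5) of the hierarchy follow immediately from the construction. The most delicate part --- and what I would expect to be the main obstacle --- is property (6), which requires each $U(\gamma)$ to lie inside a single tree of ${\cal T}_i$. In general, $\gamma \setminus X(\gamma)$ can split into several connected components, each spanned by a distinct tree of $F(\gamma)$, so $U(\gamma)$ need not be contained in one tree. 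To handle this I would refine the definition of ${\cal C}_i$: a ``component'' at level $i$ is taken to be a maximal connected subgraph whose terminals correspond to exactly one tree of the associated $F(\gamma)$. This subdivision is still consistent with the laminar/parent--child structure because every piece is contained in the previous-level component it came from, and it does not blow up the degree or the total number of trees.

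For the running time, let ${\cal C}^{(j)}$ denote the component set processed at iteration $j$ (so $j=1$ handles the top level $p$), and write $V^{(j)} = \bigcup_{\gamma \in {\cal C}^{(j)}} V(\gamma)$. The halving property gives $|V^{(j)}| \le n/2^{j-1}$. Each invocation of the Lemma on a component $\gamma$ costs $O(|E(\gamma)|\cdot|V(\gamma)|\cdot\log|V(\gamma)|)$; summing across components of iteration $j$ and using both $\sum_{\gamma \in {\cal C}^{(j)}} |E(\gamma)| \le m$ and $\max_{\gamma \in {\cal C}^{(j)}}|V(\gamma)| \le |V^{(j)}| \le n/2^{j-1}$, the iteration cost is at most $O(mn\log n / 2^{j-1})$. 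Summing this geometric series over $j = 1, \ldots, O(\log n)$ yields $O(mn\log n)$ total time. For space, each vertex of $G$ is a terminal in exactly one level, contributing $O(n)$ across all ${\cal T}_i$, while every edge of $G$ lies in at most one component per level --- $O(\log n)$ levels in total --- giving $O(m\log n)$ space to store ${\cal C}$ together with ${\cal T}$ explicitly, matching the stated bound.
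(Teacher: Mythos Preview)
Your approach differs from the one the paper attributes to \cite{DuanP20} (see the sketch in Section~\ref{sec:overview:hierarchy} and the analogous construction in Section~\ref{sect:HierarchyConstruction}). There, Lemma~\ref{lem:lowdeg forest decomp} is invoked \emph{globally} on the whole graph $G$ with a shrinking terminal set $T_1 = V,\ T_2,\ \dots$, where $T_{i+1}$ is the set $X$ returned when the lemma is called with terminal set $T_i$. The components ${\cal C}_i$ are defined only afterward, as the connected components of $G \setminus \bigcup_{i' > i} U_{i'}$ (step~(C) of Section~\ref{sect:HierarchyConstruction}). Property~(6) then holds automatically: each $\gamma \in {\cal C}_i$ is contained in a single connected component of $G \setminus T_{i+1}$, and each tree of the level-$i$ Steiner forest spans precisely the terminals lying in one such connected component.

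Your per-component, top-down scheme breaks this alignment, and the refinement you propose does not repair it. When you apply the lemma to $\gamma$ with $T = V(\gamma)$, the forest $F(\gamma)$ lives in $\gamma \setminus X(\gamma)$, which can be disconnected even though $\gamma$ is connected (take $\gamma$ a path and $X(\gamma)$ a middle vertex), so $U(\gamma)$ is spread over several trees. Splitting $\gamma$ into pieces, one per tree, must violate property~(1) of Definition~\ref{def:LowDegreeHierarchy}: the pieces have to partition $V(\gamma)$ and $\gamma$ is connected, so some edge of $G$ crosses between two pieces at the same level. In the path example, any nontrivial partition of the five vertices into connected blocks has an edge across. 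On top of that, Definition~\ref{def:LowDegreeHierarchy} mandates ${\cal C}_p = \{G\}$, so the top-level component cannot be split at all, yet $G\setminus X(G)$ may already be disconnected. Your running-time and space analyses are fine; the gap is that property~(6) genuinely requires the global scheme (decompose first, define components afterward) rather than the top-down one.
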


\subsection{Repeated Vertex Expander Decomposition}
\label{sect:RepeatDecomp}

In this subsection, we will design an algorithm which takes the graph $G$ as the input and output a \textit{repeated vertex expander decomposition} of $G$, which serves as a prototype of the hierarchy. In our algorithm, we will view \Cref{Coro:WeightedDecomposition} as a subroutine which computes a ``terminal version'' of unweighted vertex expander decomposition, in which the vertex expansion is with respect to a given terminal set $T$ (see \Cref{def:TerminalVertexExpansion}).

\begin{definition}[Vertex Expansion with respect to Terminals]
Let $G$ be an unweighted graph with a terminal set $T\subseteq V(G)$. For each vertex cut $(L,S,R)$ of $G$ with $|(L\cup S)\cap T|,|(R\cup S)\cap T|>0$, its \textit{vertex expansion with respect to $T$} is $h_{G,T}(L,S,R)=|S|/\min\{|(L\cup S)\cap T|,|(R\cup S)\cap T|\}$. The vertex expansion of $G$ with respect to $T$, denoted by $h(G,T)$, is the minimum $h_{G,T}(L,S,R)$ of any vertex cut $(L,S,R)$ with $|(L\cup S)\cap T|,|(R\cup S)\cap T|>0$.
\label{def:TerminalVertexExpansion}
\end{definition}

Note that \Cref{def:TerminalVertexExpansion} is equivalent to the notion of vertex expansion in \Cref{def:VertexExpansion} if we extend $G$ to a weighted graph with weight function $w$ and demand function $\d_{T}$ as follows. For each vertex $v\in V(G)$, it has weight $w(v)=1$. For a given $T$, each vertex in $T$ has demand $\d_{T}=1$ and each vertex in $V(G)\setminus T$ has demand $\d_{T}=0$.

Let $\beta(n)\leq \log n$ be any slowly growing function. The algorithm is iterative. We let $T_{i}$ denote the terminals of the $i$-th iteration with $T_{1}=V(G)$ initially. In the $i$-th iteration, we apply \Cref{Coro:WeightedDecomposition} (the deterministic version) on graph $G$ with weight function $w$, demand function $\d_{T_{i}}$ and parameter $\epsilon=n^{1/\beta(n)}=n^{o(1)}$, which will return a decomposition ${\cal G}_{i}$. The terminal set $T_{i+1}$ for the next iteration is the vertices dropped in this decomposition, i.e. $T_{i+1}=V(G)\setminus\left(\bigcup_{V_{j}\in{\cal G}_{i}}V_{j}\right)$. We proceed to the next iteration unless $T_{i+1}=\emptyset$. Let $p$ denote the number of iterations. The outputs are $T_{1},...,T_{p}$ and ${\cal G}_{1},...,{\cal G}_{p}$ with properties summarized in \Cref{lemma:RepeatDecompProperty}. 

\begin{lemma}
Given an $n$-vertex $m$-edge graph $G$, there is a deterministic algorithm that computes $T_{1},...,T_{p}$ and ${\cal G}_{1},...,{\cal G}_{p}$ with $p=O(\beta(n))$ and the following properties.
\begin{itemize}
\item[(1)] For each $1\leq i\leq p-1$, $T_{i+1}=V(G)\setminus\left(\bigcup_{V_{j}\in{\cal G}_{i}}V_{j}\right)$. In particular, $T_{1}=V(G)$.
\item[(2)] For each $1\leq i\leq p$ and any two distinct $V_{j},V_{j'}\in {\cal G}_{i}$, $V_{j}$ and $V_{j'}$ are disjoint and there is no edge in $E(G)$ connecting a vertex in $V_{j}$ and another vertex in $V_{j'}$.
\item[(3)]
For each $1\leq i\leq p$ and $V_{j}\in{\cal G}_{i}$, the vertex expansion of $G[V_{j}]$ with respect to $V_{j}\cap T_{i}$ is at least $1/n^{o(1)}$.
\end{itemize}
The running time of this algorithm is $m^{1+o(1)}$.

\label{lemma:RepeatDecompProperty}
\end{lemma}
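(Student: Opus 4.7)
The plan is to verify the three properties, the round count $p = O(\beta(n))$, and the running-time bound for the iterative algorithm described just above the lemma. Property (1) is literally the definition of $T_{i+1}$ in the algorithm, and property (2) is immediate from \Cref{def:Decomp}: any $(\epsilon,\phi)$-vertex expander decomposition outputs vertex-disjoint parts $V_j$ with no edges between distinct parts.

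For property (3), I would argue that the weighted vertex expansion of \Cref{def:VertexExpansion} and the terminal vertex expansion of \Cref{def:TerminalVertexExpansion} coincide under the weight/demand assignment used in the $i$-th invocation of \Cref{Coro:WeightedDecomposition}. Since $w\equiv 1$ and $\d_{T_i}(v)=\mathbf{1}[v\in T_i]$, for any vertex cut $(L,S,R)$ of $G[V_j]$ one has $w(S)=|S|$ and $\d_{T_i}(L\cup S)=|(L\cup S)\cap T_i|$, so the two notions are identical on $G[V_j]$ with terminals $V_j\cap T_i$. \Cref{Coro:WeightedDecomposition} then delivers $h(G[V_j])\geq\phi$ with $\phi=\epsilon/(m^{o(1)}U^{o(1)})$. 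Here $U=1$ since all nonzero weights and demands equal $1$, so $\phi=1/n^{o(1)}$, matching the claim.

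The key quantitative step, and the one I expect to need the most care, is the bound $p=O(\beta(n))$, which I plan to get from geometric shrinking of $|T_i|$. Instantiating the separator size guarantee $w(X)\leq\epsilon\cdot\d(V(G))$ from \Cref{def:Decomp} with the above weights and demands yields $|T_{i+1}|\leq\epsilon\cdot|T_i|$, where $\epsilon$ is the subpolynomial shrinkage factor (effectively $n^{-1/\beta(n)}$) output by the decomposition. Iterating from $|T_1|=n$ gives $|T_{p+1}|\leq n\cdot\epsilon^{p}\leq n^{1-p/\beta(n)}$, which drops below $1$ and hence (by integrality of the terminal set) to $0$ once $p$ exceeds $\beta(n)$. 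The total running time then follows by multiplying the per-iteration cost $O(m^{1+o(1)}\cdot U^{1+o(1)}/\phi)=m^{1+o(1)}$ by $p=n^{o(1)}$ iterations, yielding $m^{1+o(1)}$ overall. The delicate part is that $\epsilon$ must simultaneously be small enough to force $O(\beta(n))$ rounds \emph{and} large enough to keep $\phi=1/n^{o(1)}$ for the expansion guarantee in property (3); both hold because $n^{1/\beta(n)}$ is subpolynomial for any slowly growing $\beta(n)$, so a single choice of $\epsilon$ suffices for all three objectives.
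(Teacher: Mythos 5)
Your proposal is correct and follows essentially the same route as the paper: property (1) is definitional, property (2) comes from \Cref{def:Decomp}, property (3) comes from the equivalence of the terminal and weighted expansion notions (which the paper also records just after \Cref{def:TerminalVertexExpansion}), and the round bound follows from $|T_{i+1}|\leq |T_i|/n^{1/\beta(n)}$, i.e.\ geometric shrinking of the terminal set. Your accounting of $U=1$ and $\phi=1/n^{o(1)}$ and the resulting $m^{1+o(1)}$ total running time matches the paper's argument.
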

\begin{proof}
First we claim that $p=O(\beta(n))$ because in every iteration $i$, we have $|T_{i+1}|=w(V(G)\setminus(\bigcup_{V\in{\cal G}_{i}}V))\leq \epsilon\cdot\d_{T_{i}}(V(G))=|T_{i}|/n^{1/\beta(n)}$ by \Cref{Coro:WeightedDecomposition}. The property (1) is directly from the algorithm, and properties (2) and (3) are from \Cref{Coro:WeightedDecomposition}. Furthermore, each iteration takes $m^{1+o(1)}$ time to apply \Cref{Coro:WeightedDecomposition} once, so the total running time is $m^{1+o(1)}$.

\end{proof}

\subsection{Construction}
\label{sect:HierarchyConstruction}

We now formally construct the low degree hierarchy. The key relation between the repeated vertex expander decomposition and the low degree hierarchy is shown in \Cref{lemma:ExpanderToLowDegree}. Roughly speaking, a graph with high vertex expansion with respect to a terminal set contains a low degree Steiner tree spanning all terminals.

\begin{lemma}
Let $G$ be an $n$-vertex $m$-edge graph with a terminal set $T\subseteq V(G)$. If $h(G,T)$ is at least $\phi$, there is a deterministic algorithm that computes a Steiner tree $\tau$ in $G$ spanning $T$ with maximal vertex degree at most $O(\frac{\log^{2}n}{\phi})$. The running time of this algorithm is $m^{1+o(1)}/\phi$.
\label{lemma:ExpanderToLowDegree}
\end{lemma}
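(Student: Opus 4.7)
The plan is to realize the desired low-degree Steiner tree as a spanning tree of the union of paths obtained by running the cut-matching game of \Cref{sect:WeightedCutMatchingGame} on the terminal set $T$, using $G$ as the host graph for the matching player. First I would set up the game by treating $G$ as a weighted instance with vertex weights $w(v)=1$ for all $v$ and demands $\d(v)=1$ if $v\in T$ and $\d(v)=0$ otherwise, and I would initialize the auxiliary graph $H$ on vertex set $V(G)$ with no edges. With this choice of $\d$, the vertex expansion $h_{G}$ of \Cref{def:VertexExpansion} coincides with the terminal vertex expansion $h_{G,T}$ of \Cref{def:TerminalVertexExpansion}, so the hypothesis $h(G,T)\ge\phi$ becomes: $G$ contains no vertex cut $(L,S,R)$ with $\d(L),\d(R)>0$ and $h_{G}(L,S,R)<\phi$.

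For each round I would proceed as follows. First, invoke the deterministic cut player of \Cref{thm:DetCutPlayer} on $H$ with a slowly growing parameter $r$; it returns either a set certifying that $H$ already contains a large expander (in which case the game terminates after the current round) or a balanced cut $(A_{i},B_{i})$ of $V(H)$. Next, feed $(A_{i},B_{i})$ to the deterministic matching player of \Cref{thm:RandMatchingPlayer} applied to $G$ with parameters $z=0$ and $\phi'=\phi/2$. The cut alternative of the matching player would produce a vertex cut $(L,S,R)$ of $G$ with $\d(L),\d(R)>0$ and $h_{G}(L,S,R)\le\phi/2$, which would contradict $h(G,T)\ge\phi$; hence the matching player must return a perfect $\d$-matching $M_{i}$ between $A_{i}$ and $B_{i}$ together with an embedding ${\cal P}_{i}$ of $M_{i}$ into $G$ of vertex congestion $O(\log n/\phi)$. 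I then add $M_{i}$ to $E(H)$ and continue.

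After the game ends, by \Cref{lemma:CMGRound} and \Cref{prop:CMGExpansion} it has run for $O(\log n)$ rounds and the final $H$ has strictly positive sparsity, so the $\d$-support $T$ lies in a single connected component of $H$. Let ${\cal P}=\bigcup_{i}{\cal P}_{i}$ and let $G'\subseteq G$ be the subgraph given by the edges used by ${\cal P}$; since $H$ connects all of $T$ and each edge of $H$ is embedded as a path in ${\cal P}$, the subgraph $G'$ is connected and contains $T$. The total vertex congestion of ${\cal P}$ is $O(\log^{2}n/\phi)$, obtained by summing $O(\log n/\phi)$ over the $O(\log n)$ rounds, and because each path through a vertex $v$ contributes at most two edges at $v$ to $G'$, the maximum degree of $G'$ is at most twice its congestion, i.e.\ $O(\log^{2}n/\phi)$. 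I would then output any BFS spanning tree $\tau$ of $G'$, which is a $T$-Steiner tree of maximum degree $O(\log^{2}n/\phi)$. Each of the $O(\log n)$ rounds costs $m^{1+o(1)}$ in the cut player (with constant $r$ and $U=1$) plus $m^{1+o(1)}/\phi$ in the matching player, giving the claimed total of $m^{1+o(1)}/\phi$. The main obstacle is excluding the matching player's cut-returning branch, which is exactly where the linkedness hypothesis $h(G,T)\ge\phi$ is used; everything else is a mechanical assembly of the cut-matching-game infrastructure of \Cref{sect:WeightedCutMatchingGame}.
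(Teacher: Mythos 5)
Your proposal is correct and follows essentially the same route as the paper: the paper's proof simply invokes \Cref{coro:EmbeddingOfExpander} (with demands $\d_T$ and parameter $\phi/2$), which packages exactly the cut-matching game you run inline — the linkedness hypothesis rules out the matching player's cut branch, the union of embedding paths has degree bounded by twice the $O(\log^2 n/\phi)$ congestion, and a spanning tree of that union is the desired Steiner tree. The only cosmetic slip is saying the cut player is run with constant $r$; as in the paper one takes $r$ slowly growing (e.g., $r=\log\log\log m$) to get the $m^{1+o(1)}$ bound.
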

\begin{proof}
We apply \Cref{coro:EmbeddingOfExpander} to the graph $G=(V(G),E(G),w_{1},\d_{T})$ with parameter $\phi/2$. Because the weighted vertex expansion $h(G)$ is exactly $h(G,T)\geq \phi$, \Cref{coro:EmbeddingOfExpander} must certify that we can embed an expander $H$ with the same demand $\d_{T}$ into $G$ with vertex-congestion $\eta=O(\log^{2}n/\phi)$ and return $E({\cal P})$ of this embedding ${\cal P}$. Let $\tau'$ be the graph with $V(\tau')=V(G)$ and $E(\tau')=E({\cal P})$. Because $H$ is an expander where each $v\in T$ has demands $\d_{T}=1$, by the definition of embedding, vertices in $T$ must be connected in $\tau'$ (although $\tau'$ itself may not be a connected graph). Moreover, the maximum vertex degree in $\tau'$ is at most $\eta=O(\log^{2}n/\phi)$. Therefore, in the spanning forest of $\tau'$, there must be a tree $\tau$ spanning all vertices in $T$. The running time is $m^{1+o(1)}/\phi$ taken by \Cref{coro:EmbeddingOfExpander} and the other parts take nearly linear time.
\end{proof}

Given the outputs $T_{1},...,T_{p}$ and ${\cal G}_{1},...,{\cal G}_{p}$ of \Cref{lemma:RepeatDecompProperty}, the construction algorithm is as follows.
\begin{itemize}
    \item[(A)] The hierarchy has $p$ levels. For each level $i$, the terminals at this level are $U_{i}=T_{i}\setminus\bigcup_{i'=i+1}^{p}T_{i'}$.
    \item[(B)] For each level $i\in[1,p]$, the set ${\cal T}_{i}$ of low-degree Steiner trees is constructed as follows. For each $V_{j}\in{\cal G}_{i}$, by applying \Cref{lemma:ExpanderToLowDegree} on $G[V_{j}]$ with terminal set $V_{j}\cap T_{i}$, we can obtain a Steiner tree $\tau_{i,j}$ in $G[V_{j}]$ spanning $V_{j}\cap T_{i}$ with maximal vertex degree $n^{o(1)}$ since $h(G[V_{j}],V_{j}\cap T_{i})\geq 1/n^{o(1)}$. The set ${\cal T}_{i}$ will collect all such $\tau_{j}$, i.e. ${\cal T}_{i}=\{\tau_{i,j}|V_{j}\in {\cal G}_{i}\}$, where each $\tau_{i,j}$ has terminals $U(\tau_{i,j})=V_{j}\cap U_{i}$. The whole set ${\cal T}=\bigcup_{i=1}^{p}{\cal T}_{i}$.
    \item[(C)] For each level $i\in[1,p]$, the set ${\cal C}_{i}$ of components collect all connected components of the graph $G\setminus \bigcup_{i'=i+1}^{p}U_{i'}$. For each $\gamma \in{\cal C}_{i}$, its terminals are $U(\gamma)=U_{i}\cap V(\gamma)$. The whole set ${\cal C}=\bigcup_{i=1}^{p}{\cal C}_{i}$.
\end{itemize}

\begin{lemma}
The $(\cal C,\cal T)$ constructed above is a low-degree hierarchy with $p=O(\beta(n))$ levels and degree parameter $r=n^{o(1)}$.
\label{lemma:LowDegreeHierarchyCorrectness}
\end{lemma}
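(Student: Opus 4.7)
The plan is to verify the six structural properties in Definition \ref{def:LowDegreeHierarchy} plus the quantitative bounds $p = O(\beta(n))$ and maximum tree degree $n^{o(1)}$. The central observation driving everything is a simple identity: writing $R_i := \bigcup_{i'>i} U_{i'}$, so that by step (C) the set ${\cal C}_i$ consists of the connected components of $G \setminus R_i$, one has $R_i = \bigcup_{i'>i} T_{i'}$. One inclusion is immediate from $U_{i'} \subseteq T_{i'}$; for the other, given $v \in T_{i'}$ with $i' > i$, the index $k := \max\{j : v \in T_j\}$ satisfies $k \ge i' > i$ and $v \in U_k \subseteq R_i$. In particular $R_i \supseteq T_{i+1}$ and $R_{i-1} = R_i \cup U_i$, which are the two facts driving the rest of the verification. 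I point this identity out because the iterations of \Cref{lemma:RepeatDecompProperty} re-decompose $G$ from scratch, so one does \emph{not} necessarily have $T_{i+1}\subseteq T_i$; the $U_i$'s are the right disjoint pieces to work with.

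The structural properties then fall out. For (1) and (2): ${\cal C}_p = \{G\}$ since $R_p = \emptyset$ and $G$ is connected; components in ${\cal C}_i$ are vertex-disjoint by construction, and any $G$-edge between two distinct $\gamma, \gamma' \in {\cal C}_i$ would have both endpoints outside $R_i$, hence would survive in $G \setminus R_i$ and merge $\gamma$ with $\gamma'$; and the unique parent at level $i+1$ is provided by $R_{i+1} \subseteq R_i$. For (3): since $R_{i-1} = R_i \cup U_i$, the child-components of $\gamma \in {\cal C}_i$ are precisely the connected components of $\gamma \setminus U_i$, so the vertices of $\gamma$ outside any child-component are exactly $V(\gamma) \cap U_i$, which matches the definition in step (C). For (4) and (5): the trees $\tau_{i,j}$ at level $i$ live inside the vertex-disjoint induced subgraphs $G[V_j]$ and are therefore vertex-disjoint, and since $\tau_{i,j}$ spans $V_j \cap T_i \supseteq V_j \cap U_i$ we obtain $U_i \cap V(\tau_{i,j}) = V_j \cap U_i$ as required.

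The one property needing a genuine (though short) argument is (6), and this is where I expect the main mild obstacle. Given $\gamma \in {\cal C}_i$, I claim that $V(\gamma) \subseteq V_j$ for some (necessarily unique) $V_j \in {\cal G}_i$, from which (6) follows by taking $\tau(\gamma) := \tau_{i,j}$: then $U(\gamma) = U_i \cap V(\gamma) \subseteq U_i \cap V_j = U(\tau_{i,j})$. To prove the claim, $\gamma$ is connected in $G \setminus R_i$; since $R_i \supseteq T_{i+1}$, $\gamma$ is also a connected subgraph of $G \setminus T_{i+1}$; and property (2) of \Cref{lemma:RepeatDecompProperty} says $G$ has no edges between distinct $V_j$'s, so any connected subgraph of $G \setminus T_{i+1}$ must lie entirely within a single $V_j$. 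Finally, $p = O(\beta(n))$ is inherited directly from \Cref{lemma:RepeatDecompProperty}, and the maximum tree degree is $O(\log^2 n / \phi) = n^{o(1)}$ by \Cref{lemma:ExpanderToLowDegree} with $\phi = 1/n^{o(1)}$ applied inside each $G[V_j]$ with terminal set $V_j \cap T_i$, whose well-linkedness is guaranteed by property (3) of \Cref{lemma:RepeatDecompProperty}.
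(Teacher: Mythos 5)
Your proof is correct and follows essentially the same route as the paper's: verify properties (1)--(5) directly from steps (A)--(C), and establish (6) via the containment $\bigcup_{i'>i}U_{i'}\supseteq T_{i+1}$ so that each $\gamma\in{\cal C}_i$, being connected in $G\setminus T_{i+1}=\bigcup_j V_j$ with no edges between distinct $V_j$'s, lies in a single $V_j$. Your explicit identity $R_i=\bigcup_{i'>i}T_{i'}$ and the remark that $T_{i+1}\not\subseteq T_i$ in general are a useful clarification of what the paper leaves implicit, but the argument is the same.
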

\begin{proof}
Properties (1) and (2) in \Cref{def:LowDegreeHierarchy} are directly from the step (C). Property (3) follows step (C) and that $\{U_{1},\cdots,U_{p}\}$ constructed above is a partition of $V(G)$, where the latter is from $T_{1}=V(G)$ and step (A). Property (4) holds because of step (B) and that subgraphs in ${\cal G}_{i}$ are vertex-disjoint for each level $i$. Regarding property (5), notice that each $\tau_{i,j}\in{\cal T}_{i}$ corresponding to some $V_{j}\in{\cal G}_{i}$ will span $V_{j}\cap T_{i}$, so $U(\tau_{i,j})=V_{j}\cap U_{i}\subseteq V_{j}\cap T_{i}\subseteq V(\tau_{i,j})$, which implies $U(\tau_{i,j})=V(\tau_{i,j})\cap U_{i}$. It remains to show property (6). 

Consider some level $i$. Each $\gamma\in{\cal C}_{i}$ must be totally contained by some $V_{j}\in{\cal G}_{i}$, because $G\setminus(U_{i+1}\cup\cdots\cup U_{p})\subseteq V(G)\setminus T_{i+1}=\bigcup_{V_{j}\in{\cal G}_{i}}V_{j}$ by (1) in \Cref{lemma:RepeatDecompProperty} and there is no edge connecting two distinct $V_{j},V_{j'}\in{\cal G}_{i}$ by (2) in \Cref{lemma:RepeatDecompProperty}. Therefore, $U(\gamma)=V(\gamma)\cap U_{i}\subseteq V_{j}\cap U_{i}=U(\tau_{i,j})$ as desired.
\end{proof}

\begin{proof}[Proof of \Cref{thm:LowDegreeHierarchy}]
The correctness is from \Cref{lemma:LowDegreeHierarchyCorrectness}, and we analyse the running time below. We first invoke the algorithm in \Cref{lemma:RepeatDecompProperty}, which takes $m^{1+o(1)}$ time. Consider the algorithm in this subsection. Observe that steps (A) and (C) only take $\tilde{O}(m)$ time. About step (B), applying \Cref{lemma:ExpanderToLowDegree} on $G[V_{j}]$ for all $V_{j}\in{\cal G}_{i}$ takes $m^{1+o(1)}$ time for each level $i$. Summing over $p=O(\beta(n))\leq O(\log n)$ levels, step (B) takes totally $m^{1+o(1)}$ time. Thus the total running time is $m^{1+o(1)}$.
\end{proof}

\section{The Affiliated Structures}
\label{sect:InterCom}
In this section, we will construct some affiliated structures to augment the low degree hierarchy.
Before introducing the affiliated structures, we first discuss in \Cref{sect:AbstractGraph} how we react against a set $D$ of failed vertices by exploiting a $(p,\Delta)$-low degree hierarchy. Basically, given a failure set $D$, we will reduce the original graph to a semi-bipartite graph called the \textit{abstract graph}. The overview of affiliated structures is shown in \Cref{sect:AffStrOverview}.

\subsection{The Abstract Graph $H$ for a Failure Set $D$}
\label{sect:AbstractGraph}

Let $D$ be a set of failed vertices with $|D|=d\leq d_{\star}$. After failures, some \textit{objects} in the low degree hierarchy still provide valid information on connectivity. 

We first focus on components. We call some $\gamma\in{\cal C}$ an \textit{unaffected component} if $V(\gamma)\cap D=\emptyset$, otherwise we call $\gamma$ an \textit{affected component}. because an unaffected component $\gamma$ does not %
involve any failed vertex, it certifies that vertices in $V(\gamma)$ are still connected in graph $G\setminus D$. By the laminar property of ${\cal C}$, it is enough to consider \textit{maximal unaffected components}, which is defined as an unaffected component with an affected component as the parent-component. We let all maximal unaffected components be the \textit{first-type objects}. When we are solving the connectivity of $G\setminus D$, we can view vertices in each first-type object as a whole.

The remaining vertices which cannot be grouped by first-type objects are exactly the non-failed terminals of affected components, because the terminal sets of components partition $V(G)$ and for each first-type object $\gamma$, $V(\gamma)$ is exactly the union of terminal sets of its descendant components. Recall that for each affected component $\gamma\in{\cal C}$, its terminals $U(\gamma)$ are contained in a single tree $\tau(\gamma)\in{\cal T}$ if $U(\gamma)\neq\emptyset$ (see (6) in \Cref{def:LowDegreeHierarchy} for details) and we call this tree an \textit{affected tree}.
Although an affected tree may be split by failed vertices into several subtrees (called \textit{affected subtrees}) in $G\setminus D$, the total number of affected subtrees will not be large by the low-degree property of trees in ${\cal T}$. We let all affected subtrees be the \textit{second-type objects}, and for each affected subtree $\hat{\tau}$ of an affected tree $\tau$, we define the terminal set of $\hat{\tau}$ as $U(\hat{\tau})=U(\tau)\cap V(\hat{\tau})$.
Similarly, for each affected subtree, we can group as a whole its terminals not in any of first-type objects\footnote{Some vertex in $U(\hat{\tau})$ can be inside some first-type object because $U(\hat{\tau})$ may contain terminals of several components (see (6) of \Cref{def:LowDegreeHierarchy}). Indeed we can group all vertices inside an affected subtree as a whole since this tree is connected in $G\setminus D$. Here we only group a part of vertices because for simplicity, we want all groups form a partition of $V(G)\setminus D$.}.

\begin{definition}
We define the \textit{abstract graph} $H$ for $D$ as follows. 
\begin{itemize}
    \item Let $\hat{\cal C}$ denote the set of first-type objects and $\hat{\cal T}$ denote the set of second-type objects. Then $V(H)=\hat{\cal C}\cup\hat{\cal T}$, which means vertices in $V(H)$ one-one correspond to objects. Without ambiguity, we use the same notation $\gamma\in\hat{\cal C}$ (resp. $\hat{\tau}\in\hat{\cal T}$) to denote both the first-type object (resp. the second-type object) and its corresponding vertex in $V(H)$.
    \item Each vertex in $V(H)$ is an abstraction of a subset of $V(G)\setminus D$, where all subsets form a partition of $V(G)\setminus D$. Specifically, for each vertex $\gamma\in\hat{\cal C}$, it represents $g^{-1}(\gamma)=V(\gamma)$, while for each vertex $\hat{\tau}\in\hat{\cal T}$, it represents $g^{-1}(\hat{\tau})=U(\hat{\tau})\setminus \bigcup_{\gamma\in\hat{\cal C}}V(\gamma)$. We assume that each $\hat{\tau}\in\hat{\cal T}$ has nonempty $g^{-1}(\hat{\tau})$ by ignoring those $\hat{\tau}$ with empty $g^{-1}(\hat{\tau})$.
    
    We have $\{g^{-1}\}_{v_{H}\in V(H)}$ is a partition of $V(G)$ by the discussion above, so we can define a mapping $g:V(G)\to V(H)$ by defining $g(v)$ be the unique $v_{H}$ such that $v\in g^{-1}(v_{H})$ for each $v\in V(G)$.
    \item The edge set $E(H)=\{(g(u),g(v))\mid (u,v)\in E(G)\}$ is a multiset of edges one-one correspond to edges in $E(G\setminus D)$, where each edge $(u,v)\in E(G\setminus D)$ is mapped to an edge $(g(u),g(v))\in E(H)$. 
\end{itemize} 

\label{def:AbstractGraph}
\end{definition}

\begin{lemma}
Given a failure set $D$ with $|D|=d$, the abstract graph $H$ has the following properties.
\begin{itemize}
\item[(1)] The number of second-type objects is $O(pd\Delta)$. \label{lemma:AffectedSubtreesNumber}
\item[(2)] There is no edge connecting two first-type objects.
\label{lemma:PseudoBipartiteH}
\item[(3)] For any two second-type objects $\hat{\tau}_{1}$ and $\hat{\tau}_{2}$, $\hat{\tau}_{1}$ and $\hat{\tau}_{2}$ is connected in $H$ if some $v_{1}\in V(\hat{\tau}_{1})$ and $v_{2}\in V(\hat{\tau}_{2})$ is connected in $G\setminus D$, and any $v_{1}\in V(\hat{\tau}_{1})$ and $v_{2}\in V(\hat{\tau}_{2})$ is connected in $G\setminus D$ if $\hat{\tau}_{1}$ and $\hat{\tau}_{2}$ is connected in $H$.
\label{lemma:ConnEqOrigianlH}
\end{itemize}

\end{lemma}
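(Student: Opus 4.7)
\medskip

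\noindent\textbf{Proof proposal.} I would prove the three properties separately, since they are essentially independent and rely on different structural facts: a counting argument for (1), the laminarity of $\mathcal{C}$ for (2), and a ``lift/project'' argument for (3).

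For \emph{property (1)}, the plan is to first bound the number of affected trees, then the number of affected subtrees per affected tree. Each vertex $v \in D$ belongs to exactly one terminal set $U(\gamma)$ by \Cref{prop:TerminalPartition}, so $v$ lies in exactly $p$ components (one per level, namely the ancestors of that $\gamma$), all of which are affected. Components at a single level are vertex-disjoint, so the number of affected components at each level is at most $d$, giving at most $pd$ affected components and hence at most $pd$ affected trees in total. For each affected tree $\tau$, let $k_\tau$ denote the number of failed vertices in $V(\tau)$; since trees at the same level are vertex-disjoint by property~(4), a single $v\in D$ can lie in $V(\tau)$ for at most one tree per level, so $\sum_\tau k_\tau \le pd$. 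Because $\tau$ has maximum degree $\Delta$, removing $k_\tau$ vertices produces at most $1 + k_\tau(\Delta-1)$ subtrees. Summing yields $|\hat{\mathcal{T}}| \le pd + \Delta \sum_\tau k_\tau = O(pd\Delta)$.

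For \emph{property (2)}, suppose $\gamma_1, \gamma_2 \in \hat{\mathcal{C}}$ are distinct first-type objects, at levels $i\le j$, say. First I would observe that maximal unaffected components cannot be in ancestor-descendant relation: if $\gamma_2$ were an ancestor of $\gamma_1$, then $\gamma_2$ would contain the parent of $\gamma_1$ in $\mathcal{C}$, which is affected (by maximality of $\gamma_1$), contradicting the fact that $\gamma_2$ is unaffected. Hence by laminarity of $\mathcal{C}$, the ancestor $\gamma_1^{(j)}$ of $\gamma_1$ at level $j$ is distinct from $\gamma_2$. Since $\gamma_1^{(j)}$ contains the affected parent of $\gamma_1$ in its vertex set, it is itself affected and in particular lies in $\mathcal{C}_j$. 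Property~(1) of \Cref{def:LowDegreeHierarchy} then forbids any edge of $E(G)$ between $\gamma_1^{(j)}$ and $\gamma_2$, so there is no edge between $V(\gamma_1) \subseteq V(\gamma_1^{(j)})$ and $V(\gamma_2)$. Translating through the map $g$, this gives no edge between $\gamma_1$ and $\gamma_2$ in $H$.

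For \emph{property (3)}, the unifying observation is that, for every $v_H \in V(H)$, the set $V(v_H)$ (where $V(\gamma) = V(\gamma)$ for $\gamma \in \hat{\mathcal{C}}$ and $V(\hat{\tau}) = V(\hat{\tau})$ for $\hat{\tau} \in \hat{\mathcal{T}}$) induces a connected subgraph of $G\setminus D$: unaffected components are vertex-disjoint from $D$ and inherit connectedness of $\gamma$, while an affected subtree is by construction a connected subtree of $G\setminus D$. For the ``$H$-connectivity implies $G\setminus D$-connectivity'' direction, I would lift an $H$-walk $\hat{\tau}_1 = u_0, u_1, \ldots, u_k = \hat{\tau}_2$ by choosing, for each edge $(u_i,u_{i+1}) \in E(H)$, a witnessing edge $(x_i,y_i)\in E(G\setminus D)$ with $g(x_i)=u_i$, $g(y_i)=u_{i+1}$; at each intermediate vertex $u_i$ the two endpoints $y_{i-1}$ and $x_i$ lie in $V(u_i)$, so they are connected in $G\setminus D$ by the observation above, and I stitch to arbitrary $v_1\in V(\hat{\tau}_1)$, $v_2\in V(\hat{\tau}_2)$ using connectedness of $\hat{\tau}_1, \hat{\tau}_2$ themselves. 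For the reverse direction, I would take a $G\setminus D$-path $v_1=x_0,\ldots,x_\ell=v_2$ and project it through $g$ to a walk $g(x_0),\ldots,g(x_\ell)$ in $H$; to ensure the endpoints are $\hat{\tau}_1$ and $\hat{\tau}_2$ themselves, I would use the standing assumption that $g^{-1}(\hat{\tau}_i)\neq\emptyset$ to pick $v'_i\in g^{-1}(\hat{\tau}_i)\subseteq V(\hat{\tau}_i)$ and connect $v_i$ to $v'_i$ inside $\hat{\tau}_i$, again projecting through $g$. The main subtlety, and the step worth being careful about, is this last endpoint-matching: $v_1\in V(\hat{\tau}_1)$ need not satisfy $g(v_1)=\hat{\tau}_1$ (it may lie in some first-type object nested inside), and the trick of using a witness $v'_1\in g^{-1}(\hat{\tau}_1)$ combined with connectedness of the subtree $\hat{\tau}_1$ is what resolves it.
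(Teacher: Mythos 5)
Your proposal is correct and follows essentially the same route as the paper: the same level-by-level counting for (1), the same lift-to-a-common-level argument via laminarity for (2), and the same project/lift correspondence between walks in $G\setminus D$ and walks in $H$ for (3). You are in fact slightly more careful than the paper's own proof in (2) (explicitly ruling out that the lifted ancestor coincides with the other maximal unaffected component) and in (3) (handling the endpoint-matching via $g^{-1}(\hat{\tau}_i)\neq\emptyset$), but these are refinements of the same argument rather than a different approach.
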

\begin{proof}
Because components in the same level are disjoint by (1) in \Cref{def:LowDegreeHierarchy}, there are at most $O(d)$ affected trees in each level, so the total number of affected trees is $O(pd)$. Because affected trees at the same level are disjoint by (4) in \Cref{def:LowDegreeHierarchy}, there are totally $O(pd)$ failed vertices on affected trees summing over $p$ levels. Furthermore, each affected tree has maximum vertex degree $\Delta$, so the number of affected subtrees (namely second-type objects) is $O(pd\Delta)$.

Let $\gamma_{1},\gamma_{2}$ be two distinct first-type objects and assume $\gamma_{1}$ is at a higher level (or the same level) than $\gamma_{2}$. By (2) in \Cref{def:LowDegreeHierarchy}, $\gamma_{2}$ has a unique ancestor $\gamma'_{2}$ at the same level with $\gamma_{1}$, such that $V(\gamma_{2})\subseteq V(\gamma'_{2})$ and there is no edge connecting $\gamma_{1}$ and $\gamma'_{2}$. Thus property (2) follows.

Note that each second-type object $\hat{\tau}$ has nonempty $g^{-1}(\hat{\tau})$. Then property (3) follows the facts that each affected subtree is a connected subgraph of $G\setminus D$ and the transformation from $G\setminus D$ to $H$ preserves connectivity.
\end{proof}

The task of our update algorithm is to compute the connectivity of the abstract graph $H$. Precisely, since the number of first-type objects may be large but the number of second-type objects is quite small by \Cref{lemma:AffectedSubtreesNumber}, we will see that, our update algorithm only reestablish connectivity among second-type objects in $H$. 
The data structures constructed in this section aim for supporting fast update algorithm.

\subsection{Overview}
\label{sect:AffStrOverview}

There are two parts of the affiliated structures. The first part is basically a 2D range counting structure in \Cref{sect:2DRangeStructure}. The intuition behind is that we can view affected subtrees, i.e. the second-type objects, as intervals on some Euler tour order by \Cref{lemma:TreeToInterval}, so querying whether two affected subtrees are connected directly by some edges is equivalent to some 2D range counting queries. 

An issue is that two second-type objects in $H$ can be connected by, instead of an edge, a path through a first type object, but such paths are not captured by a 2D range counting structure only storing original edges. Following an idea from \cite{DuanP20}, we will add \textit{artificial edges} into the counting structure to capture the connectivity provided by them. Our construction of artificial edges in \Cref{sect:ArtificialEdges} is deterministic and compact, which leads to faster deterministic update algorithm and lower space requirement.

Since the objects of $H$ are ``dynamic'' (depending on $D$), we should be able to efficiently ``subtract'' from the 2D range counting structure artificial edges whose underlying paths go through affected components, because these edges are not guaranteed to provide valid connectivity. Therefore, we also construct some navigation structures around artificial edges in \Cref{sect:ArtificialEdges}. When some components becomes affected, these structures can support efficient ``subtraction'' of invalid artifical edges.

\subsection{The Euler Tour Order and the 2D Range Counting Structure}
\label{sect:2DRangeStructure}

Our goal of this subsection is to construct a 2D range counting structure (see \Cref{lemma:2DRangeCounting}) to support connectivity queries of affected subtrees under a multiset of edges $E=E(G)+\hat{E}$, where $\hat{E}$ is the multiset of artificial edges which will be constructed in \Cref{sect:ArtificialEdges}. We emphasize that although we conceptually define $E$ and $\hat{E}$ be multisets of undirected edges, actually our algorithm will construct and view them as multisets of ordered pairs, where each pair relates to an undirected edge. The original edge set $E(G)$ can also be viewed as a set of ordered pair by giving an arbitrary direction to each undirected edge.

We will exploit the Euler tour order as in \cite{DuanP20} to reduce queries on connectivity of affected subtrees to 2D range counting queries. Our algorithm will process each $\tau\in{\cal T}$ individually. For an unrooted tree $\tau$, by picking a root arbitrarily, we define the Euler tour order $\ET(\tau)$ as an ordered list of the vertices ordered by the time each vertex first appears during an Euler tour of $\tau$ (treating each undirected edge as two directed edges). The Euler tour order can be interpreted as a linearization of the tree in the sense that after deleting some vertices on $\tau$, the remaining subtrees are corresponding to intervals in $\ET(\tau)$ as shown in \Cref{lemma:TreeToInterval}. Proving \Cref{lemma:TreeToInterval} is an easy exercise and we leave the complete proof in \Cref{proof:TreeToInterval}.

\begin{lemma}
Let $\tau$ be an undirected tree with maximum vertex degree $\Delta$. A removal of $d$ failed vertices from $\tau$ will split $\tau$ into at most $O(\Delta d)$ subtrees $\hat{\tau}_{1},\hat{\tau}_{2},\cdots,\hat{\tau}_{O(\Delta d)}$, and there exists a set $I_{\tau}$ of at most $O(\Delta d)$ disjoint intervals on $\ET(\tau)$, such that each interval is owned by a unique subtree and for each subtree $\tau_{i}$, $V(\tau_{i})$ is equal to the union of intervals it owns. 

Furthermore, by preprocessing $\tau$ in $O(|V(\tau)|)$ time, given any set of $d$ failed vertices, after $O(\Delta d\log (\Delta d))$ update time, these intervals (with the owner of each of them) can be computed and any query on which subtree contains a given vertex can be answered in $O(\log d)$ query time. 
\label{lemma:TreeToInterval}
\end{lemma}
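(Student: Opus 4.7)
The plan is to root $\tau$ arbitrarily at some $r$, write $\mathrm{in}(v)$ for the first-appearance position of $v$ in $\ET(\tau)$, and exploit the standard fact that the subtree of any $v$ occupies the contiguous interval $[\mathrm{in}(v),\mathrm{in}(v)+|\mathrm{sub}(v)|-1]$ in $\ET(\tau)$, where $\mathrm{sub}(v)$ is the rooted subtree of $v$. Components of $\tau\setminus D$ will be indexed by their \emph{tops}: a top is a vertex $w\notin D$ whose parent is either absent ($w=r$) or lies in $D$. The number of tops, hence of subtrees, is at most $1+\sum_{v\in D}\deg(v)\le 1+\Delta d=O(\Delta d)$.

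For each top $w$, I will describe its component in Euler tour coordinates as the subtree interval of $w$ with the subtree intervals of the ``top-level'' failed descendants of $w$ carved out, namely $V_w=\{v\in D\cap\mathrm{sub}(w) : \text{the path from } v \text{ to } w \text{ meets } D \text{ only at } v\}$. Two properties drive the count: (i) each $V_w$ is an antichain in $\tau$, so its members' subtree intervals are pairwise disjoint sub-intervals of $[\mathrm{in}(w),\mathrm{in}(w)+|\mathrm{sub}(w)|-1]$; (ii) the sets $V_w$ are disjoint and cover exactly the failed vertices whose parent is non-failed (or the root), so $\sum_w|V_w|\le d$. Carving $|V_w|$ disjoint sub-intervals out of an interval leaves $\le|V_w|+1$ pieces, and summing gives $\sum_w(|V_w|+1)\le d+O(\Delta d)=O(\Delta d)$ intervals, each naturally tagged with its owning top. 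The only real verification is that this carved-out complement of $w$'s subtree interval coincides exactly with the vertex set of the component of $w$, which reduces to a short case analysis using the nesting of Euler tour subtree intervals.

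Algorithmically, the $O(|V(\tau)|)$ preprocessing computes $\ET(\tau)$, the $\mathrm{in}$-times, subtree sizes, children lists, and a constant-time level-ancestor structure (e.g.\ Bender--Farach-Colton). Given $D$, I would (a) sort $D$ by $\mathrm{in}$-time in $O(d\log d)$; (b) compute, with a single stack scan in $\mathrm{in}$-order, the nearest failed strict ancestor $\mathrm{fp}(v)$ for each $v\in D$ in $O(d)$ time; (c) enumerate all tops by iterating through the $\le\Delta$ children of every $v\in D$ and testing each child for membership in $D$ via binary search in the sorted list, for $O(\Delta d\log d)$ time; (d) obtain the ``parent top'' $\mathrm{pt}(v)$ of each $v$ by one level-ancestor query (the child of $\mathrm{fp}(v)$ on the path to $v$, or $r$ when $\mathrm{fp}(v)=\bot$), then stable-bucket to form each $V_w$ in sorted order; (e) produce intervals by peeling each $w$'s subtree interval along $V_w$ in $O(|V_w|+1)$ time, for $O(\Delta d)$ total.

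For the per-vertex query, I will use a second stack scan to produce the piecewise-constant function $p\mapsto (\text{deepest currently open failed ancestor at position } p)$ on $\ET(\tau)$; it has $O(d)$ breakpoints and is stored as a sorted array. Given a non-failed query vertex $u$, one binary search for $\mathrm{in}(u)$ locates the deepest failed ancestor $v$ of $u$ (or reports none) in $O(\log d)$ time; if none, the answer is the component owned by the top $r$, and otherwise one level-ancestor query returns the child $c$ of $v$ on the path to $u$, which is the top whose component contains $u$. Summing the update costs yields $O(\Delta d\log d)\le O(\Delta d\log(\Delta d))$, as required. The component-characterization in the second paragraph is the only nontrivial step; the rest is standard Euler tour bookkeeping.
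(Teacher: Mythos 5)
Your proof is correct, and it reaches the stated bounds by a genuinely different route from the paper. The paper works with the full Euler tour $\ET_{\full}(\tau)$ (all occurrences of every vertex), deletes the at most $\Delta d$ occurrences of failed vertices, observes that each surviving gap lies within a single component, and then restricts the gaps to $\ET(\tau)$; its query finds the failed predecessor of the query vertex on $\ET(\tau)$, takes an LCA, and binary-searches a failed-ancestor chain with jump pointers. You instead stay entirely in first-appearance (preorder) coordinates: you index components by their tops, write each component explicitly as the top's subtree interval with the subtree intervals of its top-level failed descendants $V_{w}$ carved out, and count via $\sum_{w}(|V_{w}|+1)\le d+(1+\Delta d)$ --- a cleaner accounting than occurrence deletion, and one that hands you the intervals together with their owners for free. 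Your query mechanism (a sorted breakpoint array for the deepest currently open failed ancestor, followed by one level-ancestor query) is also different from, and arguably simpler than, the paper's. The only step that a full write-up should spell out is the characterization $C_{w}=\mathrm{sub}(w)\setminus\bigcup_{v\in V_{w}}\mathrm{sub}(v)$ of the component of a top $w$, which you correctly identify as the sole nontrivial point: a vertex $u\in\mathrm{sub}(w)$ is cut off from $w$ iff the $u$--$w$ path meets $D$, in which case the member of $D$ on that path closest to $w$ belongs to $V_{w}$ and contains $u$ in its subtree, and conversely. Everything else (the antichain property of $V_{w}$, the disjointness and coverage $\sum_{w}|V_{w}|\le d$, and the time bounds) checks out.
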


As mentioned in \Cref{sect:AbstractGraph}, each affected subtree $\hat{\tau}\in V(H)$ serves as an abstraction of vertices in $g^{-1}(\hat{\tau})$ which is a subsets of its terminal set $U(\hat{\tau})$. Therefore, we define the global Euler tour order by focusing on terminals of trees in ${\cal T}$, and take it as indices of the 2D range counting structure in \Cref{lemma:2DRangeCounting}. We note that the global Euler tour order is a permutation of $V(G)$ since $\{U(\tau)\mid \tau\in{\cal T}\}$ is a partition of $V(G)$ as shown in \Cref{prop:TerminalPartition}.

\begin{definition}[Global Euler Tour Order]
For each level $i$ and each $\tau\in{\cal T}_{i}$, we treat its terminal set $U(\tau)$ as an ordered list ordered by $\ET(\tau)$. The \textit{global Euler tour order} $\ET_{\glo}$ is the concatenation of $U(\tau)$ for all $\tau\in{\cal T}$ in an arbitrary order. For each vertex $v\in V(G)$, we let $\Id(v)$ be an integer in $[1,n]$ which denotes the rank of $v$ in $\ET_{\glo}$.
\label{def:GlobalET}
\end{definition}

\begin{lemma}[2D Range Counting Structure]
Let $E$ be a multiset of ordered pairs in $V(G)\times V(G)$. There is a data structure which given any disjoint intervals $I_{1},I_{2}$ on $\ET_{\glo}$, can answer in $O(\log n)$ time the number of pairs in $E$ with first entries in $I_{1}$ and second entries in $I_{2}$, namely $|E\cap(I_{1}\times I_{2})|$. The data structure takes space is $O(\Vert E\Vert_{0}\log n)$ and the construction time is $O(|E|\log n)$, where $\Vert E \Vert_{0}$ denote the number of distinct elements in $E$.
\label{lemma:2DRangeCounting}
\end{lemma}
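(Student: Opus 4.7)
The plan is to reduce this to standard weighted two-dimensional orthogonal range counting on an $n\times n$ integer grid. Using the ranks $\Id(\cdot)\in[1,n]$ induced by $\ET_{\glo}$, I would identify each ordered pair $(u,v)\in E$ with the lattice point $(\Id(u),\Id(v))$, so that an interval on $\ET_{\glo}$ is a contiguous range of integer coordinates and a product $I_{1}\times I_{2}$ is an axis-aligned rectangle. Since $E$ is a multiset, I would first aggregate it by sorting on the key $(\Id(u),\Id(v))$ and merging duplicates, producing $\Vert E\Vert_{0}$ distinct weighted points whose weights equal their multiplicities in $E$. This preprocessing takes $O(|E|\log n)$ time, and afterwards the query $|E\cap(I_{1}\times I_{2})|$ is exactly the weighted rectangle-sum over this point set. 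Note that the disjointness of $I_{1},I_{2}$ stated in the lemma is a property of the input, not something the data structure needs to exploit.

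For the index itself I would build a persistent segment tree over $[1,n]$ keyed on the $y$-coordinate. Sort the $\Vert E\Vert_{0}$ distinct points by their $x$-coordinate and insert them one at a time, persisting each new version; the version indexed at $x$ holds the total weight of points with first coordinate at most $x$, decomposed by $y$-coordinate. Each insertion allocates $O(\log n)$ new nodes, giving total construction cost $O(\Vert E\Vert_{0}\log n)$ after aggregation and total space $O(\Vert E\Vert_{0}\log n)$, both within the bounds claimed. For a rectangle query $[x_{1},x_{2}]\times[y_{1},y_{2}]$ I would subtract the version at $x_{1}-1$ from the version at $x_{2}$ and evaluate the range sum over $[y_{1},y_{2}]$ by descending the two versions in parallel; this takes $O(\log n)$ time. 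Translating the two input intervals $I_{1},I_{2}\subseteq \ET_{\glo}$ into their $\Id$-ranges yields the desired count.

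There is no real obstacle here, as this is a textbook data structure; the only thing to verify is the bookkeeping, namely that the sort-and-aggregate step absorbs the $O(|E|\log n)$ term while every component of the persistent segment tree scales with the number of \emph{distinct} points $\Vert E\Vert_{0}$ rather than with $|E|$, matching the space bound $O(\Vert E\Vert_{0}\log n)$ exactly. Any equivalent classical substitute, e.g.\ a range tree with fractional cascading on the aggregated weighted points, would give the same guarantees.
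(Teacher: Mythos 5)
Your proposal is correct and follows essentially the same route as the paper: reduce to weighted 2D range counting by mapping each pair to the point $(\Id(u),\Id(v))$, aggregate multiplicities by sorting in $O(|E|\log n)$ time, and then apply a textbook structure (the paper likewise cites the range tree or persistent segment tree) on the $\Vert E\Vert_{0}$ distinct weighted points. The bookkeeping you spell out matches the paper's stated bounds exactly.
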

\begin{proof}
We can generate from $E$ a weighted 2D point set $Q$ with $|Q|=\Vert E\Vert_{0}$ by putting for each $(u,v)\in E$, a point $(\Id(u),\Id(v))$ to $Q$ weighted by the number of $(u,v)$ in $E$. This step takes $O(|E|\log n)$ time via sorting. Then the problem is reduced to a standard weighted 2D range counting problem. It can be solved by textbook algorithms e.g. the range tree \cite{Ben80} or the persistent segment tree \cite{DSST89,BW80}. The space is $O(|Q|\log |Q|)=O(\Vert E\Vert_{0}\log n)$ and the query time is $O(\log n)$.
\end{proof}

\begin{remark}
There are linear-space data structures for weighted 2D range counting problem. For example, for a set $Q$ of weighted 2D points, the semigroup range searching data structure\footnote{It even works for a stronger setting that the weights are in a semigroup.} in \cite{C88} takes $O(|Q|\log|Q|)$ preprocessing time and $O(|Q|)$ space and achieves $O(\log^{2+\epsilon}|Q|)$ query time for any constant $\epsilon>0$. Using such data structures for \Cref{lemma:2DRangeCounting} will improve the space of our oracle by an $O(\log n)$ factor, but will increase the update time by a factor $O(\log^{1+\epsilon}n)$.
\label{remark:LinearSpace2DCounting}
\end{remark}

\subsection{Artificial Edges}
\label{sect:ArtificialEdges}

In this subsection, we will construct the set $\hat{E}$ of artificial edges. Actually, our algorithm will construct a set $\hat{E}_{\gamma}$ of artificial edges for each $\gamma\in{\cal C}$, in which each artificial edge will connect two vertices adjacent to $\gamma$, i.e. the set $\hat{E}_{\gamma}$ of ordered pairs is a subset of $A_{\gamma}\times A_{\gamma}$, where $A_{\gamma}$ is $\gamma$'s adjacency list (see \Cref{def:AdjacencyList}). The whole set $\hat{E}$ is a multiset such that $\hat{E}=\sum_{\gamma\in{\cal C}}\Ehat_{\gamma}$.%

\begin{definition}[Adjacency List]
For each $\gamma\in{\cal C}$, the \textit{adjacency list} of $\gamma$ is an ordered list
\[
A_{\gamma}=\{v \mid v\notin V(\gamma)\text{ s.t. }\exists u\in V(\gamma), (u,v)\in E(G)\},
\]
whose order is consistent with $\ET_{\glo}$.%
\label{def:AdjacencyList}
\end{definition}

\begin{proposition}
The total size of adjacency lists of all components is $\sum_{\gamma\in{\cal C}}|A_{\gamma}|=O(pm)$.
\label{lemma:TotalSizeOfA}
\end{proposition}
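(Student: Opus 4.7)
The plan is to prove the bound one level at a time: I will show that for each fixed level $i\in[1,p]$ we have $\sum_{\gamma\in\mathcal{C}_i}|A_\gamma|\le 2m$, and then summing over the $p$ levels immediately yields $O(pm)$.

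For a fixed level $i$, the first step is to swap the order of summation and write
\[
\sum_{\gamma\in\mathcal{C}_i}|A_\gamma| \;=\; \sum_{v\in V(G)} \bigl|\{\gamma\in\mathcal{C}_i : v\in A_\gamma\}\bigr|.
\]
So it suffices to bound, for each vertex $v$, the number of level-$i$ components whose adjacency list contains $v$. Here I will invoke the two structural properties of the hierarchy at a single level: by property (1) of \Cref{def:LowDegreeHierarchy} the components in $\mathcal{C}_i$ are pairwise vertex-disjoint, and by property (2) no edge of $E(G)$ connects two distinct components of $\mathcal{C}_i$. Together these imply that every neighbor of $v$ in $G$ lies in at most one component of $\mathcal{C}_i$.

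Now, by \Cref{def:AdjacencyList}, $v\in A_\gamma$ requires both $v\notin V(\gamma)$ and that $v$ have at least one neighbor inside $V(\gamma)$. Thus the number of $\gamma\in\mathcal{C}_i$ for which $v\in A_\gamma$ is at most the number of distinct components of $\mathcal{C}_i$ that contain a neighbor of $v$, which by the previous paragraph is at most $\deg_G(v)$. Summing over $v$ gives $\sum_{\gamma\in\mathcal{C}_i}|A_\gamma|\le \sum_v \deg_G(v)=2m$, and summing over the $p$ levels gives the desired $\sum_{\gamma\in\mathcal{C}}|A_\gamma|=O(pm)$.

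I do not foresee any real obstacle: the entire argument is a double counting that exploits level-wise vertex-disjointness and the absence of inter-component edges. The only subtle point, which the formulation above respects, is that $A_\gamma$ is a set rather than a multiset, so a vertex $v$ outside $V(\gamma)$ with several edges into $V(\gamma)$ still contributes only once to $|A_\gamma|$. Counting (vertex, component) incidences instead of edge contributions handles this correctly and avoids any overcounting from parallel adjacencies.
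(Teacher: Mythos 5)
Your proof is correct and is essentially the same level-by-level double counting as the paper's: the paper charges each pair $(\gamma,v)$ with $v\in A_\gamma$ to an edge between $v$ and $V(\gamma)$ and uses disjointness of the level-$i$ components to see that each edge is charged at most once per level, which is your argument viewed from the edge side rather than the vertex side. (One trivial slip: the ``no edge between two components of $\mathcal{C}_i$'' fact is part of property (1) of \Cref{def:LowDegreeHierarchy}, not property (2), and in fact vertex-disjointness alone already suffices for your count.)
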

\begin{proof}
Consider a level $i$ in the low degree hierarchy. For each $\gamma\in {\cal C}_{i}$ and $v\in A_{\gamma}$, there is an edge connecting $\gamma$ and $v$. Also, each edge will correspond to at most one such pair of $\gamma$ and $v$ since property (1) in \Cref{def:LowDegreeHierarchy}. Thus $\sum_{\gamma\in{\cal C}_{i}}|A_{\gamma}|\leq m$ and $\sum_{\gamma\in{\cal C}}|A_{\gamma}|=O(pm)$.
\end{proof}

Our update algorithm will only solve the connectivity of second-type objects in $H$ for efficiency. Recall that there is no edge in $H$ connecting two different first-type objects (see \Cref{lemma:PseudoBipartiteH}). 
Therefore, from the perspective of second-type objects, each first-type object $\gamma$ with its adjacent edges can be viewed as a hyperedge, which will also contribute to the connectivity of second-type objects in addition to original edges. The artificial edges $\hat{E}_{\gamma}$ of $\gamma$ are served as a simulation of this hyperedge. Therefore, we hope each $\hat{E}_{\gamma}$ has the following \textit{fault-tolerant} property: for any failure set $D$ with $|D|\leq d_{\star}$, there will be either $A_{\gamma}\subseteq D$ or vertices in $A_{\gamma}\setminus D$ are connected by $\Ehat_{\gamma}$.
On the other hand, we hope the size of $\hat{E}_{\gamma}$ is small to save the space taken by the 2D range counting structure. A natural idea is to take an arbitrary subset $B_{\gamma}\subseteq A_{\gamma}$ with size $|B_{\gamma}|=d_{\star}+1$ ($B_{\gamma}=A_{\gamma}$ if $|A_{\gamma}|\leq d_{\star}$), and let $\hat{E}_{\gamma}=A_{\gamma}\times B_{\gamma}$ with $|\hat{E}_{\gamma}|= O(d_{\star}|A_{\gamma}|)$. This simple construction can indeed lead to an oracle that takes space $\tilde{O}(d_{\star}m)$. Actually, this is the optimal construction of $\hat{E}_{\gamma}$ under this setting by the fact that any $d_{\star}$-vertex-connected graph on $A_{\gamma}$ has at least $\Omega(d_{\star}|A_{\gamma}|)$ edges.

In our construction, we get around this barrier by slightly relax the requirement on the fault-tolerant property, inspired by the Monte Carlo construction algorithm in \cite{DuanP20}. For each $\gamma\in{\cal C}$, instead of ensuring the fault-tolerant property on all $\hat{E}_{\gamma}$ in the preprocessing phase, we are allowed to add some \textit{$D$-repairing edges} $\hat{F}_{\gamma}$ temporarily in the update phase, such that $\hat{E}_{\gamma}\cup\hat{F}_{\gamma}$ is fault-tolerant. Under this setting, we are able to deterministically construct artificial edges $\hat{E}_{\gamma}$ for each $\gamma\in{\cal C}$ such that the total number of distinct artificial edges is bounded by $\tilde{O}(m)$, which is exactly nearly linear to the space of the 2D range counting structure. Moreover, we will show that the construction of $D$-repairing edges only takes $\tilde{O}(d^{2})$ time and the total number of $D$-repairing edges is bounded by $\tilde{O}(d^{2})$, which can be afforded by the update algorithm. Lastly, our construction of $\hat{E}_{\gamma}$ is based on a simple recursive division on the list $A_{\gamma}$ called \textit{segmentation hierarchy}, so the artificial edges are well-structured and can be efficiently manipulated by the update algorithm.

\begin{definition}[Segmentation Hierarchy]
For each component $\gamma\in{\cal C}$, we define a \textit{segmentation hierarchy} ${\cal S}_{\gamma}$ on its adjacency list $A_{\gamma}$. ${\cal S}_{\gamma}$ is an ordered list of \textit{segments} on $A_{\gamma}$, where a segment on $A_{\gamma}$ is a consecutive sublist of $A_{\gamma}$. 

There are $r=\lceil\log |A_{\gamma}|\rceil +1$ levels in the hierarchy and we denote by ${\cal S}_{\gamma,j}$ the ordered list of segments at level $j$ for each $1\leq j\leq r$. The whole list ${\cal S}_{\gamma}$ is the concatenation of ${\cal S}_{\gamma,1},...,{\cal S}_{\gamma,r}$. The list ${\cal S}_{\gamma,j}$ is obtained by partitioning $A_{\gamma}$ into segments using step length $2^{j-1}$. Concretely, Let $A_{\gamma}[k_{1},k_{2}]$ denote the sublist of $A_{\gamma}$ from rank $k_{1}$ to $k_{2}$.
\begin{itemize}
    \item For $1\leq i\leq \lfloor|A_{\gamma}|/2^{j-1}\rfloor$, the $i$-th segment $S_{j,i}$ of ${\cal S}_{\gamma,j}$ is $A_{\gamma}[2^{j-1}(i-1)+1,2^{j-1}i]$. 
    \item If $|A_{\gamma}|$ is not divisible by $2^{j-1}$, we let the tail $A_{\gamma}[2^{j-1}(i-1)+1,|A_{\gamma}|]$ be the last segment $S_{j,i}$ where $i=\lfloor|A_{\gamma}|/2^{j-1}\rfloor+1$.
\end{itemize}
Observe that the total number of segments is $|{\cal S}_{\gamma}|=O(|A_{\gamma}|)$, and the total length of segments is $O(|A_{\gamma}|\log|A_{\gamma}|)$.

For each segment $S_{j,i}$ $(j\geq 2)$ in the hierarchy, we let $S_{j-1,2i-1}$ and $S_{j-1,2i}$ be its \textit{child-segments} (the latter may not exist), and we say $S_{j,i}$ is the \textit{parent-segment} of $S_{j-1,2i-1}$ and $S_{j-1,2i}$. Observe that $S_{j,i}$ is exactly the union of its child-segments. In particular, the topmost level $S_{\gamma,r}$ only contains $A_{\gamma}$ as its unique segment, called the \textit{root segment} of this hierarchy.
\label{def:Segmentation}
\end{definition}

\begin{remark}
For simplicity, the segmentation hierarchy defined in \Cref{def:Segmentation} is a \textit{two-branched} recursive division with $O(\log n)$ levels. It is natural to define \textit{multiple-branched} segmentation hierarchy to obtain a tradeoff between the number of levels and the number of branches. The number of levels will affect the space of the oracle, while the number of branches will affect the number of $D$-repairing edges and the update time. Therefore, when we allow a update time of $d^{2}n^{o(1)}$, we can use a $n^{o(1)}$-branched segmentation hierarchy with $\beta(n)$ levels, where $\beta(n)$ is any slowly growing function e.g. $\beta(n)=\log^{*}n$, which can substitute a $O(\log n)$ factor in the space complexity by $O(\beta(n))$.
\label{remark:MultipleBranchedSeg}
\end{remark}

\

\noindent\textbf{Construction of Artificial Edges.} The construction algorithm of all $\Ehat_{\gamma}$ will maintain a 2-dimensional \textit{labeling table} on $V(G)\times V(G)$, in which all entries are unlabeled initially. %
In fact, the labeling table are maintained implicitly by storing all labeled entries in a binary search tree in order to save the space. We will process components one by one in an arbitrary order. Let $\gamma$ be the current component.

\begin{itemize}
    \item[(1)] Let $v_{1},...,v_{|A_{\gamma}|}$ be vertices in $A_{\gamma}$ in order. We let $b_{\gamma}\leq \min\{|A_{\gamma}|,d_{\star}+1\}$ be the smallest integer such that there are at least $|{\cal S}_{\gamma}|$ unlabeled entries in range $A_{\gamma}\times \{v_{1},...,v_{b_{\gamma}}\}$ of the labeling table. If no such $b_{\gamma}$ exists, we let $b_{\gamma}=\min\{|A_{\gamma}|,d_{\star}+1\}$. We further let $B_{\gamma}$, called the \textit{connector} %
    of $\gamma$, be an ordered list with elements $v_{1},...,v_{b_{\gamma}}$. If $b_{\gamma}=\min\{|A_{\gamma}|,d_{\star}+1\}$, we finish processing $\gamma$ with $\hat{E}_{\gamma}=A_{\gamma} \times B_{\gamma}$.
    
    \item[(2)] We define $\Tag_{\gamma}$ be an ordered list of pairs $(u,v)$ (called \textit{tags}) such that $(u,v)\in A_{\gamma}\times B_{\gamma}$ and the entry $(u,v)$ is unlabeled. $\Tag_{\gamma}$ is in an order which ensures that for each $(u,v),(u',v')\in \Tag_{\gamma}$, $(u,v)$ is ordered before $(u',v')$ if $\Id(u)<\Id(u')$. For each $(u,v)\in\Tag_{\gamma}$ at rank $k$, we label the entry $(u,v)$ of the labeling table with a pair $(\gamma,k)$. 
    
    \begin{proposition}
    For each $\gamma$ that arrives at step (2), $|\Tag_{\gamma}|=\Theta(|A_{\gamma}|)$.
    \label{prop:TagSize}
    \end{proposition}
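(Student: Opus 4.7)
The plan is a short two-sided counting argument based directly on the definitions of $b_{\gamma}$, $B_{\gamma}$, $\Tag_{\gamma}$, and the size of the segmentation hierarchy $\cS_{\gamma}$. Recall that $\gamma$ arrives at step (2) precisely when the smallest $b_{\gamma}$ with the property ``$A_{\gamma}\times\{v_{1},\dots,v_{b_{\gamma}}\}$ contains at least $|\cS_{\gamma}|$ unlabeled entries'' exists and is strictly less than $\min\{|A_{\gamma}|,d_{\star}+1\}$ (otherwise step (1) terminates processing). So the characterization of $b_{\gamma}$ via minimality is available for us to use on both sides.

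For the lower bound on $|\Tag_{\gamma}|$, by the definition of $b_{\gamma}$ the set $A_{\gamma}\times B_{\gamma}=A_{\gamma}\times\{v_{1},\dots,v_{b_{\gamma}}\}$ contains at least $|\cS_{\gamma}|$ unlabeled entries, hence $|\Tag_{\gamma}|\ge|\cS_{\gamma}|$. Now the bottom level $\cS_{\gamma,1}$ of the segmentation hierarchy already consists of $|A_{\gamma}|$ singleton segments (one per element of $A_{\gamma}$), so $|\cS_{\gamma}|\ge|A_{\gamma}|$ and we get $|\Tag_{\gamma}|\ge|A_{\gamma}|=\Omega(|A_{\gamma}|)$.

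For the upper bound, I would use minimality in the opposite direction. If $b_{\gamma}\ge 2$, then by the choice of $b_{\gamma}$ as the smallest feasible integer, the sub-range $A_{\gamma}\times\{v_{1},\dots,v_{b_{\gamma}-1}\}$ contains strictly fewer than $|\cS_{\gamma}|$ unlabeled entries. Adding the last column $A_{\gamma}\times\{v_{b_{\gamma}}\}$ contributes at most $|A_{\gamma}|$ additional unlabeled entries, so
\[
|\Tag_{\gamma}|\le(|\cS_{\gamma}|-1)+|A_{\gamma}|.
\]
The edge case $b_{\gamma}=1$ is even easier, since then $|\Tag_{\gamma}|\le|A_{\gamma}\times\{v_{1}\}|=|A_{\gamma}|$. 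Combined with the fact from \Cref{def:Segmentation} that $|\cS_{\gamma}|=O(|A_{\gamma}|)$ (the geometric sum $|A_{\gamma}|+\lceil|A_{\gamma}|/2\rceil+\cdots+1$), this yields $|\Tag_{\gamma}|=O(|A_{\gamma}|)$, finishing the two-sided bound.

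There is no substantial obstacle here; the only thing to be careful about is the degenerate case $b_{\gamma}=1$ where the ``previous range'' is empty, and making sure I am invoking minimality of $b_{\gamma}$ correctly (which is legal exactly because $\gamma$ reached step (2), so $b_{\gamma}$ was found as a genuine minimum rather than being defaulted to $\min\{|A_{\gamma}|,d_{\star}+1\}$). Both bounds only rely on elementary properties of the segmentation hierarchy plus the explicit specification of $b_{\gamma}$ in step (1).
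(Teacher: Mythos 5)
Your proof is correct and follows essentially the same two-sided counting argument as the paper: the lower bound comes from the defining property of $b_{\gamma}$ together with $|\cS_{\gamma}|\ge|A_{\gamma}|$ (the singleton bottom level), and the upper bound comes from minimality of $b_{\gamma}$ plus the last column contributing at most $|A_{\gamma}|$ new unlabeled entries. Your explicit handling of the degenerate case $b_{\gamma}=1$ is a minor extra care the paper leaves implicit.
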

    \begin{proof}
    First, by the choice of $b_{\gamma}$, the number of unlabeled pairs in $A_{\gamma}\times (B_{\gamma}\setminus v_{b_{\gamma}})$ is less than $|{\cal S}_{\gamma}|=O(|A_{\gamma}|)$, so the number of unlabeled pairs in $A_{\gamma}\times B_{\gamma}$, i.e. $|\Tag_{\gamma}|$ is at most $|{\cal S}_{\gamma}|+|A_{\gamma}|=O(|A_{\gamma}|)$. Combining the number of unlabeled pairs in $A_{\gamma}\times B_{\gamma}$ is at least $|{\cal S}_{\gamma}|=\Omega(|A_{\gamma}|)$, we conclude $|\Tag_{\gamma}|=\Theta(|A_{\gamma}|)$.
    \end{proof} %
    
    \item[(3)] For each segment $S\in{\cal S}_{\gamma}$ with rank $k$, we assign it the $k$-ranked $(u,v)$ in $\Tag_{\gamma}$ and set $f(S)=u$ (we also view $f(S)$ as a singleton set $\{u\}$). At last, we set $\hat{E}_{\gamma}=A_{\gamma}\times B_{\gamma}+\sum_{S\in{\cal S}_{\gamma}}S\times f(S)$. For each $S\in{\cal S}_{\gamma}$, we call $f(S)$ its \textit{witness} and we call $f$ the witness function.
\end{itemize}

For convenience, if a component $\gamma$ has $\hat{E}_{\gamma}=A_{\gamma}\times B_{\gamma}$, we augment $\hat{E}_{\gamma}$ to $A_{\gamma}\times B_{\gamma}+\sum_{S\in{\cal S}_{\gamma}} S\times f(S)$, where $f(S)=u$ for all $S\in{\cal S_{\gamma}}$ where $u\in A_{\gamma}$ is an arbitrary fixed vertex. Therefore, we can write $\hat{E}$ as
\[
\hat{E}=\sum_{\gamma\in{\cal C}}\hat{E}_{\gamma}=\sum_{\gamma\in{\cal C}} \left(A_{\gamma}\times B_{\gamma} + \sum_{S\in{\cal S}_{\gamma}}S\times f(S)\right).
\]
The structural property of $\hat{E}_{\gamma}$ shown in \Cref{lemma:MonotonicallyIncreasing} is useful in the update algorithm. %

\begin{proposition}
For each $\gamma\in{\cal C}$, $\Id(f(S))$ is monotonically increasing with respect to $S\in {\cal S}_{\gamma}$, namely for each $S,S'\in{\cal S}_{\gamma}$ such that $S$ is ordered before $S'$, $\Id(f(S))\leq \Id(f(S'))$. In particular, this property holds for all ${\cal S}_{\gamma,j}$ since each ${\cal S}_{\gamma,j}$ is a consecutive sublist of ${\cal S}_{\gamma}$.
\label{lemma:MonotonicallyIncreasing}
\end{proposition}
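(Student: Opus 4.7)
The plan is to split the argument on which branch of the construction is executed when processing $\gamma$, since the witness function $f$ is defined differently in the two cases.

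First I would handle the easy case, where $\gamma$ exits at step~(1) with $b_\gamma = \min\{|A_\gamma|, d_\star+1\}$ and $\hat{E}_\gamma = A_\gamma \times B_\gamma$. By the augmentation convention stated just before \Cref{lemma:MonotonicallyIncreasing}, in this case $f(S)$ is set to a single fixed vertex $u \in A_\gamma$ for every $S \in \cS_\gamma$. Thus $\Id(f(S)) = \Id(u)$ is constant over $S$, so the required monotonicity is immediate.

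Second, I would treat the remaining case where $\gamma$ reaches step~(3). Here the key is to chain together two orderings. On one hand, $\Tag_\gamma$ is by definition an ordered list of unlabeled pairs in $A_\gamma \times B_\gamma$ sorted so that if $(u,v)$ has smaller rank than $(u',v')$ in $\Tag_\gamma$, then $\Id(u) \le \Id(u')$ (taking the contrapositive of the ordering rule). On the other hand, step~(3) assigns to the segment with $\cS_\gamma$-rank $k$ exactly the pair of $\Tag_\gamma$-rank $k$, and sets $f(S)$ to the first coordinate of that pair. Consequently, if $S$ has smaller rank than $S'$ in $\cS_\gamma$, the assigned tags satisfy $\Id(f(S)) \le \Id(f(S'))$, which is the desired monotonicity. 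The ``in particular'' clause then follows for free because each level $\cS_{\gamma,j}$ sits as a consecutive sublist inside $\cS_\gamma$.

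I do not anticipate a genuine obstacle: step~(3) essentially performs a rank-preserving assignment between two totally ordered lists, so monotonicity is built in. The only subtlety is remembering to invoke the augmentation convention for components handled by step~(1), which is why I would explicitly separate the two cases at the start.
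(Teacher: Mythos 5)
Your proof is correct and matches the paper's (implicit) justification: the paper states this proposition without a separate proof precisely because step (2) sorts $\Tag_\gamma$ by $\Id$ of the first coordinate and step (3) assigns tags to segments rank-by-rank, with the augmentation convention covering components that exit at step (1). Your two-case split and the contrapositive reading of the ordering rule are exactly the intended argument.
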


We store some navigation structures supporting efficient navigation around artificial edges. These structures include the labeling table, the witness function $f$ and for each $\gamma$, the segmentation hierarchy ${\cal S}_{\gamma}$, the adjacency list $A_{\gamma}$ and the connector $B_{\gamma}$. Besides, we store the following witness lists $C_{\gamma}$ and $C_{\gamma,j}$ for each $\gamma$ and each level $j$. For each $\gamma$, we let $C_{\gamma}=\{f(S)\mid S\in{\cal S}_{\gamma}\}$ (without multiplicity) be an ordered list collecting all witnesses in ${\cal S}_{\gamma}$, ordered by $\ET_{\glo}$. Analogously we define an ordered list $C_{\gamma,j}=\{f(S)\mid S\in{\cal S}_{\gamma,j}\}$ for each level $j$ but $C_{\gamma,j}$ is weighted. Concretely, for each vertex $u\in C_{\gamma,j}$, we let $u$ have $C_{\gamma,j}$-weight $|f^{-1}(u,{\cal S}_{\gamma,j})|$, where $f^{-1}(u,{\cal S}_{\gamma,j})$ is the union of segments $S\in {\cal S}_{\gamma,j}$ such that $f(S)=u$. We proprocess the prefix sum for each $C_{\gamma,j}$ so that the total $C_{\gamma,j}$-weights of a given range can be answered in $O(1)$ time. We note that each $C_{\gamma,j}$ is a consecutive sublist of $C_{\gamma}$ by \Cref{lemma:MonotonicallyIncreasing}.

\begin{lemma}
The above construction algorithm for $\hat{E}$ takes $O(d_{\star}pm\log n)$ time and extra $O(pm)$ space to store the navigation structures. The size of $\hat{E}$ is $O(d_{\star}pm\log n)$ and the number of distinct ordered pairs in $\hat{E}$ is $O(pm\log n)$. 
\label{lemma:SizeOfEhat}
\end{lemma}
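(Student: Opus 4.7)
The plan is to handle each of the four bounds by a per-component analysis and then sum using $\sum_{\gamma\in\mathcal{C}}|A_\gamma| = O(pm)$ from Proposition~\ref{lemma:TotalSizeOfA}. For the multiset size $|\hat{E}|$, I would bound
\[
|\hat{E}_\gamma| \leq |A_\gamma||B_\gamma| + \sum_{S\in\mathcal{S}_\gamma}|S| = O(d_\star|A_\gamma|) + O(|A_\gamma|\log n),
\]
where the second estimate follows because each vertex of $A_\gamma$ lies in $O(\log n)$ segments of the segmentation hierarchy. Summing over $\gamma$ gives $|\hat{E}| = O((d_\star+\log n)pm) = O(d_\star pm\log n)$.

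The most interesting step is the distinct-pair bound $\|\hat{E}\|_0 = O(pm\log n)$, which I would establish by showing that processing $\gamma$ introduces at most $O(|A_\gamma|\log n)$ new distinct pairs. I would split on whether $\gamma$ terminates at step~1. If it does, then by how $b_\gamma$ is chosen, either no $b < \min\{|A_\gamma|,d_\star+1\}$ achieved $|\mathcal{S}_\gamma|$ unlabeled entries in $A_\gamma \times \{v_1,\dots,v_b\}$, so the total unlabeled count in $A_\gamma \times B_\gamma$ is below $|\mathcal{S}_\gamma| = O(|A_\gamma|)$; or that threshold was first reached at the cap, in which case the unlabeled count is bounded by $|\mathcal{S}_\gamma|-1$ plus the $\leq |A_\gamma|$ entries contributed by the last column, still $O(|A_\gamma|)$. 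If $\gamma$ instead reaches step~2, the new distinct pairs fall into two categories: the $|\Tag_\gamma| = \Theta(|A_\gamma|)$ entries of $A_\gamma\times B_\gamma$ that get freshly labeled, and the pairs in $\sum_{S\in\mathcal{S}_\gamma} S\times f(S)$, whose total multiset size is $O(|A_\gamma|\log n)$. Summing yields $\|\hat{E}\|_0 = O(pm\log n)$ as claimed; this case analysis is the only genuinely subtle part of the proof.

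For the running time I would bound each $\gamma$ by $O(d_\star|A_\gamma|\log n)$. Steps~1 and~2 probe at most $O(d_\star|A_\gamma|)$ entries of $A_\gamma\times B_\gamma$, each probe costing $O(\log n)$ against the BST that implicitly represents the labeling table; concretely for step~1 I would scan columns $v_1,v_2,\dots$ in order, maintaining a running count of unlabeled entries until hitting either the threshold $|\mathcal{S}_\gamma|$ or the cap $\min\{|A_\gamma|,d_\star+1\}$. Step~3 is $O(|\mathcal{S}_\gamma|)=O(|A_\gamma|)$, and building the witness lists $C_\gamma, C_{\gamma,j}$ with their prefix sums costs $O(|A_\gamma|)$ because the witness function $f$ is already monotone on each level by Proposition~\ref{lemma:MonotonicallyIncreasing}, so no sorting is required. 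Summing gives $O(d_\star pm\log n)$ in total.

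Finally, for the $O(pm)$ extra space, the labeling table is stored implicitly as a BST over the labeled entries only, and since only components reaching step~2 label anything, the total label count is $\sum_{\gamma}|\Tag_\gamma| = O(\sum_\gamma |A_\gamma|) = O(pm)$ using Proposition~\ref{prop:TagSize}. Each remaining per-component structure ($\mathcal{S}_\gamma$, $A_\gamma$, $B_\gamma$, the restriction of $f$ to $\mathcal{S}_\gamma$, $C_\gamma$, and $\{C_{\gamma,j}\}_j$ together with prefix sums) has size $O(|A_\gamma|)$, so the sum over $\gamma$ is again $O(pm)$.
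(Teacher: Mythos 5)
Your proof is correct and follows essentially the same route as the paper: per-component accounting summed via $\sum_{\gamma}|A_{\gamma}|=O(pm)$, with the multiset bound split into the product part $O(d_{\star}|A_{\gamma}|)$ and the segment part $O(|A_{\gamma}|\log n)$, and the distinct-pair bound charged to the labeling table. Your case analysis for components terminating at step~1 is in fact slightly more careful than the paper's one-line claim that all pairs of $\bigcup_{\gamma}A_{\gamma}\times B_{\gamma}$ get labeled (which is not literally true for such components, though, exactly as you observe, the choice of $b_{\gamma}$ caps their unlabeled pairs at $O(|A_{\gamma}|)$, which rescues the same bound).
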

\begin{proof}
First we show that the construction algorithm takes $O(d_{\star}pm\log n)$ time. For each component $\gamma$, we need to scan the region $A_{\gamma}\times B_{\gamma}$ of the labeling table, where $|A_{\gamma}\times B_{\gamma}|=O(d_{\star}|A_{\gamma}|)$. Summing over all $\gamma$, step (1) takes totally $O(d_{\star}pm\log n)$ time by \Cref{lemma:TotalSizeOfA} where the $O(\log n)$ factor is from accessing the binary search tree of the table. %
Step (2) and (3) takes $O(|A_{\gamma}|)$ time for each $\gamma$, so the total time is $O(pm)$ summing over all $\gamma$. 

Regarding the space taken by the navigation structures, the labeling table takes space linear to the number of labeled pairs, which is exactly $\sum_{\gamma}|\Tag_{\gamma}|=\sum_{\gamma}O(|A_{\gamma}|)=O(pm)$. Other navigation structures take totally $O(pm)$ space following the fact that for each $\gamma$, $|B_{\gamma}|\leq|A_{\gamma}|$ and both $\sum_{j}|C_{\gamma,j}|$ and $|C_{\gamma}|$ are at most $|{\cal S}_{\gamma}|=O(|A_{\gamma}|)$.  Therefore, the extra space is $O(pm)$.

We show that the number of distinct pairs in $\hat{E}$ is $O(pm\log n)$ by bounding the size of $\bigcup_{\gamma}A_{\gamma}\times B_{\gamma}$ and the size of $\bigcup_{\gamma,S\in{\cal S}_{\gamma}}S\times f(S)$, not counting mulplicity. The latter is bounded by $O(pm\log n)$ because $\sum_{\gamma}\sum_{S\in{\cal S}_{\gamma}}|S\times f(S)|=\sum_{\gamma}O(|A_{\gamma}|\log n)=O(pm\log n)$ where the first equation follows that the total size of segments in ${\cal S}_{\gamma}$ is $O(|A_{\gamma}|\log n)$ for each $\gamma$. It remains to bound the size of $\bigcup_{\gamma}A_{\gamma}\times B_{\gamma}$. Observe that all such pairs will be labeled in the labeling table, where the number of labeled pairs is $O(pm)$ as shown above, so $|\bigcup_{\gamma}A_{\gamma}\times B_{\gamma}|=O(pm)$. Similarly, the size of $\Ehat$ counting multiplicity is $|\hat{E}|=\sum_{\gamma}|A_{\gamma}\times B_{\gamma}|+\sum_{\gamma}\sum_{S\in{\cal S}_{\gamma}}|S\times f(S)|=O(d_{\star}pm+pm\log n)=O(d_{\star}pm\log n)$.

\end{proof}

We have finished the description of affiliated structures and we conclude the construction time and space in the lemma below.

\begin{lemma}
Given an $n$-vertex $m$-edge graph $G$ with the low degree hierarchy, there is a deterministic algorithm construct the affiliated structures in time $O(d_{\star}pm\log^2 n)$. Furthermore, the affiliated structures take space $O(pm\log^2 n)$.
\label{lemma:Space}
\end{lemma}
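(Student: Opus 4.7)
The plan is to assemble the affiliated structures in the natural dependency order and simply add up the resources, since each ingredient has already been bounded by an earlier lemma. Concretely, I would (i) compute the global Euler tour order $\ET_{\glo}$, (ii) build the adjacency lists $A_\gamma$ for all $\gamma \in \cal C$, (iii) invoke the artificial-edge construction of \Cref{sect:ArtificialEdges} to obtain $\hat E = \sum_{\gamma} \hat E_\gamma$ together with its navigation structures (labeling table, segmentation hierarchies $\cS_\gamma$, connectors $B_\gamma$, witness function $f$, and the lists $C_\gamma$, $C_{\gamma,j}$), and finally (iv) build the 2D range counting structure of \Cref{lemma:2DRangeCounting} on the multiset $E = E(G) + \hat E$.

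For step (i), a single DFS on each $\tau \in \cal T$ gives $\ET(\tau)$ in $O(|V(\tau)|)$ time, from which the sublist corresponding to $U(\tau)$ is read off; concatenating these lists across all trees produces $\ET_{\glo}$ and the labels $\Id(v)$. Because trees at each level are vertex-disjoint (property (4) of \Cref{def:LowDegreeHierarchy}), $\sum_{\tau \in \cal T_i} |V(\tau)| \le n$, so the total work is $O(pn) = O(pm)$. For step (ii), I would scan $E(G)$ and, using the level structure of $\cal C$, bucket the neighbors into the appropriate $A_\gamma$ and sort each bucket by $\Id$; by \Cref{lemma:TotalSizeOfA} this costs $\tilde O(pm)$ in total.

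For step (iii), \Cref{lemma:SizeOfEhat} directly gives construction time $O(d_\star p m \log n)$, navigation-structure space $O(pm)$, multiset size $|\hat E| = O(d_\star p m \log n)$, and distinct-pair count $\|\hat E\|_0 = O(p m \log n)$. Passing $E = E(G) + \hat E$ into \Cref{lemma:2DRangeCounting} for step (iv) then yields
\[
|E| = m + O(d_\star p m \log n) = O(d_\star p m \log n), \qquad \|E\|_0 = m + O(p m \log n) = O(p m \log n),
\]
so the 2D structure is built in $O(|E| \log n) = O(d_\star p m \log^2 n)$ time and occupies $O(\|E\|_0 \log n) = O(p m \log^2 n)$ space.

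Summing the four phases, the total construction time is dominated by phase (iv) at $O(d_\star p m \log^2 n)$, and the total space is dominated by the 2D counting structure at $O(p m \log^2 n)$ (the navigation structures contribute only $O(pm)$). There is no real obstacle here; the only care needed is to notice that the distinct-pair bound $\|\hat E\|_0 = O(pm\log n)$ from \Cref{lemma:SizeOfEhat} — which in turn comes from the fact that each component either contributes at most $|\cS_\gamma| = O(|A_\gamma|)$ newly labeled pairs from $A_\gamma \times B_\gamma$ or contributes the segmentation edges $\sum_{S \in \cS_\gamma} S \times f(S)$ of total size $O(|A_\gamma| \log n)$ — feeds directly into the $O(\|E\|_0 \log n)$ space bound of the 2D range counting data structure, matching the claimed $O(pm\log^2 n)$.
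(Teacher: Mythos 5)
Your proof is correct and follows essentially the same route as the paper's: both arguments reduce everything to the bounds $|\hat E| = O(d_\star p m \log n)$ and $\|\hat E\|_0 = O(pm\log n)$ from \Cref{lemma:SizeOfEhat} and then plug $E = E(G)+\hat E$ into \Cref{lemma:2DRangeCounting}, with the 2D counting structure dominating both the time and space. The extra detail you give on computing $\ET_{\glo}$ and the adjacency lists is consistent with (and subsumed by) the paper's accounting.
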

\begin{proof}
The 2D range counting structure in \Cref{sect:2DRangeStructure} needs $O(d_{\star}pm\log^2 n)$ construction time and $O(pm\log^2 n)$ space by \Cref{lemma:2DRangeCounting}, because the size of $E=E(G)+\hat{E}$ is $O(d_{\star}pm\log n)$ and there are only $O(pm\log n)$ distinct pairs in $E$ by \Cref{lemma:SizeOfEhat}. The construction of artificial edges takes $O(d_{\star}pm\log n)$ time and extra space $O(pm)$ by \Cref{lemma:SizeOfEhat}.

\end{proof}

\

\noindent\textbf{Construction of $D$-repairing Edges.} In what follows, we will show the algorithm for computing $D$-repairing edges $\hat{F}_{\gamma}$ for each $\gamma\in{\cal C}$ given any $D$ with $|D|=d\leq d_{\star}$, which is a subroutine invoked by the update algorithm. We begin by defining \textit{unreliable components} and only these components will have a nonempty $\hat{F}_{\gamma}$. First, each component $\gamma$ with $|B_{\gamma}|=\min\{|A_{\gamma}|,d_{\star}+1\}$ is reliable because $\Ehat_{\gamma}$ has been fault-tolerant.
For each remaining component $\gamma$, we say a segment $S\in{\cal S}_{\gamma}$ is an \textit{unreliable segment} if its tag $(u,v)$ has both $u,v\in D$, and we say $\gamma$ is unreliable if there is at least an unreliable segment in ${\cal S}_{\gamma}$. 

\begin{itemize}
    \item[(1)] We can find all unreliable components with the unreliable segments by scanning the region $D\times D$ of the labeling table.
    Concretely, for a label $(\gamma,k)$ of some entry $(u,v)\in D\times D$, if $k\leq |{\cal S}_{\gamma}|$, the $k$-ranked segment in ${\cal S}_{\gamma}$ is an unreliable segment of $\gamma$ and $\gamma$ is unreliable. 
    
    \item[(2)] For each unreliable component $\gamma$, we find a vertex $v_{\gamma}\in A_{\gamma}\setminus D$ as follows. First we scan the list $\Tag_{\gamma}$ until we find a tag $(u,v)$ with $u\notin D$ or $v\notin D$, which means we obtain $v_{\gamma}$ since all $(u,v)$ in $\Tag_{\gamma}$ are in $A_{\gamma}\times A_{\gamma}$. If we fail to find a $v_{\gamma}$ by scanning $\Tag_{\gamma}$, we directly scan the list $A_{\gamma}$ to get $v_{\gamma}$. If we still cannot find $v_{\gamma}$, it means $A_{\gamma}\subseteq D$.
    
    \item[(3)] For each unreliable component $\gamma$ with a $v_{\gamma}\in A_{\gamma}\setminus D$ found in step (2), we do the following for each of its unreliable segments $S$ to generate $\hat{F}_{\gamma}$ (initially $\hat{F}_{\gamma}$ is empty). If $S$ is a singleton set (i.e. $S\in {\cal S}_{1}$) containing a vertex $v$, we add an edge $(v_{\gamma},v)$ to $\hat{F}_{\gamma}$. Otherwise, $S$ has at most two child-segments $S_{1}$ and $S_{2}$ in the segmentation hierarchy. Let $(u_{1},v_{1})$ and $(u_{2},v_{2})$ be tags of $S_{1}$ and $S_{2}$. For each of $u_{1},v_{1},u_{2},v_{2}$, add an edge connecting it with $v_{\gamma}$ to $\hat{F}_{\gamma}$. 
\end{itemize}

\begin{lemma}
Given a set $D$ where $|D|=d\leq d_{\star}$, there is a deterministic algorithm computing a set of $D$-repairing edges $\hat{F}_{\gamma}$ for each $\gamma\in{\cal C}$, such that vertices in $A_{\gamma}\setminus D$ are connected by edges $\hat{E}_{\gamma}\cup\hat{F}_{\gamma}$ (or possibly $A_{\gamma}\subseteq D$). The size of $\hat{F}=\bigcup_{\gamma\in{\cal C}}\hat{F}_{\gamma}$ is $O(d^{2})$ and the running time of this algorithm is $O(d^{2}\log n)$.
\label{lemma:SensitiveEdges}
\end{lemma}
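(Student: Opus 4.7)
I would split the proof into connectivity correctness, the size bound $|\Fhat|=O(d^2)$, and the $O(d^2\log n)$ running-time bound. For correctness, call a segment $S\in\cS_\gamma$ \emph{reliable} when at least one of its tag entries $u_S,v_S$ lies outside $D$, and \emph{unreliable} (i.e., $S\in\cS_\fault$) otherwise. Consider any component $\gamma$ with $A_\gamma\not\subseteq D$. If the root segment $A_\gamma$ is reliable, then $\Ehat_\gamma$ alone connects $A_\gamma\setminus D$: either $u_{A_\gamma}\notin D$ and $A_\gamma\times\{u_{A_\gamma}\}\subseteq\Ehat_\gamma$ hubs through $u_{A_\gamma}$, or $v_{A_\gamma}\in B_\gamma\setminus D$ and $A_\gamma\times B_\gamma\subseteq\Ehat_\gamma$ hubs through $v_{A_\gamma}$; this also handles the degenerate case $\Ehat_\gamma=A_\gamma\times B_\gamma$ from Step~1 of the artificial-edge construction, since then $|B_\gamma|=\min\{|A_\gamma|,d_\star+1\}>d$ forces $B_\gamma\setminus D\neq\emptyset$.

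If instead the root is unreliable, Step~2 finds some $v_\gamma\in A_\gamma\setminus D$, and I would show every $v\in A_\gamma\setminus D$ is connected to $v_\gamma$ by walking up the segmentation tree from the leaf $\{v\}$. If $\{v\}$ is itself unreliable, Step~3 places $(v_\gamma,v)\in\Fhat_\gamma$ directly. Otherwise let $S^*$ be the topmost reliable ancestor of $\{v\}$; its parent is unreliable, so Step~3 inserts edges from $v_\gamma$ to both $u_{S^*}$ and $v_{S^*}$ in $\Fhat_\gamma$. Whichever of $u_{S^*},v_{S^*}$ avoids $D$ then supplies the bridge: $u_{S^*}$ reaches $v$ along $S^*\times\{u_{S^*}\}\subseteq\Ehat_\gamma$, or $v_{S^*}\in B_\gamma$ reaches $v$ along $A_\gamma\times B_\gamma\subseteq\Ehat_\gamma$.

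For the size bound, the $\Ehat$-construction labels each cell of the labeling table at most once, so the map $S\mapsto(u_S,v_S)$ from unreliable segments (across all components) into $D\times D$ is injective and $|\cS_\fault|\leq d^2$; since Step~3 adds at most four edges per unreliable segment, $|\Fhat|=O(d^2)$. For the running time, Step~1 probes each of the $d^2$ cells of $D\times D$ in the labeling-table BST in $O(\log n)$ time, contributing $O(d^2\log n)$, and Step~3 spends $O(1)$ per unreliable segment for $O(d^2)$ total. Step~2's scans of $\Tag_\gamma$ charge to the $O(d^2)$ cells of $D\times D$ (plus one per unreliable component) for $O(d^2)$ overall; when $\Tag_\gamma\subseteq D\times D$ is fully exhausted and the algorithm falls back to scanning $A_\gamma$, the two-sided equality $|\Tag_\gamma|=\Theta(|A_\gamma|)$ from \Cref{prop:TagSize} yields $\sum_{\gamma\text{ exhausted}}|A_\gamma|=O(\sum_\gamma|\Tag_\gamma\cap(D\times D)|)=O(d^2)$, and since each $A_\gamma$-scan costs at most $\min\{|A_\gamma|,d+1\}$ the total fallback work is also $O(d^2)$.

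The main obstacle I anticipate is this amortization in Step~2: a naive accounting charges $O(d)$ fallback work per unreliable component times up to $O(d^2)$ components, yielding an unwanted $O(d^3)$. The crucial saving is \Cref{prop:TagSize}, which allows each $A_\gamma$-scan to be charged back against the tag budget already consumed, so that exhausted components cannot be simultaneously numerous and have long adjacency lists.
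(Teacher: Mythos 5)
Your proposal is correct and follows essentially the same route as the paper's proof: the same case split on whether the root segment is reliable (with $\Ehat_\gamma$ alone sufficing) versus walking up from the leaf $\{v\}$ to the topmost reliable ancestor whose tag endpoints are wired to $v_\gamma$, the same injection of unreliable segments into labeled cells of $D\times D$ for the $O(d^2)$ bound, and the same amortization of the Step-2 fallback scan against the consumed tag budget via \Cref{prop:TagSize}. No gaps.
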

\begin{proof}
First we show that for each reliable component $\gamma$, $\hat{E}_{\gamma}$ can connect all vertices in $A_{\gamma}\setminus D$. By the definition, a reliable $\gamma$ will either have $|B_{\gamma}|=\min\{|A_{\gamma}|,d_{\star}+1\}$ or all segments in ${\cal S}_{\gamma}$ are reliable. In the former case, we have either $A_{\gamma}\subseteq D$ or some $v\in B_{\gamma}\setminus D\subseteq A_{\gamma}\setminus D$ connects all vertices in $A_{\gamma}$, so vertices in $A_{\gamma}\setminus D$ are connected to $v$ by $\Ehat_{\gamma}$. In the latter case, consider the root segment $S=A_{\gamma}$ of ${\cal S}_{\gamma}$. The tag $(u,v)$ of $S$ has either $u=f(S), u\notin D$ or $v\in B_{\gamma}, v\notin D$. Because $A_{\gamma}\times\{u\}= S\times f(S)\subseteq \hat{E}_{\gamma}$ and $A_{\gamma}\times\{v\}\subseteq A_{\gamma}\times B_{\gamma}\subseteq \hat{E}_{\gamma}$, vertices in $A_{\gamma}\setminus D$ are connected by $\hat{E}_{\gamma}$.

For each unreliable component $\gamma$, we will show that vertices in $A_{\gamma}\setminus D$ are connected by $\hat{E}_{\gamma}\cup \hat{F}_{\gamma}$. Assume that the root segment $S=A_{\gamma}$ is unreliable, otherwise vertices in $A_{\gamma}\setminus D$ are connected by $\hat{E_{\gamma}}$ by the same argument. Also assume that in step (2) we found the vertex $v_{\gamma}\in A_{\gamma}\setminus D$, otherwise we have $A_{\gamma}\subseteq D$. We claim that each $v\in A_{\gamma}\setminus D$ is connected to $v_{\gamma}$ by $\hat{E}_{\gamma}\cup\hat{F}_{\gamma}$. For each $v\in A_{\gamma}\setminus D$, let $S=\{v\}$ be the singleton segment of $v$ in ${\cal S}_{\gamma}$. If $S$ is unreliable, then an edge $(v_{\gamma},v)$ added to $\hat{F}_{\gamma}$ in step (3) will connect $v$ to $v_{\gamma}$. Otherwise, $S$ is reliable and because the root segment is unreliable, there is a reliable segment $S'$ (an ancestor of $S$ in ${\cal S}_{\gamma}$) such that $v\in S'$ and the parent segment of $S'$ is unreliable. Let $(u',v')$ be the tag of $S'$. In step (3), we have either $u'\notin D,(u',v_{\gamma})\in\hat{F}_{\gamma}$ or $v'\notin D,(v',v_{\gamma})\in\hat{F}_{\gamma}$. Combining the fact that $v\in S'$ and $S'\times\{u',v'\}\subseteq (S'\times f(S'))\cup(A_{\gamma}\times B_{\gamma}) \subseteq \hat{E}_{\gamma}$, we have $v$ and $v_{\gamma}$ are connected by $\hat{E}_{\gamma}\cup\hat{F}_{\gamma}$.

The bound $|\hat{F}|=O(d^{2})$ is obtained as follows. From the algorithm, the number of $D$-repairing edges we added is linear to the number of unreliable segments, which is equal to the number of labeled entries in region $D\times D$ of the labeling table, which is at most $O(d^{2})$. 

The running time of this algorithm is $O(d^{2}\log n)$. Observe that in step (1) we take $O(d^{2}\log n)$ time to scan the region $D\times D$ of the labeling table (the $\log n$ factor is from the underlying binary search tree of the table) and step (3) takes time linear to the number of unreliable segments, which is $O(d^{2})$. To bound the running time of step (2), we use the fact that tags in all $\Tag_{\gamma}$ are distinct and charge the running time on tags in $D\times D$ and unreliable components. Concretely, if we found $v_{\gamma}$ when scanning $\Tag_{\gamma}$, the running time is linear to one plus the number of tags in $\Tag_{\gamma}\cap(D\times D)$ we scanned. Then we charge $O(1)$ costs on the unreliable $\gamma$ and each such tag. If $v_{\gamma}$ is found by taking $O(|A_{\gamma}|)$ time to directly scan the whole $A_{\gamma}$, we charge $O(1)$ costs on each tag in $\Tag_{\gamma}\cap(D\times D)$, since $|\Tag_{\gamma}\cap(D\times D)|=|\Tag_{\gamma}|=\Theta(|A_{\gamma}|)$ by \Cref{prop:TagSize}. Therefore, the total running time of step (2) is proportional to the number of unreliable components plus the number of label pairs in $D\times D$, which is $O(d^{2})$.

\end{proof}

\section{Update and Query Algorithms}
\label{sect:UpdateAndQuery}

In this section, we will show the update and query algorithms. The update algorithm is introduced in \Cref{sect:UpdateAlgo} and \Cref{sect:IntervalConnectivity}, and we describe the query algorithm in \Cref{sect:QueryAlgo}. We conclude the main result of $d$-vertex-failure connectivity oracles in \Cref{thm:MainDetailed}..

\begin{theorem}
Given a $n$-vertex $m$-edge undirected graph $G$ with an upper bound $d_{\star}$ of the size of failure sets, there exists deterministic $d_{\star}$-vertex-failure connectivity oracles that use either
\begin{itemize}
    \item $O(m\cdot \beta(n))$ space, $\hat{O}(m)+\tilde{O}(md_{\star})$ preprocessing time, $\hat{O}(d^{2})$ update time and $O(d)$ query time, where $\beta(n)$ can be any slowly growing function e.g. $\beta(n)=\log^{*}n$, or
    \item $O(m\log^{3}n)$ space\footnote{The space can be improved to $O(m\log^{2}n)$ as discussed in \Cref{remark:LinearSpace2DCounting} at a cost of increasing the update time by an $O(\log^{1+\epsilon}n)$ factor for any $\epsilon>0$.}, $O(mn\log n)$ preprocessing time, $\bar{O}(d^{2}\log^{3}n\log^{4}d)$ update time and $O(d)$ query time.
\end{itemize}
All the $m$ factors can be replaced by $\bar{m}=\min\{m,n(d_{\star}+1)\}$ by the standard sparsification algorithm in \cite{nagamochi1992linear}, at a cost of $O(m)$ additional preprocessing time.
\label{thm:MainDetailed}
\end{theorem}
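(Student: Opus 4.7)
The plan is to bolt the three components developed in the paper onto each other: the low-degree hierarchy (\Cref{thm:LowDegreeHierarchy} for the first bullet, \Cref{thm:DPLowDegreeHierarchy} for the second), the affiliated structures built on top of it in \Cref{sect:InterCom} (abstract-graph reduction via Euler-tour intervals, the 2D range-counting structure, the artificial edges $\Ehat_\gamma$ and segmentation hierarchies $\cS_\gamma$), and the Bor\r{u}vka-driven connectivity routine sketched in \Cref{sec:overview:final}. Let $(p,\Delta)=(O(\beta(n)),n^{o(1)})$ for the first bullet and $(p,\Delta)=(O(\log n),4)$ for the second; in both cases, the plan is to produce the oracle by instantiating the same template with these parameters. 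Before any of this I would run the Nagamochi–Ibaraki sparsifier of \cite{nagamochi1992linear} in $O(m)$ time to replace $m$ with $\mbar=\min\{m,n(d_\star+1)\}$; all later bounds are stated in terms of $\mbar$ afterwards.

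\textbf{Preprocessing.} First build the $(p,\Delta)$-low-degree hierarchy: $\Ohat(\mbar)$ time by \Cref{thm:LowDegreeHierarchy}, or $O(\mbar n\log n)$ by \Cref{thm:DPLowDegreeHierarchy}. Then the affiliated structures from \Cref{sect:InterCom}: compute the global Euler-tour order $\ET_\glo$ and $\Id(\cdot)$ (\Cref{def:GlobalET}); build for every $\gamma\in\cC$ the adjacency list $A_\gamma$, its segmentation hierarchy $\cS_\gamma$ (for the first bullet, using a multiple-branched variant with $\beta(n)$ levels per \Cref{remark:MultipleBranchedSeg}), the connector $B_\gamma$, the witness function $f$, and the lists $C_\gamma,C_{\gamma,j}$; finally feed $E=E(G)+\Ehat$ into the 2D range-counting structure of \Cref{lemma:2DRangeCounting} (using the linear-space variant of \Cref{remark:LinearSpace2DCounting} for the first bullet, to shave an extra $\log n$ factor). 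By \Cref{lemma:Space} this costs $\Otil(\mbar d_\star)$ additional time. Summing, the preprocessing is $\Ohat(m)+\Otil(\mbar d_\star)$ in the first case and $O(mn\log n)$ in the second.

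\textbf{Update.} Given $D$ with $|D|=d\le d_\star$, I would first materialise the abstract graph $H$ of \Cref{def:AbstractGraph}. The first-type objects are the maximal unaffected components, identified by walking up the hierarchy from $D$; the second-type objects are the affected subtrees, obtained by invoking \Cref{lemma:TreeToInterval} on each affected tree, giving in total $\Ohat(d)$ objects together with their canonical Euler-tour intervals on $\ET_\glo$. Next I would compute $D$-repairing edges $\Fhat=\bigcup_\gamma\Fhat_\gamma$ via \Cref{lemma:SensitiveEdges} in $O(d^2\log n)$ time; these, together with $E(G)+\Ehat$ restricted to not cross affected components, preserve all connectivity between second-type objects. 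Connectivity among the second-type objects is then solved by Bor\r{u}vka in $O(\log(pd\Delta))$ phases; in each phase I need to answer, for every group $Z_k$ and every contiguous range of groups $Z_{l,r}$, the batched-adjacency query~\eqref{eq:Boruvka}. The $E(G)$ piece is a single 2D range query; the $A_\gamma\times B_\gamma$ product piece for $\gamma\in D_R$ splits by \Cref{sect:UpdateImprovement} into two 1D counts per $\gamma$; and the structured piece $\sum_S S\times f(S)$ is handled level by level by reducing $Q(\gamma,j,I)$ to one 3D range count using \Cref{lemma:MonotonicallyIncreasing} to identify the contiguous window $A_\gamma(v_L,v_R)$. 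Rebuilding the auxiliary tables per phase costs $\Otil(d^2)$ and each query is $\Otil(1)$, so each phase costs $\Otil(d^2)$ and the whole connectivity computation is $\hat{O}(d^2)$, which I store as a union–find over the second-type objects for $O(1)$ lookups.

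\textbf{Query, and the main obstacle.} For a query $(u,v)$: if $u\in D$ answer no (well, we only ask about non-deleted pairs, but otherwise we just find some non-deleted neighbor); if $u\notin D$, walk up the hierarchy to find the highest unaffected ancestor component — if such a $\gamma$ exists then $u$ is inside the first-type object $\gamma$ and we pick any non-deleted neighbour $u'\in A_\gamma\setminus D$ (scanning at most $d+1$ candidates), which lies in some second-type object and so reduces the question to $H$; otherwise $u$ is itself a terminal of an affected subtree, which is already a vertex of $H$. Then we return the precomputed connectivity answer in $H$. Each step is $O(d)$. The step I expect to be the true technical bottleneck is the level-$j$, segmentation-based contribution $Q(\gamma,j,I)$ to~\eqref{eq:Boruvka}: making this cost $\Otil(1)$ per query, while also keeping the number of rebuilt tables small enough per Bor\r{u}vka phase, is what forces the particular product-plus-segmentation structure of $\Ehat_\gamma$ and the monotonicity property of $f$, and is where \Cref{sect:IntervalConnectivity} has to do real work; the rest of the proof is plumbing.
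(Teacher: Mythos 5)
Your proposal is correct and follows essentially the same route as the paper: the theorem is proved there by directly assembling \Cref{thm:LowDegreeHierarchy} (or \Cref{thm:DPLowDegreeHierarchy}) for preprocessing, \Cref{lemma:Space} for the affiliated structures, \Cref{thm:UpdateAlgorithm} for the update, the algorithm of \Cref{sect:QueryAlgo} for queries, and \Cref{remark:LinearSpace2DCounting,remark:MultipleBranchedSeg} for the space savings, exactly the pieces you invoke. The only cosmetic slip is writing $\gamma\in D_R$ instead of $\gamma\in\cC_D$ when describing the product-set terms in the batched-adjacency query, a leftover from the overview's semi-bipartite notation.
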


The update time is from \Cref{thm:UpdateAlgorithm}. The query time is from \Cref{sect:QueryAlgo}. The preprocessing time is from \Cref{thm:LowDegreeHierarchy} (or \Cref{thm:DPLowDegreeHierarchy}) on the low degree hierarchy and \Cref{lemma:Space} on the affiliated structures. See \Cref{remark:LinearSpace2DCounting} and \Cref{remark:MultipleBranchedSeg} for some improvements on the space.

\subsection{The Update Algorithm}
\label{sect:UpdateAlgo}

Given a failure set $D$ with $|D|=d\leq d_{\star}$, the update algorithm will compute the connectivity of second-type objects in the abstract graph $H$. Let $\hat{{\cal T}}$ denote all second-type objects. Let ${\cal C}_{D}$ denotes the set of affected components and we have $|{\cal C}_{D}|=O(pd)$ since there are at most $d$ affected components each level.

\begin{theorem}
Let $G$ be a graph with a $(p,\Delta)$-low degree hierarchy and affiliated structures prepared. Given a set of $D$ failed vertices with $|D|=d\leq d_{\star}$, there is a deterministic algorithm that computes a partition $\hat{\cal R}$ of $\hat{{\cal T}}$ such that for each $\hat{\tau},\hat{\tau}'\in\hat{{\cal T}}$, they are connected in $H$ if and only if they are in the same group of $\hat{{\cal R}}$. The running time of this algorithm is $O(\Delta^{2}p^{2}d^{2}\log n+\Delta p^{2}d^{2}\log^{4}(\Delta pd)\log n)$.
\label{thm:UpdateAlgorithm}
\end{theorem}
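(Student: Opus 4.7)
The plan is to implement the \Boruvka-style interval-connectivity algorithm sketched in \Cref{sec:overview:final} on the set $\hat{\cal T}$ of second-type objects, producing a partition that eventually equals $\hat{\cal R}$. First, I would produce the inputs to the main loop. Using \Cref{lemma:TreeToInterval} on each affected tree, I can compute $\hat{\cal T}$ and the intervals on $\ET_{\glo}$ that each $\hat\tau\in\hat{\cal T}$ occupies in $O(\Delta pd\log(\Delta pd))$ total time, since there are $O(pd)$ affected trees carrying $O(pd)$ failed vertices in total. Simultaneously, I would invoke \Cref{lemma:SensitiveEdges} to produce the multiset $\hat{F}=\bigcup_{\gamma\in{\cal C}}\hat{F}_{\gamma}$ of $D$-repairing edges in $O(d^{2}\log n)$ time.

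The main body is a \Boruvka process on $\hat{\cal T}$. I maintain a partition ${\cal Z}_{t}=(Z_{1},\dots,Z_{z})$ of $\hat{\cal T}$ into groups of currently-known-connected objects, starting with singletons. In each phase $t=1,\dots,O(\log|\hat{\cal T}|)=O(\log(\Delta pd))$, I look for every $Z_{k}$ an adjacent group in $H$ restricted to the currently-valid edge multiset
\[
E_{D}\;=\;\bigl(E(G)+\hat{E}\bigr)\;-\;\sum_{\gamma\in{\cal C}_{D}}\hat{E}_{\gamma}\;+\;\hat{F},
\]
whose first summand is stored in the preprocessed 2D range counting structure of \Cref{lemma:2DRangeCounting}. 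As in the sketch, finding an adjacent group for $Z_{k}$ reduces via binary search on the current ordering ${\cal I}_{t}$ of $\hat{\cal T}$ to $O(\log|\hat{\cal T}|)$ batched-adjacency queries of the form $(Z_{k},Z_{l,r})$; summing over phases, the total number of batched-adjacency queries is $O(\Delta pd\log(\Delta pd))$. After each phase, the partition is coarsened via disjoint-set union so that $z$ shrinks by a constant factor until all connectivity has been discovered.

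A batched-adjacency query $(Z_{k},Z_{l,r})$ must decide whether
\[
\sum_{I\in Z_{k}}\sum_{I'\in Z_{l,r}}\bigl|E_{D}\cap(I\times I')\bigr|>0,
\]
where each $I,I'$ is a union of $O(\Delta)$ consecutive ranges on $\ET_{\glo}$ (the intervals owned by the underlying affected subtree, per \Cref{lemma:TreeToInterval}). The contribution from $E(G)+\hat{E}$ is obtained by $O(\Delta^{2})$ calls to \Cref{lemma:2DRangeCounting}; the contribution from $\hat{F}$ is handled by a small dedicated 2D range counting structure built on the $O(d^{2})$ repairing edges at the start of each phase. The crux is subtracting $\sum_{\gamma\in{\cal C}_{D}}\hat{E}_{\gamma}$: using the decomposition $\hat{E}_{\gamma}=A_{\gamma}\times B_{\gamma}+\sum_{S\in{\cal S}_{\gamma}}S\times f(S)$ and the level-wise monotonicity of $f$ from \Cref{lemma:MonotonicallyIncreasing}, each $\gamma\in{\cal C}_{D}$ and each of the $O(\log n)$ segmentation levels contributes only $\Otil(1)$ queries against per-phase tables $\Table_{t}$, $\mathsf{A}\text{-}\mathsf{Array}_{\gamma,t}$, $\mathsf{B}\text{-}\mathsf{Array}_{\gamma,t}$, prefix sums on $C_{\gamma,j}$, and a 3D range counting structure indexed by ${\cal I}_{t}$; formally instantiating these substructures is precisely what is deferred to \Cref{sect:IntervalConnectivity}.

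Putting the budget together, the per-phase rebuild of $\Table_{t}$ and the auxiliary arrays costs $O(|\hat{\cal T}|^{2}\log n)=O(\Delta^{2}p^{2}d^{2}\log n)$, which summed over $O(\log(\Delta pd))$ phases is dominated by a single rebuild and matches the first term of the claimed bound. Each of the $O(\Delta pd\log(\Delta pd))$ queries costs $O(|{\cal C}_{D}|\cdot\log^{O(1)}(\Delta pd)\cdot\log n)=O(pd\log^{O(1)}(\Delta pd)\log n)$ since $|{\cal C}_{D}|=O(pd)$ and each $\gamma$ incurs $\Otil(1)$ per level, giving the second term $O(\Delta p^{2}d^{2}\log^{4}(\Delta pd)\log n)$. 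Correctness follows from \Cref{lemma:SensitiveEdges} (so $E_{D}$ faithfully captures $H$-adjacency between second-type objects, including connectivity routed through first-type objects in $\hat{\cal C}$) combined with the standard \Boruvka correctness argument. The main technical obstacle is the $\Otil(1)$ per-level subtraction of each $\hat{E}_{\gamma}$: it critically relies on both the layered product structure imposed on $\hat{E}_{\gamma}$ during preprocessing and the level-wise monotonicity of $f$, which together force $\{S\in{\cal S}_{\gamma,j}:f(S)\in I\}$ to be a consecutive sublist $A_{\gamma}(v_{L},v_{R})$ so that $Q(\gamma,j,I)$ collapses to a single 3D range counting query rather than $\Omega(|{\cal S}_{\gamma,j}|)$ separate ones.
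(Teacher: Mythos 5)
Your strategy is essentially the paper's: a \Boruvka{} process answered by batched-adjacency range-counting queries, subtracting the invalid artificial edges $\sum_{\gamma\in{\cal C}_D}\Ehat_\gamma$ by exploiting the product-plus-segmentation structure and the monotonicity of $f$, with the $D$-repairing edges restoring hyperedge connectivity. The organizational differences (you run \Boruvka{} directly on the objects of $\hat{\cal T}$ and fold the repairing edges into $E_D$, whereas the paper runs it on the intervals of ${\cal I}$ and handles repairing edges and same-subtree intervals in a separate merge step after \Cref{lemma:IntervalConnectivity}) are mostly harmless, but one of them hides a genuine correctness gap: you define $E_D=(E(G)+\Ehat)-\sum_{\gamma\in{\cal C}_D}\Ehat_\gamma+\Fhat$ with $\Fhat=\bigcup_{\gamma\in{\cal C}}\Fhat_\gamma$. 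The repairing edges of an \emph{affected} component must not be used. An edge $(v_\gamma,x)\in\Fhat_\gamma$ only certifies that $v_\gamma$ and $x$ can be joined by a path through $\gamma$, and when $V(\gamma)\cap D\neq\emptyset$ that path may no longer exist --- this is exactly why $\Ehat_\gamma$ is subtracted for $\gamma\in{\cal C}_D$ in the first place. The paper accordingly uses only $\Fhat_D=\sum_{\gamma\in{\cal C}\setminus{\cal C}_D}\Fhat_\gamma$; as written, your algorithm can merge two objects that are not connected in $H$. The fix is a one-line filter, but it is needed.

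Two further imprecisions. First, a single affected subtree can own up to $\Theta(\Delta d)$ intervals on $\ET_{\glo}$ --- only the \emph{total} over all subtrees is $O(\Delta pd)$ --- so the claim that each element of $Z_k$ is a union of $O(\Delta)$ consecutive ranges is false, and your $O(\Delta^2)$ per-query charge for the $E(G)+\Ehat$ term does not go through per query; the cost must be amortized over intervals, which is what the paper's interval-granularity \Boruvka{} (initialized with singletons and only merged back into objects at the end) accomplishes. Second, the $O(|{\cal I}|^2\log n)$ term comes from a \emph{one-time} preprocessing pass that issues $|{\cal I}|^2$ queries to the 2D range counting structure; the per-phase rebuilds are mere $O(|{\cal I}|^2)$-time shuffles plus prefix sums, so they are not ``dominated by a single rebuild'' of that cost but rather are individually cheaper. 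Neither of these changes the final bound, but both need to be stated correctly for the running-time analysis to close.
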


Instead of constructing the abstract graph $H$ and working directly on $H$, our algorithm will actually consider some disjoint intervals on $\ET_{\glo}$, obtained by ``linearlizing'' the affected subtrees. Strictly speaking, for each affected tree, we apply \Cref{lemma:TreeToInterval} on it with failed vertices $D\cap V(\tau)$, and get a set of disjoint intervals on $\ET(\tau)$, each of which is owned by an affected subtree. Recall that $U(\tau)$ is a consecutive sublist of $\ET_{\glo}$. By taking the restrictions of these intervals on $U(\tau)$,%
we can obtain some disjoint intervals on $\ET_{\glo}$. After processing all affected trees, let ${\cal I}$ denote the whole set of intervals on $\ET_{\glo}$ we obatined. 

\begin{proposition}
The number of intervals is $|{\cal I}|=O(\Delta pd)$ and the time to construct ${\cal I}$ is $O(\Delta pd\log n)$. For each affected subtrees $\hat{\tau}\in\hat{\cal \tau}$, the union of intervals it owns is exactly its non-failed terminals $U(\hat{\tau})\setminus D$.
\label{prop:Iproperty}
\end{proposition}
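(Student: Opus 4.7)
The plan is to push through all three claims simultaneously by analyzing the contribution of each affected tree to $\cal I$. First I would fix an arbitrary affected tree $\tau \in {\cal T}_i$ and let $d_\tau := |V(\tau) \cap D|$ denote the number of failed vertices lying on $\tau$. Applying \Cref{lemma:TreeToInterval} to $\tau$ with the failure set $V(\tau)\cap D$ yields at most $O(\Delta d_\tau)$ affected subtrees $\hat\tau_1,\hat\tau_2,\ldots$ together with a set of at most $O(\Delta d_\tau)$ disjoint intervals on $\ET(\tau)$; for each $\hat\tau_j$, the intervals it owns have union exactly $V(\hat\tau_j)$. I would then restrict each such interval from $\ET(\tau)$ to the (consecutive) sublist $U(\tau) \subseteq \ET_{\glo}$; the resulting (possibly empty) intervals on $\ET_{\glo}$ are the contribution of $\tau$ to $\cal I$.

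For the cardinality bound, I would argue $\sum_{\tau} d_\tau \le pd$ via property~(4) of \Cref{def:LowDegreeHierarchy}: trees at the same level are vertex-disjoint, so each failed vertex is counted at most once per level and thus at most $p$ times in total. Combining with the per-tree bound $O(\Delta d_\tau)$ (and a $+1$ term for affected trees with $d_\tau = 0$, of which there are at most $O(pd)$ since each level contains at most $d$ affected components and hence at most $d$ affected trees) gives $|{\cal I}| = O(\Delta pd)$. The construction time follows in the same way: the preprocessing promised by \Cref{lemma:TreeToInterval} can be done once in the preprocessing phase for every tree in ${\cal T}$, after which each affected tree costs $O(\Delta d_\tau \log(\Delta d_\tau))$ to enumerate its intervals, plus an additional $O(\log n)$ per interval to compute its restriction on $U(\tau)$ via binary search (as described in \Cref{sect:Preliminaries}). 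Summing yields $O(\Delta pd \log n)$.

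For the structural statement, fix an affected subtree $\hat\tau$ of some affected tree $\tau$. By \Cref{lemma:TreeToInterval}, the intervals on $\ET(\tau)$ owned by $\hat\tau$ have union $V(\hat\tau)$; after restriction to $U(\tau)$, their union is $V(\hat\tau) \cap U(\tau)$, which equals $U(\hat\tau)$ by the definition $U(\hat\tau) = U(\tau) \cap V(\hat\tau)$ in \Cref{sect:AbstractGraph}. Finally, since $\hat\tau$ is obtained from $\tau$ by deleting the failed vertices, $V(\hat\tau) \cap D = \emptyset$ and therefore $U(\hat\tau) = U(\hat\tau) \setminus D$, establishing the third bullet.

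\textbf{Main obstacle.} The proof is essentially bookkeeping on top of \Cref{lemma:TreeToInterval}; no new combinatorial idea is needed. The only subtlety I anticipate is making sure that the accounting $\sum_\tau d_\tau \le pd$ uses the right partition property: one must use vertex-disjointness of trees \emph{within} the same level (property~(4)), not between levels, because the same failed vertex may appear as a Steiner (non-terminal) node in trees at several different levels. Once this is stated cleanly, everything else is immediate from the results already developed.
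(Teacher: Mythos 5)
Your proposal is correct and follows essentially the same route as the paper: bound the total number of failed vertices over all trees by $O(pd)$ using vertex-disjointness of trees within each level, bound the number of affected trees by $O(pd)$ using disjointness of components within each level, and then invoke \Cref{lemma:TreeToInterval} per tree together with the $O(\log n)$-time restriction onto $U(\tau)\subseteq\ET_{\glo}$. Your write-up is in fact more detailed than the paper's (which compresses all of this into "directly follows \Cref{lemma:TreeToInterval}"), and the subtlety you flag — using within-level disjointness rather than cross-level disjointness — is exactly the point the paper's proof relies on.
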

\begin{proof}
Recall that in ${\cal T}$, the Steiner trees in the same level are disjoint. Thus there are at most $O(pd)$ failed vertices appearing on all trees in ${\cal T}$. The trees with failed vertices can be easily identified in $O(pd)$ time since the low degree hierarchy is stored explicitly. Then this proposition directly follows \Cref{lemma:TreeToInterval}. 
\end{proof}

Our algorithm will connect intervals in ${\cal I}$ using a multiset $E_{D}$ of ordered pairs (equivalent to undirected edges) defined by
\[
E_{D}=E-\bar{E}_{D}=E-\sum_{\gamma\in{\cal C}_{D}}\hat{E}_{\gamma}=E-\sum_{\gamma\in{\cal C}_{D}}\left(A_{\gamma}\times B_{\gamma}+\sum_{S\in{\cal S}_{\gamma}}S\times f(S)\right),
\]
where $E$ are edges we stored in the 2D range counting structure and $\bar{E}_{D}=\sum_{\gamma\in{\cal C}_{D}}\Ehat_{\gamma}$ represents ``invalid'' artificial edges of affected components.
Observe that two intervals $I_{1},I_{2}\in{\cal I}$ are connected by an edge in $E_{D}$ whenever there is an ordered pair in $E_{D}$ with one entry in $I_{1}$ and the other one in $I_{2}$, namely $|E_{D}\cap(I_{1}\times I_{2})|+|E_{D}\cap(I_{2}\times I_{1})|>0$. We say $I_{1}$ and $I_{2}$ are \textit{adjacent} in this case. %

\begin{lemma}
With access to the low-degree hierarchy and affiliated structures, there is a deterministic algorithm that computes a partition ${\cal R}$ of ${\cal I}$ such that for any $I,I'\in{\cal I}$, they are connected by a path only with edges in $E_{D}$ %
if and only if they are in the same group of ${\cal R}$. The running time of this algorithm is $O(|{\cal I}|^{2}\log n+|{\cal I}|\cdot|{\cal C}_{D}|\log^{4}|{\cal I}|\log n)$.
\label{lemma:IntervalConnectivity}
\end{lemma}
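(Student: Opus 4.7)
The plan is to run a Borůvka-style merging procedure on ${\cal I}$: maintain a partition ${\cal R}$ that starts as $\{\{I\}:I\in{\cal I}\}$ and, in each of $O(\log|{\cal I}|)$ phases, find for every current group $Z$ some other group adjacent to $Z$ via an edge in $E_{D}$ and merge them. Since every surviving group halves the total number of groups, $O(\log|{\cal I}|)$ phases suffice, and when no further merge is possible the final ${\cal R}$ is exactly the set of connected components of ${\cal I}$ under $E_{D}$. Correctness is the standard invariant that two intervals are merged into the same group iff they are joined by some $E_{D}$-path.

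The heart of the argument is a \emph{batched-adjacency query}: given a group $Z$ and a contiguous range $Z_{l,r}$ of groups in the current linear order on ${\cal R}$, decide whether $\sum_{I\in Z}\sum_{I'\in Z_{l,r}}|E_{D}\cap(I\times I')|>0$ (plus the symmetric sum). With this query, each group needs only $O(\log|{\cal I}|)$ binary-search queries per phase to locate an adjacent group. First I would, at the start of each phase $t$, concatenate the groups of the current ${\cal R}$ into a list ${\cal I}_{t}$ so that $Z$ and each $Z_{l,r}$ are contiguous. Expanding $E_{D}=E-\sum_{\gamma\in{\cal C}_{D}}\hat{E}_{\gamma}$ and $\hat{E}_{\gamma}=A_{\gamma}\times B_{\gamma}+\sum_{S\in{\cal S}_{\gamma}}S\times f(S)$, the query value decomposes into (i) an $E$-term, (ii) a product-term $A_{\gamma}\times B_{\gamma}$ for each $\gamma\in{\cal C}_{D}$, and (iii) a witness-term $\sum_{S}S\times f(S)$ for each $\gamma\in{\cal C}_{D}$, split by the $O(\log|A_\gamma|)$ levels of ${\cal S}_{\gamma}$. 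For (i), I would build a table $\Table_{t}$ with entry $|E\cap(I\times I')|$ for $I,I'\in{\cal I}_{t}$ using \Cref{lemma:2DRangeCounting} in $O(|{\cal I}|^{2}\log n)$ time and then preprocess 2D prefix sums for $O(1)$ range queries. For (ii), for each $\gamma\in{\cal C}_{D}$ I would build two 1D prefix-sum arrays $\mathsf{A}\text{-}\mathsf{Array}_{\gamma,t}$ and $\mathsf{B}\text{-}\mathsf{Array}_{\gamma,t}$ storing $|A_{\gamma}\cap I|$ and $|B_{\gamma}\cap I|$ along ${\cal I}_{t}$ in $O(|{\cal I}|)$ time; the product structure of $A_{\gamma}\times B_{\gamma}$ then splits the double sum into a product of two 1D range sums, answered in $O(|{\cal C}_{D}|)$ time per query.

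The hard part will be evaluating (iii) within budget. I would use the monotonicity property of \Cref{lemma:MonotonicallyIncreasing} on each level ${\cal S}_{\gamma,j}$: the level-$j$ segments whose witness $f(S)$ lies in a fixed interval $I$ form a consecutive sublist of ${\cal S}_{\gamma,j}$, and their union is a consecutive sublist $A_{\gamma}(v_{L},v_{R})$ of $A_{\gamma}$; thus $Q(\gamma,j,I)=|A_{\gamma}(v_{L},v_{R})\cap\bigcup_{I'\in Z_{l,r}}I'|$ is exactly what a single 3D range-counting query over points indexed by $(C_{\gamma,j}\text{-rank},\ \Id,\ {\cal I}_{t}\text{-position})$ computes, as sketched in \Cref{sec:overview:final}. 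Building these 3D structures per phase for each $\gamma\in{\cal C}_{D}$ and all $O(\log|A_\gamma|)$ levels from the navigation data $C_{\gamma},C_{\gamma,j},{\cal S}_{\gamma}$, and bounding each query at $O(\log^{3}|{\cal I}|\log n)$ per $\gamma$, will give per-phase cost $O(|{\cal I}|\cdot|{\cal C}_{D}|\log^{3}|{\cal I}|\log n)$ on top of the $O(|{\cal I}|^{2}\log n)$ table build. Summing over all $O(\log|{\cal I}|)$ phases and amortizing table construction yields the claimed $O(|{\cal I}|^{2}\log n+|{\cal I}|\cdot|{\cal C}_{D}|\log^{4}|{\cal I}|\log n)$ bound, finishing the proof.
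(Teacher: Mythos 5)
Your proposal follows essentially the same route as the paper's proof: a Borůvka-style merging over $O(\log|{\cal I}|)$ phases, binary search via batched-adjacency queries on the concatenated list ${\cal I}_t$, the three-way decomposition of $E_D$ into the $E$-term (2D prefix sums), the product term (per-$\gamma$ 1D prefix arrays), and the witness term handled level-by-level via the monotonicity of $f$ and per-phase 3D range-counting structures, with the same per-phase accounting yielding the stated bound. The only detail you gloss over is that reducing $Q(\gamma,j,I)$ to a single 3D query requires separately correcting for the two boundary witnesses and the up-to-two ``special'' intervals straddling the range on $C_{\gamma,j}$, which the paper handles with $O(1)$ extra range-sum queries and which does not change the asymptotics.
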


In \Cref{sect:IntervalConnectivity}, we will show the algorithm in \Cref{lemma:IntervalConnectivity}, which computes the connectivity of intervals over $E_{D}$, but before that, we complete our update algorithm using \Cref{lemma:IntervalConnectivity} as a subroutine. 

The last step of our update algorithm is to compute the desired partition $\hat{\cal R}$ of affected subtrees by further merging groups of intervals through the connectivity from affected subtrees and $D$-repairing edges.
Let ${\cal R}$ be the partition of ${\cal I}$ computed by \Cref{lemma:IntervalConnectivity}.  %
Let $\hat{F}_{D}=\sum_{\gamma\in{\cal C}\setminus{\cal C}_{D}}\hat{F}_{\gamma}$ by collecting $D$-repairing edges of unaffected components. Then we use edges in $\hat{F}_{D}$ to merge groups in ${\cal R}$. Concretely, for each $(u,v)\in\hat{F}_{D}$, if $u$ and $v$ belong to some $I_{u},I_{v}\in{\cal I}$ respectively, we merge the groups containing $I_{u}$ and $I_{v}$ into a new group of ${\cal R}$. Lastly, for each affected subtree $\hat{\tau}$, we merge the groups with at least one interval owned by $\hat{\tau}$ into a new group. The partition $\hat{\cal R}$ of $\hat{\cal T}$ is constructed by, for each group $R\in{\cal R}$, putting the owners of intervals in $R$ into a group $\hat{R}\in\hat{\cal R}$. 

Now we show the correctness of the update algorithm assuming \Cref{lemma:IntervalConnectivity}.

\begin{proof}[Proof of \Cref{thm:UpdateAlgorithm}]

For each second-type object $\hat{\tau}\in\hat{\cal T}$, let $\hat{\cal R}(\hat{\tau})$ denote the group containing $\hat{\tau}$, and let ${\cal I}(\hat{\tau})$ denote intervals owned by $\hat{\tau}$. 

First, we will show that if two second-type object $\hat{\tau},\hat{\tau}'\in\hat{\cal T}$ are connected in $H$, then $\hat{\cal R}(\hat{\tau})$ and $\hat{\cal R}(\hat{\tau}')$ are the same. Let $P$ be a simple path connecting $\hat{\tau}$ and $\hat{\tau}'$ in $H$. Let $\hat{\tau}_{1}$ and $\hat{\tau}_{2}$ be any two second-type objects on $P$ such that there is no other second-type object between them on $P$. We will show that $\hat{\tau}_{1}$ and $\hat{\tau}_{2}$ belong to the same group of $\hat{\cal R}$, so $\hat{\tau}$ and $\hat{\tau}'$ are in the same group. To show the claim, there are two cases.

\

\noindent\textbf{Case 1.}%
$\hat{\tau}_{1}$ and $\hat{\tau}_{2}$ are adjacent via an edge on $P$. From the definition of $H$ (\Cref{def:AbstractGraph}), this edge corresponds to an original edge in $E(G)$ which connects one vertex $v_{1}\in U(\hat{\tau}_{1})$ and another vertex $v_{2} \in U(\hat{\tau}_{2})$. Therefore, $v_{1}$ and $v_{2}$ belong to some $I_{1}\in {\cal I}(\hat{\tau_{1}})$ and $I_{2}\in {\cal I}(\hat{\tau_{2}})$ respectively, %
which implies $I_{1}$ and $I_{2}$ are in the same group of ${\cal R}$ and, $\hat{\tau}_{1}$ and $\hat{\tau}_{2}$ are in the same group of $\hat{\cal R}$.

\

\noindent\textbf{Case 2.} $\hat{\tau}_{1}$ and $\hat{\tau}_{2}$ are connected by a subpath through some first-type objects. By \Cref{lemma:PseudoBipartiteH}, the subpath will only go through a unique first-type object $\gamma$. Two edges of this subpath correspond to two edges $(v_{1},u_{1})$ and $(u_{2},v_{2})$ in $E(G)$, where $v_{1}\in U(\hat{\tau}_{1})$, $v_{2}\in U(\hat{\tau}_{2})$ and $u_{1},u_{2}\in V(\gamma)$, which implies $v_{1},v_{2}\in A_{\gamma}\setminus D$. Assume for a moment that each $v\in A_{\gamma}\setminus D$ belongs to some interval $I_{v}\in{\cal I}$. Also by \Cref{lemma:SensitiveEdges}, vertices in $A_{\gamma}\setminus D$ will be connected by edges in $\hat{E}_{\gamma}\cup\hat{F}_{\gamma}\subseteq E_{D}\cup\hat{F}_{D}$. Therefore, these intervals $I_{v}$ for all $v\in A_{\gamma}\setminus D$ are in the same group of ${\cal R}$, which implies $\hat{\tau}_{1}$ and $\hat{\tau}_{2}$ are in the same group of $\hat{\cal R}$. To see why each $v\in A_{\gamma}\setminus D$ belong to some $I_{v}\in{\cal I}$, %
observe that there is an original edge connecting $v$ and $\gamma$, so $v$ cannot belong to any other first-type object by \Cref{lemma:PseudoBipartiteH}, which means $v$ must belong to some $U(\hat{\tau})$ and $v$ is in some interval $I$ owned by $\hat{\tau}$. 

\

On the other direction, we show that if $\hat{\tau}$ and $\hat{\tau}'$ are in the same group of $\hat{\cal R}$, then $\hat{\tau}$ and $\hat{\tau}'$ are connected in $H$. 
Let $I$ and $I'$ be intervals owned by $\hat{\tau}$ and $\hat{\tau}'$ respectively. Then they are in the same group of ${\cal R}$ by the algorithm. We claim that the vertices of $I$ are connected to those of $I'$ in $G\setminus D$. Therefore, $\hat{\tau}$ and $\hat{\tau}'$ are connected in $H$ by \Cref{lemma:ConnEqOrigianlH}. The claim indeed holds because we group intervals using valid connectivity shown as follows. Because there are no failed vertices in intervals, the algorithm will automatically connect intervals using original edges, artificial edges and $D$-repairing edges in $E_{D}\cup\hat{F}_{D}$ not incident to failed vertices. Each of such artificial edges and repairing edges can be substituted by a valid path in $G\setminus D$ through some unaffected component. Merging groups with intervals owned by the same affected subtree is also valid since each affected subtree is a connected subgraph of $G\setminus D$.

Regarding the running time, first constructing ${\cal I}$ takes $O(\Delta pd\log n)$ time by \Cref{prop:Iproperty}.
Solving connectivity of intervals by \Cref{lemma:IntervalConnectivity} takes $O(|{\cal I}|^{2}\log n+|{\cal I}|\cdot|{\cal C}_{D}|\log^{4}|{\cal I}|\log n)$ time. By \Cref{lemma:SensitiveEdges}, computing $D$-repairing edges $\hat{F}$ takes $O(d^{2}\log |{\cal I}|)$ time and connecting intervals using edges in $\hat{F}_{D}$ takes $O(|\hat{F}_{D}|\log |{\cal I}|)=O(d^{2}\log|{\cal I}|)$ time, where the $O(\log |{\cal I}|)$ factor is from querying intervals containing the endpoints of each edge. Lastly, it spends extra $O(|{\cal I}|)$ time on connecting intervals owned by the same affected subtree. In total, the running time is $O(\Delta^{2}p^{2}d^{2}\log n+\Delta p^{2}d^{2}\log^{4}(\Delta pd)\log n)$ by plugging in $|{\cal I}|=O(\Delta pd)$ and $|{\cal C}_{D}|=O(pd)$.
\end{proof}

\subsection{Solving Connectivity of Intervals: Proof of \Cref{lemma:IntervalConnectivity}}
\label{sect:IntervalConnectivity}

In this section we will show the algorithm in \Cref{lemma:IntervalConnectivity}. We will connect the intervals using the ``hook and contract'' framework as in \Boruvka's MST algorithm. Starting with singleton sets of intervals, the framework will keep merging the sets until all of them are maximal sets of connected intervals.

Let $Z_{1,k}=\{I_{k}\}$ be the initial singleton set for all $I_{k}\in{\cal I}$, and we let ${\cal Z}_{1}$ collect all of them. During the algorithm, we will maintain a set ${\cal R}$ (initially ${\cal R}=\emptyset$), which collects maximal sets of connected intervals. At the $t$-th \Boruvka's phase, we will find for each $Z_{t,k}\in{\cal Z}_{t}$ a neighbor, which is another $Z_{t,k'}\in{\cal Z}_{t}$ such that there exists $I\in Z_{t,k}$ and $I'\in Z_{t,k'}$ adjacent. If there is no neighbor for some $Z_{t,k}$, it means that $Z_{t,k}$ has been a maximal set of connected intervals, and we add it into ${\cal R}$. Finally, we merge the sets with neighbors according to the neighbor-relations, which generates the collection ${\cal Z}_{t+1}$ for the next phase. If ${\cal Z}_{t+1}$ is empty, we return ${\cal R}$ as the desired partition. One cay easily verify that $|{\cal Z}_{t+1}|\leq |{\cal Z}_{t}|/2$ for each phase $t$, so the number of phases is bounded by $\log(|{\cal Z}_{1}|)=\log|{\cal I}|$. %

\subsubsection{The \textsc{FindNeighbor} Function}
\label{sect:FindNeighbors}

The ``find-neighbor'' operations are delegated to the function $\FindNeighbors(t,Z_{t,k})$. The inputs are the phase number $t$ and some $Z_{t,k}\in {\cal Z}_{t}$, and the required output is another $Z_{t,k'}\in{\cal Z}_{t}$ adjacent to $Z_{t,k}$ (or claiming such $Z_{t,k'}$ does not exist), where we say $Z_{t,k}$ and $Z_{t,k'}$ are \textit{adjacent} if there exists two adjacent intervals $I\in Z_{t,k}$ and $I'\in Z_{t,k'}$.

First, we let ${\cal I}_{t}$ be an ordered list of intervals in ${\cal Z}_{k}$ by viewing each $Z_{t,k}$ a list of intervals with an arbitrary order and taking the concatenation from $Z_{t,1}$ to $Z_{t,|{\cal Z}_{k}|}$. %
By the definition, each $Z_{t,k}$ is a set of intervals locating consecutively in the list ${\cal I}_{t}$. Let $\BatchConnCheck(t,k,l,r)$ be a function which returns a boolean indicating the existence of $Z_{t,k'}$ such that $l\leq k'\leq r$ and $Z_{t,k'}$ is adjacent to $Z_{t,k}$. With access to the function $\BatchConnCheck(t,k,l,r)$, the function $\FindNeighbors(t,Z_{t,k})$ can use a binary search scheme to find the minimum $k_{r}$ such that $k_{r}>k$ and $Z_{t,k}$ and $Z_{t,k_{r}}$ are adjacent. If there is no such $k_{r}$ exists, it will perform a symmetric binary search on the other side.

Implementing the indicator $\BatchConnCheck(t,k,l,r)$ is equivalent to solving a counting problem. By the definition of ${\cal I}_{t}$, intervals in $\bigcup_{k'=l}^{r}Z_{t,k'}$ are located consecutively on ${\cal I}_{t}$ and we let them be ${\cal I}_{t}(x,y)$. The indicator responds affirmatively if and only if the following value $\delta(t,k,x,y)$ is greater than zero.
\[
\delta(t,k,x,y)=\sum_{I\in Z_{t,k}}\sum_{I'\in{\cal I}_{t}(x,y)}|E_{D}\cap (I\times I')|+|E_{D}\cap(I'\times I)|.
\]
By the fact that $E_{D}=E-\sum_{\gamma\in{\cal C}_{D}}(A_{\gamma}\times B_{\gamma}+\sum_{S\in{\cal S}_{\gamma}}S\times f(S))$, we can divide the whole counting problem into several subproblems.
\[
\delta(t,k,x,y)=\delta_{1}(t,k,l,r)-\delta_{2}(t,k,l,r)-\delta_{3}(t,k,l,r),
\]
where
\begin{align*}
\delta_{1}(t,k,x,y)&=\sum_{I\in Z_{t,k}}\sum_{I'\in{\cal I}_{t}(x,y)}|E\cap(I\times I')|+|E\cap(I'\times I)|,\\
\delta_{2}(t,k,x,y)&=\sum_{\gamma\in{\cal C}_{D}}\sum_{I\in Z_{t,k}}\sum_{I'\in{\cal I}_{t}(x,y)}|(A_{\gamma}\times B_{\gamma})\cap(I\times I')|+|(A_{\gamma}\times B_{\gamma})\cap(I'\times I)|,\\
\delta_{3}(t,k,x,y)&=\sum_{\gamma\in{\cal C}_{D}}\sum_{S\in{\cal S}_{\gamma}}\sum_{I\in Z_{t,k}}\sum_{I'\in{\cal I}_{t}(x,y)}|(S\times f(S))\cap(I\times I')|+|(S\times f(S))\cap(I'\times I)|.
\end{align*}

In what follows, we will construct some additional structures and use them to compute $\delta_{1}(t,k,x,y)$ and $\delta_{2}(t,k,x,y)$ in \Cref{sect:AdditionalStructure}. In section \Cref{sect:ComputingDelta3}, we will show the algorithm for computing $\delta_{3}(t,k,x,y)$.

\subsubsection{The Additional Structures}
\label{sect:AdditionalStructure}
We preprocess the following before the \Boruvka's algorithm starts. First we query the 2D range counting structure for each $I,I'\in{\cal I}$ and construct a counting table on ${\cal I}\times{\cal I}$ where we store explicitly the answer $|E\cap(I\times I')|$ in the entry $(I,I')$. For each interval $I\in{\cal I}$, compute and store its restriction on each list $A_{\gamma},B_{\gamma},C_{\gamma,j}$ for each affected component $\gamma$ and level $j$ of ${\cal S}_{\gamma}$. 

We then construct the following data structures at the beginning of the $t$-th \Boruvka's phase.
\begin{itemize}
    \item[(A)] We shuffle the initial counting table to get a new counting table $\{|E\cap(I\times I')|\}_{I\times I'\in{\cal I}_{t}\times{\cal I}_{t}}$ which indices are consistent with the order of ${\cal I}_{t}$. By preparing the 2D-prefix sum, for any 2D-range sum query given $1\leq a_{1}\leq a_{2}\leq |I|,1\leq b_{1}\leq b_{2}\leq |I|$, the sum $\sum_{a_{1}\leq a\leq a_{2}}\sum_{b_{1}\leq b\leq b_{2}}|E\cap({\cal I}_{t}(a)\times{\cal I}_{t}(b))|$ can be answered in $O(1)$ time.%
    \item[(B)] For each affected component $\gamma$, we construct a counting array $\hat{A}_{\gamma}=\{|I\cap A_{\gamma}|\}_{I\in{\cal I}_{t}}$, indexed by ${\cal I}_{t}$. It supports $O(1)$-time range sum query by calculating the prefix sum in advanced. Namely, given any $1\leq a_{1}\leq a_{2}\leq |I|$, the sum $\sum_{a_{1}\leq a\leq a_{2}}|{\cal I}_{t}(a)\cap A_{\gamma}|$ can be returned. Analogously, we construct such an array $\hat{B}_{\gamma}$ for $B_{\gamma}$ of each affected component $\gamma$.
    
    \item[(C)] For each affected component $\gamma$, we construct for $A_{\gamma}$ a set $M_{\gamma}$ of weighted triples. For each $1\leq i\leq |{\cal I}_{t}|$, there is a triple $(i,l,r)$ where $l$ and $r$ are the $A_{\gamma}$-ranks of endpoints of the restriction ${\cal I}_{t}(i)\cap A_{\gamma}$, and this triple has weight $|{\cal I}_{t}(i)\cap A_{\gamma}|$. We then construct the data structure in \Cref{lemma:3DRangeCounting} on $M_{\gamma}$.
    
    For each affected component $\gamma$ and each level $j$ of ${\cal S}_{\gamma}$, we construct for $C_{\gamma,j}$ a set $N_{\gamma,j}$ with the data structure in \Cref{lemma:3DRangeCounting} analogously, except that the weight of each triple $(i,l,r)$ is the total $C_{\gamma,j}$-weight of vertices in ${\cal I}_{t}(i)\cap C_{\gamma,j}$.
\end{itemize}

\begin{lemma}
Given a set $Q$ of weighted triples, there is a data structure supporting the following queries in $O(\log^2 |Q|)$ time. Given $1\leq x\leq y\leq |{\cal I}|,1\leq L\leq R\leq n$, the interface $\threeDRangeSum(x,y,L,R)$ will return the total weight of triples $(i,l,r)\in Q$ such that $x\leq i\leq y$ and $L\leq l\leq r\leq R$. The data structure occupies $O(|Q|\log^{2}|Q|)$ space and it can be constructed in $O(|Q|\log^{2}|Q|)$ time.
\label{lemma:3DRangeCounting}
\end{lemma}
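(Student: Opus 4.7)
The query asks for the weighted sum of triples $(i,l,r) \in Q$ that lie in the 3D box $[x,y] \times [L,R] \times [L,R]$ subject to $l \le r$ (which already holds for every triple in $Q$ by construction). Equivalently, after fixing the $i$-range, we must answer a 2D dominance-type query that counts weighted points with $l \ge L$ and $r \le R$. The plan is a textbook range-tree-over-persistent-segment-tree composition, tailored to this shape.

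The plan is, first, to design an efficient 2D sub-structure for the subproblem: given a weighted point set $P \subseteq \mathbb{N} \times \mathbb{N}$, preprocess it so that for any query $(L,R)$ we can return the total weight of points $(l,r) \in P$ with $l \ge L$ and $r \le R$. For this, I would sort $P$ in decreasing order of $l$, start from an empty segment tree indexed by the $r$-coordinate, and insert the points one by one in this order, making each insertion \emph{persistent}: each insertion creates a new version using $O(\log |P|)$ additional nodes. To answer a query, binary-search for the latest version $k$ whose inserted points all have $l \ge L$, and then perform an $O(\log |P|)$ prefix-sum query on that version for the range $r \le R$. This yields an $O(|P| \log |P|)$-space, $O(|P| \log |P|)$-time construction with $O(\log |P|)$ query time.

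Next, I would layer this on top of a balanced BST (a range tree) keyed by the $i$-coordinate of the triples in $Q$. For every internal node $v$ of this outer tree, collect the triples in $v$'s subtree (projected onto their $(l,r)$ coordinates) and build the 2D sub-structure of the previous paragraph on them. Since each triple belongs to $O(\log |Q|)$ outer nodes, the total size of the 2D sub-structures summed over $v$ is $\sum_v |Q_v| \log |Q_v| = O(|Q| \log^2 |Q|)$, matching the claimed space and construction time bounds. For a query $\threeDRangeSum(x,y,L,R)$, decompose $[x,y]$ into the canonical set of $O(\log |Q|)$ disjoint outer-tree nodes whose subtrees exactly cover the triples with $x \le i \le y$, then for each such node query its 2D sub-structure with $(L,R)$ in $O(\log |Q|)$ time, and sum the results; this gives the claimed $O(\log^2 |Q|)$ total query time.

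The only potentially delicate point is the 2D sub-structure handling the hybrid ``$l \ge L$, $r \le R$'' constraint rather than a plain axis-aligned rectangle, but the decreasing-$l$ insertion order turns the $l$-constraint into a version selection, after which only a one-sided $r \le R$ prefix query remains. All other ingredients, namely balanced BST decomposition of $[x,y]$ into $O(\log |Q|)$ canonical intervals and prefix-sum queries on persistent segment trees, are standard. I therefore expect no substantive obstacle beyond bookkeeping when combining the bounds across the $O(\log |Q|)$ canonical outer-tree nodes.
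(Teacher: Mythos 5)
Your construction is correct and is exactly the route the paper takes: the paper's own ``proof'' simply observes that the query is a standard weighted 3D orthogonal range counting problem (the constraint $l\le r$ holds for every triple in $Q$ by construction) and cites textbook structures such as the range tree and the persistent segment tree, which is precisely the range-tree-over-persistent-segment-tree composition you spell out. The bounds you derive match the claimed $O(|Q|\log^2|Q|)$ space and construction time and $O(\log^2|Q|)$ query time.
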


To prove \Cref{lemma:3DRangeCounting}, it suffices to show a weighted 3D range counting structure. Note that since the data structure in \Cref{lemma:3DRangeCounting} will be constructed in the update phase, there is no need to reduce the space, and a textbook data structure suffices, e.g. the range tree \cite{Ben80} or a persistent segment tree \cite{BW80,DSST89}.

\begin{lemma}
Constructing the additional data structures takes $O(|{\cal I}|^{2}\log n+|{\cal I}|\cdot|{\cal C}_{D}|\log^{3}|{\cal I}|\log n)$ time.
\label{lemma:AddStructConstruct}
\end{lemma}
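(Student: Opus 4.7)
The proof plan is to account for each type of data structure construction separately and sum their costs over the phases. There are two categories: one-shot preprocessing that happens before any \Boruvka's phase begins, and per-phase construction that happens at the start of each of the $O(\log|{\cal I}|)$ phases.

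For the one-shot preprocessing, the dominant cost is populating the initial 2D counting table $\{|E\cap(I\times I')|\}_{I,I'\in{\cal I}}$ by issuing one query to the global 2D range counting structure of \Cref{lemma:2DRangeCounting} per ordered pair $(I,I')$. Each query costs $O(\log n)$, giving a total of $O(|{\cal I}|^{2}\log n)$. The other preprocessing item---computing, for each $I\in{\cal I}$ and each affected component $\gamma$, the restrictions of $I$ on the lists $A_\gamma$, $B_\gamma$ and $C_{\gamma,j}$---costs $O(\log n)$ per restriction by binary search. Summed over $|{\cal I}|$ intervals, $|{\cal C}_D|$ components, and the $O(\log|{\cal I}|)$ levels per segmentation hierarchy, this is $O(|{\cal I}|\cdot|{\cal C}_D|\log^{2}n)$, which is absorbed into the stated bound.

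For the per-phase construction, I will bound one phase and then multiply by $O(\log|{\cal I}|)$ at the end. Item (A) consists of reordering the $|{\cal I}|^2$ entries according to ${\cal I}_t$ and computing a 2D prefix sum, which takes $O(|{\cal I}|^{2})$ per phase and hence $O(|{\cal I}|^{2}\log|{\cal I}|)$ overall, absorbed into $O(|{\cal I}|^2\log n)$. Item (B) builds, for each affected $\gamma$, two length-$|{\cal I}_t|$ arrays $\hat A_\gamma$ and $\hat B_\gamma$; each entry $|{\cal I}_t(i)\cap A_\gamma|$ (or $|{\cal I}_t(i)\cap B_\gamma|$) is read off from the precomputed restriction in $O(1)$ time, and the prefix sum is another $O(|{\cal I}|)$ pass, giving $O(|{\cal I}|\cdot|{\cal C}_D|)$ per phase. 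Item (C) is the bottleneck: for each affected $\gamma$ and each of the $O(\log|{\cal I}|)$ levels $j$ of $\cS_\gamma$, we assemble $|{\cal I}_t|$ weighted triples (each triple's weight being read off from precomputed restrictions / prefix sums on $C_{\gamma,j}$ in $O(1)$ time) and then invoke \Cref{lemma:3DRangeCounting}, which spends $O(|Q|\log^{2}|Q|)=O(|{\cal I}|\log^{2}|{\cal I}|)$ time. This gives $O(|{\cal I}|\log^{2}|{\cal I}|\cdot\log|{\cal I}|)=O(|{\cal I}|\log^{3}|{\cal I}|)$ per affected component per phase; multiplying by $|{\cal C}_D|$ components and by $O(\log|{\cal I}|)$ phases is already captured by summing to $O(|{\cal I}|\cdot|{\cal C}_D|\log^{3}|{\cal I}|\log n)$ after absorbing the extra $\log$ for levels into the $\log n$ factor (since the number of levels is at most $\log n$).

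No step is genuinely hard; the only subtle point is remembering that the restrictions of intervals on $A_\gamma$, $B_\gamma$, $C_{\gamma,j}$ are done \emph{once} during preprocessing (not once per phase), which is what keeps the $|{\cal I}|\cdot|{\cal C}_D|$ factor from picking up an extra $\log n$. Summing all contributions gives the advertised $O(|{\cal I}|^{2}\log n+|{\cal I}|\cdot|{\cal C}_D|\log^{3}|{\cal I}|\log n)$ bound.
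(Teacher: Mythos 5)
Your overall decomposition is the same as the paper's: a one-shot preprocessing step (the $|{\cal I}|^2$ queries to the 2D range counting structure plus the restrictions) followed by per-phase construction of (A), (B), (C), with (C) dominating at $O(|{\cal I}|\cdot|{\cal C}_{D}|\log^{2}|{\cal I}|\log n)$ per phase and a final multiplication by the $O(\log|{\cal I}|)$ phases. That part of your accounting is correct, modulo the fact that you twice describe ${\cal S}_\gamma$ as having $O(\log|{\cal I}|)$ levels when it in fact has $\lceil\log|A_\gamma|\rceil+1=O(\log n)$ levels; in the item-(C) tally you repair this by charging the levels to a $\log n$ factor, so the $\log^{3}|{\cal I}|\log n$ term comes out right.

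There is, however, one genuine gap: your preprocessing of the restrictions $I\cap C_{\gamma,j}$ does a separate $O(\log n)$ binary search for each of the $O(\log n)$ levels $j$, giving $O(|{\cal I}|\cdot|{\cal C}_{D}|\log^{2}n)$, and you assert this is absorbed into the stated bound. It is not: the relevant term in the lemma is $|{\cal I}|\cdot|{\cal C}_{D}|\log^{3}|{\cal I}|\log n$, and $\log^{2}n\le\log^{3}|{\cal I}|\log n$ requires $\log n\le\log^{3}|{\cal I}|$, which fails when $|{\cal I}|=O(\Delta pd)$ is small relative to $n$ (and it is not dominated by $|{\cal I}|^{2}\log n$ either, e.g.\ when $\Delta=O(1)$). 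The paper avoids the extra $\log n$ by exploiting \Cref{lemma:MonotonicallyIncreasing}: each $C_{\gamma,j}$ is a \emph{consecutive} sublist of $C_{\gamma}$, so a single binary search computes $I\cap C_{\gamma}$ in $O(\log n)$ time and each per-level restriction $I\cap C_{\gamma,j}$ is then derived in $O(1)$ time, for $O(|{\cal I}|\cdot|{\cal C}_{D}|\log n)$ in total. With that substitution your argument establishes the claimed bound.
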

\begin{proof}
First we analyse the preprocessing step and show it takes $O(|{\cal I}|^{2}\log n + |{\cal I}|\cdot|{\cal C}_{D}|\log n)$ time. To construct the initial counting table, we need $|{\cal I}|^{2}$ queries on the 2D range counting structure, so it takes totally $O(|{\cal I}|^{2}\log n)$ by \Cref{lemma:2DRangeCounting}. Querying the restriction of some interval on some list $A_{\gamma}$ or $B_{\gamma}$ takes $O(\log n)$ time, so all restrictions on lists $A_{\gamma}$ and $B_{\gamma}$ can be computed in $O(|I|\cdot|{\cal C}_{D}|\log n)$ time summing over $|{\cal I}|$ intervals and $|{\cal C}_{D}|$ affected components. Regarding the restrictions on $C_{\gamma,j}$, note that $C_{\gamma,j}$ is a consecutive sublist of $C_{\gamma}$, so after querying the restriction $I\cap C_{\gamma}$ in $O(\log n)$ time, we can get $I\cap C_{\gamma,j}$ for each level $j$ of ${\cal S}_{j}$ in $O(1)$ time. There are $O(\log n)$ levels in each ${\cal S}_{\gamma}$, so the total time is $O(|I|\cdot|{\cal C}_{D}|\log n)$. 

The construction of additional structures for each \Boruvka's phase takes $O(|{\cal I}|^{2}+|{\cal I}|\cdot|{\cal C}_{D}|\log^{2}|{\cal I}|\log n)$ time as shown below, so the total time is $O(|{\cal I}|^{2}\log |{\cal I}|+|{\cal I}|\cdot|{\cal C}_{D}|\log^{3}|{\cal I}|\log n)$ summing over $\log|{\cal I}|$ phases. Consider each phase. Part (A) trivially takes $O(|{\cal I}|^{2})$ time and part (B) takes $O(|{\cal I}|\cdot|{\cal C}_{D}|)$ time when restrictions on $A_{\gamma}$ and $B_{\gamma}$ have prepared. Regarding part (C), each set $M_{\gamma}$ and $N_{\gamma,j}$ has size $O(|{\cal I}|)$, so constructing the data structure in \Cref{lemma:3DRangeCounting} takes $O(|{\cal I}|\log^{2}|{\cal I}|)$ time. Summing over each $\gamma\in{\cal C}_{D}$ and each level $j$ of ${\cal S}_{\gamma}$, the total construction time for (C) is $O(|{\cal I}|\cdot|{\cal C}_{D}|\log^{2}|{\cal I}|\log n)$.
\end{proof}

We can immediately compute $\delta_{1}(t,k,x,y)$ and $\delta_{2}(t,k,x,y)$ using the additional data structures. The subproblem $\delta_{1}(t,k,x,y)$ can be computed in $O(1)$ time using part (A). Recall that intervals in $Z_{t,k}$ are located consecutively in ${\cal I}_{t}$, so $Z_{t,k}$ can be written as ${\cal I}_{t}(k_{1},k_{2})$. 
Then \[
\delta_{1}(t,k,x,y)=\sum_{i=k_{1}}^{k_{2}}\sum_{i'=x}^{y}|E\cap ({\cal I}_{t}(i)\times{\cal I}_{t}(i'))|+\sum_{i=k_{1}}^{k_{2}}\sum_{i'=x}^{y}|E\cap ({\cal I}_{t}(i')\times{\cal I}_{t}(i))|\]
is exactly two 2D-range sums on (A). 

We now computes $\delta_{2}(t,k,x,y)$ using part (B). We can rewrite $\delta_{2}(t,k,x,y)$ in the following form.
\[
\delta_{2}(t,k,x,y)=\sum_{\gamma\in{\cal C}_{D}}\left[\left(\sum_{i=k_{1}}^{k_{2}}|A_{\gamma}\cap {\cal I}_{t}(i)|\right)\cdot\left(\sum_{i=x}^{y}|B_{\gamma}\cap {\cal I}_{t}(i)|\right)+\left(\sum_{i=x}^{y}|A_{\gamma}\cap {\cal I}_{t}(i)|\right)\cdot\left(\sum_{i=k_x}^{k_y}|B_{\gamma}\cap {\cal I}_{t}(i)|\right)\right].
\]
By enumerating $\gamma\in {\cal C}_{D}$, the calculation only requires totally $O(|{\cal C}_{D}|)$ queries on parts (A) and (B).

\subsubsection{Computing $\delta_{3}(t,k,x,y)$}
\label{sect:ComputingDelta3}

We first rearrange the expression of $\delta_{3}(t,k,l,r)$ as follows.
\[
\delta_{3}(t,k,l,r)=\sum_{\gamma\in{\cal C}_{D}}\sum_{j=1}^{\lceil\log|A_{\gamma}|\rceil+1}\sum_{I\in Z_{t,k}}Q_{1}(t,\gamma,j,I,x,y)+Q_{2}(t,\gamma,j,I,x,y),
\]
where
\begin{align*}
Q_{1}(t,\gamma,j,I,x,y)&=\left|\left(\sum_{S\in{\cal S}_{\gamma,j}}S\times f(S)\right)\cap \left(I\times \bigcup_{I'\in {\cal I}_{t}(x,y)}I'\right)\right|,\\
Q_{2}(t,\gamma,j,I,x,y)&=\left|\left(\sum_{S\in{\cal S}_{\gamma,j}}S\times f(S)\right)\cap \left(\bigcup_{I'\in {\cal I}_{t}(x,y)}I'\times I\right)\right|.
\end{align*}

The algorithm we show below can compute $Q_{1}(t,\gamma,j,I,x,y)$ and $Q_{2}(t,\gamma,j,I,x,y)$ in nearly constant time. The key property for the efficient computation is the monotonically increasing property of the witness function $f$ over ${\cal S}_{\gamma,j}$ (\Cref{lemma:MonotonicallyIncreasing}). Take the function $Q_{1}$ as an example. Roughly speaking, segments in ${\cal S}_{\gamma,j}$ that intersects $I$ have their witnesses locating in a consecutive range on $C_{\gamma,j}$ by \Cref{lemma:MonotonicallyIncreasing}. In the other direction, ignoring the boundary case, segments with witnesses in this range will totally included by $I$. Therefore, if we view $I\times \bigcup_{I'\in {\cal I}_{t}(x,y)}I'$ as ``conditions'' of the counting problem $Q_{1}$, we can reduce the condition $I$ on the first dimension into a new condition on the second dimension, i.e. the range on $C_{\gamma,j}$. The task basically becomes a counting problem on the intersection of the range on ${\cal C}_{\gamma,j}$ and intervals in ${\cal I}_{t}(x,y)$, which can be answered by part (D) of the additional data structures.

\begin{figure}
    \centering
    \includegraphics[scale=0.5]{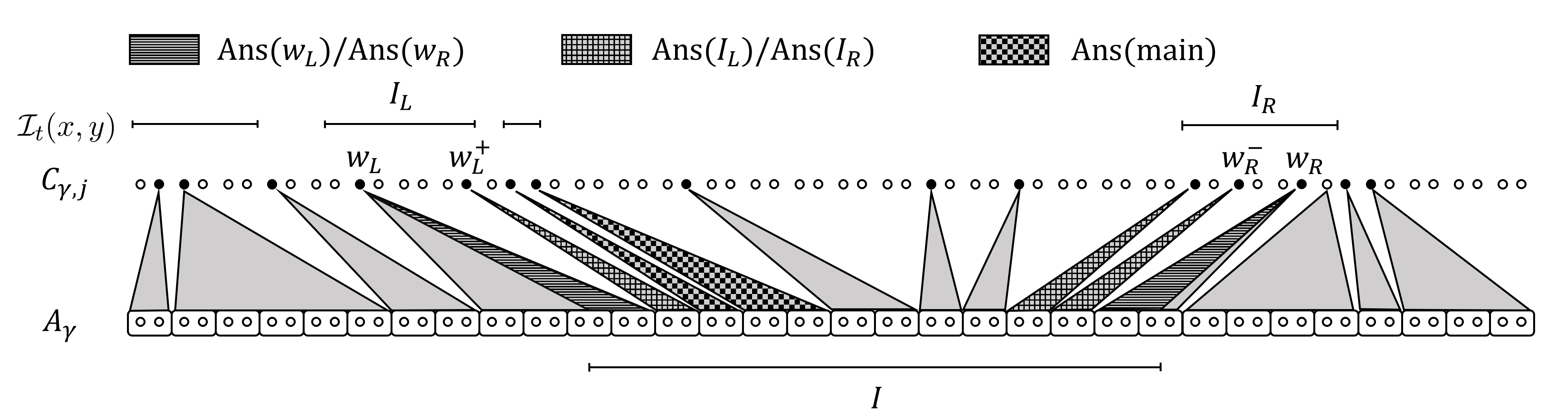}
    \caption{An illustration for computing $Q_{1}(t,\gamma,j,I,x,y)$. The list $A_{\gamma}$ (vertices below) are partitioned by segments in ${\cal S}_{\gamma,j}$. The witness list $C_{\gamma,j}$ (black vertices above) is a sublist of $A_{\gamma}$. The intervals in this figure have been restricted on $A_{\gamma}$.}
    \label{fig:delta3Q1}
\end{figure}

We now formally describe the algorithm for computing $Q_{1}(t,\gamma,j,I,x,y)$. See \Cref{fig:delta3Q1} for an illustration. 
\begin{itemize}
    \item First, we take the restriction of $I$ on $A_{\gamma}$, and let $v_{L}$ and $v_{R}$ denote the leftmost vertex and the rightmost vertex in $I\cap A_{\gamma}$. If $I\cap A_{\gamma}$ is empty, the algorithm terminates with zero as the output. Let $S_{L}$ (resp. $S_{R}$) be the unique segment in ${\cal S}_{\gamma,j}$ containing $v_{L}$ (resp. $v_{R}$). We further denote their witnesses $f(S_{L})$ and $f(S_{R})$ by $w_{L}$ and $w_{R}$ respectively.
    
    \item In this step we will count the part contributed by the witnesses $w_{L}$ and $w_{R}$. Precisely, the part $w_{L}$ contributes to the answer is
    \[
        \Ans(w_{L})=\left|\left(f^{-1}(w_{L},{\cal S}_{\gamma,j})\times w_{L}\right)\cap \left(I\times \bigcup_{I'\in {\cal I}_{t}(x,y)}I'\right)\right|,
    \]
    where $f^{-1}(w_{L},{\cal S}_{\gamma,j})$ denotes the union of segments in ${\cal S}_{\gamma,j}$ with $w_{L}$ as their witness. We first check if $w_{L}$ belongs to an interval in ${\cal I}_{t}(x,y)$. If so, $\Ans(w_{L})=|f^{-1}(w_{L},{\cal S}_{\gamma,j})\cap I|$, otherwise $\Ans(w_{L})=0$. In the former case, we first take the restriction $I\cap A_{\gamma}$ and then let $\Ans(w_{L})$ be the size of intersection of $I\cap A_{\gamma}$ and $f^{-1}(w_{L},{\cal S}_{\gamma,j})$ (note that vertices in $f^{-1}(w_{L},{\cal S}_{\gamma,j})$ are located consecutively on $A_{\gamma}$ by \Cref{lemma:MonotonicallyIncreasing}). If $w_{L}$ and $w_{R}$ are two distinct witnesses, we compute $\Ans(w_{R})=\left|\left(f^{-1}(w_{R},{\cal S}_{\gamma,j})\times w_{R}\right)\cap \left(I\times \bigcup_{I'\in {\cal I}_{t}(x,y)}I'\right)\right|$ similarly, otherwise the algorithm terminates with the answer $\Ans(w_{L})$.
    
    \item In the following steps we will compute the part contributed by witnesses between $w_{L}$ and $w_{R}$ on $C_{\gamma,j}$ (not including $w_{L},w_{R}$). Note that witnesses located before $w_{L}$ or after $w_{R}$ will have no contribution by \Cref{lemma:MonotonicallyIncreasing}. We let $w_{L}^{+}$ be the successor of $w_{L}$ and let $w_{R}^{-}$ be the predecessor of $w_{R}$ on $C_{\gamma,j}$. We denote witnesses from $w_{L}^{+}$ to $w_{R}^{-}$ on $C_{\gamma,j}$ by a sublist $C_{\gamma,j}(w_{L}^{+},w_{R}^{-})$. If the list $C_{\gamma,j}(w_{L}^{+},w_{R}^{-})$ is empty, i.e. $w_{R}^{-}$ is before $w_{L}^{+}$ on $C_{\gamma,j}$, the algorithm terminates with the current answer $\Ans(w_{L})+\Ans(w_{R})$. Otherwise, we compute
    \[
    \Ans(w_{L}^{+},w_{R}^{-})=\left|\sum_{w\in C_{\gamma,j}(w_{L}^{+},w_{R}^{-})}\left(f^{-1}(w,{\cal S}_{\gamma,j})\times w\right)\cap\left(I\times\bigcup_{I'\in{\cal I}_t(x,y)}I'\right)\right|.
    \]
    By \Cref{lemma:MonotonicallyIncreasing}, for each $w\in C_{\gamma,j}(w_{L}^{+},w_{R}^{-})$, we have $f^{-1}(w,{\cal S}_{\gamma,j})\subseteq I$. Therefore, we can rewrite $\Ans(w_{L}^{+},w_{R}^{-})$ as follows.
    \[
    \Ans(w_{L}^{+},w_{R}^{-})=\sum_{w\in W}|f^{-1}(w,{\cal S}_{\gamma,j})|,\text{ where }W=C_{\gamma,j}(w_{L}^{+},w_{R}^{-})\cap \bigcup_{I'\in{\cal I}_t(x,y)}I'.
    \]
    Recall that for each $w$ on the list $C_{\gamma,j}$, $|f^{-1}(w,{\cal S}_{\gamma,j})|$ is exactly the $C_{\gamma,j}$-weight of $w$.
    
    \item Because $C_{\gamma,j}(w_{L}^{+},w_{R}^{-})$ is a consecutive sublist on $C_{\gamma,j}$, there will be at most two \textit{special} intervals $I'\in{\cal I}_{t}(x,y)$ such that the restriction $I'\cap C_{\gamma,j}$ crosses the boundary of $C_{\gamma,j}(w_{L}^{+},w_{R}^{-})$ (namely $(I'\cap C_{\gamma,j})\cap C_{\gamma,j}(w_{L}^{+},w_{R}^{-})\neq\emptyset$ and $(I'\cap C_{\gamma,j})\setminus C_{\gamma,j}(w_{L}^{+},w_{R}^{-})\neq\emptyset$). In this step we compute the contribution of the special intervals to $\Ans(w_{L}^{+},w_{R}^{-})$.
    
    Observe that each special interval must contain either $w_{L}^{+}$ or $w_{R}^{-}$, so only $I_{L}$ and $I_{R}$ can be special, where $I_{L}$ (resp. $I_{R}$) is the interval in ${\cal I}$ such that $w_{L}^{+}\in I_{L}$ (resp. $w_{R}^{-}\in I_{R}$). W.l.o.g. we assume that $I_{L}$ and $I_{R}$ exist. $I_{L}$ is a special interval if and only if $I_{L}\in {\cal I}_{t}(x,y)$ and the leftmost vertex in $I_{L}\cap C_{\gamma,j}$ is before $w_{L}^{+}$, which can be verified easily. If $I_{L}$ is indeed special, its contribution is
    \[
        \Ans(I_{L})=\sum_{w\in C_{\gamma,j}(w_{L}^{+},w_{R}^{-})\cap I_{L}} |f^{-1}(w,{\cal S}_{\gamma,j})|,
    \]
    which can be answered by a range sum query on $C_{\gamma,j}$. Similarly, if $I_{R}$ is not the same interval as $I_{L}$ and $I_{R}$ is special, its contribution is $\Ans(I_{R})=\sum_{w\in C_{\gamma,j}(w_{L}^{+},w_{R}^{-})\cap I_{R}}|f^{-1}(w,{\cal S}_{\gamma,j})|$.
    \item The remaining part of $\Ans(w_{L}^{+},w_{R}^{-})$ is contributed by the non-special intervals in ${\cal I}_{t}(x,y)$. For each of the non-special interval $I$, the restriction $I\cap C_{\gamma,j}$ is either disjoint with $C_{\gamma,j}(w_{L}^{+},w_{R}^{-})$ or totally inside $C_{\gamma,j}(w_{L}^{+},w_{R}^{-})$. Therefore, their contribution is
    \[
    \Ans(\text{main})=\sum_{I\in{\cal I}_{t}(x,y),I\cap C_{\gamma,j}\subseteq C_{\gamma,j}(w_{L}^{+},w_{R}^{-})}\sum_{w\in I\cap C_{\gamma,j}} |f^{-1}(w,{\cal S}_{\gamma,j})|,
    \]
    which is exactly the return of querying $\threeDRangeSum(x,y,w_{L}^{+},w_{R}^{-})$ on the additional data structures of $N_{\gamma,j}$ in part (C). Note that here we use the same notation $w_{L}^{+}$ and $w_{R}^{-}$ to denote the $C_{\gamma,j}$-ranks of $w_{L}^{+}$ and $w_{R}^{-}$ respectively.
    \item Finally, we return
    \begin{align*}
    Q_{1}(t,\gamma,j,I,x,y)&=\Ans(w_{L})+\Ans(w_{R})+\Ans(w_{L}^{+},w_{R}^{-})\\
    &=\Ans(w_{L})+\Ans(w_{R})+\Ans(I_{L})+\Ans(I_{R})+\Ans(\text{main}).
    \end{align*}
\end{itemize}

The algorithm for $Q_{2}(t,\gamma,j,I,x,y)$ is similar and we sketch it as follows.
\begin{itemize}
    \item Because $I\cap C_{\gamma,j}$ is consecutive on $C_{\gamma,j}$, by \Cref{lemma:MonotonicallyIncreasing}, the segments with witnesses in $I\cap C_{\gamma,j}$ are located consecutive on ${\cal S}_{\gamma,j}$, which implies their union is a consecutive sublist of $A_{\gamma}$. We denote this sublist by $A_{\gamma}(v_{L},v_{R})$, where $v_{L}$ and $v_{R}$ are the leftmost vertex and the rightmost vertex of this sublist. Therefore, $Q_{2}$ can be written as
    $Q_{2}=\left|A_{\gamma}(v_{L},v_{R})\cap \bigcup_{I'\in{\cal I}_t(x,y)}I'\right|$.
    \item Let $I_{L}\in {\cal I}$ (resp. $I_{R}\in{\cal I}$) be the interval containing $v_{L}$ (resp. $v_{R}$). If $I_{L}$ exists, $I_{L}$ is in ${\cal I}_{t}(x,y)$ and $I_{L}\cap A_{\gamma}$ is not totally included by $A_{\gamma}(v_{L},v_{R})$, we compute the contribution of $I_{L}$, which is $\Ans(I_{L})=|A_{\gamma}(v_{L},v_{R})\cap I_{L}|$. Similarly, if $I_{R}$ exists, $I_{R}$ is not the same as $I_{L}$, $I_{R}$ is in ${\cal I}_{t}(x,y)$ and $I_{R}\cap A_{\gamma}$ is not a subset of $A_{\gamma}(v_{L},v_{R})$, we compute $\Ans(I_{R})=|A_{\gamma}(v_{L},v_{R})\cap I_{R}|$.
    \item The remaining intervals in ${\cal I}_{t}(x,y)$ will have their restrictions on $A_{\gamma}$ either disjoint with $A_{\gamma}(v_{L},v_{R})$ or totally inside $A_{\gamma}(v_{L},v_{R})$. Their contribution $\Ans(\text{main})$ can be answered by the query $\threeDRangeSum(x,y,v_{L},v_{R})$ on the structure of $M_{\gamma}$ in part (C).
    \item The return is $Q_{2}(t,\gamma,j,I,x,y)=\Ans(I_{L})+\Ans(I_{R})+\Ans(\text{main})$.
\end{itemize}

\begin{proof}[Proof of \Cref{lemma:IntervalConnectivity}]

The correctness of the algorithm directly follows the description. We will focus on analysing the running time.

As mentioned, the \Boruvka's  algorithm has $\log|{\cal I}|$ phases. For each phase $t$, we invoke the function $\FindNeighbors(t,Z_{t,k})$ for each $Z_{t,k}\in {\cal Z}_{t}$. Solving each $\FindNeighbors(t,Z_{t,k})$ via binary search needs to query the indicator $\BatchConnCheck(t,k,l,r)$ with fix parameters $t$ and $k$ at most $O(\log|{\cal Z}_{k}|)$ times. As shown below, with additional structures in \Cref{sect:AdditionalStructure}, solving $\BatchConnCheck(t,k,l,r)$ takes $O(|{\cal C}_{D}|\cdot|Z_{t,k}|\log^{2}|{\cal I}|\log n)$ time. Therefore, the total running time is 
\[
O(|{\cal I}|^{2}\log n+|{\cal I}|\cdot|{\cal C}_{D}|\log^{3}|{\cal I}|\log n)+\sum_{t\leq \log|{\cal I}|}\sum_{Z_{t,k}\in{\cal Z}_{t}} O(\log|{\cal Z}_{k}|)\cdot O(|{\cal C}_{D}|\cdot|Z_{t,k}|\log^{2}|{\cal I}|\log n),
\]
where the first term is the construction time of the additional structures from \Cref{lemma:AddStructConstruct}.
It is exactly $O(|{\cal I}|^{2}\log n+|{\cal I}|\cdot|{\cal C}_{D}|\log^{4}|{\cal I}|\log n)$ 
because $\sum_{Z_{t,k}\in{\cal Z}_{t}}|Z_{t,k}|\leq |{\cal I}_{t}|\leq|{\cal I}|$ for any $t$.

It remains to show that computing $\BatchConnCheck(t,k,l,r)$ takes $O(|{\cal C}_{D}|\cdot|Z_{t,k}|\log^{2}|{\cal I}|\log n)$ time. We analyse the three subproblems one by one. Computing $\delta_{1}(t,k,x,y)$ requires only two 2D-range sum query on part (A), which takes $O(1)$ time. Computing $\delta_{2}(t,k,x,y)$ needs $O(|{\cal C}_{D}|)$ queries on part (B) each of which takes constant time, so the total time is $O(|{\cal C}_{D}|)$. Computing $\delta_{3}(t,k,x,y)$ takes $O(|{\cal C}_{D}|\cdot|Z_{t,k}|\log^{2}|{\cal I}|\log n)$ because it can be decomposed into $O(|{\cal C}_{D}|\cdot|Z_{t,k}|\log n)$ computations of $Q_{1}(t,\gamma,j,I,x,y)$ and $Q_{2}(t,\gamma,j,I,x,y)$. Each computation of $Q_{1}$ or $Q_{2}$ takes $O(\log^{2}|{\cal I}|)$ time because the bottleneck is querying part (C).

\end{proof}

\subsection{The Query Algorithm}
\label{sect:QueryAlgo}

The query algorithm is identical to that in Section 5.2 of \cite{DuanP20}, since we have implemented the required interfaces. We describe the algorithm as follows for completeness. Let $u,v\notin D$ be the given vertices. The goal of the query algorithm is to find two affected subtrees connected to $u$ and $v$ in $G\setminus D$, so the connectivity of $u$ and $v$ is equivalent to the connectivity of these two subtrees, which can be answered by checking if they belong to the same group of $\hat{\cal R}$.

First we find the component $\gamma(u)$ with $u$ in its terminal set $U(\gamma(u))$. If $\gamma(u)$ is an unaffected component, let $\hat{\gamma}(u)$ be the most ancestral unaffected component of $\gamma(u)$, which can be identified in $O(\log d)$ time as follows. Fix some preorder of components in ${\cal C}$ (recall that they form a tree structure). Let $\gamma_{-}(u)$ be the predecessor of $\gamma(u)$ among components $\gamma$ with $U(\gamma)\cap D\neq \emptyset$
(note that it is not the same with the definition of affected components). Then $\hat{\gamma}(u)$ is exactly the $\LCA(\gamma(u),\gamma_{-}(u))$'s child such that it is an ancestor of $\gamma(u)$. Specially, if $\gamma_{-}(u)$ does not exist, then $\hat{\gamma}(u)$ is the root. Analogously, we find $\gamma(v)$ and $\hat{\gamma}(v)$. If $\hat{\gamma}(u)=\hat{\gamma}(v)$, we claim $u$ and $v$ are connected. The bottleneck is the predecessor search for $\gamma_{-}(u)$, which takes $O(\log d)$ time.

If $\gamma(u)$ itself is affected, the affected tree $\tau(\gamma(u))$ has an affected subtree $\hat{\tau}$ with $u$ as a terminal, which can be found in $O(\log d)$ time by \Cref{lemma:TreeToInterval}. Otherwise, we scan the list $A_{\hat{\gamma}(u)}$ until we find a vertex $u'\notin D$, which takes $O(d)$ time. If there is no such $u'$, we claim that $u$ and $v$ is not connected since $u\in V(\hat{\gamma}(u))$, $v\notin V(\hat{\gamma}(u))$ and $\hat{\gamma}(u)$ has no edge connecting outside after failures. If we find such $u'$, then $u$ and $u'$ are connected in $G\setminus D$ and $u'$ is a terminal of some affected component, which implies we can find an affected subtree containing it in $O(\log d)$ time by \Cref{lemma:TreeToInterval}. We do the same thing for $v$.

Obviously, the query time is $O(d)$ and the correctness follows \Cref{thm:UpdateAlgorithm} and \Cref{lemma:ConnEqOrigianlH}.

\section{Conditional Lower Bounds}
\label{sect:LowerBound}

In this section, we will show conditional lower bounds on major complexity measures of the connectivity oracle with batched vertex failures, including update time and construction time. The lower bounds on query time and space have been shown in \cite{HKNS15,DuanP20}. Combining all the lower bounds, our oracle with construction time $O(m)+\tilde{O}(\mbar d_{\star})+\mbar^{1+o(1)}$, space $\tilde{O}(\mbar)$, update time $d^{2}n^{o(1)}$ and query time $O(d)$, where $\mbar=\min\{m,nd_{\star}\}$, is nearly optimal up to $n^{o(1)}$ factors. Namely, improvement on any measure without extra costs on other measures needs significant breakthrough.

In this section, we use the same notation as in \cite{HKNS15}.

\begin{definition}[$\littleo$ Notation] For any parameters $n_{1},n_{2},n_{3}$, we say that a function $f(n_{1},n_{2},n_{3})=\littleo(n_{1}^{c_{1}}n_{2}^{c_{2}}n_{3}^{c_{3}})$ iff there exists some constant $\epsilon>0$ such that $f=O(n_{1}^{c_{1}-\epsilon}n_{2}^{c_{2}}n_{3}^{c_{3}} + n_{1}^{c_{1}}n_{2}^{c_{2}-\epsilon}n_{3}^{c_{3}} + n_{1}^{c_{1}}n_{2}^{c_{2}}n_{3}^{c_{3}-\epsilon})$. We use the analogous definition for functions with one or two parameters.

\end{definition}

\paragraph{Lower Bound for Construction Time.} We will show a lower bound on construction time conditioning on \Cref{conj:BMM}, by reducing Boolean matrix multiplication (BMM) problems to $d$-vertex-failure connectivity problems.

\begin{conjecture}[No truly subcubic combinatorial BMM, \cite{AW14}]
In the Word RAM model with words of $O(\log n)$ bits, any combinatorial algorithm requires $n^{3-o(1)}$ time in expectation to compute the Boolean product of two $n\times n$ matrices.
\label{conj:BMM}
\end{conjecture}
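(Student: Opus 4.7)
The final statement is not a theorem to be proven but the widely-believed \emph{Combinatorial BMM Conjecture} of \cite{AW14}, imported as a hardness assumption. Since this is a longstanding open problem in fine-grained complexity, my ``proof proposal'' must honestly acknowledge that I do not expect to close it; instead I will sketch the strategies one might attempt and where each would break down. The first obstacle is foundational: the term ``combinatorial algorithm'' has no universally accepted formal definition. Any serious attempt at a proof must therefore begin by fixing a computational model intended to capture the intuitive notion while excluding Strassen-style algebraic identities, e.g.\ a Boolean circuit model with AND/OR/NOT gates of bounded fan-in, a restricted straight-line program model over $\{0,1\}$ that disallows subtraction-like cancellation, or a structured cell-probe model where each probe reads a constant number of input entries and the internal operations are limited to comparisons and Boolean combinations.

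With such a model fixed, the natural plan is to prove an $\Omega(n^{3-o(1)})$ lower bound on BMM by an information-theoretic or adversarial argument. First, I would try a direct counting/encoding approach: exhibit an explicit family of input matrix pairs whose products require the algorithm, in the restricted model, to touch $\Omega(n^{3-o(1)})$ input-cell pairs on average; one shows that if fewer inspections sufficed, the remaining uncertainty would leave too many candidate outputs consistent with the trace of the computation. Second, in parallel I would try a reduction-based route: identify a simpler problem (e.g.\ Triangle Detection, Set Disjointness in an appropriate communication-complexity variant, or an OMv-style online problem) for which a matching lower bound in the chosen model is \emph{itself} provable, and then reduce it to BMM combinatorially so that a truly subcubic combinatorial BMM algorithm would refute the simpler lower bound.

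The main obstacle in either route is the depressing state of unconditional lower bounds in complexity theory: we currently cannot even rule out $O(n\log n)$-size Boolean circuits for many explicit functions, let alone prove $\Omega(n^{3-o(1)})$ against a model rich enough to capture all ``natural'' combinatorial BMM algorithms. Any model weak enough for our counting arguments to succeed will likely be too weak to capture known combinatorial algorithms like Four-Russians (which achieves $O(n^3/\log^2 n)$), and any model strong enough to capture Four-Russians will put us squarely in the regime where unconditional superlinear lower bounds are a major open problem. For this reason, I would expect the proposal to fail in its ambitious form, and the realistic deliverable is a conditional statement: an equivalence between the Combinatorial BMM Conjecture and some other natural hardness assumption (for instance, a hardness result in a structured model of computation), which is why the paper invokes the statement as a conjecture rather than attempting to prove it.
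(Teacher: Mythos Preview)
Your assessment is correct: the statement is a conjecture imported from \cite{AW14} as a hardness assumption, and the paper makes no attempt to prove it---it simply states the conjecture and then uses it conditionally in \Cref{thm:ConstructionLowerBound}. Your discussion of why a proof is out of reach (the lack of a formal model for ``combinatorial'' and the absence of unconditional superlinear circuit lower bounds) is accurate and appropriately goes beyond what the paper says, since the paper treats the conjecture purely as a black-box assumption.
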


\begin{theorem}
For any constant $\delta\in(0,1]$, unless \Cref{conj:BMM} fails, there is no combinatorial algorithm for $d$-vertex-failure connectivity for $n$-vertices $m$-edge graphs, where $m=\Theta(n^{1+\delta})$ and $d=\Theta(n^{\delta})$, with construction time $c(n,d)=\littleo(nd^{2})$, update time $u(d)=\littleo(d^{1/\delta})$ and query time $q(d)=\littleo(d^{1/\delta})$ in expectation.
\label{thm:ConstructionLowerBound}
\end{theorem}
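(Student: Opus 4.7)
The plan is to reduce Boolean matrix multiplication on $N\times N$ matrices to instantiations of the hypothesized oracle, so that such an oracle would yield a combinatorial BMM algorithm running in $\littleo(N^3)$ time and contradict \Cref{conj:BMM}. Given $A,B\in\{0,1\}^{N\times N}$, I would first partition the row indices of $A$ into groups $R_1,R_2,\ldots$ of size $d/2$ and the column indices of $B$ into groups $C_1,C_2,\ldots$ of size $d/2$, yielding $\Theta(N/d)$ groups on each side. For every pair $(a,b)$ I would build a tripartite graph $G_{a,b}$ on vertex set $R_a\cup Y\cup C_b$, where $Y=\{y_1,\ldots,y_N\}$, by adding edge $(x_i,y_j)$ whenever $A_{ij}=1$ (for $i\in R_a$) and edge $(y_j,z_k)$ whenever $B_{jk}=1$ (for $k\in C_b$). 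Then $G_{a,b}$ has $n := N+d=\Theta(N)$ vertices and at most $Nd=\Theta(N^{1+\delta})$ edges; I would pad with dummy edges among $O(N)$ fresh dummy vertices to guarantee $m=\Theta(n^{1+\delta})$ and $d=\Theta(n^\delta)$, matching the regime in the theorem statement.

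After preprocessing $G_{a,b}$ via the oracle in time $c(n,d)$, for each $(r,c)\in R_a\times C_b$ I would issue an update with failure set $D_{r,c} := (R_a\cup C_b)\setminus\{r,c\}$, which has size $d-2\le d$, and then query whether $r$ and $c$ are connected in $G_{a,b}\setminus D_{r,c}$. Because $Y$ is an independent set in the tripartite construction and the only non-deleted vertices outside $Y$ are $r$ and $c$, any surviving $r$--$c$ path must have length exactly two through some $y_j\in Y$; hence the oracle returns true iff there exists $j$ with $A_{rj}=1$ and $B_{jc}=1$, which is exactly $(AB)_{rc}$. The cost per pair $(R_a,C_b)$ is $c(n,d)+(d/2)^2\bigl(u(d)+q(d)\bigr) = \littleo(nd^2)+d^2\cdot\littleo(d^{1/\delta})$; since $d^{1/\delta}=n$, both summands equal $\littleo(nd^2)=\littleo(N^{1+2\delta})$. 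Multiplied by $\Theta((N/d)^2)=\Theta(N^{2-2\delta})$ pairs, the grand total is $\littleo(N^3)$, the desired truly-subcubic combinatorial BMM algorithm.

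The main subtlety I anticipate is verifying that the $\littleo(\cdot)$ savings compose uniformly across the $\Theta(N^{2-2\delta})$ oracle invocations. Unpacking the $\littleo$ hypothesis on $c$ yields $c(n,d)=O(n^{1+2\delta-\delta\eta})$ for some $\eta>0$, and analogously $d^2(u(d)+q(d))=O(n^{1+2\delta-\delta\eta'})$ for some $\eta'>0$, so the per-pair cost is $O(N^{1+2\delta-\delta\min(\eta,\eta')})$ and the grand total is $O(N^{3-\delta\min(\eta,\eta')})$, which is truly subcubic. Minor bookkeeping issues, such as rounding group sizes when $d\nmid N$, padding the edge count to exactly hit $m=\Theta(n^{1+\delta})$, and keeping dummy vertices disjoint from every failure set, are resolved by standard padding arguments and do not affect asymptotics; the reduction is deterministic apart from the oracle itself, so all expected-time guarantees transfer cleanly to yield an expected subcubic BMM algorithm.
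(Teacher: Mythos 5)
Your proposal is correct and is essentially the paper's own reduction: the paper likewise builds, for each pair of row/column blocks of size $\Theta(d)$, the tripartite graph with middle layer $Y$, deletes all but one vertex from each of the two outer blocks per query, and observes that the total cost $c(n,d)+d^{2}(u(d)+q(d))$ per block pair times $\Theta((N/d)^{2})$ pairs is $\littleo(N^{3})$. The only differences are presentational (the paper phrases the per-block step as a rectangular $\hat d\times\hat n$ by $\hat n\times\hat d$ product, and your explicit padding and $\littleo$ bookkeeping are details the paper leaves implicit).
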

\begin{proof}
Suppose there exists such an efficient algorithm ${\cal A}$ of $d$-vertex-failure connectivity for some $\delta\in(0,1]$. We first show that, for any $\hat{n}$ and $\hat{d}=\Theta(\hat{n}^{\delta})$, we can compute the Boolean product $C$ of arbitrary $\hat{d}\times \hat{n}$ matrix $A'$ and $\hat{n}\times \hat{d}$ matrix $B'$ in $\littleo(\hat{n}\hat{d}^{2})$ time. We construct a graph $G$ with $V(G)=\{x_{i}\mid 1\leq i\leq \hat{d}\}\cup\{y_{j}\mid 1\leq j\leq \hat{n}\}\cup \{z_{k}\mid 1\leq k\leq \hat{d}\}$ and $E(G)=\{(x_{i},y_{j})\mid A'_{ij}=1\}\cup\{(y_{j},z_{k})\mid B'_{jk}=1\}$. Then we construct an $d$-vertex-failure connectivity oracle of $G$ using ${\cal A}$ with parameters $n=\hat{n}+\hat{2d}$ and $d=2\hat{d}=\Theta(n^{\delta})$ in $c(n,d)$ time. For each $1\leq i,k\leq \hat{d}$, $C'_{ik}=\bigvee_{j} A'_{ij}\wedge B'_{jk}$ is exactly the connectivity of vertices $x_{i}$ and $z_{k}$ when the failure set $D$ is set to be $\{x_{i'}\mid i'\neq i\}\cup\{z_{k'}\mid k'\neq k\}$, which can be computed in $u(d)+q(d)$ time. We conclude that $C'=A'B'$ can be computed in $c(n,d)+d^{2}(u(d)+q(d))=\littleo(\hat{n}\hat{d}^{2})$ time using ${\cal A}$. 

Next, to solve BMM for arbitrary $\hat{n}\times \hat{n}$ matrices $A$ and $B$. We can divide $A$ into $\hat{d}\times \hat{n}$ matrices $A^{1},...,A^{\hat{n}/\hat{d}}$ and divide $B$ into $\hat{n}\times \hat{d}$ matrices $B^{1},...,B^{\hat{n}/\hat{d}}$. The Boolean product $C=AB$ can be computed by querying for each $1\leq x,y\leq \hat{n}/\hat{d}$, the product of $A^{x}$ and $B^{y}$, which means ${\cal A}$ can lead to an $\littleo(\hat{n}^{3})$ algorithm for BMM and \Cref{conj:BMM} fails.
\end{proof}

For simplicity, we let $d_{\star}=d$. Assuming \Cref{conj:BMM}, \Cref{thm:ConstructionLowerBound} shows that our construction time $O(m)+\tilde{O}(\mbar d)+\mbar^{1+o(1)}=O(m)+n^{1+o(1)}d^{2}$ is somewhat tight because we cannot improve the construction time by some factors polynomial in $n$ and $d$, as long as we want combinatorial algorithm with update time and query time polynomial in $d$ up to $n^{o(1)}$ factors.

\paragraph{Lower Bound for Update Time.} The lower bound of the update time is shown by exploiting the relationship between online Boolean vector-matrix-vector
multiplication ($\OuMv$) problems and $d$-vertex-failure connectivity problems. Formally speaking, in a $\OuMv$ instance with parameters $n_{1},n_{2},n_{3}$, an $n_{1}\times n_{2}$ matrix $M$ is given in advanced and it can be preprocessed. What follows is $n_{3}$ online queries, where the $k$-th query will give two vectors $u^{k}$ and $v^{k}$ as input and the required output is $(u^{k})^{T}Mv^{k}$. As shown in \cite{HKNS15}, a lower bound on the $\OuMv$ problem can be obtained conditioning on the well-known $\OMv$ conjecture (\Cref{conj:omv}). %

\begin{conjecture}[$\OMv$ Conjecture, \cite{HKNS15}]
For any constant $\epsilon >0$, there is no $O(n^{3-\epsilon})$-time algorithm that solves $\OMv$ with an error probability of at most $1/3$.
\label{conj:omv}
\end{conjecture}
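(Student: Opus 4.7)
The statement in question is the OMv conjecture, which is a central hardness assumption in fine-grained complexity, not a theorem to be proved in this paper; the authors cite it from Henzinger, Krinninger, Nanongkai and Saranurak \cite{HKNS15} and immediately use it as a black box for the subsequent reductions. Consequently, any ``proof proposal'' here is really a proposal for \emph{how one might hope to establish} the conjecture unconditionally, or for how to read off partial evidence for it. The plan below sketches the realistic state of affairs rather than a genuine derivation.

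The natural first step is to observe that any refutation of OMv would entail a truly sub-cubic algorithm for a problem that is known to be at least as hard as online Boolean matrix-vector multiplication, and that several canonical problems (reachability under edge updates, triangle detection in edge-update streams, Pagh's problem, subgraph connectivity) are known to be OMv-hard via short reductions. Thus any ``proof'' in the forward direction would proceed by lower bounding one such problem in a reasonable computational model; the most promising avenue is a cell-probe lower bound in the word-RAM with $O(\log n)$-bit words, along the lines of P\u{a}tra\c{s}cu's framework for partial-sums and reachability. Concretely, one would encode a hard communication problem (e.g.~a variant of lopsided set disjointness or the Larsen-Weinstein-Yu information-transfer problem) into the online execution trace of an alleged truly sub-cubic algorithm, and then use information-theoretic counting to derive a contradiction.

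The main obstacle is the usual one in unconditional lower bounds: we currently have no super-linear cell-probe lower bound for any explicit dynamic problem beyond the $\Omega(\log n)$ and $\Omega((\log n/\log\log n)^{2})$ regimes, and breaking a polynomial barrier would constitute a major breakthrough, plausibly implying circuit lower bounds far beyond current techniques. A more modest goal would be to prove OMv assuming a weaker structural assumption (for example that rigidity of the relevant matrix family is high, or that online matrix-vector multiplication can be embedded into a verified hard problem such as 3SUM or APSP via fine-grained reductions); the hard part will be that all known reductions between OMv and 3SUM/APSP go in the ``wrong'' direction for this purpose. For the present paper, the only task is to accept \Cref{conj:omv} as stated and push it through the reduction to $\OuMv$ and thence to vertex-failure connectivity, which is what the subsequent material will do.
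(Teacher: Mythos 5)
You correctly recognize that \Cref{conj:omv} is a hardness assumption imported verbatim from \cite{HKNS15} and used as a black box, not a statement the paper proves; the paper likewise offers no proof and simply cites it before the reduction to $\OuMv$ and \Cref{thm:UpdateLowerBound}. Your treatment matches the paper's.
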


\begin{lemma}[Conditional lower bounds on $\OuMv$, \cite{HKNS15}]
\Cref{conj:omv} implies that there is no algorithm for $\OuMv$ with parameters $n_{1},n_{2},n_{3}$ using preprocessing time $p(n_{1},n_{2})=poly(n_{1},n_{2})$ and computation time $c(n_{1},n_{2},n_{3})=\littleo(n_{1}n_{2}n_{3})$ that has an error probability of at most $1/3$.
\end{lemma}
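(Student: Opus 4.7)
The plan is to prove the lemma by contradiction via a reduction from $\OMv$ (on $N \times N$ matrices with $N$ vector queries) to $\OuMv$: I would assume a hypothetical $\OuMv$ algorithm $A$ with $\poly(n_1,n_2)$ preprocessing and computation time $c(n_1,n_2,n_3) = \littleo(n_1 n_2 n_3)$, and use it to build an $O(N^{3-\delta})$-time $\OMv$ algorithm for some $\delta > 0$, contradicting \Cref{conj:omv}. By the definition of $\littleo$, there exists a constant $\epsilon > 0$ such that the savings of factor $n^{\epsilon}$ appear in at least one of the three dimensions, i.e., $c$ is $O(n_1^{1-\epsilon} n_2 n_3)$, $O(n_1 n_2^{1-\epsilon} n_3)$, or $O(n_1 n_2 n_3^{1-\epsilon})$.

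I would introduce an intermediate problem $\gamma$-$\OMv$: the variant of $\OMv$ with rectangular matrix $M \in \{0,1\}^{n^\gamma \times n}$ and $n$ online queries $v^k \in \{0,1\}^n$, each requiring output $Mv^k$. The reduction has two stages. Stage one reduces $\OMv$ to $\gamma$-$\OMv$: partition $M \in \{0,1\}^{N\times N}$ into $N^{1-\gamma}$ row blocks $M^{(1)},\dots,M^{(N^{1-\gamma})}$ of shape $N^\gamma \times N$, preprocess each as a $\gamma$-$\OMv$ instance, and on each online query $v^k$ forward $v^k$ to every instance and concatenate the outputs to obtain $Mv^k$. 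If $\gamma$-$\OMv$ on $n=N$ runs in total time $O(N^{2+\gamma-\epsilon'})$, then $\OMv$ runs in $N^{1-\gamma} \cdot O(N^{2+\gamma-\epsilon'}) = O(N^{3-\epsilon'})$ time, contradicting $\OMv$.

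Stage two implements $\gamma$-$\OMv$ using $A$: for each $\gamma$-$\OMv$ query $v^k$, issue $n^\gamma$ scalar $\OuMv$ queries $(e_i, v^k)$ to extract all entries of $Mv^k$, giving parameters $(n_1,n_2,n_3) = (n^\gamma, n, n^{1+\gamma})$. In the savings-in-$n_2$ case, $c = O(n^\gamma \cdot n^{1-\epsilon} \cdot n^{1+\gamma}) = O(n^{2+2\gamma-\epsilon})$, which for $\gamma = \epsilon/2$ gives $O(n^{2+\gamma-\epsilon/2})$ as required. The savings-in-$n_1$ case is handled by running $A$ on $M^T$ with the roles of $u$ and $v$ swapped, reducing to the savings-in-$n_2$ case. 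The savings-in-$n_3$ case yields $O(n^\gamma \cdot n \cdot n^{(1+\gamma)(1-\epsilon)}) = O(n^{2+2\gamma-\epsilon(1+\gamma)})$, which for $\gamma = \epsilon/2$ becomes $O(n^{2+\gamma-\epsilon''})$ with $\epsilon'' = \epsilon/2 + \epsilon^2/2 > 0$. In every case the composition yields a subcubic $\OMv$ algorithm, contradicting \Cref{conj:omv}.

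The main obstacle is calibrating $\gamma$ in terms of $\epsilon$ so that the $n^\gamma$ blowup introduced by the row-block partition is strictly dominated by the $n^\epsilon$ savings of $A$; this forces $\gamma < \epsilon$ and a case analysis across the three dimensions. A secondary but essential issue is handling the $1/3$ error probability: a single $\OMv$ answer is the conjunction of up to $N^{1+\gamma}$ individual $\OuMv$ outputs, so naive composition destroys correctness. I would address this by standard error amplification, repeating each $\OuMv$ call $O(\log N)$ times and taking a majority vote, which drives the per-call error below $1/N^{2+\gamma}$; a union bound then makes the overall error at most $1/3$, while inflating the running time by only a polylogarithmic factor that is absorbed into the $\littleo$ savings.
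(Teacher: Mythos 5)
The paper does not actually prove this lemma — it is quoted from \cite{HKNS15} (their Theorem 2.7), so there is no in-paper argument to compare against; your proposal is an attempt to reprove one of the main technical theorems of that reference, and it has a genuine gap located exactly where the real difficulty of their proof lies. The hypothesis $c(n_1,n_2,n_3)=\littleo(n_1n_2n_3)$ means, by the paper's definition, $c = O(n_1^{1-\epsilon}n_2n_3 + n_1n_2^{1-\epsilon}n_3+n_1n_2n_3^{1-\epsilon})$, i.e.\ $c$ is bounded by a constant times the \emph{maximum} of the three terms, which equals $n_1n_2n_3\cdot\min(n_1,n_2,n_3)^{-\epsilon}$. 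You instead read it as a disjunction of three global bounds ("$c$ is $O(n_1^{1-\epsilon}n_2n_3)$, or \dots"), which is a strictly stronger assumption on the algorithm: a legitimate adversary is one whose running time is $\Theta(n_1^{1-\epsilon}n_2n_3)$ precisely when $n_1$ is the smallest parameter. Against such an algorithm your reduction fails for every $\gamma>0$: your $\OuMv$ instances have parameters $(n^{\gamma},n,n^{1+\gamma})$, so the only bound you can derive is $n^{2+2\gamma}\cdot(n^{\gamma})^{-\epsilon}=n^{2+\gamma+\gamma(1-\epsilon)}\geq n^{2+\gamma}$ per instance, and composing with the $N^{1-\gamma}$ row blocks gives total time at least $N^{3}$ — no contradiction with \Cref{conj:omv}. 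The transpose trick does not rescue this: on the transposed instance the parameters are $(n,n^{\gamma},n^{1+\gamma})$, the minimum is still $n^{\gamma}$, and the problematic term merely moves to the second slot of the (symmetric) sum.

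The obstruction is structural rather than a matter of calibrating $\gamma$: extracting the $n_1$ output bits of each matrix-vector product via unit-vector queries inflates $n_3$ by a factor of $n_1$, so the trivial cost of your $\OuMv$ instances exceeds the $N^{3}$ budget by a factor of exactly $\min(n_1,n_2,n_3)=n_1$, while the only savings the hypothesis guarantees is $\min(n_1,n_2,n_3)^{\epsilon}=n_1^{\epsilon}$, which never pays for the overhead when $\epsilon\leq 1$. This is precisely why \cite{HKNS15} do not reduce via unit-vector extraction; their proof of this statement (the chain through their Theorems 2.2--2.7, with a carefully designed intermediate problem) is a main technical contribution of that paper. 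Your error-amplification paragraph is fine in spirit (modulo using $O(\log N)$ independently preprocessed copies of each instance rather than repeating a query to a single stateful one), but the core reduction would need to be replaced; since the statement is a citation, the appropriate move here is simply to cite \cite{HKNS15} rather than reprove it.
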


\begin{theorem}
For any constant $\delta\in(0,1]$, unless \Cref{conj:omv} fails, there is no algorithm for $d$-vertex-failure connectivity for $n$-vertices $m$-edge graphs, where $d=\Theta(n^{\delta})$, with construction time $c(n,d)=poly(n,d)$, update time $u(d)=\littleo(d^{2})$ and query time $q(d)=\littleo(d^{2})$ that has an error probability $1/3$.
\label{thm:UpdateLowerBound}
\end{theorem}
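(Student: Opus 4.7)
The plan is to reduce $\OuMv$ to vertex-failure connectivity using the same selector-gadget idea as the BMM reduction in \Cref{thm:ConstructionLowerBound}. Suppose such an oracle $\A$ existed. Given an $\OuMv$ instance with $n_{1}=n_{2}=\hat{n}$ and $n_{3}$ online queries (with matrix $M$ known in advance), I would build a graph $G$ with two terminals $s,t$ together with selector sets $X=\{x_{i}\}_{i=1}^{\hat{n}}$ and $Y=\{y_{j}\}_{j=1}^{\hat{n}}$, adding the edges $(s,x_{i})$ and $(y_{j},t)$ for every $i,j$, plus $(x_{i},y_{j})$ whenever $M_{ij}=1$. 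To meet the required scaling, I would then add $\Theta(\hat{n}^{1/\delta})$ isolated dummy vertices so that the final vertex count $n$ satisfies $d:=2\hat{n}=\Theta(n^{\delta})$, together with $d$ additional isolated \emph{buffer} vertices $b_{1},\ldots,b_{d}$, and run $\A$'s preprocessor on $G$ in $\poly(n,d)=\poly(\hat{n})$ time.

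The $k$-th query $(u^{k},v^{k})$ is answered by forming the base deletion set $\{x_{i}:u_{i}^{k}=0\}\cup\{y_{j}:v_{j}^{k}=0\}$, extending it with unused buffer vertices to obtain a set $D^{k}$ of size exactly $d$, submitting $D^{k}$ to the update routine and then querying the connectivity of $s$ and $t$. Any $s$-$t$ path in $G\setminus D^{k}$ must be of the form $s\to x_{i}\to y_{j}\to t$ with $u_{i}^{k}=v_{j}^{k}=1$ and $M_{ij}=1$, and the buffer vertices are isolated so their deletion has no effect on connectivity; hence the oracle's answer equals $(u^{k})^{T}Mv^{k}$ and inherits the per-query error probability $1/3$ of $\A$.

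The overall resource usage is $\poly(\hat{n})$ preprocessing and $n_{3}\cdot(u(d)+q(d))=n_{3}\cdot \littleo(\hat{n}^{2})=\littleo(n_{1}n_{2}n_{3})$ computation, which would defeat the $\OuMv$ lower bound stated in the excerpt and hence contradict \Cref{conj:omv}. The step that most needs care is the parameter bookkeeping: one must simultaneously (i) pad $V(G)$ so that $d=\Theta(n^{\delta})$, (ii) force every $|D^{k}|$ to equal exactly $d$ so that the bound $u(d)=\littleo(d^{2})$ is always applied at the same value of $d$, and (iii) keep the preprocessing cost $\poly(n,d)$ inside the $\poly(n_{1},n_{2})$ budget allowed by the $\OuMv$ lower bound. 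The isolated buffer and dummy vertices handle (i) and (ii); (iii) holds since $n=\Theta(\hat{n}^{1/\delta})$ is itself polynomial in $\hat{n}$.
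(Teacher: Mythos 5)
Your reduction is correct and is essentially identical to the paper's own proof of \Cref{thm:UpdateLowerBound}: the same source/sink-plus-bipartite-matrix gadget, the same choice of failure set from $(u^{k},v^{k})$, and the same padding with extra vertices to force $d=\Theta(n^{\delta})$. The buffer vertices making $|D^{k}|$ exactly $d$ are a harmless extra precaution the paper does not bother with, since the oracle accepts any $|D|\le d_{\star}$.
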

\begin{proof}
Suppose that such an algorithm ${\cal A}$ for $d$-vertex-failure connectivity exists for $\delta=1$. Then we can design an $\littleo(\hat{n}^{3})$-time algorithm for $\OuMv$ with parameters $n_{1},n_{2},n_{3}=\hat{n}$. First we construct the graph $G$ by making $V(G)=\{x_{0},y_{0}\}\cup\{x_{i}\mid 1\leq i\leq n_{1}\}\cup\{y_{j}\mid 1\leq j\leq n_{2}\}$ and $E(G)=\{(x_{0},x_{i})\mid i\}\cup\{(y_{j},y_{0})\mid j\}\cup\{(x_{i},y_{j})\mid M_{ij}=1\}$, and we use ${\cal A}$ to construct a $d$-vertex-failure connectivity oracle on $G$ with parameters $n=|V(G)|=n_{1}+n_{2}+2$ and $d=n_{1}+n_{2}$ in $c(n,d)$ time. For each query $k$ ($1\leq k\leq n_{3}$) of the $\OuMv$ instance, let the failure set be $D=\{x_{i}\mid u^{k}_{i}=0\}\cup\{y_{j}\mid v^{k}_{j}=0\}$, so the answer of this query is exactly the connectivity of vertices $x_{0}$ and $y_{0}$ in $G$, which can be answered by ${\cal A}$ in $u(d)+p(d)$ time. Therefore, we conclude that any $\OuMv$ instance with $n_{1},n_{2},n_{3}=\hat{n}$ can be solved with preprocessing time $p(n_{1},n_{2})=c(n,d)=poly(\hat{n})$ and computation time $c(n_{1},n_{2},n_{3})=n_{3}(u(d)+q(d))=\littleo(\hat{n}^{3})$ by using ${\cal A}$, and \Cref{conj:omv} fails. To make this argument work for any $\delta\in(0,1]$, we can just add some redundant vertices to $G$ to ensure $d=\Theta(n^{\delta})$. 
\end{proof}

\paragraph{Lower Bound for the $d$-Vertex-Failure Global Connectivity Problem.} We also study a related problem of \emph{vertex-failure
global connectivity oracles.} This oracle is similar to a vertex-failure
connectivity oracle, except that in the second phase, after we are
given $D\subset V$ in the update phase, we only need to return if $G\setminus D$ is
connected and there is no individual query phase. We show that there is no oracle with reasonable update time for this problem by considering the (unbalanced) orthogonal vectors (OV) problem. Given parameters $d,n,m$, an instance of the OV problem is two set $A,B$ of vectors in $\{0,1\}^{d}$ such that $|A|=n$ and $|B|=m$, and the required answer is to decide whether there exists vectors $a^{i}\in A$ and $b^{j}\in B$ such that they are orthogonal.

\begin{conjecture}[Strong Exponential Time Hypothesis (SETH), \cite{AW14}]
For every $\epsilon > 0$, there exists a $k$ such that SAT on $k$-CNF formulas on $n$ variables cannot be solved in $\Otil(2^{(1-\epsilon)n})$
\label{conj:SETH}
\end{conjecture}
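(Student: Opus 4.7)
The plan is to reduce the (unbalanced) Orthogonal Vectors (OV) problem to the vertex-failure global connectivity problem so that each OV query becomes one update of small size. Recall that SETH implies the unbalanced OV hardness: for $A,B \subseteq \{0,1\}^{d}$ with $|A|=n$, $|B|=m$, and $d = \omega(\log(n+m))$, deciding whether some $a \in A$ is orthogonal to some $b \in B$ requires $(nm)^{1-o(1)}$ time. I will take $d = \polylog(n)$, set $|A|=n$, and use $|B|=m = n^{\alpha}$ for a parameter $\alpha \in (0,1)$ to be tuned against the preprocessing polynomial.

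\paragraph{Construction and correctness.} Given $A = \{a^{1}, \ldots, a^{n}\}$ (with the all-zero vector removed, since it trivializes OV), build a graph $G$ with vertex set $\{s\} \cup \{w_{i} : i \in [n]\} \cup \{x_{k} : k \in [d]\}$, so $N := |V(G)| = \Theta(n)$. Add edges $(s, x_{k})$ for every $k$ and $(w_{i}, x_{k})$ whenever $a^{i}_{k}=1$. For each query vector $b^{j} \in B$, let the failure set be $D_{j} = \{x_{k} : b^{j}_{k}=0\}$, so $|D_{j}| \le d = N^{o(1)}$ as required. After removing $D_{j}$, the surviving $x_{k}$'s (those with $b^{j}_{k}=1$) stay connected to $s$, while each $w_{i}$ joins $s$'s component iff some $k$ satisfies $a^{i}_{k}=b^{j}_{k}=1$, i.e., iff $a^{i}$ is not orthogonal to $b^{j}$. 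Consequently $G \setminus D_{j}$ is disconnected iff some $w_{i}$ is isolated iff some $a^{i}$ is orthogonal to $b^{j}$. Hence OR-ing the oracle's global-connectivity answers over $j \in [m]$ solves OV.

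\paragraph{Deduction and main obstacle.} Suppose $\A$ has preprocessing time bounded by $p(N) \le N^{c_{0}}$ for some constant $c_{0}$. After one preprocessing of $G$, the reduction solves OV in total time $N^{c_{0}} + m \cdot u(N)$, where $u(N)$ is the update time. Under SETH the required time is at least $(nm)^{1-o(1)} = n^{(1+\alpha)(1-o(1))}$. Fixing any $\alpha > c_{0}-1$ makes the preprocessing term $N^{c_{0}} = o(n^{(1+\alpha)(1-o(1))})$ negligible, so we must have $n^{\alpha} u(N) \ge n^{(1+\alpha)(1-o(1))}$, i.e., $u(N) \ge n^{1-o(1)} = \Omegahat(N)$. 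The main subtlety is that $\alpha$ cannot be fixed uniformly: it is chosen \emph{after} inspecting $\A$'s preprocessing exponent $c_{0}$, which is still enough to conclude the theorem's per-oracle $\Omegahat(n)$ lower bound. A minor obstacle is that the textbook SETH-to-OV implication is stated for balanced OV; we rely on the standard padding argument that lifts it to the unbalanced regime $|B| = n^{\alpha}$ with $d = \polylog(n)$, which is well known and introduces no additional difficulty.
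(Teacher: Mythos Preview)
The statement you were asked to address is \Cref{conj:SETH}, the Strong Exponential Time Hypothesis itself. This is an open conjecture; the paper does not prove it and merely \emph{assumes} it as a hypothesis. What you have actually written is a proof of \Cref{thm:main lower global} (restated more precisely as the unnumbered theorem at the end of \Cref{sect:LowerBound}): that, assuming SETH, any vertex-failure global connectivity oracle with $\poly(n)$ preprocessing must spend $\Omegahat(n)$ update time even when $|D|=n^{o(1)}$. So there is a mismatch between the labeled statement and your argument; you are proving a \emph{consequence} of SETH, not SETH.

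That said, for the theorem you are actually proving, your approach is essentially identical to the paper's. Both build the same bipartite gadget: a hub vertex ($s$ in your notation, $u$ in the paper's) joined to coordinate vertices $x_k$, and vector vertices ($w_i$ for you, $y_i$ in the paper) adjacent to the $x_k$'s indexed by their support. Both use the failure set $D_j=\{x_k : b^j_k=0\}$ and the same correctness observation that $w_i$ becomes disconnected from the hub iff $a^i\perp b^j$. Your parameter $\alpha$ is exactly the paper's $t$, and your explicit remark that $\alpha$ must be chosen after seeing the preprocessing exponent $c_0$ is a nice clarification of a quantifier order the paper leaves implicit. Your handling of the all-zero-vector edge case is also a detail the paper omits. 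Apart from these cosmetic differences, the two arguments coincide.
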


\begin{lemma}[Unbalanced Orthogonal Vectors Hypothesis (UOVH), \cite{BK15}]
For any constant $t\in(0,1]$, unless \Cref{conj:SETH} fails, there is no an algorithm for OV, restricted to $m=\Theta(n^{t})$ and $d= n^{o(1)}$, that runs in time $\littleo(nm)$.
\end{lemma}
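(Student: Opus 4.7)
The plan is to carry out the standard SAT-to-OV split-and-list reduction, tuned to produce unbalanced set sizes. Start with a $k$-CNF formula $\phi$ on $N$ variables; by the sparsification lemma of Impagliazzo--Paturi--Zane, we may assume at the cost of a $2^{o(N)}$ blow-up in the number of instances that $\phi$ has only $M = O_k(N)$ clauses $C_1,\dots,C_M$. Partition the variable set $X$ into $X_1 \sqcup X_2$ with $|X_1| = N_1 := \lceil N/(1+t) \rceil$ and $|X_2| = N_2 := N - N_1 \le tN_1$, so that $N_2 \le t\cdot N_1$.

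Next I would construct the OV instance. For each assignment $\alpha \in \{0,1\}^{X_1}$, define a vector $a_\alpha \in \{0,1\}^M$ by setting $(a_\alpha)_i = 1$ iff $\alpha$ fails to satisfy $C_i$ on its own (i.e.\ no literal of $C_i$ supported on $X_1$ is made true by $\alpha$). Symmetrically, for each $\beta \in \{0,1\}^{X_2}$, define $b_\beta \in \{0,1\}^M$ by $(b_\beta)_i = 1$ iff $\beta$ fails to satisfy $C_i$. Let $A = \{a_\alpha\}$ and $B = \{b_\beta\}$, so $n := |A| = 2^{N_1}$, $m := |B| = 2^{N_2}$, and the dimension is $d = M = O(N) = O(\log n) = n^{o(1)}$. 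By choice of $N_1, N_2$ we get $m \le n^t$; a minor bookkeeping step (either padding $B$ with the all-ones vector or adjusting the split to make $N_2 = \lfloor t N_1 \rfloor$ exactly) brings us to $m = \Theta(n^t)$ as required.

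The correctness is immediate: $\langle a_\alpha, b_\beta \rangle = 0$ iff for every clause $C_i$ at least one of $\alpha,\beta$ satisfies $C_i$, which is exactly the condition that the combined assignment $\alpha \cup \beta$ satisfies $\phi$. Hence deciding OV on $(A,B)$ decides satisfiability of $\phi$.

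Finally, I would contrapose. Suppose an OV algorithm for $m = \Theta(n^t)$ and $d = n^{o(1)}$ runs in time $\littleo(nm)$. Unpacking the $\littleo$ notation, this means there is a constant $\varepsilon > 0$ such that the running time is $O(n^{1-\varepsilon} m + n m^{1-\varepsilon}) = O(n^{1+t-\varepsilon'})$ for some $\varepsilon' > 0$ depending on $\varepsilon$ and $t$. Plugging in $n = 2^{N_1} = 2^{N/(1+t)}$ and summing over the $2^{o(N)}$ sparsified subinstances, this yields a $k$-SAT algorithm running in time
\[
2^{o(N)} \cdot 2^{(1+t-\varepsilon')\,N/(1+t)} \;=\; 2^{(1 - \varepsilon'/(1+t))\,N + o(N)},
\]
which is $O(2^{(1-\delta)N})$ for some fixed $\delta = \delta(\varepsilon',t) > 0$ independent of $k$. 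Since this holds for every $k$, it contradicts SETH. There is no essential obstacle in this plan — the reduction is classical; the only thing to be careful about is that the exponent savings $\delta$ does not depend on $k$, which is guaranteed because the reduction and the sparsification loss are uniform in $k$ once $\varepsilon$ is fixed.
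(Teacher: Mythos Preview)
The paper does not prove this lemma; it is quoted from \cite{BK15} as a known consequence of SETH, with no proof given. Your argument is exactly the standard split-and-list reduction (originating with Williams for the balanced case and adapted by Bringmann--K\"unnemann to the unbalanced setting), and it is correct.

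One small remark on presentation: the sparsification lemma gives, for every $\epsilon'>0$, a reduction to $2^{\epsilon' N}$ subinstances each with $c(k,\epsilon')\cdot N$ clauses, rather than a single $2^{o(N)}$ bound. Your argument still goes through because you may choose $\epsilon'$ small relative to the fixed saving $\delta$ you obtain from the OV speedup, and the resulting dimension $d = c(k,\epsilon')\cdot N = O_{k,\epsilon'}(\log n)$ is $n^{o(1)}$ for every fixed $k$; this is exactly why $\delta$ ends up independent of $k$, as you note. Otherwise there is nothing to add.
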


\begin{theorem}
For any constant $t>0$, unless \Cref{conj:SETH} fails, there is no algorithm for $d$-vertex-failure global connectivity on $n$-vertex $m$-edge graphs even with $d= n^{o(1)}$, with construction time $c(n)=\littleo(n^{1+t})$ and query time $q(n)=\littleo(n)$.
\end{theorem}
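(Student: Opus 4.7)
The plan is to reduce unbalanced Orthogonal Vectors (OV) to $d$-vertex-failure global connectivity, so that UOVH (and hence SETH) forbids the hypothetical oracle. Fix any $t > 0$ and take an OV instance with $A = \{a^1,\ldots,a^n\}$, $B = \{b^1,\ldots,b^m\} \subseteq \{0,1\}^d$ where $m = \Theta(n^t)$ and $d = n^{o(1)}$. Either $t \in (0,1]$, in which case UOVH applies directly, or $t > 1$, in which case the symmetric OV instance (swapping $A$ and $B$) matches UOVH at exponent $1/t \in (0,1)$; in both regimes the hardness bound is $\littleo(nm) = \littleo(n^{1+t})$.

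I would build a graph $G$ on vertex set $\{s, t\} \cup \{a_i\}_{i \in [n]} \cup \{c_k\}_{k \in [d]}$ with edges $\{(s, a_i)\}_i \cup \{(t, c_k)\}_k \cup \{(a_i, c_k) : a^i_k = 1\}$, so that $|V(G)| = \Theta(n)$ and $|E(G)| = O(nd) = n^{1+o(1)}$. After running the oracle's preprocessing on $G$ in time $c(n) = \littleo(n^{1+t})$, for each $b^j$ (skip the trivial case $b^j$ identically zero) I invoke the oracle's update interface with failure set $D_j = \{c_k : b^j_k = 0\}$; since $|D_j| \le d = n^{o(1)}$ this is within the oracle's promised failure bound, and each update returns a single global-connectivity bit in $q(n) = \littleo(n)$ time.

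The correctness step I need is that $G \setminus D_j$ is connected if and only if some $a^i \in A$ is non-orthogonal to $b^j$. Indeed, the star at $s$ always glues $s$ and every $a_i$ into one sub-cluster, while the star at $t$ glues $t$ with the surviving coordinate vertices $\{c_k : b^j_k = 1\}$ (non-empty in the non-trivial case) into a second sub-cluster. These two sub-clusters merge iff some edge $(a_i, c_k)$ survives, i.e., iff there exist $i,k$ with $a^i_k = 1$ and $b^j_k = 1$, which is exactly non-orthogonality of $a^i$ and $b^j$.

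Putting everything together, the total time to decide the OV instance is $c(n) + m \cdot q(n) = \littleo(n^{1+t}) + n^t \cdot \littleo(n) = \littleo(n^{1+t}) = \littleo(nm)$, contradicting UOVH and thus SETH. The only genuinely nontrivial design choice is the two-star gadget with hubs $s$ and $t$: it forces the only possible cut in $G \setminus D_j$ to be the $A$-side versus $C$-side cut, so a single bit of global connectivity precisely captures the existence of a non-orthogonal witness for the fixed $b^j$. This is the main obstacle, since without such a gadget a single global-connectivity answer could not encode enough information per query; the remaining obligations (graph size, failure-set size, $O(nd) = n^{1+o(1)}$ construction cost absorbed into preprocessing) are routine bookkeeping.
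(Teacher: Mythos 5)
Your reduction has a quantifier error that breaks correctness. The OV problem asks whether there \emph{exists} a pair $(a^i,b^j)$ that \emph{is} orthogonal; so for a fixed $b^j$ the single bit you need from the oracle is ``does there exist $i$ with $a^i\perp b^j$?''. Your two-star gadget instead answers the negation of a universally quantified statement: with the hub $s$ attached to every $a_i$, the graph $G\setminus D_j$ is connected as soon as \emph{one} $a^i$ is non-orthogonal to $b^j$, and is disconnected only when \emph{all} $a^i$ are orthogonal to $b^j$. Concretely, if $a^1\perp b^1$ but $a^2\not\perp b^1$, then $G\setminus D_1$ is connected (the edge from $a_2$ to a surviving $c_k$ merges the two clusters), so your reduction reports ``no orthogonal witness for $b^1$'' even though $(a^1,b^1)$ is an orthogonal pair. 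The gadget you describe as the key design choice is exactly what destroys the reduction.

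The fix is to drop the hub $s$ entirely, which is what the paper does: take vertices $\{y_i\}_{i\in[n]}\cup\{x_k\}_{k\in[d]}\cup\{u\}$, edges $(y_i,x_k)$ iff $a^i_k=1$ and $(x_k,u)$ for all $k$, and delete $D_j=\{x_k: b^j_k=0\}$. Then each $y_i$ must \emph{individually} reach $u$ through a surviving coordinate vertex, so $G\setminus D_j$ is connected iff every $a^i$ is non-orthogonal to $b^j$; equivalently, it is disconnected iff some $a^i$ is orthogonal to $b^j$, which is precisely the per-$b^j$ OV question. The rest of your argument (preprocessing once, $m$ updates of size at most $d=n^{o(1)}$, total time $c(n)+m\cdot q(n)=\littleo(nm)$ contradicting UOVH) then goes through; your remark about handling $t>1$ by swapping the roles of $A$ and $B$ is a reasonable supplement, since the paper's UOVH statement is only for $t\in(0,1]$.
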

\begin{proof}
Assume there exists such an algorithm ${\cal A}$ for $d$-vertex-failure global connectivity for some constant $t>0$. We show that it will imply an algorithm for OV restricted to $\hat{m}=\Theta(\hat{n}^{t})$ and $\hat{d}=n^{o(1)}$. We consider a graph $G$ with $V(G)=\{y_{i}\mid 1\leq i\leq \hat{n}\}\cup\{x_{k}\mid 1\leq k\leq\hat{d}\}\cup\{u\}$ and $E(G)=\{(y_{i},x_{k})\mid a^{i}_{k}=1\}\cup\{(x_{k},u)\mid k\}$, where $a^{i}_{k}$ denote the $k$-th coordinate of vector $a^{i}\in A$. We construct a $d$-vertex-failure global connectivity oracle on $G$ using ${\cal A}$ with parameters $n=\hat{n}+\hat{d}+1$ and $d=\hat{d}$ in time $c(n)=\littleo(n^{1+t})$. To get the answer for an OV instance, we process $b^{j}\in B$ one by one, and for each $b^{j}$, it only requires one query on the oracle to check if there exists $a^{i}$ orthogonal to $b^{j}$. Concretely, by setting the failure set $D=\{x_{k}\mid b^{j}_{k}=0\}$, then $a^{i}$ is orthogonal to $b^{j}$ if and only if the vertex $y_{i}$ is not connected with $u$ after failures. Therefore, the OV instance can be answered in $c(n)+\hat{m}q(n)=\littleo(\hat{n}\hat{m})$ time, which means \Cref{conj:SETH} fails.
\end{proof}

\section{Open Problems}
\label{sect:OpenProblems}

While \Cref{thm:main,thm:main lower} together have settled
the complexity of the connectivity oracles with vertex failures up to sub-polynomial factors, they still leave interesting
and questions for this problem. 
\begin{enumerate}
\item If the update and query time can depend only on $d_{\star}$, what
is the best bound we we can hope for? van den Brand and Saranurak
\cite{van2019sensitive} showed a randomized Monte-Carlo oracle with
$O(n^{2})$ space, $O(n^{\omega})$ preprocessing, $O(d^{\omega})$
update, and $O(d^{2})$ query time. Note that both update and query
time are both constant for all $d=O(1)$. Very recently, Pilipczhuk
et al. \cite{pilipczuk2021algorithms} gave the first deterministic
solution for this setting though the bounds seem far from being optimal. 
\item Our $\Ohat(md_{\star})$ preprocessing time is tight only under the
\emph{combinatorial} BMM conjecture. Can we improve the bound using
fast matrix multiplication? 
\end{enumerate}
The following questions about generalization of this problem seem
challenging:
\begin{enumerate}
\item Suppose the vertices to be deleted are given one by one and queries
are given between each deletion, can we still handle all $d$ deletions
in total $\Ohat(d^{2})$ update time? 
\item Is there a vertex-failure $c$-(edge/vertex)-connectivity oracle for
$c>1$? The dynamic $c$-edge connectivity algorithm by Jin and Sun
\cite{jin2022fully} implies a near-optimal \emph{edge}-failure $c$-edge
connectivity oracle for $c=\log^{o(1)}n$, but all other variants
of this problems remain open.
\end{enumerate}

\appendix

\section{Omitted Proof}
\label{Appendix:OmittedProof}
\subsection{Proof of \Cref{thm:ComplexDecomp} and \Cref{Coro:WeightedDecomposition}}
\label{proof:ComplexDecomp}

\begin{proof}[Proof of \Cref{thm:ComplexDecomp}]
We design a procedure $\Decomp(G)$, which takes the graph $G$ as inputs, and returns a collection of vertex-disjoint subgraphs of $G$. The procedure just simply applies Lemma \ref{lemma:CutOrCertify} on graph $G$ with parameters $\epsilon/\log(nU)$ and $r$. It will obtain a cut $(L,S,R)$ such that $w(S)\leq(\epsilon/\log(nU))\cdot\d(L\cup S)$. There are two cases.

\

\noindent\textbf{Case 1.} If $(L,S,R)$ is a balance cut with $\d(L\cup S),\d(R\cup S)\geq \d(V(G))/3$, we do recursion on both $G[L]$ and $G[R]$ and let $\Decomp(G)=\Decomp(G[L])\cup\Decomp(G[R])$.

\

\noindent\textbf{Case 2.} If $\d(R)\geq \d(V(G))/2$ and $h(G[R])\geq \phi$ for some $\phi=(\epsilon/\log(nU))/(\log^{O(r)}U\log^{O(r^{5})}(m\log U))=\epsilon/(\log^{O(r)}U\log^{O(r^{5})}(m\log U))$, then we only recurse on $G[L]$ and let $\Decomp(G)=\Decomp(G[L])\cup\{R\}$. In particular, if we further have $L=\emptyset$, then $\Decomp(G)=\{R\}$.

\

We are going to show ${\cal G}=\Decomp(G,T)$ is our desired output. The first two properties are easy to verified. Each recursion we drop vertices in $S$ to disconnect $L$ and $R$, so there is no edge connecting any two subsets in ${\cal G}$. Every set $V_{i}\in{\cal G}$ has been verified that $h(G[V_{i}])\geq \phi$ in the algorithm. For the last property, in order to bound the total weight of dropped vertices, in each recursion with vertex cut $(L,S,R)$, we charge $w(S)$ averagely on the demands of $L\cup S$, i.e. each unit of demands in $\d(L\cup S)$ will be charged $w(S)/\d(L\cup S)\leq \epsilon/\log(nU)$. Moreover, the depth of the recursion algorithm is at most $O(\log \d(V(G)))$, because every recursive call decrease shrink the total demand of the current graph by a constant factor. Therefore, each unit of demands will be charged at most $O(\log(nU))$ times and the total weight of dropped vertices is bounded by $\epsilon\cdot\d(V(G))$.

The overall running time is $O(m^{1+o(1)+O(1/r)}\cdot U^{1+O(1/r)}\cdot\log^{O(r^{4})}(mU)/\phi)$, because the depth of the recursion is $O(\log \d(V(G)))=O(\log(nU))$ and the total time of applying Lemma \ref{lemma:CutOrCertify} at all recursive step at the same depth is $O(m^{1+o(1)+O(1/r)}\cdot U^{1+O(1/r)}\cdot\log^{O(r^{4})}(mU)/\phi)$.
\end{proof}

\begin{proof}[Proof of \Cref{Coro:WeightedDecomposition}]
We can assume that the weights and demands are real numbers in $\{0\}\cup[1,U]$, because we can scale all weights and demands by multiplying $1/\delta_{w}$ and $1/\delta_{\d}$ respectively and let $\delta_{\d}\epsilon/\delta_{w}>0$ be the new parameter $\epsilon$, where $\delta_{w}=\min_{w(v)>0}w(v)$ and $\delta_{\d}=\min_{\d(v)>0}\d(v)$.

To ensure each vertex has weight at least its demand, we can scale the weights or the demands up depending on whether $\epsilon<1$, and then drop vertices with new demand larger than new weight. Concretely, if $\epsilon \leq 1$, for each $v$ we let its new weight be $w'(v)=w(v)/\epsilon$ and keep its demand $\d'(v)=\d(v)$ unchanged, otherwise we let its new demand be $\d'(v)=\epsilon\cdot\d(v)$ and keep its weight $w'(v)=w(v)$ unchanged. We let $\epsilon'=1$ be the new parameter. The new ratio bound of weights and demands is $U'=U\cdot\max\{1/\epsilon,\epsilon\}\leq nU^{2}$ because the result is trivial if $\epsilon<1/(nU)$ or $\epsilon > nU$. Because $\epsilon'=1$, it's safe to drop vertices $v$ with $w(v)/\d(v)\leq \epsilon'=1$ (namely we add them into the separator in advanced). Let $G'$ be the new graph and w.l.o.g. assume it is still connected. Lastly, we can make weights and demands integral by rounding up the weights and rounding down the demands.

By applying \Cref{thm:ComplexDecomp} on $G'=(V(G'),E(G'),w',\d')$ (let $n'=|V(G')|,m'=|E(G')|$) with parameters $\epsilon'=1$ and $r=\log\log\log(m'U')$, with high probability, we can obtain an $(\epsilon',\phi')$-vertex expander decomposition ${\cal G}'$ of $G'$ where $\phi'=\epsilon'/(m'^{o(1)}\log^{3+o(1)}U')$ in time $O(m'^{1+o(1)}\cdot U'^{o(1)})$. Observe that ${\cal G}={\cal G}'$ is also an $(\epsilon,\phi)$-vertex expander decomposition of $G$ where $\phi=\phi'\epsilon/(4\epsilon')=\epsilon/(4m'^{o(1)}U'^{o(1)})=\epsilon/(m^{o(1)}U^{o(1)})$, where the constant $1/4$ is from the rounding. The running time can be rewritten as $O(m^{1+o(1)}U^{1+o(1)}/\phi)$.
\end{proof}

\subsection{Proof of \Cref{thm:RandMatchingPlayer}}
\label{proof:DetMatchingPlayer}

We start by constructing the following vertex-capacitated graph $G'=(V(G'),E(G'),c)$. The vertex set $V(G')$ is $V(G)\cup\{u_{A}\mid u\in A\}\cup\{v_{B}\mid v\in B\}\cup\{s,t\}$, where $c(v)=\lceil 1/\phi\rceil\cdot w(v)$ for each $v\in V(G)$, $c(u_{A})=\d(u)$ for each $u_{A}$ corresponds to some $u\in A$, $c(v_{B})=\d(v)$ for each $v_{B}$ corresponds to some $v\in B$, and $s$ and $t$ are the source and the sink without capacity constraints. The edge set $E(G')$ is $E(G)\cup\{(s,u_{A}),(u_{A},u)\mid u\in A\}\cup\{(v,v_{B}),(v_{B},t)\mid v\in B\}$.

We will maintain an integral $\d$-matching $M$ between $A$ and $B$ during the algorithm, where $M$ is empty initially. The matching $M$ will be grew gradually by applying the approximate vertex-capacitated max flow algorithm in \Cref{lemma:DetVertexFlow} iteratively, until the value of $M$ reaches $\d(A)-z$ or a desired vertex cut is returned by the max flow algorithm in some round.

At the beginning of the $i$-th round, let $M_{i}$ be the current matching and let $\d_{i}$ be the demand function on $A\cup B$ representing the remaining demands, i.e. for each $v\in A\cup B$, $\d_{i}(v)=\d(v)-M_{i}(v)$, where $M_{i}(v)$ is the total weight of edges adjacent to $v$ in $M_{i}$. Let $G'_{i}=(V(G'),E(G'),c_{i})$ be the same graph with $G'$ except that $c_{i}(u_{A})=\d_{i}(u)$ for each $u\in A$ and $c_{i}(v_{B})=\d_{i}(v)$ for each $v\in B$. Then we apply \Cref{lemma:DetVertexFlow} with $\epsilon=1/2$ on $G'_{i}$. There are two cases.

\

\noindent\textbf{Case 1.} If it returns a cut $S'$ in $G'_{i}$ with total capacities $c_{i}(S')<\d_{i}(A)-z$, we will construct a desired vertex cut in $G$ and terminate the algorithm. Let $(L',S',R')$ be the vertex cut in $G'_{i}$ such that $s\in L'$ and $t\in R'$. We claim that the vertex cut $(L,S,R)$ in $G$ where $L=L'\cap V(G),S=S'\cap V(G)$ and $R=R'\cap V(G)$ is a desired output.
    
First we show that $\d(L),\d(R)> z$. Consider the cut $(L',S',R')$ of $G'_{i}$. For each $u\in A$, if $u\notin L'$, there must be $u_{A}\in S'$ or $u\in S'$, otherwise the path from $s$ through $u_{A}$ to $u$ will bring $u$ to $L'$. Therefore, each $u\in A\setminus L'$ will contribute at least $\min\{c_{i}(u_{A}),c_{i}(u)\}=\min\{d_{i}(u),\lceil 1/\phi\rceil\cdot w(u)\}\geq\d_{i}(u)$ to $c_{i}(S')$, which implies $\d_{i}(A\setminus L')\leq c_{i}(S')<\d_{i}(A)-z$ and $\d_{i}(A\cap L')> z$. Since $A\subseteq V(G)$, we finally get $\d(L)\geq \d_{i}(A\cap L') > z$. A similar argument shows $\d(R)> z$.
    
Next we will show $w(S)\leq \phi\cdot \d(L\cup S)$ and $w(S)\leq\phi\cdot \d(R\cup S)$. Observe that for any vertex $u\in A\cap R'$, we have $u_{A}\in S'$. Because $c(u_{A})=\d_{i}(u)$ is not counted in $c(S)$, it implies $c(S)\leq c(S')-\d_{i}(A\cap R')\leq \d_{i}(A)-z-\d_{i}(A\cap R')$. Since $\d_{i}(A\cap(L\cup S))=\d_{i}(A)-\d_{i}(A\cap R')$ and $c(S)=\lceil 1/\phi\rceil\cdot w(S)$, we get $\lceil 1/\phi\rceil\cdot w(S)\leq \d_{i}(A\cap(L\cup S))\leq \d(L\cup S)$. Similarly, we have $w(S)\leq\phi\cdot\d(R\cup S)$.

\

\noindent\textbf{Case 2.} Otherwise, we are guaranteed that the resulted flow $f$ has value at least $(\d_{i}(A)-z)/3$. By the flow rounding algorithm \cite{KP15}, in time $\tilde{O}(m)$, we can round $f$ into an integral flow with value at least $\lceil (\d_{i}(A)-z)/3 \rceil$. We can further decompose the flow into flow paths implicitly using dynamic tree \cite{ST83} in time $\tilde{O}(m)$\footnote{Here we only decompose the flow and obtain the endpoints of the flow paths, which is sufficient to get the matching.}, and get the corresponding $\d_{i}$-matching $M'$ between $A$ and $B$ with value at least $\lceil (\d_{i}(A)-z)/3 \rceil$ which can be embedded into $G$ with vertex congestion $\lceil 1/\phi\rceil$. Let this embedding be ${\cal P}_{i}$. Observe that $E({\cal P}_{i})$ is exactly edges $e\in E(G)$ with non-zero $f(e)$, which can be computed in $O(m)$ time. We then go to the next iteration by letting $M_{i+1}=M_{i}\cup M'$.

\

We claim that $O(\log (nU))$ rounds suffice. Observe that $M_{i}(A)+\d_{i}(A)=\d(A)$ and $M_{i+1}(A)\geq M_{i}(A)+ \lceil(\d_{i}(A)-z)/3\rceil$, which implies
\[
    \d(A)-M_{i+1}(A)-z\leq (\d(A)-M_{i}(A)-z) - \lceil (\d(A)-M_{i}(A)-z)/3 \rceil.
\]
Therefore, if algorithm does not terminate with a desired vertex cut in $k=O(\log(nU))$ rounds, we will have $\d(A)-M_{k+1}(A)-z\leq 0$ and the matching $M=M_{k+1}$ has $M(A)\geq\d(A)-z$ and it can be embedded into $G$ with vertex-congestion $k\cdot\lceil 1/\phi\rceil=O(\log(nU)/\phi)$. The embedding ${\cal P}$ of $M$ has $E({\cal P})=\bigcup_{i} E({\cal P}_{i})$.

The running time of the algorithm is $m^{1+o(1)}U\log U$ because there are $O(\log (nU))$ rounds and each round will invoke the algorithm in \Cref{lemma:DetVertexFlow} once, which takes $O(m^{1+o(1)}U/\phi)$ time (the maximum capacity ratio of $G'$ is $O(U/\phi)$).

\subsection{Proof of \Cref{lemma:MostBalanceCut} and \Cref{coro:EmbeddingOfExpander}}
\label{proof:MostBalanceCut}

We first prove \Cref{lemma:certificate} and then complete the proof of \Cref{lemma:MostBalanceCut} and \Cref{coro:EmbeddingOfExpander} using it.

\begin{lemma}
Let $G=(V(G),E(G),w,\d)$ be a graph with, for each $v\in V(G)$, integral weight $w(v)\geq 1$ and integral demand $0\leq \d(v)\leq w(v)$. Let $H=(V(H),E(H),w_{H},\d)$ be a graph with $V(H)=V(G)$, the same vertex demand function $\d$, and for each $e\in E(H)$, integral weight $w_{H}(e)\geq 1$. Furthermore, there exists fix parameters $z>0$ and $0<\phi\leq 1$ such that any cut $(A,B)$ of $H$ with $\d(A),\d(B)\geq z$ has $\Psi_{H}(A,B)\geq \phi$. Then, if there exists an embedding ${\cal P}$ of $H$ into $G$ with vertex-congestion $\eta\geq 1$, any vertex cut $(L,S,R)$ of $G$ with $\d(R\cup S), \d(L\cup S)\geq 2z$ has $h_{G}(L,S,R)\geq \phi/(2\eta)$.
\label{lemma:certificate}
\end{lemma}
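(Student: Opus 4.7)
The plan is to reduce the vertex expansion bound in $G$ to the edge sparsity bound in $H$ via the embedding $\cP$. Fix a vertex cut $(L,S,R)$ of $G$ with $\d(L\cup S), \d(R\cup S) \ge 2z$. Without loss of generality assume $\d(L\cup S) \le \d(R\cup S)$; it then suffices to prove $w(S) \ge (\phi/(2\eta))\cdot \d(L\cup S)$, from which $h_G(L,S,R) \ge \phi/(2\eta)$ follows immediately.

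I would split on the relative size of $\d(L)$ and $\d(S)$. If $\d(L) < \d(S)$, then since $\d(L)+\d(S) = \d(L\cup S)$ we have $\d(S) > \d(L\cup S)/2$, and since each vertex satisfies $\d(v)\le w(v)$ we conclude $w(S) \ge \d(S) > \d(L\cup S)/2 \ge (\phi/(2\eta))\cdot \d(L\cup S)$ (using $\phi\le 1\le \eta$). This ``trivial'' case essentially also covers the situation where neither side has large demand of its own, since $\d(L)<\d(S)$ must then hold.

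The main case is $\d(L) \ge \d(S)$, which forces $\d(L) \ge \d(L\cup S)/2 \ge z$; combined with $\d(R\cup S) \ge 2z \ge z$, the cut $(L, R\cup S)$ of $H$ is eligible to apply the hypothesis, yielding
\[
w_H(E_H(L, R\cup S)) \;\ge\; \phi \cdot \min\{\d(L), \d(R\cup S)\} \;=\; \phi\cdot \d(L),
\]
where the equality uses $\d(L) \le \d(L\cup S)\le \d(R\cup S)$. On the other hand, for every $e\in E_H(L, R\cup S)$ the embedding $\cP$ contains $w_H(e)$ paths in $G$ joining the two endpoints of $e$. Each such path starts in $L$ and ends in $R\cup S$; since $(L,S,R)$ is a vertex cut of $G$, no edge of $G$ joins $L$ and $R$, so the path must contain at least one vertex of $S$ (either as its endpoint in $S$, or as an interior vertex on the way to an endpoint in $R$). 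Charging each such path to one of its $S$-vertices and using the vertex-congestion bound $\eta$ gives
\[
w_H(E_H(L, R\cup S)) \;\le\; \sum_{s\in S}\bigl(\text{paths of }\cP\text{ through }s\bigr) \;\le\; \sum_{s\in S}\eta\cdot w(s) \;=\; \eta\cdot w(S).
\]
Combining the two inequalities yields $w(S) \ge (\phi/\eta)\cdot \d(L) \ge (\phi/(2\eta))\cdot \d(L\cup S)$, as desired.

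The only step that requires care, and the one I view as the main obstacle to state cleanly, is the charging argument in the embedding: one must verify that a path can be charged to an $S$-vertex it uses (either endpoint or interior), and that this does not double-count paths in a way that breaks the vertex-congestion bound. This is handled by summing first over paths and inequalities $|\text{path}\cap S|\ge 1$, then swapping the order of summation and invoking the definition of vertex-congestion from \Cref{def:embedding}. Everything else is bookkeeping via the assumption $\d(L\cup S)\le \d(R\cup S)$ and the case split on $\d(L)$ versus $\d(S)$.
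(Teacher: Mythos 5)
Your proof is correct and follows essentially the same route as the paper's: the same case split on $\d(S)$ versus $\d(L)$ (with the identical trivial case $w(S)\ge\d(S)>\d(L\cup S)/2$), and the same charging of crossing edges of $H$ to $S$-vertices via the vertex-congestion of the embedding. The only cosmetic difference is that you test the cut $(L,R\cup S)$ of $H$ where the paper tests $(R,L\cup S)$; both give $\min\{\d(A),\d(B)\}\ge\d(L\cup S)/2$ and hence the same bound $\phi/(2\eta)$.
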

\begin{proof}
Consider an arbitrary vertex cut $(L,S,R)$ of $G$ with $\d(R\cup S) \geq \d(L\cup S)\geq 2z$. By definition,
\[
h_{G}(L,S,R)=\frac{w(S)}{\d(L\cup S)}.
\]
In what follows, we assume $\d(S)\leq \d(L)$, otherwise we will have $h_{G,T}(L,S,R)=w(S)/\d(L\cup S)\geq \d(S)/\d(L\cup S)\geq 1/2\geq \phi/(2\eta)$ as required.

Let $A=R$ and $B=L\cup S$. Because $\d(R\cup S)\geq \d(L\cup S)\geq 2z$ and $\d(S)\leq \d(L)$, we have $\d(R)\geq \d(L)\geq z$ and $\d(R)\geq \d(L\cup S)/2$, which imply $\d(A),\d(B)\geq z$ and $\d(A)\geq \d(B)/2$. Therefore, $\Psi_{H}(A,B)\geq \phi$ and 
\[
    \Psi_{H}(A,B)=\frac{w_{H}(E_{H}(A,B))}{\min\{\d(A),\d(B)\}}\leq \frac{2w_{H}(E_{H}(A,B))}{\d(B)}=\frac{2w_{H}(E_{H}(A,B))}{\d(L\cup S)}.
\]

Because $H$ can be embedded into $G$ with vertex-congestion at most $\eta$ and each edge $e\in E_{H}(A,B)$ corresponds to exactly $w_{H}(e)$ unweighted flow paths, each of which contains at least one vertex in $S$, we have $w_{H}(E_{H}(A,B))\leq \eta\cdot w(S)$, which implies $h_{G,T}(L,S,R)\geq \phi/(2\eta)$.

\end{proof}

\begin{proof}[Proof of \Cref{lemma:MostBalanceCut}]
We use the cut-matching game framework in \Cref{sect:WeightedCutMatchingGame}, where the cut player uses the algorithm in \Cref{thm:DetCutPlayer} and the matching player uses the algorithm in \Cref{thm:RandMatchingPlayer}. Let $H=(V(H),E(H),w_{H},\d)$ be the graph the cut-matching game is working on, where $V(H)$ is equal to $V(G)$ with the same vertex demand function $\d$, and $E(H)$ is the edge set (initially empty) with edge weight function $w_{H}$. During the game, we also maintain a set $F\subseteq E(H)$ of \textit{fake edges} which collects edges not embedded in $G$.

In the $i$-th round, the cut player applies \Cref{thm:DetCutPlayer} on the current $H$ with parameter $r$, which returns a cut $(A_{i},B_{i})$ of $H$ with $\d(A_{i}),\d(B_{i})\geq \d(V(H))/4$ and $w_{H}(E_{H}(A_{i},B_{i}))\leq \d(V(H))/100$, or a large set $S_{i}\subseteq V(H)$ with $\d(S_{i})\geq \d(V(H))/2$ and $\Psi(H[S_{i}])\geq 1/\log^{O(r^{4})} m$. In the latter case, we let $A_{i}=V(H)\setminus S_{i}$ and $B_{i}=S_{i}$, and the cut-matching game will end right after this round.

The matching player applies \Cref{thm:RandMatchingPlayer} on graph $G$ with the edge cut $(A_{i},B_{i})$ and parameters $z$ and $\phi$. If the matching player returns a vertex cut, we can terminate the algorithm with a desired cut. Otherwise, we get a matching $M_{i}$ between $A_{i}$ and $B_{i}$ with value $M_{i}(A_{i})\geq |A_{i}|-z$, and $M_{i}$ can be embedded into $G$ with vertex-congestion $\eta'=O(\log (nU)/\phi)$. Observe that $M_{i}$ is not a perfect $\d$-matching between $A_{i}$ and $B_{i}$ so it does not satisfy the requirement of the game. We augment it by an arbitrary perfect $\d_{i}$-matching $F_{i}$ between $A_{i}$ and $B_{i}$, where $\d_{i}(v)=\d(v)-M_{i}(v)$ for each $v\in A_{i}\cup B_{i}$. The total weight of edges in $F_{i}$ is $\d(A_{i})-M_{i}(A_{i})\leq z$. The matching player adds $M_{i}\cup F_{i}$ into $H$ and the game proceeds to the next round.

From now we assume the game ends with successfully constructing an expander $H$. Let $\beta>0$ be some large enough constant. By \Cref{lemma:CMGRound}, the number of rounds is at most $k_{\CMG}=O(\log(nU))$. Observe that each $M_{i}$ will have at most $m$ edges, so $|E(H)|\leq k_{\CMG}\cdot m\leq O(m\log(nU))\leq \beta m\log(nU)$. By \Cref{lemma:CMGRound}, $\Psi(H)\geq 1/\log^{O(r^{4})}|E(H)|\geq 1/\log^{\beta r^{4}}(\beta m\log(nU))$ at last. Let $\hat{H}=H-F$, i.e. $\hat{H}=\bigcup_{i}M_{i}$, so $\hat{H}$ can be embedded into $G$ with vertex-congestion at most $\eta=k_{\CMG}\cdot\eta'=O(\log^{2}(nU)/\phi)\leq \beta\log^{2}(nU)/\phi$. The total weight of fake edges $w_{H}(F)\leq k_{\CMG}\cdot z=O(z\log(nU))\leq \beta z\log(nU)$.

\begin{claim}
Any cut $(A,B)$ of $\hat{H}$ with $\d(B)\geq \d(A)\geq 2w_{H}(F)/\Psi(H)$ has $\Psi_{\hat{H}}(A,B)\geq \Psi(H)/2$.
\label{claim:WeightedSparsityWithFake}
\end{claim}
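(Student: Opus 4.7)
The plan is to bound the weight of crossing edges in $\hat H$ from below by the weight of crossing edges in $H$ minus the total weight of fake edges, and then invoke the lower bound $\Psi(H)$ on the sparsity of the cut in $H$.

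First I would write $H = \hat H \cup F$ as a multiset of edges, so for the cut $(A,B)$ of $\hat H = H$ on the vertex side, every edge of $H$ crossing $(A,B)$ is either an edge of $\hat H$ crossing $(A,B)$ or an edge of $F$ (some of whose weight may or may not cross). This gives
\[
w_{\hat H}(E_{\hat H}(A,B)) \;\ge\; w_H(E_H(A,B)) - w_H(F).
\]
Since $\d(A),\d(B)>0$ (both are at least $2w_H(F)/\Psi(H)>0$, assuming the claim is nontrivial), the cut $(A,B)$ is a valid cut of $H$ as well, so the definition of $\Psi(H)$ yields
\[
w_H(E_H(A,B)) \;\ge\; \Psi(H)\cdot\min\{\d(A),\d(B)\} \;=\; \Psi(H)\cdot \d(A).
\]

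Combining the two displayed inequalities and using the hypothesis $\d(A)\ge 2w_H(F)/\Psi(H)$, which is equivalent to $w_H(F) \le \Psi(H)\d(A)/2$, I would conclude
\[
w_{\hat H}(E_{\hat H}(A,B)) \;\ge\; \Psi(H)\d(A) - \tfrac{1}{2}\Psi(H)\d(A) \;=\; \tfrac{1}{2}\Psi(H)\d(A).
\]
Dividing by $\min\{\d(A),\d(B)\} = \d(A)$ gives $\Psi_{\hat H}(A,B) \ge \Psi(H)/2$, as claimed.

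There is no real obstacle here; the argument is a one-line ``sparsity survives removal of a small total edge weight as long as the cut is large enough'' calculation. The only subtle point to double-check is that the inequality $w_{\hat H}(E_{\hat H}(A,B)) \ge w_H(E_H(A,B)) - w_H(F)$ holds even though $F$ might only partially cross $(A,B)$ — but since we only lose at most the total weight $w_H(F)$ of all fake edges in the bound, this only helps, so the argument goes through as stated.
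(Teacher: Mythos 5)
Your proof is correct and follows essentially the same route as the paper's: lower-bound the crossing weight in $H$ by $\Psi(H)\cdot\d(A)$ via the definition of sparsity, subtract at most $w_H(F)$ for the removed fake edges, and use the hypothesis $\d(A)\geq 2w_H(F)/\Psi(H)$ to retain half the crossing weight. No gaps.
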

\begin{proof}
Because $w_{H}(E_{H}(A,B))\geq \Psi(H)\cdot \d(A)$ and $\hat{H}=H-F$, we have $w_{H}(E_{\hat{H}}(A,B))\geq \Psi(H)\cdot \d(A)-w_{H}(F)$. When $\d(A)\geq 2w_{H}(F)/\Psi(H)$, we have $w_{H}(E_{\hat{H}}(A,B))\geq \Psi(H)\cdot\d(A)/2$ and $\Psi_{\hat{H}}(A,B)\geq \Psi(H)/2$.
\end{proof}

By \Cref{claim:WeightedSparsityWithFake}, any cut $(A,B)$ of $\hat{H}$ with $\d(B)\geq \d(A)\geq 2\beta z\log(nU)\log^{\beta r^{4}}(\beta m\log(nU))\geq 2w_{H}(F)/\Psi(H)$ has $\Psi_{\hat{H}}(A,B)\geq \Psi(H)/2\geq 1/(2\log^{\beta r^{4}}(\beta m\log(nU)))$. By \Cref{lemma:certificate}, there exists some large enough constant $\alpha>0$ such that every vertex cut of $G$ with $\d(L\cup S),\d(R\cup S)\geq \alpha z\log U\log^{\alpha r^{4}}(m\log U)$ has $h_{G}(L,S,R)\geq \phi/(\alpha\log^{2} U\log^{\alpha r^{4}}(m\log U))$.

There are $O(\log (nU))$ rounds in the cut matching game. Let $m'$ denote the maximum size of $|E(H)|$ and we know $m'\leq m\log(nU)$. Each round the cut player takes $O(m'\cdot(m'U)^{O(1/r)}\cdot\log^{O(r^{2})}(m'U))=O(m\cdot(mU)^{O(1/r)}\cdot\log^{O(r^{2})}(mU))$ time and the matching player takes $O(m^{1+o(1)}\cdot U\log U/\phi)$ time. Therefore, the total running time is $O(m^{1+o(1)+O(1/r)}\cdot U^{1+O(1/r)}\cdot\log^{O(r^{2})}(mU)/\phi)$.

\end{proof}

\begin{proof}[Proof of \Cref{coro:EmbeddingOfExpander}]
    This is a special case of \Cref{lemma:MostBalanceCut} by setting $z=0$ and $r=\log\log\log (mU)$. When $z=0$, there are no fake edges during the algorithm and the expander $H$ with $\Psi(H)\geq 1/\log^{O(r^{4})}(m\log(nU))\geq 1/(m\log U)^{o(1)}$ generated by the cut-matching game will be totally embedded into $G$ by an embedding ${\cal P}$ with vertex-congestion $\eta=O(\log^{2}(nU)/\phi)$. Observe that ${\cal P}$ is actually a union of embeddings ${\cal P}_{i}$ for several matchings $M_{i}$, so the set $E({\cal P})$ is exactly the union of $E({\cal P}_{i})$. Note that each $E({\cal P}_{i})$ can be computed by \Cref{thm:RandMatchingPlayer}. The running time is exactly the running time of algorithm in \Cref{lemma:MostBalanceCut} with the above parameters $z$ and $r$, i.e. $m^{1+o(1)}U^{1+o(1)}/\phi$.
\end{proof}

\subsection{Proof of \Cref{lemma:CutOrCertify}}
\label{proof:BalCutPrune}

We first prove \Cref{lemma:AdjustParameters}, and then complete the proof of \Cref{lemma:CutOrCertify} using it.

\begin{lemma}
Let $G=(V(G),E(G),w,\d)$ be an $n$-vertex $m$-edge graph with, for each $v\in V(G)$, integral weight $1\leq w(v)\leq U$ and integral demand $0\leq \d(v)\leq w(v)$. Given parameters $0<\phi\leq 1,r>1,0<z'<z$, if every vertex cut $(L,S,R)$ in $G$ with $\d(L\cup S),\d(R\cup S)\geq z$ has $h_{G}(L,S,R)> \phi$, then there is a deterministic algorithm that computes a vertex cut $(L,S,R)$ (possibly $L=S=\emptyset$) that has $w(S)\leq \phi\cdot\d(L\cup S)$ and further satisfies
\begin{itemize}
    \item either $\d(L\cup S),\d(R\cup S)\geq \d(V(G))/3$; or
    \item $\d(R)>2\d(V(G))/3$ and every vertex cut $(L',S',R')$ in $G[R]$ with $\d(L'\cup S'),\d(R'\cup S')\geq z'$ has $h_{G[R]}(L',S',R')\geq \phi/(\alpha\log^{2} U\log^{\alpha r^{4}}(m\log U))$, where $\alpha$ is the constant from \Cref{lemma:MostBalanceCut}.
\end{itemize}
The running time of this algorithm is $O((z/z')\cdot m^{1+o(1)+O(1/r)}\cdot U^{1+O(1/r)}\cdot\log^{O(r^{4})}(mU)/\phi)$
\label{lemma:AdjustParameters}
\end{lemma}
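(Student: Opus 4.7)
The plan is to iteratively prune sparse vertex cuts from $G$ via \Cref{lemma:MostBalanceCut}, until we either accumulate a sufficiently balanced cut or obtain the desired expansion certificate on the remaining induced subgraph. Fix $z^{\ast}:=z'/(\alpha\log U\log^{\alpha r^{4}}(m\log U))$, so that any certification produced by \Cref{lemma:MostBalanceCut} invoked with balance threshold $z^{\ast}$ yields $h_{G[R]}(L',S',R')\ge \phi/(\alpha\log^{2}U\log^{\alpha r^{4}}(m\log U))$ for every cut in $G[R]$ whose two sides both have demand at least $\alpha z^{\ast}\log U\log^{\alpha r^{4}}(m\log U)=z'$, exactly matching the target expansion bound.

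I will maintain a vertex partition $(L,S,R)$ of $V(G)$ with no edges between $L$ and $R$, initialized to $(\emptyset,\emptyset,V(G))$. At each iteration I call \Cref{lemma:MostBalanceCut} on $G[R]$ with parameters $\phi$, $z^{\ast}$, $r$. If it returns a cut $(A,S_{A},B)$ of $G[R]$ with $\d(B)\ge\d(A)>z^{\ast}$ and $h_{G[R]}(A,S_{A},B)\le\phi$, I prune by setting $L\leftarrow L\cup A$, $S\leftarrow S\cup S_{A}$, $R\leftarrow B$ and continuing; if instead the subroutine produces the certification, I return the current $(L,S,R)$ under the second output branch. I also return immediately when $\d(L\cup S)\ge\d(V(G))/3$, taking the first branch. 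The invariant $w(S)\le \phi\cdot\d(L\cup S)$ is maintained by induction from $w(S_{A})\le \phi\cdot\d(A\cup S_{A})$ in each prune. For the first output branch, the prune that triggers it has $\d(R_{\mathrm{before}})>2\d(V(G))/3$ together with $\d(A)\le\d(B)$, so $\d(B)\ge\d(V(G))/3-\d(S_{A})/2$ and hence $\d(R\cup S)\ge\d(B)+\d(S_{A})\ge\d(V(G))/3$. For the second branch, not having taken the first gives $\d(L\cup S)<\d(V(G))/3$, so $\d(R)>2\d(V(G))/3$, and the certification above translates directly to the required expansion bound on $G[R]$.

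The main technical point, and the place where the hypothesis of the lemma is essential, is bounding the number of iterations. Naively each prune removes only $>z^{\ast}\approx z'/\mathrm{polylog}$ units of demand, which would permit as many as $\d(V(G))/z^{\ast}$ iterations rather than the desired $z/z^{\ast}$. To tighten this, note that as long as the first output branch has not triggered we have $\d(L\cup S)<\d(V(G))/3\le\d(R\cup S)$, so the maintained invariant yields $h_{G}(L,S,R)=w(S)/\d(L\cup S)\le\phi$ for the accumulated cut in the original graph $G$. The hypothesis of the lemma then forces $\d(L\cup S)<z$ or $\d(R\cup S)<z$; if $z\le 2\d(V(G))/3$ the second alternative is impossible since $\d(R\cup S)>2\d(V(G))/3$, while if $z>2\d(V(G))/3$ the bound $\d(L\cup S)<\d(V(G))/3<z$ is trivial. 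Either way $\d(L\cup S)<z$ at every non-terminal iteration, and since each such iteration adds at least $z^{\ast}$ to $\d(L\cup S)$, the algorithm runs at most $z/z^{\ast}=(z/z')\cdot\alpha\log U\log^{\alpha r^{4}}(m\log U)$ iterations. Multiplied by the per-call running time of \Cref{lemma:MostBalanceCut}, this yields the target bound $O((z/z')\cdot m^{1+o(1)+O(1/r)}\cdot U^{1+O(1/r)}\cdot\log^{O(r^{4})}(mU)/\phi)$.
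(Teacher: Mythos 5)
Your proposal is correct and follows essentially the same route as the paper's proof in Appendix A.4: the same choice of $z^{*}=z'/(\alpha\log U\log^{\alpha r^{4}}(m\log U))$, the same iterative pruning via \Cref{lemma:MostBalanceCut}, the same termination conditions, and the same use of the hypothesis on balanced cuts of $G$ to cap the accumulated demand of the pruned side by $z$ and hence the number of rounds by $O(z/z^{*})$. Your case analysis for $z>2\d(V(G))/3$ and the explicit check that $\d(R\cup S)\ge\d(V(G))/3$ in the balanced branch are, if anything, slightly more careful than the paper's.
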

\begin{proof}
Let $z^{*}=z'/\alpha\log U\log^{\alpha r^{4}}(m\log U)$. Our algorithm will iteratively apply \Cref{lemma:MostBalanceCut} on the current graph (initially $G$) with parameters $\phi$ and $z^{*}$, and keep prunning the current graph by the sparse vertex cut it returns until we have dropped a constant fraction of demands or we are guaranteed that all balanced vertex cuts (both sides have demands larger than $z'$) in the current graph are not sparse. We use $G_{i}$ denote the current graph at the beginning of the $i$-th round. Initially, $G_{1}=G$. We apply \Cref{lemma:MostBalanceCut} on $G_{i}$ with parameters $\phi$ and $z^{*}$. There are two cases.

\

\noindent\textbf{Case 1.} The output is a vertex cut $(L_{i},S_{i},R_{i})$ in $G_{i}$ with $\d(R_{i})\geq \d(L_{i})> z^{*}$ and $h_{G_{i}}(L_{i},S_{i},R_{i})\leq \phi$. Let $G_{i+1}=G_{i}\setminus(L_{i}\cap S_{i})$. If $\d(V(G_{i+1}))> 2\d(V(G))/3$, we proceed to the next round. Otherwise the algorithm terminates with a vertex cut $(L,S,R)$ where $L=\bigcup_{i'=1}^{i} L_{i'},S=\bigcup_{i'=1}^{i}S_{i'}$ and $R=V(G)\setminus(L\cup S)$. The returned $(L,S,R)$ satisfies the requirements because of the following. First, for all $1\leq i'\leq i$, observe that $R\subseteq R_{i'}$, so there is no edge between $L_{i'}$ and $R$, which implies $(L,S,R)$ is indeed a vertex cut of $G$. Second, for all $1\leq i'\leq i$, we know $w(S_{i'})\leq \phi\cdot\d(L_{i'}\cup S_{i'})$ and all $L_{i'},S_{i'}$ are vertex-disjoint, so $w(S)\leq\phi\cdot\d(L\cup S)$. Lastly, we have $\d(L\cup S)=\d(V(G)\setminus V(G_{i+1}))\geq \d(V(G))-2\d(V(G))/3=\d(V(G))/3$. Also, we have $\d(L\setminus L_{i})\leq \d((L\cup S)\setminus (L_{i}\cup S_{i}))= \d(V(G))-\d(V(G_{i}))$ and $\d(L_{i})\leq \d(V(G_{i}))/2$, so $\d(L)\leq \d(V(G))-\d(V(G_{i}))/2\leq 2\d(V(G))/3$ and $\d(R\cup S)=\d((V(G)\setminus L))\geq \d(V(G))/3$. 

\

\noindent\textbf{Case 2.} We are guaranteed that every vertex cut $(L',S',R')$ in $G_{i}$ with $\d(L'\cup S'),\d(R'\cup S')\geq \alpha z^{*}\log U\log^{\alpha r^{4}}(m\log U)=z'$ has $h_{G_{i}}(L',S',R')\geq \phi/(\alpha\log^{2} U\log^{\alpha r^{4}}(m\log U))$. The algorithm terminates with the cut $(L,S,R)$ where $L=\bigcup_{i'=1}^{i-1}L_{i'},S=\bigcup_{i'=1}^{i-1}S_{i'}$ and $R=V(G_{i})$. By a similar argument, we have $(L,S,R)$ is a vertex cut of $G$, $\d(S)\leq\phi\cdot\d(L\cup S)$ and $\d(R)=\d(V(G_{i}))>2\d(V(G))/3$. In particular, when $i=1$ the vertex cut degenerates and there will be $L=S=\emptyset$ and $R=V(G)$.

\

We now analyse the running time of this algorithm. We claim that the number of rounds is $O(z/z^{*})$. Observe that in the last round $i^{*}$, we have $\d(G_{i^{*}})>2\d(V(G))/3$. Consider the cut $(L^{*},S^{*},R^{*})$ where $L^{*}=\bigcup_{i'=1}^{i'<i^{*}}L_{i'}$, $S^{*}=\bigcup_{i'=1}^{i'<i^{*}}S_{i'}$ and $R^{*}=V(G)\setminus(L^{*}\cup S^{*})$. We have $\d(S^{*})\leq \phi\cdot\d(L^{*}\cup S^{*})$ and $\d(L^{*}\cup S^{*})=\d(V(G))-\d(V(G_{i^{*}}))\leq \d(V(G))/3$, which implies $h_{G}(L^{*},S^{*},R^{*})\leq \phi$, so $\d(L^{*}\cup S^{*})\leq z$ by the property of $G$. Combining the fact that each vertex cut $(L_{i},S_{i},R_{i})$ has $\d(L_{i}\cup S_{i})\geq z^{*}$, we can bound the number of rounds by $O(z/z^{*})$. 

In each round, we apply \Cref{lemma:MostBalanceCut} once, which takes $O(m^{1+o(1)+O(1/r)}\cdot U^{1+O(1/r)}\cdot\log^{O(r^{2})}(mU)/\phi)$ time. Therefore, the overall running time is bounded by $O((z/z')\cdot m^{1+o(1)+O(1/r)}\cdot U^{1+O(1/r)}\cdot\log^{O(r^{4})}(mU)/\phi)$ because $z^{*}=z'/\alpha\log U\log^{\alpha r^{4}}(m\log U)$.

\end{proof}

\begin{proof}[Proof of \Cref{lemma:CutOrCertify}]
Let $\phi_{i}=\epsilon/(4\alpha\log^{2} U\log^{\alpha r^{4}}(m\log U))^{i-1},z_{i}=(2\d(V(G)))^{(r+1-i)/r}$ for all $1\leq i\leq r+1$. In particular, $\phi_{1}=\epsilon$, $z_{1}=2\d(V(G))$, $\phi_{r+1}=\epsilon/(\log^{O(r)}U\log^{O(r^{5})}(m\log U))$ and $z_{r+1}=1$. From now we use the term \textit{$(\phi,z)$-nearly vertex expander} to denote the graph in which every vertex cut $(L,S,R)$ with $\d(L\cup S),\d(R\cup S)\geq z$ has vertex expansion greater than $\phi$.  

The algorithm has $r$ rounds. In the $i$-th round, starting with a $(\phi_{i},z_{i})$-nearly vertex expander, it will apply \Cref{lemma:AdjustParameters} to prune the graph and the remaining graph will turn out to be a $(\phi_{i+1},z_{i+1})$-nearly vertex expander, if \Cref{lemma:AdjustParameters} does not return a balance sparse cut. Therefore, starting with the initial graph $G$ (obviously a $(\phi_{1},z_{1})$-nearly vertex expander), either we will get a balance sparse cut, or the final graph will have vertex expansion at least $\phi=\phi_{r+1}=\epsilon/(\log^{O(r)}U\log^{O(r^{5})}(m\log U))$ as desired. The detailed description and proof are as follows.

Initially, we let $G_{1}=G$ be our current graph. $G$ is a $(\phi_{1},z_{1})$-nearly vertex expander because $z_{1}=2\d(V(G))$ and there is no cut with $\d(L\cup S),\d(R\cup S)\geq z_{1}$. We apply \Cref{lemma:AdjustParameters} on $G_{1}$ with parameters $\phi_{1},r,z=z_{1},z'=z_{2}$. The output is a vertex cut $(L_{1},S_{1},R_{1})$ with $w(S_{1})\leq\phi_{1}\cdot\d(L_{1}\cup S_{1})$ in either of the following two cases. In the first case, we have $\d(L_{1}\cup S_{1}),\d(R_{1}\cup S_{1})\geq \d(V(G))/3$. The algorithm terminates with $(L_{1},S_{1},R_{1})$ as desired. In the second case, $\d(R_{1})\geq 2\d(V(G))/3$, and every vertex cut $(L',S',R')$ in $G[R_{1}]$ with $\d(L'\cup S'),\d(R'\cup S')\geq z_{2}$ has $h_{G[R_{1}]}(L',S',R')\geq \phi_{1}/(\alpha\log^{2} U\log^{\alpha r^{4}}(m\log U))\geq 4\phi_{2}$. We let $G_{2}=G[R_{1}]$ and proceed to the second round.

The $i$-th round for $2\leq i\leq r$ is similar to the first iteration. During all iterations, we maintain the following invariants which initially hold for $i=2$. The first invariant is that $G_{i}$ satisfies that every vertex cut $(L,S,R)$ of $G_{i}$ with $\d(L\cup S),\d(R\cup S)\geq z_{i}$ has $h_{G_{i}}(L,S,R)\geq 4\phi_{i}$, and the second invariant is $\d(G_{i})\geq \d(V(G))/2$. We apply \Cref{lemma:AdjustParameters} on $G_{i}$ with parameters $\phi_{i},r,z=z_{i},z'=z_{i+1}$. The output is a vertex cut $(L_{i},S_{i},R_{i})$ with $w(S_{i})\leq\phi_{i}\cdot\d(L_{i}\cup S_{i})$. There are two cases (actually only one).

\

\noindent\textbf{Case 1.} This case cannot happen as we will show that there cannot be vertex cut $(L_{i},S_{i},R_{i})$ of $G_{i}$ such that $\d(L_{i}\cup S_{i}),\d(R_{i}\cup S_{i})\geq \d(V(G_{i}))/3$. Otherwise there will be $\d(L_{i}\cup S_{i}),\d(R_{i}\cup S_{i})\geq \d(V(G_{i}))/3\geq \d(V(G))/6\geq z_{i}$ (note that $z_{i}\leq z_{2}=(2\d(V(G)))^{(r-1)/r}\leq \d(V(G))/10$ since $1\leq r \leq \lfloor\log_{20}\d(V(G))\rfloor$) and $w(S_{i})\leq\phi_{i}\cdot\d(L_{i}\cup S_{i})\leq 3\phi_{i}\cdot\d(R_{i}\cup S_{i})$ which implies $h_{G_{i}}(L_{i},S_{i},R_{i})\leq 3\phi_{i}$. This violates the first invariant.

\

\noindent\textbf{Case 2.} We will always have $\d(R_{i})\geq 2\d(V(G_{i}))/3$, and every vertex cut $(L',S',R')$ of $G[R_{i}]$ with $\d(L'\cup S'),\d(R'\cup S')\geq z_{i+1}$ has $h_{G[R_{i}]}(L',S',R')\geq \phi_{i}/(\alpha\log^{2} U\log^{\alpha r^{4}}(m\log U))\geq 4\phi_{i+1}$. Let $G_{i+1}=G[R_{i}]$ and obviously the first invariant holds for $i+1$. 
    
It remains to show the second invariant holds for $i+1$, i.e. $\d(V(G_{i+1}))\geq \d(V(G))/2$. Assume for the contradiction that $\d(V(G_{i+1}))<\d(V(G))/2$. 
Let $(\hat{L},\hat{S},\hat{R})$ be a vertex cut of $G_{2}$ where $\hat{L}=\bigcup_{i'=2}^{i}L_{i},\hat{S}=\bigcup_{i'=2}^{i}S_{i}$ and $\hat{R}=R_{i}$. 
Because $\d(V(G_{i+1}))<\d(V(G))/2$ and $\d(V(G_{2}))\geq 2\d(V(G))/3$, we have $\d(\hat{L}\cup\hat{S})=\d(V(G_{2}))-\d(V(G_{i+1}))\geq \d(V(G))/6$. Because $\d(V(G_{i}))\geq \d(V(G))/2$ and $\d(R_{i})\geq 2\d(V(G_{i}))/3$, we have $\d(\hat{R}\cup\hat{S})\geq \d(R_{i})\geq \d(V(G))/3$. We further have $w(\hat{S})\leq \phi_{2}\cdot\d(\hat{L}\cup\hat{S})$, because for all $2\leq i'\leq i$, $w(S_{i})\leq\phi_{i}\cdot\d(L_{i}\cup S_{i})\leq \phi_{2}\cdot\d(L_{i}\cup S_{i})$. Also, $w(\hat{S})\leq 3\phi_{2}\cdot\d(\hat{R}\cap\hat{S})$ because $\d(\hat{L}\cup\hat{S})\leq \d(V(G))\leq 3\d(\hat{R}\cup\hat{S})$. In summary, $(\hat{L},\hat{S},\hat{R})$ is a vertex cut of $G_{2}$ with $\d(\hat{L}\cup\hat{S}),\d(\hat{R}\cup\hat{S})\geq \d(V(G))/6\geq z_{2}$ and $h_{G_{2}}(\hat{L},\hat{S},\hat{R})\leq 3\phi_{2}$, which contradicts the first invariant.

\

At the end of the algorithm, we obtain $G_{r+1}$. Let $L=\bigcup_{i'=1}^{r}L_{i},S=\bigcup_{i'=1}^{r}S_{i}$ and $R=V(G_{r+1})$ with $w(S)\leq\phi_{1}\cdot\d(L\cup S)=\epsilon\cdot\d(L\cup S)$. By the first invariant, $G[R]$ is exactly $G_{r+1}$ in which every vertex cut $(L,S,R)$ with $\d(L\cup S),\d(R\cup S)\geq z_{r+1}=1$ has $h_{G[R]}(L,S,R)\geq \phi_{r+1}=\phi$. Therefore, $h(G[R])\geq \phi$, and by the second invariant, $\d(R)=\d(V(G_{r+1}))\geq \d(V(G))/2$.

There are at most $r$ iterations in the algorithm and each iteration we will apply \Cref{lemma:AdjustParameters} once with $z/z'=\d(V(G))^{1/r}\leq (nU)^{1/r}$. Therefore, the overall running time is $O(m^{1+o(1)+O(1/r)}\cdot U^{1+O(1/r)}\cdot\log^{O(r^{4})}(mU)/\phi)$.

\end{proof}

\subsection{Proof of \Cref{lemma:TreeToInterval}}
\label{proof:TreeToInterval}

Let $\ET_{\full}(\tau)$ be an ordered list representing the trajectory of the same Euler tour as $\ET(\tau)$ but keeping all occurrences of all vertices. Obviously, preprocessing $\ET_{\full}(\tau)$ and $\ET(\tau)$ only takes $O(|V(\tau)|)$ time and $\ET(\tau)$ is a sublist of $\ET_{\full}(\tau)$. It is well-known that by removing all occurrences of failed vertices on $\ET_{\full}(\tau)$, $\ET_{\full}(\tau)$ will be splitted into $O(d\Delta)$ intervals in $O(d\Delta\log(d\Delta))$ time (the $\log(d\Delta)$ factor is from sorting occurrences of failed vertices by $\ET_{\full}(\tau)$), and vertices inside each of the intervals will be contained by a same subtree. Obviously, the set $I_{\tau}$ can be obtained by taking the restrictions of these intervals on $\ET(\tau)$. The owner of each interval is exactly the subtree that contains an arbitrary vertex in this interval, which can be computed in $O(\log d)$ time as shown below.

To compute the subtree containing a given vertex, we need some preparations in the preprocessing phase and the update phase. In the preprocessing phase, we construct in linear time on $\tau$ the data structure which supports lowest common ancestor (LCA) query in $O(1)$ time \cite{BFC00}. In the update phase, we compute in $O(d\log d)$ time the lowest failed ancestor for each failed vertex by the following observation. For each failed vertex $v$, its lowest failed ancestor is also a failed ancestor of $u$, where $u$ is the failed vertex right before $v$ on $\ET(\tau)$. Hence it suffices to find the lowest failed ancestor of $\LCA(u,v)$ by doing a binary search on $u$'s failed ancestors (with supports of some jumping pointers). Regarding answering the query, given a vertex $v\in V(\tau)$, we can identify the subtree $v$ inside by figuring out the lowest failed ancestor of $v$ (for each subtree, we take the edge connecting its root and its root's parent as the identification). Similarly, take the failed vertex $u$ right before $v$ on $\ET(\tau)$ by a binary search on failed vertices. Then the lowest failed ancestor of $\LCA(u,v)$ can be obtained by a binary search on $u$'s failed ancestors. The query time is $O(\log d)$. 

\section{Deterministic Approximate Vertex-Capacitated Max Flow}
\label{sect:DetVertexFlow}

Efficient algorithms for approximate vertex-capacitated max flow problem have been studied in several works, e.g. \cite{CK19,CS21,BGS21}. The currently state-of-the-art algorithm is a randomized one shown in \Cref{lemma:RandVertexFlow} from \cite{BGS21} with almost-linear running time. 

\begin{lemma}[Almost-Linear-Time Approximate Vertex-Capacitated Max Flow, \cite{BGS21}] For any constant $1/\polylog(n)< \epsilon < 1$, consider an $n$-vertex $m$-edge undirected graph $G=(V(G),E(G),c)$ with vertex capacity function $c:V(G)\setminus\{s,t\}\to \mathbb{R}_{>0}$. Let $s,t\in V(G)$ be source and sink vertices without capacity constraints such that $s$ and $t$ are not directly connected by an edge. Then, there is a randomized algorithm that computes, with high probability,
\begin{itemize}
    \item a feasible flow $f$ that sends a $(1-\epsilon)$-fraction of the max flow value from $s$ to $t$, and
    \item a cut $S\subseteq V(G)\setminus\{s,t\}$ whose deletion will disconnect $s$ from $t$, with total capacities no more than $(1+\epsilon)$ times the min cut value.
\end{itemize}
The algorithm runs in $O(m^{1+o(1)}\log\log C)$ time, where $C$ is the maximum capacity ratio.
\label{lemma:RandVertexFlow}
\end{lemma}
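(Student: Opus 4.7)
The plan is to reduce the vertex-capacitated max flow problem to an edge-capacitated one via the standard vertex-splitting gadget, and then invoke a known almost-linear time approximate edge-capacitated max flow subroutine. Concretely, I would construct an auxiliary graph $G'$ as follows: replace each vertex $v \in V(G) \setminus \{s,t\}$ by two copies $v^{\iin}, v^{\out}$ connected by a directed edge of capacity $c(v)$; for each undirected edge $(u,v) \in E(G)$, install two edges $(u^{\out}, v^{\iin})$ and $(v^{\out}, u^{\iin})$ in $G'$, each with capacity $+\infty$ (or a large finite value exceeding the total vertex capacity). The vertices $s$ and $t$ remain unsplit. The resulting graph $G'$ has $O(n+m)$ vertices and edges, and by a standard argument the max $s$-$t$ edge flow in $G'$ equals the max $s$-$t$ vertex flow in $G$, while the min edge cut in $G'$ equals the min vertex cut in $G$; moreover any finite-capacity edge cut in $G'$ must use only the ``internal'' edges $v^{\iin} \to v^{\out}$, which correspond bijectively to vertices of $G$.

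Next, I would apply a state-of-the-art almost-linear time $(1-\epsilon)$-approximate edge-capacitated max $s$-$t$ flow algorithm (e.g.~the interior-point-based or MWU-based frameworks from the recent maximum flow literature) to $G'$. This yields a flow $f'$ of value at least $(1-\epsilon)\cdot\mathrm{maxflow}(G')$ together with a cut $S' \subseteq E(G')$ of total capacity at most $(1+\epsilon)\cdot\mathrm{mincut}(G')$. I would then translate back: the flow $f$ in $G$ is obtained by sending, for every path decomposition unit of $f'$ passing through $v^{\iin} \to v^{\out}$, a corresponding unit through $v$ in $G$; the vertex cut $S$ in $G$ is the set of vertices $v$ whose splitting edge lies in $S'$. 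Since the $\infty$-edges cannot appear in any near-minimum cut of $G'$, the cut $S$ is well-defined and inherits the $(1+\epsilon)$-approximation guarantee. The total running time is dominated by the max-flow call on $G'$, giving $m^{1+o(1)}\log\log C$.

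The main obstacle I anticipate is threefold. First, one must ensure that the edge-flow oracle returns both a primal $(1-\epsilon)$-flow \emph{and} a dual $(1+\epsilon)$-cut simultaneously in almost-linear time; several modern max-flow algorithms provide only one side, so either both are extracted from the same run (as in Sherman-style dual certificates) or a separate call produces the cut. Second, the $\log \log C$ dependence on capacity ratio requires that the underlying edge-flow routine already achieves it, which typically needs delicate capacity scaling and a binary search over flow values rather than a blunt $\log C$ bound. Third, although replacing $\infty$ by a finite sentinel is harmless in theory, one must choose the sentinel carefully so that the sentinel never enters a returned near-minimum cut even in the approximate setting; setting its value to, say, $1 + \sum_v c(v)$ and appealing to the $(1+\epsilon)$-approximation guarantee suffices, since any cut touching such an edge would exceed the true minimum by more than an $(1+\epsilon)$ factor. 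Once these issues are handled, the high-probability correctness and the running time of the resulting algorithm follow immediately from the black-box guarantees of the edge-flow subroutine.
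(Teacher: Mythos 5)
This lemma is not proved in the paper at all --- it is imported verbatim from \cite{BGS21} as a black box (the paper only proves the slower \emph{deterministic} analogue, \Cref{lemma:DetVertexFlow}, in \Cref{sect:DetVertexFlow}). So the real question is whether your reduction-plus-black-box argument stands on its own, and there is a genuine gap in the step where you ``apply a state-of-the-art almost-linear time $(1-\epsilon)$-approximate edge-capacitated max flow algorithm to $G'$.'' The vertex-splitting gadget necessarily produces a \emph{directed} graph: the edge $v^{\iin}\to v^{\out}$ must be one-way, and the cross edges must be $(u^{\out},v^{\iin})$ and $(v^{\out},u^{\iin})$, otherwise flow can enter $v^{\iin}$ from $u^{\out}$ and immediately leave to a neighbor's $w^{\out}$ without ever crossing the capacity edge, destroying the equivalence. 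But the almost-linear-time $(1-\epsilon)$-approximate max flow toolbox you are implicitly invoking (Sherman, KLOS, Peng, and the MWU-based congestion-approximator frameworks) works only for \emph{undirected} edge-capacitated graphs; for directed graphs, a $(1-\epsilon)$-approximation is essentially as hard as exact max flow (take $\epsilon$ below one over the optimum), and no almost-linear-time routine with the stated $\log\log C$ dependence existed at the time of \cite{BGS21}. So the object you are treating as off-the-shelf does not exist in the form you need it.

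This is precisely why both \cite{BGS21} and the paper's own deterministic variant do \emph{not} call an edge-flow oracle on $G'$. They perform the same directed vertex-splitting reduction you describe, but then run a multiplicative-weight-update / augmenting-path scheme (\Cref{algo:EdgeFlow}) in which the only algorithmic primitive is $(1+\epsilon)$-approximate \emph{shortest paths} with respect to evolving vertex lengths --- and shortest paths can be computed on an auxiliary \emph{undirected} length graph, which is where decremental SSSP data structures enter and where the $\polylog C$ (resp.\ $\log\log C$ after the min-cost-flow machinery of \cite{BGS21}) dependence comes from. Your handling of the sentinel capacities and the translation of the primal flow and dual cut back to $G$ is fine; the missing content is the flow algorithm itself, which cannot be delegated to an undirected edge-flow black box.
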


\Cref{lemma:DetVertexFlow} is a deterministic variant with slower running time, but it has not been stated explicitly previously, so we show the algorithm below for completeness.

\begin{lemma}[Deterministic Approximate Vertex-Capacitated Max Flow] For any constant $1/\polylog(n)< \epsilon < 1$, consider an $n$-vertex $m$-edge undirected graph $G=(V(G),E(G),c)$ with vertex capacity function $c:V(G)\to [1,C]\cup\{\infty\}$. Let $s,t\in V(G)$ be source and sink vertices such that $s$ and $t$ are not directly connected by an edge and $c(s),c(t)=\infty$, and they are the only vertices with infinite capacity. Then, there is a deterministic algorithm that computes,
\begin{itemize}
    \item a feasible flow $f$ that sends a $(1-\epsilon)$-fraction of the max flow value from $s$ to $t$, and
    \item a cut $S\subseteq V(G)\setminus\{s,t\}$ whose deletion will disconnect $s$ from $t$, with total capacities no more than $(1+\epsilon)$ times the min cut value.
\end{itemize}
The algorithm runs in $O(m^{1+o(1)}\polylog C+n^{o(1)}C_{\ssum})$ time, where $C$ is the maximum capacity and $C_{\ssum}=\sum_{v\in V(G)\setminus\{s,t\}}c(v)$ denotes the total capacities of finite capacitated vertices.
\label{lemma:DetVertexFlow}
\end{lemma}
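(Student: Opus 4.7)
The plan is to reduce the vertex-capacitated approximate max-flow/min-cut problem in $G$ to an \emph{edge}-capacitated one on a directed graph $G'$, and then invoke a deterministic almost-linear-time approximate edge-capacitated max flow algorithm (the deterministic algorithm referenced in the overview, derived from \cite{bernstein2022deterministic}). I would build $G'$ by the standard vertex-splitting reduction: every $v\in V(G)\setminus\{s,t\}$ is replaced by $v_{\iin},v_{\out}$ joined by a directed edge of capacity $c(v)$, and every undirected $(u,v)\in E(G)$ is replaced by two directed edges $(u_{\out},v_{\iin})$ and $(v_{\out},u_{\iin})$. This gives a value-preserving bijection between $s$-$t$ flows (and finite cuts) in $G$ and in $G'$, with the property that any finite min cut in $G'$ uses only the split edges $(v_{\iin},v_{\out})$ and therefore canonically identifies a vertex cut $S=\{v:(v_{\iin},v_{\out})\text{ is cut}\}$ of the same value in $G$.

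Next I would deal with the fact that the ``original'' edges in $G'$ should conceptually be infinite-capacity. The simpler route is to cap each such edge at $C_{\ssum}+1$: because every $s$-$t$ vertex cut in $G$ has total capacity at most $C_{\ssum}$, the max flow value is at most $C_{\ssum}$, and so these edges are never a bottleneck. Alternatively, one can replace each vertex $v$ with $c(v)$ parallel unit-capacity internal paths, producing a (largely) unit-capacity edge-capacitated graph with $O(m+C_{\ssum})$ edges, which plays well with unit-capacity max-flow solvers. Either way, we obtain an edge-capacitated graph on which a deterministic $(1\pm\epsilon)$-approximate max flow can be run.

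Having computed an edge max flow $f'$ and an edge min cut $S'$ in the reduced graph, I would translate back: aggregate the unit (or edge-capacitated) flows along each split-edge to obtain an $s$-$t$ flow $f$ of value $(1-\epsilon)$ times the vertex max flow in $G$, and read off the vertex cut $S$ from the split/gadget edges present in $S'$, obtaining a $(1+\epsilon)$-approximate vertex min cut. Both conversions are linear in the output size, and the approximation guarantees transfer one-to-one because the reduction is exact.

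The main obstacle will be matching the precise asymmetric form $O(m^{1+o(1)}\polylog C+n^{o(1)}C_{\ssum})$. Naively, the unit-capacity gadget reduction yields a graph with $O(m+C_{\ssum})$ edges, and applying the deterministic unit-capacity max flow of \cite{bernstein2022deterministic} would give $(m+C_{\ssum})^{1+o(1)}$, which is not quite of the advertised form. To separate the two regimes one must distinguish the $O(m)$ ``original'' edges (which carry potentially large capacity and need only $\polylog C$ overhead when treated as edge-capacitated edges) from the $O(C_{\ssum})$ ``vertex-internal'' unit edges (whose total count is what drives the second term). Doing this carefully—e.g., by keeping the original edges as single capacitated edges and leveraging the algorithm's $m^{1+o(1)}\polylog U$ dependence on edge count $m$ and max capacity $U$, while paying $n^{o(1)}$ per unit of $C_{\ssum}$ only for the gadget part—is the delicate accounting step of the proof.
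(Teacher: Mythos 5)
There is a genuine gap at the heart of your proposal: after the vertex-splitting reduction (which does match the paper's first step), the graph $G'$ you obtain is necessarily \emph{directed} --- the capacity edge $(v_{\iin},v_{\out})$ must be directed, since an undirected gadget lets flow pass from $u_{\out}$ to $v_{\iin}$ and straight on to $w_{\out}$ without ever crossing the edge $\{v_{\iin},v_{\out}\}$, bypassing $v$'s capacity entirely. But the deterministic almost-linear-time approximate max-flow machinery you want to invoke as a black box (the flow results of \cite{bernstein2022deterministic}/\cite{BGS21}) is for \emph{undirected} edge-capacitated graphs; no deterministic almost-linear-time $(1+\epsilon)$-approximate max-flow oracle for directed graphs is available here, and $(1+\epsilon)$-approximation in the directed setting is not known to be easier than exact computation. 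Capping the infinite edges at $C_{\ssum}+1$ or expanding vertices into unit-capacity parallel paths does not remove this obstruction. What the overview actually ``derives from \cite{bernstein2022deterministic}'' is not an edge-capacitated max-flow subroutine but the deterministic decremental SSSP data structure (\Cref{lemma:DecreSSSP}).

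The paper's proof instead runs a Garg--K\"onemann-style multiplicative-weight-update loop (\Cref{algo:EdgeFlow}) on $G'$, following \cite{CK19,CS21}: repeatedly find a $(1+\epsilon)$-approximate shortest $s$-$t$ path under exponentially updated lengths and augment along it. Directedness is harmless there because the only primitive needed is a shortest-path query, and shortest paths in $G'$ correspond to shortest paths in the original undirected $G$ with vertex lengths; these are served by the decremental SSSP structure on a length-discretized auxiliary graph. This is also exactly where the two-term running time you flagged as ``delicate accounting'' comes from: $m^{1+o(1)}\polylog C$ is the total update time of the decremental SSSP structure (the length ratio is $O(C/\delta)$), and $n^{o(1)}C_{\ssum}$ is the total path-reporting time, since every augmenting path of length $\ell$ pushes $\Omega(\ell)$ units of flow through finite-capacity edges whose total capacity is $C_{\ssum}$. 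That bound cannot be recovered from a single black-box max-flow call, so the MWU step is not an optional implementation detail but the substance of the proof your proposal is missing. The approximate cut also falls out of the MWU framework rather than from translating an edge cut back.
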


\Cref{lemma:DetVertexFlow} follows the approach of computing approximate max flow via the multiplicative-weight-update (MWU) framework and decremental single-source shortest paths (SSSP) algorithms. The algorithm are almost indentical to those in \cite{CK19,CS21} except that we plug in the the currently state-of-the-art decremental SSSP data strucutre in \cite{BGS21}.

\begin{algorithm}[H]
\caption{An approximate algorithm for $s$-$t$ flow in edge-capacitated digraph}
\begin{algorithmic}[1]
\Require A directed graph $G=(V,E)$ with edge capacity function $c:E\to \mathbb{R}_{>0}\cup\{\infty\}$, a source $s$, a sink $t$ and an accuracy parameter $0<\epsilon<1$.
\Ensure An $s$-$t$ flow which is capacity-feasible.
\State $\delta\gets (2m)^{-1/\epsilon}$
\State For all $e\in E$, initialize the edge length by $\ell(e)\gets\delta/c(e)$
\State Initialize a flow function $f\equiv 0$
\While{$\sum_{e\in E}c(e)\ell(e)<1$}
\State $P\gets$ a $(1+\epsilon)$-approximate $\ell$-shortest $s$-$t$ path
\State $c\gets\min_{e\in P}c(e)$
\State $f(P)\gets f(P)+c$ \Comment{Augment along the path $P$.}
\State For all $e\in E(P)$, $\ell(e)\gets\ell(e)(1+\frac{\epsilon c}{c(e)})$
\EndWhile
\State return the scaled down flow $f/\log_{1+\epsilon}(\frac{1+\epsilon}{\delta})$
\end{algorithmic}
\label{algo:EdgeFlow}
\end{algorithm}

\begin{lemma}[Decremental SSSP, \cite{BGS21}]
Given an undirected, decremental graph $G=(V,E,w)$, a fixed source vertex $s\in V$, and any $\epsilon > 1/\polylog(n)$, there is a deterministic data structure that maintains a $(1+\epsilon)$-approximation of the distance from $s$ to every vertex $t$ in $V$ explicitly in total update time $m^{1+o(1)}\polylog L$, where $L$ is the maximum weight ratio. The data structure can further answers queries for an $(1+\epsilon)$-approximate shortest $s$-to-$t$ path $\pi(s,t)$ in time $|\pi(s,t)|n^{o(1)}$.
\label{lemma:DecreSSSP}
\end{lemma}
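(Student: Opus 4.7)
\textbf{Proof proposal for \Cref{lemma:DecreSSSP}.} Since this is a near-linear deterministic decremental SSSP result, the plan is to avoid any direct use of Even-Shiloach trees on the whole graph (which costs $\Omega(mn)$) and instead build a hierarchical structure of low-diameter clusters whose distance information can be efficiently refreshed under edge deletions. Specifically, the approach will be to construct an $O(\log L)$-level hierarchy where at each level $i$ we have a partition of $V$ into clusters of weak diameter at most $D_i = 2^i$ in the current graph, plus a ``skeleton'' edge set between clusters that approximately preserves distances at scales $\ge D_i$.

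The first step will be to build a \emph{hop set} $H$ for the initial graph with $|H| = m^{1+o(1)}$ edges, hop-diameter $n^{o(1)}$, and stretch $(1+\epsilon)$, in a deterministic way that uses the vertex-expander decomposition of \Cref{thm:ComplexDecomp} (or an edge-expander variant) as the core building block: an expander of low diameter admits a short hop set on its own vertex set via routing through well-linked sets, and a recursive decomposition then lifts this to the whole graph. Once $H$ is in place, maintaining $(1+\epsilon)$-approximate distances from $s$ amounts to running a bounded-depth Even-Shiloach tree of depth $n^{o(1)}/\epsilon$ on $G \cup H$, which costs $\tilde O(m \cdot n^{o(1)})$ total update time per level. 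The second step will be to handle edge deletions by \emph{rebuilding} the portion of the hierarchy that becomes invalid; amortization relies on the observation that each rebuild at level $i$ can be charged to $\Omega(|C|)$ units of ``capacity'' already paid by deletions inside the affected cluster $C$, and the total capacity is $O(m \polylog L)$.

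To extract an actual path, we will store, for each skeleton edge of $H$, a pointer to a witness path in $G$ whose length approximates the edge weight; chaining such pointers gives a $(1+\epsilon)$-approximate $s$-to-$t$ path of length $|\pi(s,t)|$ plus $n^{o(1)}$ jumps, meeting the $|\pi(s,t)| n^{o(1)}$ query bound after a standard path-reporting optimization. The key parameter choices (expansion $\phi = 1/n^{o(1)}$, hop diameter $n^{o(1)}$, $O(\log L)$ levels) all plug into the $m^{1+o(1)} \polylog L$ target.

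The main obstacle, and the principal novelty of a full proof, is the \emph{deterministic} construction of the hop set with the stated efficiency. Randomized constructions use sampled centers to build emulators with low hop diameter, but the derandomization requires routing via expander-decomposition-based embeddings at every recursive level while controlling the $n^{o(1)}$ overhead incurred at each level so that $O(\log L)$ levels compose to $n^{o(1)}$ total overhead. A secondary difficulty is that deletions can cascade across levels, so the amortized analysis must argue that any cluster requiring rebuild has accumulated enough deletion mass, which depends delicately on the expansion guarantee surviving under bounded vertex/edge removal --- precisely the kind of ``pruning'' argument already used in \Cref{lemma:AdjustParameters} of this paper, so the techniques developed here would be reused in spirit.
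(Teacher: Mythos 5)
This lemma is not proved in the paper at all: it is imported verbatim as a black-box citation to \cite{BGS21}, and the paper's only obligation is to use it correctly (which it does in \Cref{sect:DetVertexFlow}). So there is no internal proof to compare against, and the appropriate ``proof'' here is simply the citation. Attempting to reprove the result is reproving an entire FOCS paper, and your sketch does not come close to doing so.

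Concretely, the gap is the one you yourself flag and then set aside: the deterministic construction \emph{and decremental maintenance} of the low-hop emulator is the whole content of \cite{BGS21}, not a secondary difficulty. A hop set $H$ built for the initial graph ceases to be a valid $(1+\epsilon)$-stretch emulator after deletions (its skeleton edges can shortcut through deleted edges), so running a bounded-depth Even--Shiloach tree on $G\cup H$ is only sound if $H$ is itself maintained under an adaptive deletion sequence; your ``rebuild the invalid portion and charge to deletion mass'' amortization is asserted without any mechanism for detecting which skeleton edges have become invalid, bounding the recourse per level, or preventing cascading rebuilds across the $O(\log L)$ scales from blowing up the $n^{o(1)}$ overhead. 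The actual \cite{BGS21} algorithm also does not follow the hop-set-plus-bounded-ES architecture you describe; it is built on a hierarchy of robust cores and coverings with expander pruning, and the path-reporting bound $|\pi(s,t)|n^{o(1)}$ comes from that machinery rather than from witness-path pointers on static skeleton edges. For the purposes of this paper you should treat \Cref{lemma:DecreSSSP} as an external result and cite it, as the authors do.
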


Let $G=(V(G),E(G),c)$ be the vertex-capacitated undirected graph in \Cref{lemma:DetVertexFlow}. By the standard reduction, we can obtain the corresponding edge-capacitated digraph $G'=(V(G'),E(G'),c')$, where $V(G')=\{v^{\iin},v^{\out}\mid v\in V(G)\}$ and $E(G')=\{(u^{\out},v^{\iin},\infty),(v^{\out},u^{\iin},\infty)\mid (u,v)\in E(G)\}\cup \{(v^{\iin},v^{\out},c(v))\mid v\in V(G)\}$, and it suffices to work on $G'$ with source $s^{\iin}$ and sink $t^{\out}$. We run \Cref{algo:EdgeFlow} on graph $G'$ and \Cref{lemma:VertexFlowCorrectness} follows the analysis in \cite{CS21}, which implies a desired flow solution of $G$ by choosing $\epsilon/\alpha$ as the actual error parameter for some large enough constant $\alpha$.

\begin{lemma}
The flow $f/\log_{1+\epsilon}(\frac{1+\epsilon}{\delta})$ is a $(1-O(\epsilon))$-approximate max flow on $G'$.
\label{lemma:VertexFlowCorrectness}
\end{lemma}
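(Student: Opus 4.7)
The plan is to follow the classical Garg--Könemann multiplicative-weights analysis, decomposing the argument into two parts: (i) showing that the scaled-down flow $f/\log_{1+\epsilon}((1+\epsilon)/\delta)$ is capacity-feasible on $G'$, and (ii) showing that its value is at least a $(1-O(\epsilon))$-fraction of the maximum $s^{\iin}$-to-$t^{\out}$ flow. This template is well known; the only non-trivial adaptation is that \Cref{algo:EdgeFlow} uses a $(1+\epsilon)$-approximate shortest-path oracle (supplied by \Cref{lemma:DecreSSSP}) rather than an exact one, which introduces a $(1+\epsilon)$ factor that must be absorbed into the final $O(\epsilon)$ loss.

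For feasibility, I would track how $\ell(e)$ evolves for a single edge. Whenever flow $c$ is routed along a path $P \ni e$, the bottleneck choice gives $c \le c'(e)$, and $\ell(e)$ is scaled by $(1+\epsilon c/c'(e)) \ge (1+\epsilon)^{c/c'(e)}$ (valid since $c/c'(e) \le 1$). Unfolding over all augmentations through $e$, this yields $\ell(e) \ge (\delta/c'(e))(1+\epsilon)^{F_e/c'(e)}$, where $F_e$ denotes the total unscaled flow routed through $e$. Since the loop terminates as soon as $\sum_e c'(e)\ell(e) \ge 1$ and a single augmentation increases any individual length by at most a $(1+\epsilon)$ factor, we have $c'(e)\ell(e) \le 1+\epsilon$ at termination, forcing $F_e \le c'(e)\log_{1+\epsilon}((1+\epsilon)/\delta)$. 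Dividing $f$ by this same factor then gives a feasible flow.

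For near-optimality, I would use the dual potential $D(\ell) = \sum_e c'(e)\ell(e)$ together with weak LP duality, $\mathrm{OPT}\cdot \alpha(\ell) \ge D(\ell)$, where $\alpha(\ell)$ is the true $\ell$-shortest $s^{\iin}$-$t^{\out}$ distance. In each iteration augmenting by $c$ along a $(1+\epsilon)$-approximate shortest path $P$, the potential change is $\Delta D = \epsilon c \cdot \mathrm{length}_\ell(P) \le \epsilon(1+\epsilon) c \cdot \alpha(\ell)$. Combining this with $\alpha(\ell) \le D(\ell)/\mathrm{OPT}$ and telescoping $\ln D$ yields the bound
\[
\ln\frac{D(\ell_{\text{end}})}{D(\ell_0)} \le \frac{\epsilon(1+\epsilon)\,|f|}{\mathrm{OPT}},
\]
and since $D(\ell_0) = m\delta$ while $D(\ell_{\text{end}}) \ge 1$, the total unscaled flow satisfies $|f| \ge \tfrac{\mathrm{OPT}}{\epsilon(1+\epsilon)}\ln(1/(m\delta))$. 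Dividing by the scaling factor $\log_{1+\epsilon}((1+\epsilon)/\delta)$, using $\ln(1+\epsilon) \ge \epsilon - \epsilon^2$, and plugging in $\delta = (2m)^{-1/\epsilon}$ to make the additive $\ln(m\delta)$ negligible, the ratio compared to $\mathrm{OPT}$ becomes $1 - O(\epsilon)$.

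The main obstacle, and the only departure from the textbook analysis of \cite{CK19}, is carefully propagating the approximate-shortest-path slack $(1+\epsilon)$ through the potential-growth inequality without blowing up the final constant: each such factor is absorbed by re-scaling the error parameter by an appropriate constant $\alpha$ before invoking \Cref{algo:EdgeFlow}, as already indicated in the paragraph preceding the lemma. All other steps are routine calculus on $(1+\epsilon)^x$-type inequalities and a book-keeping argument for the initial potential $D(\ell_0) = m\delta$.
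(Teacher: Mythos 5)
Your proposal is correct and is essentially the same argument the paper relies on: the paper gives no self-contained proof of \Cref{lemma:VertexFlowCorrectness}, deferring entirely to the standard Garg--K\"onemann/MWU analysis of \cite{CS21,CK19}, which is exactly the feasibility-plus-dual-potential argument (with the $(1+\epsilon)$ slack from the approximate shortest-path oracle folded into the error parameter) that you reconstruct. The only cosmetic point is that in $D(\ell_0)$ one should count only the finite-capacity edges of $G'$ (the infinite-capacity edges have $\ell(e)=0$ throughout), which does not affect the bound.
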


We speed up \Cref{algo:EdgeFlow} by plugging in the decremental SSSP data structure as shown in \Cref{lemma:DecreSSSP}. Let $\ell'$ denote the edge length function during executing \Cref{algo:EdgeFlow}. To compute approximate shortest paths in $G'$ in line 5, it suffices to compute approximate shortest $s$-$t$ paths in the original graph $G$ if we assign each vertex $v\in V(G)$ with vertex length $\ell(v)$ equal to $\ell'(e)$ of $e=(v^{\iin},v^{\out})$. Observe that for each vertex length $\ell(v)=\delta/c(v)$ initially and keep increasing during the execution, but there will always be $\ell(v)< 1/c(v)$ in the while loop. Therefore, following the idea in \cite{CK19} that discretizing all possible lengths by the power of $(1+\epsilon/3)$ for each vertex, we construct the graph $G''=(V(G''),E(G''),c'')$ as follows. Let $K=\lceil\log_{(1+\epsilon/3)}1/\delta\rceil$. Then we set $V(G'')=\{(v_{i},\delta(1+3/\epsilon)^{i}/c(v))\mid v\in G,0\leq i\leq K\}\cup\{(s',0),(t',0)\}$ and $E(G'')=\{(u_{i},v_{j})\mid (u,v)\in E(G),0\leq i,j\leq K\}\cup\{(s',s_{i}),(t_{i},t')\mid 0\leq i\leq K\}$. Over the course of the algorithm, we maintain the invariant that for each $v\in V(G)$, vertices $v_{0},...,v_{i-1}$ have been deleted and $v_{i}$ is still in $G''$, where $i$ is the minimum integer such that $c''(v_{i})=\delta(1+3/\epsilon)^{i}/c(v)\geq \ell(v)$. Obviously, a $(1+3/\epsilon)$ approximate $s'$-$t'$ shortest path in $G''$ implies a $(1+\epsilon)$ approximate $s$-$t$ shortest path in $G$, namely a desired $(1+\epsilon)$-approximate $s^{\iin}$-$t^{\out}$ shortest path in $G'$. Technically, when applying decremental SSSP algorithm on $G''$, we can turn vertex lengths into edge lengths by assigning each edge $e=(u,v)$ with length $\ell(e)=(\ell(u)+\ell(v))/2$. Observe that the edge lengths are at most $(1+\epsilon/3)$ and at least $\delta/(2C)$ (the only zero-length vertices are $s$ and $t$ but they are not connected), so the maximum length ratio is $L=O(C/\delta)$.

It remains to analyse the running time of the whole algorithm. The bottleneck is running the decremental SSSP algorithm on $G''$. Since $|V(G'')|=O(K|V(G)|)=\Otil(n)$, $|E(G'')|=O(K^{2}|E(G)|)=\Otil(m)$ and $L=O(C/\delta)$, the total update time is $|E(G'')|^{1+o(1)}\polylog L$, namely $m^{1+o(1)}\polylog C$. The total query time is almost linear to the total length of augmentation paths by \Cref{lemma:DecreSSSP}. Since each time we will augment at least one unit of flow along a path, and each augmentation path $P$ with length $|P|$ will go through $\Theta(|P|)$ finite capacitated edges, so it will add at least $\Theta(|P|)$ units of flows to finite capacitated edges. Further, we know finite capacitated edges have total capacities the same as $C_{\ssum}$. Combining that $f/\log_{1+\epsilon}(\frac{1+\epsilon}{\delta})$ is a feasible flow on $G'$, the total length of augmentation paths is bounded by $C_{\ssum}\log_{1+\epsilon}(\frac{1+\epsilon}{\delta})$, namely $C_{\ssum}\polylog(m)$. Therefore, the total query time is $n^{o(1)}C_{\ssum}$ and we conclude that the total running time is $m^{1+o(1)}\polylog C+n^{o(1)}C_{\ssum}$.

Finally, we note that an approximate cut solution can be computed by the MWU framework as a byproduct in nearly linear additional running time deterministically. See Section 7.2 in \cite{CK19} for detailed discussions.

\section{Expander Decomposition for Hypergraphs}
\label{sect:HyperDecomp}

In this section, we slightly generalize our techniques on vertex expander decomposition in \Cref{sect:VertexExpander} to hypergraphs. This part is not needed for our vertex-failure connectivity oracle, but we expect expander decomposition will be useful in other applications.

We consider a hypergraph $H=(V(H),E(H))$ such that each edge $e\in E(H)$ is represented by a non-empty subset of vertices, i.e. $e\subseteq V(H)$ and $e\neq\emptyset$. Given a subset $U\subseteq V(H)$, we use $H[U]$ to denote the vertex-induced subgraph with vertex set $V(H[U])=U$ and edge set $E(H[U])=\{e\cap U\mid e\in E(H),e\cap U\neq\emptyset\}$\footnote{We consider $E(H[U])$ as a multiset. Namely, for each $e\in E(H),e\cap U\neq\emptyset$, we create a copy of $e\cap U$ in $E(H[U])$.}. Note that in our notation, for a hyperedge $e$ partially intersects $U$, we keep the part $e\cap U$ in $E(H[U])$. Here we only show the result for the notion of \textit{conductance} on hypergraphs (see \Cref{def:HyperConductance}), and it's likely to generalize the result to the setting considering other notion of sparse cuts.

\begin{definition}[Conductance on Hypergraphs]
Let $H=(V(H),E(H))$ be a hypergraph. For any cut $(S,\bar{S})$ with $\emptyset \subset S\subset V(H)$ and $\bar{S}=V(H)\setminus S$, its \textit{conductance} is 
\[
    \Phi_{H}(S)=\frac{|E_{H}(S,\bar{S})|}{\min\{|E(H[S])|,|E(H[\bar{S}])|\}},
\]
where $E_{H}(S,\bar{S})=\{e\mid e\in E(H),e\cap S\neq\emptyset,e\cap\bar{S}\neq\emptyset\}$ denote edges crossing the cut. The conductance of $H$, denoted by $\Phi(H)$, is the minimum conductance of any cut $(S,\bar{S})$ with $\emptyset \subset S\subset V(H)$.
\label{def:HyperConductance}
\end{definition}

\begin{definition}[Expander Decomposition on Hypergraphs]
Let $H=(V(H),E(H))$ be a hypergraph. For parameters $0\leq\phi\leq\epsilon\leq 1$, an $(\epsilon,\phi)$-expander decomposition of $H$ is a partition ${\cal H}=(V_{1},...,V_{k})$ of $V(H)$ such that
\begin{itemize}
    \item for all $V_{i}\in{\cal H}$, $\Phi(H[V_{i}])\geq \phi$; and
    \item the number of crossing edges, i.e. edges $e\in E(H)$ such that $\forall V_{i}\in {\cal H},e\nsubseteq V_{i}$, is at most $\epsilon|E(H)|$. 
\end{itemize}
\label{def:HyperDecomp}
\end{definition}

\begin{theorem}
Let $H=(V(H),E(H))$ be an $n$-vertex $m$-edge hypergraph with $p=\sum_{e\in E(H)}|e|$. Given parameters $0<\epsilon\leq 1$, there is an algorithm that computes an $(\epsilon,\phi)$-expander decomposition of $H$, where $\phi=\epsilon/m^{o(1)}$. The algorithm can be deterministic with running time $O(p^{1+o(1)}/\phi)$, or randomized with running time $O(p^{1+o(1)})$ and high correct probability.
\label{thm:HyperDecomp}
\end{theorem}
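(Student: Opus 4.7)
The plan is to reduce hypergraph expander decomposition to the weighted vertex expander decomposition of \Cref{Coro:WeightedDecomposition} via the bipartite incidence graph of $H$. First, I build the bipartite graph $G=(V(G),E(G))$ with $V(G)=V(H)\cup E(H)$ and an edge $(v,e)\in E(G)$ for every incidence $v\in e$; the graph $G$ has $n+m$ vertices and exactly $p$ edges. I equip it with demands and weights as follows: every hyperedge-vertex $e\in E(H)$ gets $\d(e)=w(e)=1$, while every original vertex $v\in V(H)$ gets $\d(v)=0$ and $w(v)=W$ with $W=\lceil\epsilon m\rceil+1$. Under this choice, $\d(V(G))=m$ and the weight-to-demand ratio is $U=O(m/\epsilon)$.

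Next, I invoke \Cref{Coro:WeightedDecomposition} on $G$ with parameter $\epsilon$. It returns a separator $X\subseteq V(G)$ with $w(X)\le\epsilon\cdot\d(V(G))=\epsilon m$, together with components $V_1,\dots,V_k$ of $G\setminus X$ such that $h(G[V_i])\ge\phi':=\epsilon/m^{o(1)}$. Because $W>\epsilon m$, the separator cannot contain any original vertex, so $X\subseteq E(H)$ with $|X|\le\epsilon m$. Define $U_i:=V_i\cap V(H)$; the sets $\{U_i\}$ then partition $V(H)$ and will be output as the hypergraph decomposition. After a cheap post-processing that relocates every $e\in X$ whose vertex set lies entirely in some $U_i$ back into that component, the set of hyperedges not contained in any $U_i$ coincides with $X$, giving the desired bound of at most $\epsilon m$ crossing hyperedges. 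The running time is $O(p^{1+o(1)}U^{1+o(1)}/\phi')=O(p^{1+o(1)}/\phi')$ for the deterministic variant and $O(p^{1+o(1)})$ for the randomized variant, where in the latter we plug in the randomized version of \Cref{Coro:WeightedDecomposition}.

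The conductance guarantee is the substantive part. Let $F_i=\{e\in E(H)\setminus X:e\subseteq U_i\}$ be the fully-interior hyperedges; by construction these are exactly the hyperedge-vertices of $V_i$. A direct correspondence identifies any nontrivial cut $(S_L,S_R)$ of $U_i$ with a vertex cut of $G[V_i]$ in which the separator consists precisely of the $F_i$-hyperedges straddling $(S_L,S_R)$ and the two sides have demand equal to the number of $F_i$-hyperedges touching each side. Consequently, $h(G[V_i])\ge\phi'$ translates into $\Phi\bigl((U_i,F_i)\bigr)\ge\phi'$, i.e., the \emph{fully-interior} sub-hypergraph on $U_i$ is already a $\phi'$-conductance expander.

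The main obstacle is that $H[U_i]$ additionally contains the truncated hyperedges $\{e\cap U_i:e\in X,\,e\cap U_i\neq\emptyset\}$, which are absent from $(U_i,F_i)$. A truncated piece lying entirely on one side of a cut inflates the volume of that side without contributing to the cut's crossings, so it can in principle drive the conductance below $\phi'$. I handle this by a recursive refinement: whenever $\Phi(H[U_i])<\phi$, I re-invoke the entire procedure on $H[U_i]$ viewed as a standalone hypergraph. Each recursive level can only cut each originally-fully-interior hyperedge once, so the depth is $O(\log m)$; rescaling $\epsilon$ by a $1/\log m$ factor keeps the cumulative number of crossing hyperedges at $\epsilon m$, and the conductance $\phi=\phi'/\polylog m=\epsilon/m^{o(1)}$ remains within the required range. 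The hardest piece of the proof is making this recursion charge the truncated edges cleanly so that neither the crossing count nor the runtime blows up across levels.
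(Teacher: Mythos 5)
Your reduction to the bipartite incidence graph and your weighting scheme (making original vertices too heavy to enter the separator) match the paper's setup, and you have correctly located the central difficulty: the plain vertex expander decomposition of \Cref{Coro:WeightedDecomposition} only certifies expansion of $G[V_i]$, whose hyperedge-vertices are the fully-interior hyperedges, while the conductance of $H[U_i]$ is measured against volumes that also count the truncated pieces $e\cap U_i$ of crossing hyperedges. Unfortunately, your proposed repair of this gap does not go through as stated. First, the recursion is triggered by the condition $\Phi(H[U_i])<\phi$, which you have no way to test: deciding (or even approximating) hypergraph conductance is itself the problem being solved, so the algorithm as described is not executable. Second, the claimed depth bound of $O(\log m)$ is unsupported: a truncated piece $e\cap U_i$ is a hyperedge of the standalone instance $H[U_i]$ and can be cut again at the next level, producing a further truncated piece, and nothing in your argument prevents this from happening $\omega(\log m)$ times; nor do the instance sizes $\sum_i |E(H[U_i])|$ shrink geometrically, so neither the crossing-edge budget nor the runtime is controlled. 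You acknowledge this yourself in the last sentence, but that acknowledgment is precisely the missing proof.

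The paper sidesteps the recursion entirely by strengthening the graph-side primitive rather than post-processing the hypergraph side. It introduces a \emph{neighbor-included} vertex expander decomposition (\Cref{def:DecompWithNeighbors}, \Cref{thm:DecompWithNeighbors}): each part $V_i'$ comes with a superset $W_i'\supseteq V_i'\cup N_G(V_i')$ such that $G[W_i']$ is a vertex expander, and the quantity bounded by $\epsilon\cdot\d(V(G))$ is $\sum_i w(W_i'\setminus V_i')$ counting multiplicity. Since $W_i'$ contains \emph{every} hyperedge-vertex $x_e$ with $e\cap U_i\neq\emptyset$ --- including those of crossing hyperedges --- a cut $(A,B)$ of $H[U_i]$ maps to a vertex cut of $G[W_i']$ whose side-demands are exactly $|E(H[A])|$ and $|E(H[B])|$ with truncated pieces included, and $\Phi(H[U_i])\ge h(G[W_i'])\ge\phi$ follows in one shot. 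Obtaining \Cref{thm:DecompWithNeighbors} requires modifying the cut-or-certify step so that the certified expander is $G[R\cup S]$ rather than $G[R]$ (\Cref{lemma:CutOrCertifyBoundary}) and re-running the divide-and-conquer with the separator $S$ shared between both recursive branches. If you want to salvage your approach, this is the ingredient you need to import; the single-level reduction plus plain expander decomposition plus ad hoc recursion does not suffice.
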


\Cref{thm:HyperDecomp} is the main result on expander decomposition on hypergraphs defined in \Cref{def:HyperDecomp}. As we will show below, the expander decomposition problem on a hypergraph can be reduced to computing a \textit{neighbor-included vertex expander decomposition} on some weighted bipartite graph. We define this variant in \Cref{def:DecompWithNeighbors} and obtain a fast algorithm in \Cref{thm:DecompWithNeighbors}. We first describe the algorithm in \Cref{thm:DecompWithNeighbors}, and then complete the proof of \Cref{thm:HyperDecomp} using it.

\begin{definition}
Let $G=(V(G),E(G),w,\d)$ be a graph with weight $w(v)\geq 0$ and demand $\d(v)\geq 0$ for each vertex $v\in V(G)$. For parameters $\epsilon>\phi> 0$, an \textit{$(\epsilon,\phi)$-neighbor-included vertex expander decomposition} of $G$ is a collections of pairs of subgraphs ${\cal G}=\{(V_{1},W_{1}),...,(V_{k},W_{k})\}$ such that $V_{1},...,V_{k}$ are disjoint (some $V_{i}$ can possibly be empty). Let $X=V(G)\setminus (\bigcup_{i} V_{i}$) denote the \textit{separator}. Then ${\cal G}$ further satisfies the following.
\begin{itemize}
    \item[(1)] For all $(V_{i},W_{i})\in{\cal G}$, $V_{i}\cup N_{G}(V_{i})\subseteq W_{i}$.
    \item[(2)] $X=\bigcup_{i}W_{i}\setminus V_{i}$.
    \item[(3)] For all $(V_{i},W_{i})\in{\cal G}$, $h(G[W_{i}])\geq\phi$.
    \item[(4)] $\sum_{i} w(W_{i}\setminus V_{i})\leq \epsilon\cdot \d(V(G))$.
\end{itemize}
\label{def:DecompWithNeighbors}
\end{definition}

We emphasize that \Cref{def:DecompWithNeighbors} is slightly different from \Cref{def:Decomp}. First, the decomposition will guarantee that, for each $V_{i}\in{\cal G}$, the neighbor-included subgraph $G[W_{i}]$ (i.e. $V_{i}\cup N_{G}(V_{i})\subseteq W_{i}\subseteq V_{i}\cup X$) is an vertex expander, rather than $G[V_{i}]$ itself. Furthermore, it requires a stronger bound on the weight of the separator $X$. Concretely, instead of bounding $w(X)$, we can even bound the total weight (counting multiplicity) of vertices in the multiset $\sum_{i}W_{i}\setminus V_{i}$, where $\bigcup_{i}W_{i}\setminus V_{i}=X$.

\begin{theorem}
Let $G=(V(G),E(G),w,\d)$ be an $n$-vertex $m$-edge graph with for each $v\in V(G)$, integral weight $1\leq w(v)\leq U$ and integral demand $0\leq d(v)\leq w(v)$. Given parameters $0<\epsilon\leq 1$, there is a algorithm that computes an $(\epsilon,\phi)$-neighbor-included vertex expander decomposition of $G$, where $\phi=\epsilon/(m^{o(1)}U^{o(1)})$. The algorithm can be deterministic with running time is $O(m^{1+o(1)}U^{1+o(1)}/\phi)$, or randomized with running time $O(m^{1+o(1)}U^{o(1)})$ and high correct probability.
\label{thm:DecompWithNeighbors}
\end{theorem}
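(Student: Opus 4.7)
The plan is to mimic the recursive $\Decomp$ procedure from the proof of Theorem~\ref{thm:ComplexDecomp} (Section~\ref{proof:ComplexDecomp}), modifying it so that each extracted component $V_i$ is returned together with a shell $W_i\supseteq V_i\cup N_G(V_i)$ such that $G[W_i]$ is itself a vertex expander. The first step is to strengthen Lemma~\ref{lemma:CutOrCertify} into a neighbor-included variant: given a current working subgraph $G[U]$, it either returns a balanced sparse vertex cut $(L,S,R)$ of $G[U]$ (as before), or it returns a set $R\subseteq U$ such that the enlarged subgraph $G[R\cup N_G(R)]$, with neighbors taken in the \emph{original} graph $G$, has vertex expansion at least $\phi$. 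Since $N_G(R)$ forms a separator in $G$ isolating $R$ from $V(G)\setminus(R\cup N_G(R))$, the cut-matching-game framework of Section~\ref{sect:WeightedCutMatchingGame} can be carried out directly on $G[R\cup N_G(R)]$: both the cut player (Lemma~\ref{thm:DetCutPlayer}) and the matching player (Lemma~\ref{thm:RandMatchingPlayer}) operate on this enlarged subgraph, with boundary vertices of $N_G(R)$ assigned zero demand so that they do not unbalance the cut condition while still contributing to routing capacity.

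Using this strengthened subroutine, I would design $\Decomp'(G[U])$ analogously to $\Decomp$: when a balanced cut is returned, recurse on both sides and add $S$ to the global separator; when a certified set $R$ is returned, emit the pair $(V_i,W_i)=(R,\,R\cup N_G(R))$ and recurse on $G[U\setminus R]$. After termination, to guarantee property~(2) of Definition~\ref{def:DecompWithNeighbors}, we append trivial pairs $(\emptyset,\{v\})$ for any separator vertex $v\in X$ not already contained in some $W_i$; these satisfy properties~(1) and~(3) vacuously. Properties~(1)--(3) of the overall output follow directly from the construction, and the recursion depth is $O(\log(nU))$ by the same argument as for $\Decomp$, so the running times match those of Theorem~\ref{thm:ComplexDecomp} and Corollary~\ref{Coro:WeightedDecomposition} for both the deterministic and randomized variants.

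The most delicate step is verifying property~(4), namely $\sum_i w(W_i\setminus V_i)\leq \epsilon\cdot \d(V(G))$: a single separator vertex $v\in X$ can appear as a neighbor of many distinct extracted $V_i$'s, so the sum cannot be bounded simply by $w(X)$. I would charge each appearance of $v$ in some $W_i\setminus V_i$ to the sparse-cut event at which $v$ entered the global separator, and amortize the resulting cost across the $O(\log(nU))$ recursion depth using the per-step guarantee $w(S)\leq \epsilon'\cdot \d(L\cup S)$; the multiplicity arising from $v$ bordering several $V_i$'s must be absorbed by setting $\epsilon'$ small enough that the loss is hidden inside the $m^{o(1)}U^{o(1)}$ factor in $\phi$. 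The main obstacles I anticipate are (i) cleanly adapting the cut-matching game to certify expansion of the enlarged subgraph $G[R\cup N_G(R)]$ while preserving the almost-linear running time and the correct semantics of ``balance'' for demand-free boundary vertices, and (ii) bounding the multiplicity of separator vertices in the charging argument for property~(4). Once both are addressed, the rest of the construction should follow along the lines of the proof of Theorem~\ref{thm:ComplexDecomp} with only routine bookkeeping.
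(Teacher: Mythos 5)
There is a genuine gap, and it sits exactly where you flagged your two ``anticipated obstacles'': neither of your proposed resolutions works, and the paper's actual mechanism is different. The core problem is your choice $W_i = V_i \cup N_G(V_i)$ with neighborhoods taken in the original graph. For property (4), a single separator vertex $v$ can be adjacent to up to $\deg_G(v)$ of the (pairwise disjoint) extracted components $V_i$, so it contributes $w(v)$ to $w(W_i\setminus V_i)$ that many times; the multiplicity is governed by degrees, not by the $O(\log(nU))$ recursion depth, so it cannot be amortized across levels. Moreover, property (4) is a bound of the form $\epsilon\cdot\d(V(G))$, so a polynomial multiplicity cannot be ``hidden inside the $m^{o(1)}U^{o(1)}$ factor in $\phi$'': you would have to shrink $\epsilon'$ polynomially, which drags $\phi$ down to $\epsilon/\poly(m)$ and breaks the theorem statement. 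Separately, your strengthened subroutine cannot certify $h(G[R\cup N_G(R)])\ge\phi$ with respect to the original demands: a vertex removed many recursion levels earlier may have a single edge into $R$ and positive demand, creating a genuinely sparse vertex cut of $G[R\cup N_G(R)]$. Zeroing the demands of boundary vertices makes the game run, but then you have certified expansion for a modified demand function, not the one property (3) of \Cref{def:DecompWithNeighbors} refers to.

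The paper's proof avoids both problems by never forming $N_G(V_i)$ explicitly. It first replaces \Cref{lemma:CutOrCertify} by \Cref{lemma:CutOrCertifyBoundary}, whose only change is that the pruning step removes $L$ rather than $L\cup S$, so the certified expander is $G[R\cup S]$ — the separator of the \emph{current} cut is kept inside the certified piece, where it is necessarily well-connected (else it would have been pruned). Then, in the recursion, the separator $S$ is placed into \emph{both} children $G_L=G[L\cup S]\setminus E(G[S])$ and $G_R=G[R\cup S]\setminus E(G[S])$; the pairs $(V_i',W_i')$ returned by the children are post-processed to $(V_i'\setminus S,\,W_i')$, so $W_i$ is accumulated through the recursion and automatically contains $N_G(V_i)$. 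Property (4) then follows because at each recursion node the newly added weight is $\sum_i w(V_i'\cap S)\le 2w(S)$ (the $V_i'$ within each side are disjoint, so each $v\in S$ is double-counted at most once per level, not once per adjacent component), and this $2w(S)\le 2\epsilon'\d(L\cup S)$ per level telescopes over the $O(\log(nU))$ depth with $\epsilon'=\epsilon/(10\log(nU))$. If you want to repair your writeup, the fix is to adopt this ``keep $S$ in both subproblems with its internal edges deleted'' bookkeeping rather than defining $W_i$ post hoc from $N_G$.
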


\begin{lemma}
Let $G=(V(G),E(G),w,\d)$ be an $n$-vertex $m$-edge graph with, for each $v\in V(G)$, integral weight $1\leq w(v)\leq U$ and integral demand $0\leq \d(v)\leq w(v)$. Given parameters $0<\epsilon\leq 1/2$ and $1\leq r\leq \lfloor\log_{20}\d(V(G))\rfloor$, there is a deterministic algorithm that computes a vertex cut $(L,S,R)$ (possibly $L=S=\emptyset$) of $G$ with $w(S)\leq \epsilon\cdot\d(L\cup S)$, which further satisfies
\begin{itemize}
    \item either $\d(L\cup S),\d(R\cup S)\geq \d(V(G))/3$; or
    \item $\d(R\cup S)\geq \d(V(G))/2$ and $h(G[R\cup S])\geq \phi$ for some $\phi=\epsilon/\log^{O(r)}U\log^{O(r^{5})}(m\log U)$.
\end{itemize}
The running time of this algorithm is $O(m^{1+o(1)+O(1/r)}\cdot U^{1+O(1/r)}\cdot\log^{O(r^{4})}(mU)/\phi)$.
\label{lemma:CutOrCertifyBoundary}
\end{lemma}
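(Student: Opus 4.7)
The proof plan mirrors that of \Cref{lemma:CutOrCertify}, with the essential modification that at each pruning step we retain the separator vertices inside the residual graph rather than discarding them, so that upon termination the residual equals $G[R\cup S]$ and carries the promised vertex-expansion guarantee.

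First, I would establish a ``boundary-inclusive'' analog of \Cref{lemma:AdjustParameters}: under the same hypotheses and parameters, produce $(L,S,R)$ with $w(S)\le \phi\cdot\d(L\cup S)$ such that either the cut is balanced, or $\d(R\cup S)\ge 2\d(V(G))/3$ and every vertex cut $(L',S',R')$ of $G[R\cup S]$ with $\d(L'\cup S'),\d(R'\cup S')\ge z'$ satisfies $h_{G[R\cup S]}(L',S',R')\ge \phi/(\alpha\log^{2}U\log^{\alpha r^{4}}(m\log U))$. Its proof follows the iterative pruning of \Cref{lemma:AdjustParameters}, invoking \Cref{lemma:MostBalanceCut} in each round with parameters $\phi$ and $z^{*}=z'/(\alpha\log U\log^{\alpha r^{4}}(m\log U))$. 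The only change is the pruning step: in round $i$, after obtaining a sparse vertex cut $(L_{i},S_{i},R_{i})$ of the current graph $G_{i}$, we remove only $L_{i}$, i.e.\ set $G_{i+1}=G_{i}[R_{i}\cup S_{i}]$ rather than $G_{i}[R_{i}]$. We stop in the balanced case once $\d(V(G_{i+1}))\le 2\d(V(G))/3$, and in the expander case as soon as \Cref{lemma:MostBalanceCut} certifies that $G_{i}$ has no sparse balanced cut with sides $\ge z'$.

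Next, I would verify that the output $(L,S,R)$, defined by $L=\bigcup_{i}L_{i}$, $S=N_{G}(L)$, and $R=V(G)\setminus(L\cup S)$, fulfills the three required properties. It is a vertex cut of $G$ by construction. For the weight bound, the key observation is that $S=N_{G}(L)\subseteq\bigcup_{i}S_{i}$: if a vertex $v\notin L$ is adjacent to some $L_{i}$ in $G$, then $v\in V(G_{i})$ (because $L_{i}\subseteq V(G_{i})$ means $v$ survived through round $i$), and since $v\notin L_{i}$ is adjacent to $L_{i}$, necessarily $v\in S_{i}$. Hence $w(S)\le \sum_{i}w(S_{i})\le \phi\sum_{i}\d(L_{i}\cup S_{i})$. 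Since the $L_{i}$ are pairwise disjoint (each round removes $L_{i}$ from the graph), $\sum_{i}\d(L_{i})=\d(L)$. The inequality $\d(v)\le w(v)$ gives $\d(S_{i})\le w(S_{i})\le \phi\d(L_{i}\cup S_{i})$, which for $\phi\le 1/2$ rearranges to $\d(S_{i})\le 2\phi\d(L_{i})$. Summing, $\sum_{i}\d(L_{i}\cup S_{i})\le (1+2\phi)\d(L)\le 2\d(L)$, so $w(S)\le 2\phi\d(L)\le 2\phi\d(L\cup S)$. Choosing the internal parameter as $\phi/2$ (enabled by $\epsilon\le 1/2$) then yields $w(S)\le\phi\d(L\cup S)$. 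The balanced and expander side-size bounds are then derived by the same arithmetic as in \Cref{lemma:AdjustParameters}, using $\d(V(G_{i}))>2\d(V(G))/3$ at the penultimate round together with $\d(L_{i})\le \d(V(G_{i}))/2$ from the sparse-cut property.

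Finally, I would plug the boundary-inclusive pruning lemma into the exact same $r$-round ladder used to prove \Cref{lemma:CutOrCertify}, with $z_{1}=2\d(V(G))$, $z_{r+1}=1$, and $\phi_{i}=\epsilon/(4\alpha\log^{2}U\log^{\alpha r^{4}}(m\log U))^{i-1}$. Because each outer round now passes the residual $G[R_{i}\cup S_{i}]$ (rather than $G[R_{i}]$) to the next outer round, the composition naturally produces $G_{r+1}=G[R\cup S]$ with $h(G_{r+1})\ge\phi_{r+1}=\epsilon/(\log^{O(r)}U\log^{O(r^{5})}(m\log U))$ in the expander case, while the other invariants of the ladder are unaffected. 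The running-time bound carries over unchanged, since none of the modifications alter the cost of the subroutines. The main obstacle is the $w(S)$ bound: unlike in \Cref{lemma:AdjustParameters}, where the $L_{i}\cup S_{i}$ are disjoint across rounds, here the $S_{i}$ persist in the graph and can be re-classified into $L_{j}$ or $S_{j}$ for $j>i$. The inclusion $S=N_{G}(L)\subseteq\bigcup_{i}S_{i}$, combined with the demand-to-weight inequality $\d(v)\le w(v)$, is what rescues the bound and explains the restriction $\epsilon\le 1/2$ in the lemma statement.
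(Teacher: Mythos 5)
Your proposal matches the paper's (omitted) proof: the paper states only that one runs the algorithm of \Cref{lemma:CutOrCertify} but prunes $L$ rather than $L\cup S$ at each invocation of \Cref{lemma:MostBalanceCut}, with correctness ``basically following'' the proof in \Cref{proof:BalCutPrune}. You carry this out correctly, and in particular your accounting for the now-overlapping separators --- $S=N_{G}(L)\subseteq\bigcup_{i}S_{i}$ combined with $\d(v)\le w(v)$ to charge $w(S)$ against the pairwise-disjoint pruned sets $L_{i}$, at the cost of a constant factor absorbed by $\epsilon\le 1/2$ --- supplies exactly the detail the paper leaves implicit.
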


The proof of \Cref{thm:DecompWithNeighbors} relies on \Cref{lemma:CutOrCertifyBoundary} which is a variant of \Cref{lemma:CutOrCertify}. The difference is that, in the case of returning an unbalanced sparse vertex cut $(L,S,R)$, \Cref{lemma:CutOrCertifyBoundary} will certify that $G[R\cup S]$ is a vertex expander rather than $G[R]$. The algorithm of \Cref{lemma:CutOrCertifyBoundary} is the same with that of \Cref{lemma:CutOrCertify}, except that each time when using the cut $(L,S,R)$ from \Cref{lemma:MostBalanceCut} to prune the graph, the pruned part is $L$ rather than $L\cup S$. The correctness and running time basically follows the same proof in \Cref{proof:BalCutPrune} and we omit it here.

\begin{proof}[Proof of \Cref{thm:DecompWithNeighbors}]
Let $\epsilon'=\epsilon/(10\log(nU))$ and $\phi=\epsilon'/(\log^{2}U\log^{O(r^{5})}(m\log U))$. %
We design a procedure $\Decomp(G)$, which takes a graph $G$ as input and returns a collection of subgraphs. By directly applying \Cref{lemma:CutOrCertifyBoundary} on $G$ with parameters $\epsilon',r$, we get a cut $(L,S,R)$ with $w(S)\leq\epsilon'\cdot\d(L\cup S)$. There are two cases.

\

\noindent\textbf{Case 1.} If $\d(L\cup S),\d(R\cup S)\geq\d(V(G))/3$, the algorithm will recurse on both side. Let $G_{L}=G[L\cup S]\setminus E(G[S])$ be the graph from $G[L\cup S]$ after deleting edges inside $S$, and obtain $G_{R}=G[R\cup S]\setminus E(G[S])$ similarly. Let ${\cal G}_{L}=\Decomp(G_{L})$ and ${\cal G}_{R}=\Decomp(G_{R})$. The output ${\cal G}=\{(V_{i}\setminus S,W_{i})\mid (V_{i},W_{i})\in{\cal G}_{L}\cup{\cal G}_{R}\}$.

\

\noindent\textbf{Case 2.} In the other case, we have $\d(R\cup S)\geq \d(V(G))/2$ and $h(G[R\cup S])\geq\phi$ for some $\phi=\epsilon'/(\log^{O(r)}U\log^{O(r^{5})}(m\log U))$. We only recurse on the left side and obtain ${\cal G}_{L}=\Decomp(G_{L})$, where $G_{L}$ is defined as above. The output is ${\cal G}=\{(V_{i}\setminus S,W_{i})\mid (V_{i},W_{i})\in{\cal G}_{L}\}\cup\{(R,R\cup S)\}$. In particular, if $L$ and $S$ are empty, we return ${\cal G}=\{(R,R)\}$ without any further recursion.

\

We claim that the maximum recursion depth $j^{*}$ is at most $10\log(nU)$. At each recursive step, we have $\d(S)\leq w(S)\leq \epsilon'\d(L\cup S)\leq \d(V(G))/10$, so we have $\d(L\cup S),\d(R\cup S)\leq (2/3+1/10)\d(V(G))$ in both cases, which implies $j^{*}\leq\log\d(V(G))/(-\log(2/3+1/10))\leq 10\log(nU)$.

We now show the correctness of the algorithm by induction. Let the induction hypothesis be that, for all $1\leq j\leq j^{*}$, each recursive step $\Decomp(G_{j})$ at depth $j$ will return an $(\epsilon_{j},\phi)$-neighbor-included vertex expander decomposition of $G_{j}$, where $\epsilon_{j}=2(j^{*}-j+1)\epsilon'(1+\epsilon')^{j^{*}-j}$. Observe that the hypothesis holds for every leaf-step $\Decomp(G_{j})=\{(G_{j},G_{j})\}$. Consider a recursive step $\Decomp(G)$ with output ${\cal G}$ at depth $j$ and assume w.l.o.g. it invokes two recursive steps $\Decomp(G_{L})$ and $\Decomp(G_{R})$ with outputs ${\cal G}_{L}$ and ${\cal G}_{R}$ at depth $j+1$, where $G_{L}$ and $G_{R}$ are defined as above. The case with only one recursive branch is analogous. First of all, all $V_{i}$ in ${\cal G}$ are obviously disjoint because $(L,S,R)$ is a vertex cut and the induction hypothesis on ${\cal G}_{L}$ and ${\cal G}_{R}$. We then verify the properties as follows.

\

\noindent\textbf{Property (1).} We consider any $(V_{i},W_{i})\in {\cal G}$. We have $V_{i}=V'_{i}\setminus S$ and $W_{i}=W'_{i}$ for some $(V'_{i},W'_{i})\in{\cal G}_{L}\cup{\cal G}_{R}$. Assume w.l.o.g. $(V'_{i},W'_{i})\in{\cal G}_{L}$. From the induction hypothesis, $V'_{i}\cup N_{G_{L}}(V'_{i})\subseteq W'_{i}$. Because $V_{i}=V_{i}'\setminus S$ and $V_{i}\subseteq L$ is not adjacent to $R$, we have $N_{G}(V_{i})\subseteq L\cup S$ and $N_{G}(V_{i})\subseteq N_{G_{L}}(V'_{i})\cup (V'_{i}\setminus V_{i})=N_{G_{L}}(V'_{i})\cup(V'_{i}\cap S)$, which implies $V_{i}\cup N_{G}(V_{i})\subseteq V'_{i}\cup N_{G_{L}}(V'_{i})\subseteq W'_{i}=W_{i}$.

\

\noindent\textbf{Property (2).} From the induction hypothesis, we know $(L\cup S)\setminus \bigcup_{(V_{i},W_{i})\in{\cal G}_{L}}V_{i}=\bigcup_{(V_{i},W_{i})\in{\cal G}_{L}}W_{i}\setminus V_{i}$. Combining $V_{i}\subseteq W_{i}$ for each $(V_{i},W_{i})\in{\cal G}_{L}$, we have $L\cup S=\bigcup_{(V_{i},W_{i})\in{\cal G}_{L}}W_{i}$. Analogously $R\cup S=\bigcup_{(V_{i},W_{i})\in{\cal G}_{R}}W_{i}$. Therefore, $V(G)=\bigcup_{(V_{i},W_{i})\in{\cal G}}W_{i}$, which implies property (2).

\

\noindent\textbf{Property (3).} It is simply from the induction hypothesis and the facts that $G_{L}$ and $G_{R}$ are subgraphs of $G$ and that ${\cal G}$ collects all $W_{i}$ in ${\cal G}_{L}$ and ${\cal G}_{R}$.

\

\noindent\textbf{Property (4).} For each $(V_{i},W_{i})$, it corresponds to some $(V'_{i},W'_{i})\in{\cal G}_{L}\cup{\cal G}_{R}$ such that $V_{i}=V'_{i}\setminus S$ and $W_{i}=W'_{i}$, so $w(W_{i}\setminus V_{i})=w(V_{i}'\cap S)+w(W'_{i}\setminus V'_{i})$. First, we have $\sum_{i} w(V'_{i}\cap S)\leq 2w(S)\leq 2\epsilon'\d(V(G_{j}))$ because $V'_{i}$ from the same side are disjoint and $w(S)\leq \epsilon'\d(L\cup S)\leq \epsilon'\d(V(G_{j}))$. Then by the induction hypothesis, $\sum_{i} w(W'_{i}\setminus V'_{i})\leq \epsilon_{j+1}\cdot(\d(L\cup S)+\d(R\cup S))=\epsilon_{j+1}\cdot(\d(V(G_{j}))+\d(S))\leq \epsilon_{j+1}(1+\epsilon')\d(V(G_{j}))$, where the last inequality is because $\d(S)\leq w(S)\leq \epsilon'\d(V(G_{j}))$. Therefore, we have $\sum_{i}w(W_{i}\setminus V_{i})\leq (2\epsilon'+\epsilon_{j+1}(1+\epsilon'))\d(V(G_{j}))\leq \epsilon_{j}\d(V(G_{j}))$.

\

We finally claim that ${\cal G}$ is an $(\epsilon,\phi)$-neighbor-included vertex expander decomposition of $G$ because $\epsilon_{1}\leq \epsilon$. Regarding the running time, observe that for a recursive step which cuts a graph $G$ by $(L,S,R)$, its two branches on $G_{L}$ and $G_{R}$ satisfy $|E(G_{L})|+|E(G_{R})|\leq |E(G)|$ because we remove all edges inside $S$. Thus the total number of edges in all recursive steps at each depth is still $O(m)$ and all recursive step at one depth takes $O(m^{1+o(1)+O(1/r)}\cdot U^{1+O(1/r)}\cdot\log^{O(r^{4})}(mU)/\phi)$. Summing over $O(\log(nU))$ levels and set $r=\log\log\log(mU)$, the overall running time is $O(m^{1+o(1)}U^{1+o(1)}/\phi)$ and the conductance $\phi$ is at least $\epsilon/(m^{o(1)}U^{o(1)})$. Like \Cref{remark:DetDecomp}, we can substitute the subroutine in \Cref{lemma:DetVertexFlow} with that in \Cref{lemma:RandVertexFlow} to obtain a faster randomized algorithm with running time improved by an $O(U/\phi)$ factor.
\end{proof}

\begin{proof}[Proof of \Cref{thm:HyperDecomp}]
We construct a bipartite graph $G$ with $V(G)=V(H)\cup V_{E}$ where $V_{E}=\{x_{e}\mid e\in E(H)\}$ and $E(G)=\bigcup_{e\in E(H)}\{(x_{e},v)\mid v\in e\}$. Namely, we convert $H$ into a graph $G$ by substituting each $e\in E(H)$ with a vertex $x_{e}$ and edges connecting $x_{e}$ and $e$'s endpoints. We set the vertex weight and demand functions on $V(G)$ as follows. For each $x_{e}\in\{x_{e}\mid e\in E(H)\}\subseteq V(G)$, it has weight $w(x_{e})=1$ and demand $\d(x_{e})=1$, while for each $v\in V(H)\subseteq V(G)$, it has weight $w(v)=m$ and demand $\d(v)=0$. The weights and demands are bounded by $U=m$. Let ${\cal G}$ be an $(\epsilon,\phi)$-neighbor-included vertex expander decomposition of $G$ obtained by applying \Cref{thm:DecompWithNeighbors} with parameter $\epsilon$, where $\phi=\epsilon/(|V(G)|^{o(1)}U^{o(1)})=\epsilon/m^{o(1)}$. We construct ${\cal H}$ by letting ${\cal H}=\{V'_{i}\cap V(H)\mid (V'_{i},W'_{i})\in{\cal G},V'_{i}\cap V(H)\neq\emptyset\}$. We claim that ${\cal H}$ is an $(\epsilon,\phi)$-expander decomposition of $H$. 

To show ${\cal H}$ is a collection of disjoint subsets of $V(H)$. First all $V_{i}$ in ${\cal H}$ are disjoint because all $V'_{i}$ in ${\cal G}$ are disjoint. To see $\bigcup_{i} V_{i}=V(H)$, we will show $\bigcup_{i} W'_{i}=V(G)$ and $V_{i}=W'_{i}\cap V(H)$. The former is directly from properties (1) and (2) in \Cref{def:DecompWithNeighbors}. The latter is from the following reasons. On one direction, we have $V_{i}=V'_{i}\cap V(H)\subseteq W'_{i}\cap V(H)$. On the other direction, assume there is some $v\in W'_{i}\cap V(H)$ not in $V_{i}$. It implies $v\in W'_{i}\setminus V'_{i}$ and $\sum_{i}w(W'_{i}\setminus V'_{i})\geq w(v')=m> \epsilon\cdot\d(V(G))$, a contradiction.

For each $V_{i}\in{\cal H}$ corresponding to $(V'_{i},W'_{i})\in{\cal G}$, we claim that $\Phi(H[V_{i}])\geq h(G[W'_{i}])\geq\phi$. To see this, we first claim $V_{i}=W'_{i}\cap V(H)$ implies $V_{i}\cup N_{G}(V_{i})=W'_{i}$. Observe that $V_{i}\cup N_{G}(V_{i})\subseteq V'_{i}\cup N_{G}(V'_{i})\subseteq W'_{i}$. For any $x_{e}\in W'_{i}\setminus V(H)$, it must be connected to some $v\in W'_{i}\cap V(H)$ from the definition of $E(G)$ and the fact that $G[W_{i}]$ is an expander. Thus $W'_{i}\setminus V(H)\subseteq N_{G}(V_{i})$ and $W'_{i}\subseteq N_{G}(V_{i})\cup V_{i}$. To prove $\Phi(H[V_{i}])\geq h(G[W'_{i}])$, consider any cut $(A,B)$ of $H[V_{i}]$ where $B=V_{i}\setminus A$. it corresponds to a vertex cut $(L,S,R)$ in $G[W'_{i}]$ where $S=\{x_{e}\mid e\in E_{H[V_{i}]}(A,B)\}$, $L=A\cup N_{G[W'_{i}]}(A)\setminus S$ and $R=B\cup N_{G[W'_{i}]}(B)\setminus S$. Observe that $E(H[A])$ corresponds to vertex set $N_{G}(A)$, which is the same as $N_{G[W'_{i}]}(A)$ because $W'_{i}=V_{i}\cup N_{G}(V_{i})$. Thus $N_{G}(A)\subseteq L\cup S$ and $\d(L\cup S)\geq |N_{G}(A)|=|E(H[A])|$. Similarly, we have $\d(R\cup S)\geq |E(H[B])|$. Therefore,
\[
\Phi_{H[V_{i}]}(A,B)=\frac{|E_{H[V_{i}]}(A,B)|}{\min\{|E(H[A])|,|E(H[B])|\}}\geq\frac{w(S)}{\min\{\d(L\cup S),\d(R\cup S)\}}=h_{G[W'_{i}]}(L,S,R),
\]
and $\Phi(H[V_{i}])\geq h(G[W'_{i}])$.

To bound the number of crossing hyperedges, we claim that each $x_{e}\in V'_{i}\cap V_{E}$ for any $V'_{i}$ in ${\cal G}$ corresponds to a non-crossing hyperedge $e$ only connecting vertices in $V_{i}$. Otherwise let $v$ in another $V_{i'}$ be an endpoint of $e$. By the definition of $E(G)$, there is an edge $(x_{e},v)\in E(G)$, but $v\notin W'_{i}$ because $v\notin V_{i}$ and $V_{i}=W'_{i}\cap V(H)$, which contradicts the fact that $W'_{i}=V_{i}\cup N_{G}(V_{i})$. Therefore, all crossing hyperedges correspond to some $x_{e}\in (V(G)\setminus \bigcup_{i}V'_{i})\cap V_{E}$, and we can conclude that the number of crossing hyperedges is at most $w(V(G)\setminus \bigcup_{i}V'_{i})\leq w(\sum_{i}W'_{i}\setminus V'_{i})\leq \epsilon\cdot\d(V(G))=\epsilon m$.

Regarding the running time, if we allow randomization, the algorithm can run in $O(|E(G')|^{1+o(1)}\cdot U^{o(1)}) = O(p^{1+o(1)})$ time by \Cref{thm:DecompWithNeighbors}. For the deterministic algorithm, we can only obtain an $O((pm)^{1+o(1)}/\phi)$ bound directly from \Cref{thm:DecompWithNeighbors}. However, a tighter bound $O(p^{1+o(1)}/\phi)$ can be obtained by a more careful analysis. The bottleneck of the decomposition algorithm is the subroutine in \Cref{lemma:DetVertexFlow} and the bottleneck of the algorithm in \Cref{lemma:DetVertexFlow} is the flow decomposition barrier. However, we will run the algorithm on $G'$ slightly modified from $G$ (see \Cref{proof:DetMatchingPlayer}). Because $G$ is a bipartite graph with one side on which every vertex has unit weight, such vertices have capacity $O(1/\phi)$ and the length of flow paths in $G'$ is at most linear to the total capacity of these vertices, i.e. $O(n/\phi)$. Therefore, the running time of the deterministic algorithm only increases by an $O(1/\phi)$ factor.
\end{proof}

\newcommand{\etalchar}[1]{$^{#1}$}

\end{document}